\DeclareTextFontCommand{\texttt}{\ttfamily\upshape}
\DeclareMathOperator\diag{diag}
\DeclareMathOperator\E{\mathbb{E}}
\DeclareMathOperator\F{F}
\DeclareMathOperator*{\plim}{plim}
\DeclareMathAlphabet\mathbfcal{OMS}{cmsy}{b}{n}
\def\b0{\mathbf{0}}
\def\b1{\mathbf{1}}
\def\bA{\mathbf{A}}
\def\bB{\mathbf{B}}
\def\bC{\mathbf{C}}
\def\bD{\mathbf{D}}
\def\bE{\mathbf{E}}
\def\bF{\mathbf{F}}
\def\bG{\mathbf{G}}
\def\bH{\mathbf{H}}
\def\bI{\mathbf{I}}
\def\bM{\mathbf{M}}
\def\bN{\mathbf{N}}
\def\bP{\mathbf{P}}
\def\bQ{\mathbf{Q}}
\def\bR{\mathbf{R}}
\def\bV{\mathbf{V}}
\def\bW{\mathbf{W}}
\def\bX{\mathbf{X}}
\def\bZ{\mathbf{Z}}
\def\bb{\mathbf{b}}
\def\be{\mathbf{e}}
\def\bff{\mathbf{f}}
\def\bh{\mathbf{h}}
\def\bu{\mathbf{u}}
\def\bw{\mathbf{w}}
\def\by{\mathbf{y}}
\def\bz{\mathbf{z}}
\def\bGamma{\boldsymbol{\Gamma}}
\def\bgamma{\boldsymbol{\gamma}}
\def\bkappa{\boldsymbol{\kappa}}
\def\balpha{\boldsymbol{\alpha}}
\def\bbeta{\boldsymbol{\beta}}
\def\bdelta{\boldsymbol{\delta}}
\def\bLambda{\boldsymbol{\Lambda}}
\def\bPhi{\boldsymbol{\Phi}}
\def\bSigma{\boldsymbol{\Sigma}}
\def\bGamma{\boldsymbol{\Gamma}}
\def\bepsilon{\boldsymbol{\epsilon}}
\def\bzero{\mathbf{0}}
\def\bone{\mathbf{1}}
\newtheorem{thm}{Theorem}
\newtheorem{lem}{Lemma}
\newtheorem{cor}{Corollary}
\newtheorem{ass}{Assumption}
\newcommand{\CD}{\stackrel{d}{\longrightarrow}}
\newcommand{\CP}{\stackrel{p}{\longrightarrow}}
\def\section{\@startsection {section}{1}{\z@}{-3.5ex plus -1ex minus-.2ex}{2.3ex plus .2ex}{\large\bf}}
\def\subsection{\@startsection {subsection}{1}{\z@}{-3.5ex plus -1ex minus-.2ex}{2.3ex plus .2ex}{\normalsize\bf}}
\title{\textbf{Bias Correction in Factor-Augmented Regression Models with Weak Factors}}
\author[$\,\!$]{\textsc{Peiyun Jiang}$^*$}
\author[$\,\!$]{\textsc{Yoshimasa Uematsu}$^\dagger$}
\author[$\,\!$]{\textsc{Takashi Yamagata}$^\ddagger$}
	\affil[$*$]{\textit{Faculty of Economics and Business Administration,
	Tokyo Metropolitan University
	}}
	\affil[$\dagger$]{\textit{Department of Social Data Science, Hitotsubashi University}}
	\affil[$\ddagger$]{\textit{Department of Economics and Related Studies, University of York}}
	\affil[$\ddagger$]{\textit{Institute of Social and Economic Research, Osaka University}}
	\date{December 26, 2024}
\begin{document}



%
%
%

\maketitle

\begin{abstract}
	In this paper, we study the asymptotic bias of the factor-augmented regression estimator and its reduction, which is augmented by the $r$ factors extracted from a large number of $N$ variables with $T$ observations. In particular, we consider general weak latent factor models with $r$ signal eigenvalues that may diverge at different rates, $N^{\alpha _{k}}$, $0<\alpha _{k}\leq 1$, $k=1,\dots,r$. 
	In the existing literature, the bias has been derived using an approximation for the estimated factors with a specific data-dependent rotation matrix $\hat{\bH}$ for the model with $\alpha_{k}=1$ for all $k$, whereas we derive the bias for weak factor models.
	In addition, we derive the bias using the approximation with a different rotation matrix $\hat{\bH}_q$, which generally has a smaller bias than with $\hat{\bH}$. 
	We also derive the bias using our preferred approximation with a purely signal-dependent rotation ${\bH}$, which is unique and can be regarded as the population version of $\hat{\bH}$ and $\hat{\bH}_q$. Since this bias is parametrically inestimable, we propose a split-panel jackknife bias correction, and theory shows that it successfully reduces the bias.
	The extensive finite-sample experiments suggest that the proposed bias correction works very well, and the empirical application illustrates its usefulness in practice.

\end{abstract}
\textbf{Keywords.} Factor model, Asymptotic bias, Jackknife, Cross-sectional dependence, Weak factors.{\let\thefootnote\relax\footnote{$^*$Corresponding author: Email: jiang-peiyun@tmu.ac.jp; Address: Faculty of Economics and Business Administration, Tokyo Metropolitan University, 1-1 Minami-Osawa, Hachioji-shi, Tokyo, Japan 192-0397}}


\section{Introduction}

Factor-augmented regressions are widely used in financial and economic research. They are often used to forecast macroeconomic and financial time series. The forecast regression is augmented with a few common factors extracted from a large set of predictors. Specifically, the $h$-ahead forecast regression of $y_t$ is written as
\begin{align}\label{Augmodel}
	y_{t+h}={\bgamma^*}'\bff_t^*+\bbeta' \bw_t+\epsilon_{t+h}, \quad t=1,\dots,T,
\end{align}
where $\bff_t^*$ is an $r \times 1$ vector of latent predictive factors, $\bw_t$ is a $p \times 1$ vector of observable predictors, $(\bgamma^{*\prime},\bbeta')'$ is an $(r+p)\times 1$ vector of their coefficients, and $\epsilon_{t+h}$ is an error term. 
The latent $r$ factors drive a large number of $N$ predictors: 
\begin{align}\label{factormodel}
	x_{t,i}=\bb_{i}^{*\prime} \bff_t^* + e_{t,i},  
	\quad t=1,\dots,T,\quad i=1,\dots,N,
\end{align}
where $\bb_{i}^{*}$ is an $r\times 1$ vector of unknown factor loadings and $e_{t,i}$ is an error term. 

Since $\bff_t^*$ is unobserved, it is typically replaced by the principal component (PC) estimator, $\hat{\bff}_t$ such that $T^{-1}\sum_{t=1}^T\hat{\bff}_t\hat{\bff}_t^{\prime}=\bI_r$, obtained as $\sqrt{T}$ times the $r$ eigenvectors corresponding to the $r$ largest eigenvalues ($\hat{\lambda}_{1}>\dots>\hat{\lambda}_{r}$) of the $T \times T$ sample covariance matrix of $N$ predictors, $(x_{t,i})$. 
Let $(\hat{\bgamma}',\hat{\bbeta}')'$ be the least squares (LS) estimators of the regression of $y_{t+h}$ on $(\hat{\bff}_t^{\prime}, \bw_t^{\prime})^{\prime}$.
\cite{StockWatson2002JASA}, \cite{BaiNg2006} and \cite{GoncalvesPerron2014,gonccalves2020bootstrapping} employ the asymptotic approximation of the PC factor by rotated latent factors with a data dependent (but infeasible) rotation matrix $\hat{\bH}$:
\begin{align}\label{"consis"}
	\hat{\bff}_t=\hat{\bH}'\bff_t^* +o_p(1),
\end{align}
where $\hat{\bH}=  \sum_{i=1}^N{\bb}_{i}^{*}{\bb}_i^{*\prime}T^{-1}\sum_{t=1}^T{\bff}_t^{*}\hat{\bff}_t^{\prime} \hat{\bLambda}^{-1}$ with $\hat{\bLambda}=\diag{(\hat{\lambda}_{1},\dots,\hat{\lambda}_{r})}$. Using \eqref{"consis"}, we approximate the first term on the right-hand side of \eqref{Augmodel} as $ \bgamma^{*\prime}\bff_t^*=\bgamma_{\hat{\bH}}^{\prime}\hat{\bff}_t + o_p(1)$, 
where $\bgamma_{\hat{\bH}}:=\hat{\bH}^{-1}\bgamma^{*}$ is what $\hat{\bgamma}$ estimates. 
\cite{BaiNg2006} show that so long as $\sqrt{T}/N \to 0$, 
the limiting distribution of $\sqrt{T}(\hat{\bgamma}-\bgamma_{\hat{\bH}})$ is centered at zero (i.e., no asymptotic bias).

When $N$ is not sufficiently large relative to $T$ to satisfy $\sqrt{T}/N \to c\in(0,\infty)$,  
\cite{Ludvigson2011} show that $\sqrt{T}(\hat{\bgamma}-\bgamma_{\hat{\bH}})$ has an asymptotic bias and provide an analytical bias correction. \cite{GoncalvesPerron2014} refine the asymptotic bias expression and propose an analytical bias correction as well as a wild bootstrap method to remove the asymptotic bias. \cite{gonccalves2020bootstrapping} extend \cite{GoncalvesPerron2014} to allow the bias corrections for errors $e_{t,i}$ to be cross-correlated using the large covariance matrix estimator proposed by \cite{BickelLevina2008}. 

The existing literature derives the asymptotic bias by assuming that all the $r$ largest eigenvalues of the sample covariance matrix of $x_{t,i}$, $(\hat{\lambda}_{1},\dots,\hat{\lambda}_{r})$, diverge proportionally to $N$ for sufficiently large samples. This is known as the \textit{strong factor} (SF) model. To the best of our knowledge, this paper is the first to study the derivation of asymptotic bias for the more general \textit{weak factor} (WF) models, in which each $\hat{\lambda}_{k}$ can diverge at a different rate $N^{\alpha_k}$, with $\alpha_{1}\geq\dots \geq{\alpha}_{r}$, $\alpha_k \in (0,1]$, $k=1,\dots,r$ for sufficiently large samples.
A growing body of literature suggests that such WFs are widely observed in real data; see \citet{BaileyEtAl2016,BaileyEtAl2021}, \cite{DeMol2008}, \cite{Freyaldenhoven21JoE}, \cite{Onatski2010}, \cite{UY2019,UY2019inference}, \cite{WeiZhang2023}, among many others.
In particular, we show that if $\sqrt{T}/N^{(3\alpha_r - \alpha_1)/2} \to c_1\in(0,\infty)$, $\sqrt{T}(\hat{\bgamma}-\bgamma_{\hat{\bH}})$ has an asymptotic bias.
Using the derived asymptotic bias expression, we propose an analytical bias-corrected estimator, called the ``plug-in estimator'' in \cite{GoncalvesPerron2014,gonccalves2020bootstrapping}, using the POET estimator proposed by \cite{FanEtAl2013} and extended for WF models by \cite{RunyuEtAl2024}, allowing for cross-sectional and serial correlations in $e_{t,i}$. 

As introduced in \cite{BaiNg2023} and \cite{jiang2023revisiting}, there are variants of asymptotically equivalent data-dependent rotations other than $\hat{\bH}$. Among these, we consider $\hat{\bH}_q=(T^{-1}\sum_{t=1}^T \hat{\bff}_t{\bff}_t^{*\prime})^{-1}$, and show that if $\sqrt{T}/N^{\alpha_r}\to c_2\in(0,\infty)$, $\sqrt{T}(\hat{\bgamma}-\bgamma_{\hat{\bH}_q})$ will have an asymptotic bias, where $\bgamma_{\hat{\bH}_q}=\hat{\bH}_q^{-1}\bgamma^{*}$. Note that the rate $\sqrt{T}/N^{\alpha_r}$ is not slower than $\sqrt{T}/N^{(3\alpha_r - \alpha_1)/2}$. 
Furthermore, it is found that the bias of $\sqrt{T}(\hat{\bgamma}-\bgamma_{\hat{\bH}_q})$ is generally smaller in magnitude than that of $\sqrt{T}(\hat{\bgamma}-\bgamma_{\hat{\bH}})$. We propose an analytical bias correction that makes the bias virtually zero even with small samples. In addition, we show that the asymptotic bias of $\sqrt{T}(\hat{\bgamma}-\bgamma_{\hat{\bH}_q})$ becomes \textit{exactly} zero when $\bff_t^*$ and $\bw_t$ are uncorrelated. To exploit this property, we also suggest projecting out $\bw_t$ from $x_{t,i}$ and then extracting the factors, say $\hat{\bff}_w$ to augment the regression. In fact, this is often done in practice; see \cite{YamamotoHara2022}, for example.

We have discussed the asymptotic biases of $\hat{\bgamma}$ relative to $\bgamma_{\hat{\bH}}$ and $\bgamma_{\hat{\bH}_q}$, and a natural question that arises here is: ``Relative to which rotation matrix should the asymptotic bias of $\hat{\bgamma}$ be evaluated?'' Actually, for inferential purposes in particular, our preferred choice is neither ${\hat{\bH}}$ nor ${\hat{\bH}_q}$; instead, we argue for choosing $\bH$ composed only of the signals $\bff_t^*$ and $\bb_i^*$ such that  $T^{-1}\sum_{t=1}^T{\bff}_t^0 {\bff}_t^{0\prime}=\bI_r$, 
where 
\begin{align}\label{f0}
	\bff_t^0:=\bH'\bff_t^*.
\end{align}
\cite{jiang2023revisiting} have shown that up to sign the PC estimator, $\hat{\bff}_t$, is \textit{consistent} to $\bff_t^0$ and $\bH$ \textit{always} exists and is unique. Therefore, $\bH$ can be considered as the population version of the data-dependent rotation matrices, $\hat{\bH}$ and $\hat{\bH}_q$. Substituting \eqref{f0} into \eqref{Augmodel}, we can rewrite the model as
\begin{align}\label{consis}
	y_{t+h}={\bgamma^0}'\bff_t^0+\bbeta' \bw_t+\epsilon_{t+h},
\end{align}
where $\bgamma^0= {\bH}^{-1}\bgamma^{*}$. Thus, a regression of $y_{t+h}$ on $(\hat{\bff}_t,\bw_t)$ consistently estimates the parameters $(\bgamma^{0\prime},\bbeta')'$. We prefer to consider the asymptotic bias of $\sqrt{T}(\hat{\bgamma}-\bgamma^0)$. Note that $\bgamma^0$ is a function of the signal parameters $\{\bgamma^*,\bff_{t}^*,\bb_{i}^*\}$,  whereas $\bgamma_{\hat{\bH}}$ and $\bgamma_{\hat{\bH}_q}$ are functions of data, $(x_{t,i})$.
We show that if $\sqrt{T}/N^{(3\alpha_r - \alpha_1)/2} \to c_1\in(0,\infty)$, $\sqrt{T}(\hat{\bgamma}-\bgamma^0)$ has an asymptotic bias, 
but it seems difficult to obtain the explicit analytical expression. 
Therefore, we propose the \textit{split-panel Jackknife bias-correction}, and show that it effectively reduces the bias. Our Jackknife bias-correction is generally less computationally expensive than bootstrapping, while allowing for cross-sectional and serial correlations in $e_{t,i}$.

The finite sample behavior in terms of bias, standard deviation and tail distribution of the estimators $(\hat{\bgamma},\hat{\bbeta})$ and their bias-corrected versions relative to ($\bgamma_{\hat{\bH}}$, $\bgamma_{\hat{\bH}_q}$, $\bgamma^0$) and $\bbeta$, are investigated for SF and WF models with cross-sectional and serially correlated errors, $e_{t,i}$. Throughout the design, the analytically bias-corrected estimator with respect to $(\bgamma_{\hat{\bH}_q},\bbeta)$ has the least bias, the least size distortion of t-tests, with the smallest standard errors, while the performance of the estimators with respect to $(\bgamma_{\hat{\bH}},\bbeta)$ is far worse than others. Our preferred jackknife bias-corrected estimator with respect to the parameter $(\bgamma^0,\bbeta)$ comes in second, closely following the performance of the bias-corrected estimator with respect to $(\bgamma_{\hat{\bH}_q},\bbeta)$. The proposed jackknife method successfully reduces the bias with a small increase in the standard deviation, leading to the correct size of the test even for very weak factor models for sufficiently large sample sizes.
In addition, the empirical power curve of the significance test shows that, on balance, the split-panel jackknife estimator appears to be the most reliable among the estimators compared.

We apply the bias-corrected estimator to the factor-augmented forecast regression of bond yields $y_{t+h}$ on the factors extracted from 131 monthly macroeconomic series and one observed predictor, the \cite{CochranePiazzesi2005} factor, over the period January 1982 to December 2002. In the application, we have regarded $(\bgamma^0,\bbeta)$ as the parameter to be estimated, and the results show that the jackknife appears to effectively correct the bias of the LS estimator, thus providing more reliable inference. In particular, the jackknife estimator has given significantly different results from the rest of the estimators for testing the equal explanatory power of the factors.

The rest of the paper is organized as follows. Section \ref{sec:2} introduces models and assumptions, and also clarifies specific objects of our analysis.
Section \ref{sec:bias} derives the asymptotic biases of the augmented regression estimator with two different data dependent rotations, then derives the asymptotic bias with respect to the signal parameter and proposes a jackknife bias reduction.
To reduce the derived bias, Section \ref{sec:Mw} proposes to orthogonalize the predictor variables to the observable factor in the augmented regression before extracting the factors.
Section \ref{sec: MC} summarizes finite sample experiments and Section \ref{sec:emp} illustrates the practical usefulness of our proposed methods.
Section \ref{sec:con} contains some concluding remarks.
Mathematical proofs and additional experimental results can be found in the Online Appendix.

\noindent \textbf{Notations}: 
For any matrix $\bM=(m_{t,i})\in\mathbb{R}^{T\times N}$, we define the Frobenius norm
as 
$\|\bM\|_{\F}=(\sum_{t,i}m_{t,i}^2)^{1/2}$ and $\|\bM\|_2$  
denotes the square root of the largest eigenvalue for a positive semi-definite matrix $\bM'\bM$.
We denote the identity matrix of order $s$ by $\bI_s$ and $s\times 1$ vectors of ones and zeros by $\bone_s$ and $\bzero_s$. We use $\lesssim$ ($\gtrsim$) to represent $\leq$ ($\geq$) up to a positive constant factor. $\odot$ denotes the Hadamard product of matrices. For any positive sequences $a_n$ and $b_n$,  we write $a_n \asymp b_n$ if $a_n \lesssim b_n$ and $a_n \gtrsim b_n$. 
All asymptotic results are for cases where $N,T\to\infty$, and we do not specifically mention it. $M$ denotes a positive constant which does not depend on $N$ and $T$.

\section{Preliminaries}\label{sec:2}

\subsection{Factor-augmented regression}

The factor-augmented regression model \eqref{Augmodel} can be rewritten in a matrix form as:
\begin{align}\label{augmodel_mat}
	\by={\bF}^*\bgamma^* + \bW \bbeta + \bepsilon 
	=\bZ^* \bdelta^* + \bepsilon,
\end{align}
where $\by=(y_{1+h},\dots,y_{T+h})'$, $\bepsilon=(\epsilon_{1+h},\dots,\epsilon_{T+h})'$, $\bF^{*} = (\bff_1^{*},\dots,\bff_T^{*})'$, $\bW=(\bw_1,\dots, \bw_T)'$, $\bZ^* = (\bF^*,\bW)$, and $\bdelta^* = (\bgamma^{*\prime},\bbeta')'$. 
In line with \eqref{factormodel}, the latent factor model for the $T \times N$ matrix of predictors is given by
\begin{align}\label{factormodel_mat}
	\bX={\bF}^*\bB^{* \prime} + \bE,
\end{align}
where $\bX = (x_{t,i})$, $\bB^* = (\bb_1^* ,\dots,\bb_N^*)'$, and $\bE = (e_{t,i})$.
Let $(\lambda_1>\cdots>\lambda_r)$ 
be the $r$ largest eigenvalues of the signal covariance matrix of model \eqref{factormodel_mat}, $T^{-1}\bF^*\bB^{\ast\prime}\bB^{\ast}\bF^{\ast\prime}$, and set $\bLambda=\diag(\lambda_1,\dots,\lambda_r)$. We allow them to diverge at different rates; namely, we assume $\lambda_k \asymp N^{\alpha_k}$ with $0<\alpha_k\leq 1$ for $k=1,\dots,r$. We call the model in \eqref{factormodel_mat} with such eigenvalues a weak factor (WF) model and that with $\alpha_r=1$ a strong factor (SF) model as a special case. 


\cite{StockWatson2002JASA} proposed to extract PC factors from the predictors $\bX$, and then use them in the forecast regression. The PC estimator, $(\hat{\mathbf{F}}, \hat{\mathbf{B}})$, is defined as a solution to minimization of $\left\|\mathbf{X}-\mathbf{F B}^{\prime}\right\|_{\mathrm{F}}^2$ subject to the $r^2$ restrictions, $T^{-1} \mathbf{F}^{\prime} \mathbf{F}=\mathbf{I}_r$ and $\mathbf{B}^{\prime} \mathbf{B}$ diagonal with rank $r$. This reduces to the eigen-problem of $T^{-1} \mathbf{X X}^{\prime}$; the factor estimator $\hat{\mathbf{F}} \in \mathbb{R}^{T \times r}$ is obtained as $\sqrt{T}$ times the $r$ eigenvectors associated with the $r$ largest eigenvalues of $T^{-1} \mathbf{X} \mathbf{X}^{\prime}$ $(\hat{\lambda}_1>\cdots>\hat{\lambda}_r)$, 
and the loading estimator $\hat{\mathbf{B}} \in \mathbb{R}^{N \times r}$ is obtained by $\hat{\mathbf{B}}=T^{-1} \mathbf{X}^{\prime} \hat{\mathbf{F}}$. By the construction, we have $T^{-1} \hat{\mathbf{F}}^{\prime} \hat{\mathbf{F}}=\mathbf{I}_r$ and $\hat{\mathbf{B}}^{\prime} \hat{\mathbf{B}}=\hat{\boldsymbol{\Lambda}}=\operatorname{diag}(\hat{\lambda}_1, \ldots, \hat{\lambda}_r)$.
Setting $\hat{\bZ}=(\hat{\bF},\bW)$, we are interested in the LS estimator, 
\begin{align}\label{ols}
	\hat{\bdelta}=(\hat\bZ'\hat\bZ)^{-1}\hat\bZ'\by. 
\end{align}

%
%

\subsection{Rotation matrices}

The problem lies in the impossibility of separately identifying $(\bF^*,\bgamma^*)$ and $(\bF^*,\bB^*)$ due to the rotation indeterminacy when we consider estimation. More precisely, the products ${\bF}^*\bgamma^*$ and $\bF^* \bB^{*\prime}$ are observationally equivalent to $({\bF}^* \bR) (\bR^{-1} \bgamma^*)$ and $(\bF^* \bR) (\bR^{-1} \bB^{*\prime})$, respectively, for any $r\times r$ invertible matrix $\bR$ while the LS and PC estimators, $\hat{\bdelta}$ and  $(\hat{\bF},\hat{\bB})$, are uniquely determined. Therefore, replacing $\bF^*\bR$ with $\hat{\bF}$ in \eqref{augmodel_mat} raises the question; which rotation matrices $\bR$ justify approximation $\hat{\bF}\approx \bF^*\bR$ and $\hat{\bgamma}\approx \bR^{-1}\bgamma^*$? In this paper, we consider three rotation matrices, $\hat{\bH}$, $\hat{\bH}_q$, and $\bH$, for such $\bR$, where the first two matrices depend on data, but the last one consists only of $(\bF^*,\bB^*)$; see \cite{BaiNg2023} for more information on data-dependent rotations and \cite{jiang2023revisiting} for $\bH$.

\subsubsection{First data-dependent rotation matrix: $\hat{\mathbf{H}}$}
\cite{BaiNg2002,BaiNg2006} and \cite{StockWatson2002JASA} consider the approximation 
\begin{align}\label{FHhat}
	\hat{\bF}=\bF^* \hat{\bH} + o_p(1)~~~\text{with}~~~
	\hat{\bH} = {\bB^*}'\bB^*(T^{-1}{\bF^*}'\hat{\bF}) \hat{\bLambda}^{-1}, 
\end{align}
where and hereafter $o_p(1)$ is understood to apply row-wise. 
Using this approximation, we rewrite the model in \eqref{augmodel_mat} as 
\begin{align}\label{augmodel_Hhat}
	\by={\bF^*}\hat{\bH}\bgamma_{\hat{\bH}} + \bW \bbeta + \bepsilon
	=\hat{\bZ} \bdelta_{\hat{\bH}} + \bu 
\end{align}
with $\bu = \bepsilon - ({\hat\bF -\bF^* \hat{\bH}})\hat{\bH}^{-1}\bgamma^*$, where $\bdelta_{\hat{\bH}}=(\bgamma_{\hat{\bH}}',\bbeta')'$ and $\bgamma_{\hat{\bH}}=\hat{\bH}^{-1} \bgamma^*$.
Thus, the LS estimator $\hat{\bdelta}=(\hat{\bgamma}',\hat{\bbeta}')'$, obtained by the regression of $\by$ on $\hat{\bZ}$, is interpreted as the ``estimator'' of the data-dependent parameter $\bdelta_{\hat{\bH}}$. 
The error term $\bu$ contains $({\hat\bF -\bF^* \hat{\bH}})\hat{\bH}^{-1}\bgamma^*$ due to the approximation by $\hat{\bF}$, which is correlated with the regressor $\hat{\bZ}$ and can cause the asymptotic bias in $\sqrt{T}(\hat\bdelta - \bdelta_{\hat{\bH}})$.


\subsubsection{Second data-dependent rotation matrix: $\hat{\mathbf{H}}_q$}
As pointed out by \cite{BaiNg2023} and \cite{jiang2023revisiting}, there are many asymptotically equivalent rotation matrices to $\hat{\bH}$ for the approximation of $\hat{\bF}$.
In particular, \cite{BaiNg2023}, \cite{jiang2023revisiting} and \cite{WeiZhang2023} suggested the approximation
\begin{align}\label{FHhatq}
	\hat{\bF}=\bF^* \hat{\bH}_q + o_p(1) ~~~\text{with}~~~
	\hat{\bH}_q = (T^{-1}\hat{\bF}'{\bF^*})^{-1}. 
\end{align}
A similar discussion above will lead to a study of the asymptotic bias of $\sqrt{T}(\hat\bdelta - \bdelta_{\hat{\bH}_q})$,
where $\bdelta_{\hat{\bH}_q}=(\bgamma_{\hat{\bH}_q}',\bbeta')'$ and $\bgamma_{\hat{\bH}_q}=\hat{\bH}_q^{-1} \bgamma^*$.
%

\subsubsection{Population rotation matrix: $\mathbf{H}$}
We consider yet another rotation matrix, which is regarded as a population version of any data-dependent rotation matrix asymptotically equivalent to $\hat{\bH}$ and $\hat{\bH}_q$.
\cite{jiang2023revisiting} have shown that there always exists the unique (up to sign) rotation matrix $\bH$, which is a pure function of $(\bF^* , \bB^*)$, such that $T^{-1}\bF^{0 \prime}\bF^0 = \bI_r$ and $\bB^{0 \prime}\bB^0 = \bLambda$ with $\bF^0:=\bF^* \bH$ and $\bB^0:=\bB^* \bH^{-1\prime}$. Such a unique rotation matrix is indeed represented as
\begin{align}
	\bH = \bP\bV^{-1/2},
\end{align}
where $\bP$ is the eigenvector matrix of ${\bB^*}'\bB^*\left(T^{-1}{\bF^*}'\bF^*\right)$ corresponding to $(\lambda_1,\dots\lambda_r)$, and $\bV=\bP\left(T^{-1}{\bF^*}'\bF^*\right)\bP'$.
This rotation matrix is straightforwardly rewritten as
\begin{align}\label{HalaHhat}
	{\bH} = {\bB^*}'\bB^*(T^{-1}{\bF^*}'{\bF^0}) {\bLambda}^{-1}
	=(T^{-1}{\bF^0}'{\bF^*})^{-1},
\end{align}
insisting that $\bH$ is the population counterpart of $\hat{\bH}$ and $\hat{\bH}_q$. \cite{jiang2023revisiting} show that 
\begin{align}
	\hat{\bF} = \bF^0 + o_p(1)
\end{align}
and $\hat{\bB} = \bB^0 + o_p(1)$ in the WF setting. As in the same way, this approximation rewrites the model in \eqref{augmodel_mat} as
\begin{align}\label{pseudomodel}
	\by={\bF^0}\bgamma^0 + \bW \bbeta + \bepsilon
	=\hat{\bZ} \bdelta^0 + \bu^0 
\end{align}
with $\bu^0 = \bepsilon- ({\hat\bF -\bF^* {\bH}})\bgamma^0$  where $\bdelta^0=(\bgamma^{0\prime},\bbeta')'$ and $\bgamma^0={\bH}^{-1} \bgamma^*$.
We also study the asymptotic bias of
$\sqrt{T}(\hat\bdelta - \bdelta^0)$. Unlike the others, $\bdelta^0$ does not depend on the data.

\subsection{Assumptions}

For the asymptotic analysis that follows, we make the following assumptions.
\begin{ass}\normalfont\label{ass:eigen}
	The smallest eigenvalues of ${\bB^*}'\bB^*$ and $T^{-1}{\bF^*}'\bF^*$ are bounded away from zero.
\end{ass}

\begin{ass}[Idiosyncratic errors]\normalfont\label{ass:errors}${ }^{ }$\\
	(i) $\E[e_{t,i}]=0$ and $\E[e_{t,i}^4]\le M$ for all $i$ and $t$;\\
	(ii) For all $i$, $\left|\E[e_{s,i} e_{t,i}]\right| \leq\left|\gamma_{s, t}\right|$ for some $\gamma_{s, t}$ such that $\sum_{t=1}^T\left|\gamma_{ s, t}\right| \leq M$;\\
	(iii)  For all $t$, $\left|\E[e_{t,i} e_{t,j}]\right| \leq\left|\tau_{i, j}\right|$ for some $\tau_{i, j}$ such that $\sum_{j=1}^N\left|\tau_{ i,j}\right| \leq M$;\\
	(iv) $\left\|\bE \right\|_2^2=O_p(\max{\{N,T}\})$.
\end{ass}

\begin{ass}[Signal strength]\normalfont\label{ass:signal}
	There exist random or non-random variables $d_1,\dots,d_r>0$ and constants $0<\alpha_r\leq \dots \leq \alpha_1 \leq 1$ such that {$\lambda_k=d_kN^{\alpha_k}$} for $k=1,\dots, r$ with ordered $0<\lambda_r< \dots < \lambda_1$ for large $N$. If $d_k$'s are random, we have {$\E[d_k^2] \le M$} for all $k$.
\end{ass}

Denote $\bN=\diag(N^{\alpha_1}, \dots,N^{\alpha_r})$ and $\bD=\diag(d_1,\dots,d_r)$, so that we can write {$\bLambda=\bD \bN$}. We do not require any specific structure in $(\bF^{*}, \bB^{*})$, such as diagonality of $\bN^{-\frac{1}{2}}\bB^{*\prime}\bB^*\bN^{-\frac{1}{2}}$ in \citet[Section 5]{BaiNg2023} and/or $T^{-1}\bF^{*\prime}\bF^*=\bI_r$ in \cite{Freyaldenhoven21JoE}. 

As discussed earlier, the PC estimators $(\hat{\bF},\hat{\bB})$ are estimating the pseudo-true parameters $({\bF^0},{\bB^0})$. We directly impose the following assumptions on them.


\begin{ass}[Factors and Loadings]\normalfont\label{ass:factor and loadings}

	Denote $\bz^0_t=(\bff^{0\prime}_t, \bw_t')'$. \\
	(i) $\E\|{\bz_t^0}\|_2^4 \le M$ and $\E\|\bb_i^0\|_2^4 \le M$;\\ 
	(ii) $\E\| \bN^{-\frac{1}{2}}\sum_{i=1}^{N} \bb^0_ie_{t,i}   \|_2^2 \le M$ for each $t$; \\
	(iii) $\E\| T^{-\frac{1}{2}}\sum_{t=1}^{T}{\bz_t^0}e_{t,i}\|_2^2 \le M$ for each $i$; \\
	(iv) The $r \times r$ matrix satisfies $\E\| T^{-\frac{1}{2}}\bN^{-\frac{1}{2}} \sum_{t=1}^{T}\sum_{i=1}^{N}\bb_i^{0}e_{t,i}{\bz_t^{0\prime}} \|_{{2}}^2 \le M$;\\
	(v) As $N, T \rightarrow \infty$, $T^{-1} \sum_{t=1}^{T}  \left( \bN^{-\frac{1}{2}}\sum_{i=1}^{N} \bb_i^0e_{t,i}\right)  \left( \bN^{-\frac{1}{2}}\sum_{i=1}^{N} \bb_i^0e_{t,i}\right)'\CP \bGamma  $, where $\bGamma = \lim_{N, T \rightarrow \infty} T^{-1} \sum_{t=1}^T \bGamma_t>0$, and $\bGamma_t =   \operatorname{Var}\left(\bN^{-\frac{1}{2}}\sum_{i=1}^{N} \bb_i^0e_{t,i} \right)$.
\end{ass}

The moment restrictions in Assumption 4 (iii), (iv) 
are essentially similar to Assumptions D, F2 
in \cite{Bai2003}, and Assumption 4 (ii) 
is similar moment restriction related for $\bb_i^0$.
Assumption (v) is similar to Assumption 3(e) in \cite{GoncalvesPerron2014}.

Now we impose assumptions on the pseudo-true augmented model \eqref{pseudomodel}:

\begin{ass}[Weak dependence between idiosyncratic errors and regression errors]\normalfont\label{ass:2errors} 
	The $r \times r$ matrix satisfies $\E\left\|T^{-\frac{1}{2}}\bN^{-\frac{1}{2}} \sum_{t=1}^{T} \sum_{i=1}^N \bb_i^0 e_{t,i} \varepsilon_{t+h}\right\|_2^2 \leq M$. 
\end{ass}

\begin{ass}[Moments, parameters and CLT]
	\normalfont\label{ass:Aug_errors}${ }^{ }$\\
	(i) $\E[\epsilon_{t+h}] =0$ and $\E|\epsilon_{t+h}|^2 < M $;\\
	(ii) $||\bdelta^0||_{2}\leq M$ and $\bH\CP \bH_0$ which is fixed and invertible;\\
	(iii) $\E||\bz_t^0 ||^4 \leq M$,
	$T^{-1/2} {\bZ}^{0\prime}\bepsilon \CD N(\mathbf{0},\bSigma_{\bZ^0 \bepsilon})$, $T^{-1} {\bZ}^{0\prime}\bZ^0 \CP \bSigma_{\bZ^0 \bZ^0}$, where $\bSigma_{\bZ^0\bepsilon}$ and $\bSigma_{\bZ^0 \bZ^0}$ are fixed, positive definite and bounded.
	
\end{ass}
Assumptions \ref{ass:2errors} and \ref{ass:Aug_errors} are similar to Assumption 4 in \cite{GoncalvesPerron2014} and Assumption E in \cite{BaiNg2006}, respectively.



\section{Bias Analysis}\label{sec:bias}

\subsection{Bias analysis with $\hat{\bH}$}\label{subsec:Hhat}

The asymptotic bias of $\sqrt{T}(\hat{\bdelta}-\bdelta_{\hat{\bH}})$ has only been investigated in the literature, assuming the SF model. 
\cite{BaiNg2006} show that the limiting distribution of $\sqrt{T}(\hat{\bdelta}-\bdelta_{\hat{\bH}})$ is centered at zero (i.e., no asymptotic bias) so long as $\sqrt{T}/N \to 0$.
Under a relaxed condition of $\sqrt{T}/N \to c\in[0,\infty)$, 
\cite{Ludvigson2011} show that $\sqrt{T}(\hat{\bdelta}-\bdelta_{\hat{\bH}})$ has an asymptotic bias and provide an analytical correction. \cite{GoncalvesPerron2014} refine the bias expression, and propose an analytical correction as well as a wild bootstrap method to remove the bias. \cite{gonccalves2020bootstrapping} extend \cite{GoncalvesPerron2014} to allow errors $e_{t,i}$ to be cross-correlated using the large covariance matrices estimator proposed by \cite{BickelLevina2008}. 

We are ready to analyze the asymptotic bias of $\sqrt{T}(\hat{\bdelta}-\bdelta_{\hat{\bH}})$ in the WF setting. 

\begin{thm}\label{thm:bias_Hhat}
	Suppose Assumptions \ref{ass:eigen}--\ref{ass:Aug_errors} hold. If {$\alpha_r>\frac{1}{2}$}, $\frac{N^{1-\alpha_r}}{\sqrt{T}} \to  0$, and $\sqrt{T}N^{\frac{1}{2}\alpha_1-\frac{3}{2}\alpha_r} \to c_1 \in [0,\infty)$, as $N, T \to \infty$, we have
	\begin{align*}
		\sqrt{T}(\hat{\bdelta}-\bdelta_{\hat{\bH}}) \CD N\left(-c_1 \bkappa_{\bdelta^*}, \bSigma_{\bdelta}\right)
	\end{align*}
	with 
	\begin{align*}
		\bkappa_{\bdelta^*}=
		\bSigma_{\bZ^0 \bZ^0}^{-1} \binom{\bG + \nu\bD^{-1}\bGamma\bD^{-1}}{\bSigma_{\bW \bF^0} \, {\bG}} \, \bH_0^{-1} \, \bgamma^* 
		~~~ \text{and}~~~
		\bSigma_{\bdelta}=  \bSigma_{\bZ^0 \bZ^0}^{-1} \bSigma_{\bZ^0 \bepsilon} \bSigma_{\bZ^0\bZ^0}^{-1},
	\end{align*}
	where 
	$c_1 \bG = \lim_{N,T\to\infty} \sqrt{T}\bN^{\frac{1}{2}} \bGamma  \bD^{-2} \bN^{-\frac{3}{2}} $, $\nu = \lim_{N\to\infty} N^{-\frac{1}{2}(\alpha_1-\alpha_r)}$ 
	and\\ $\bSigma_{\bW \bF^0} = \plim_{N,T \rightarrow \infty} T^{-1}\bW'\bF^0$.

\end{thm}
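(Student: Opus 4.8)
The plan is to begin from the least--squares identity for the rotated model. Writing $\by=\hat{\bZ}\bdelta_{\hat{\bH}}+\bu$ with $\bu=\bepsilon-(\hat{\bF}-\bF^{*}\hat{\bH})\hat{\bH}^{-1}\bgamma^{*}$,
\[
\sqrt{T}(\hat{\bdelta}-\bdelta_{\hat{\bH}})=\bigl(T^{-1}\hat{\bZ}'\hat{\bZ}\bigr)^{-1}\bigl[T^{-1/2}\hat{\bZ}'\bepsilon-T^{-1/2}\hat{\bZ}'(\hat{\bF}-\bF^{*}\hat{\bH})\hat{\bH}^{-1}\bgamma^{*}\bigr].
\]
I would first dispose of the ``standard'' pieces: using $\hat{\bF}=\bF^{0}+o_p(1)$ and $\hat{\bB}=\bB^{0}+o_p(1)$, established in \cite{jiang2023revisiting}, together with Assumptions~\ref{ass:factor and loadings} and \ref{ass:Aug_errors}, show $T^{-1}\hat{\bZ}'\hat{\bZ}\CP\bSigma_{\bZ^{0}\bZ^{0}}$ and $T^{-1/2}\hat{\bZ}'\bepsilon\CD N(\bzero,\bSigma_{\bZ^{0}\bepsilon})$, checking en route that $(\hat{\bF}-\bF^{0})'\bepsilon=o_p(T^{1/2})$ by Assumption~\ref{ass:2errors} and the rate conditions. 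This already delivers the sandwich variance $\bSigma_{\bdelta}$, so the whole content of the theorem reduces to the probability limit of the bias term $\bB_{NT}:=T^{-1/2}\hat{\bZ}'(\hat{\bF}-\bF^{*}\hat{\bH})\hat{\bH}^{-1}\bgamma^{*}$.

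For $\bB_{NT}$ I would split $\hat{\bZ}=(\hat{\bF},\bW)$ and treat the two blocks by parallel but slightly different routes. For the $\hat{\bF}$-block the clean entry point is the exact identity $\hat{\bF}'(\hat{\bF}-\bF^{*}\hat{\bH})=T(\bI_r-\hat{\bH}_q^{-1}\hat{\bH})$, which follows from $\hat{\bF}'\hat{\bF}=T\bI_r$ and $\hat{\bH}_q^{-1}=T^{-1}\hat{\bF}'\bF^{*}$, so this block equals $T^{1/2}(\bI_r-\hat{\bH}_q^{-1}\hat{\bH})\hat{\bH}^{-1}\bgamma^{*}$. For the $\bW$-block I would substitute the eigen-equation representation $\hat{\bF}-\bF^{*}\hat{\bH}=T^{-1}(\bF^{*}\bB^{*\prime}\bE'+\bE\bB^{*}\bF^{*\prime}+\bE\bE')\hat{\bF}\hat{\bLambda}^{-1}$, and, in parallel, expand $\bI_r-\hat{\bH}_q^{-1}\hat{\bH}=T^{-2}\hat{\bF}'(\bF^{*}\bB^{*\prime}\bE'+\bE\bB^{*}\bF^{*\prime}+\bE\bE')\hat{\bF}\hat{\bLambda}^{-1}$. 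In both blocks the two ``bilinear'' pieces (those carrying $\bF^{*}\bB^{*\prime}\bE'$ and $\bE\bB^{*}\bF^{*\prime}$) are $o_p(1)$: with Assumption~\ref{ass:factor and loadings}(iii)--(iv), $\hat{\bF}=\bF^{0}+o_p(1)$ and $\|\hat{\bLambda}^{-1}\|_2\asymp N^{-\alpha_r}$, each is of order $N^{\alpha_1/2-\alpha_r}$, which vanishes precisely because $\alpha_r>1/2\ge\alpha_1/2$. Thus the bias localizes entirely in the $\bE\bE'$ pieces.

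The core of the argument is then the analysis of $T^{-3/2}\hat{\bF}'\bE\bE'\hat{\bF}\hat{\bLambda}^{-1}\hat{\bH}^{-1}\bgamma^{*}$ and $T^{-3/2}\bW'\bE\bE'\hat{\bF}\hat{\bLambda}^{-1}\hat{\bH}^{-1}\bgamma^{*}$. Write $\bE\bE'=\E[\bE\bE']+(\bE\bE'-\E[\bE\bE'])$; the centred part is $o_p(1)$ by Assumption~\ref{ass:errors}(ii)--(iv) and the rates. For the mean part, replace each occurrence of $\hat{\bF}$ by $\bF^{0}$ plus its leading expansion, which itself carries a further factor $\hat{\bLambda}^{-1}$. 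The order-$N$ ``diagonal'' component of $\E[\bE\bE']$ contracts against $\bF^{0}$ into an $O(T^{-1/2}N^{1-\alpha_r})$ term, which dies under $N^{1-\alpha_r}/\sqrt{T}\to0$; and in the $\hat{\bF}$-block the symmetric remainder of this piece is cancelled by the rotation adjustment --- a cancellation forced by the constraint $\hat{\bF}'\hat{\bF}=T\bI_r$. What survives is the genuinely second-order term carrying two factors of $\hat{\bLambda}^{-1}$: re-expressing the doubly-contracted error second moments through $\bGamma_t=\Var(\bN^{-1/2}\sum_i\bb_i^{0}e_{t,i})$, invoking Assumption~\ref{ass:factor and loadings}(v), and tracking the diagonal conjugations by $\bN$ and $\bD$, the factors assemble into $\sqrt{T}\,\bN^{1/2}\bGamma\bD^{-2}\bN^{-3/2}\to c_1\bG$, plus a residual $c_1\nu\bD^{-1}\bGamma\bD^{-1}$ that is nonzero only at the equal-strength boundary $\alpha_1=\alpha_r$ (so $\nu=1$). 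The $\bW$-block runs through the same expansion with the external $\hat{\bF}'$ replaced by $\bW'$; since $T^{-1}\bW'\bF^{0}\CP\bSigma_{\bW\bF^{0}}$ and the orthonormality cancellation does not enter there, it contributes $\bSigma_{\bW\bF^{0}}c_1\bG$ and no $\nu$-term. Collecting, $\bB_{NT}\CP c_1\binom{\bG+\nu\bD^{-1}\bGamma\bD^{-1}}{\bSigma_{\bW\bF^{0}}\bG}\bH_0^{-1}\bgamma^{*}$, and Slutsky's theorem with the first step gives the claim.

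The main obstacle is exactly this last bookkeeping. One must show that the term that naively looks dominant --- of size $N^{1-\alpha_r}$ per factor of $\sqrt{T}$ --- is in fact spurious for the $\hat{\bF}$-block (cancelled by the rotation correction that $\hat{\bF}'\hat{\bF}=T\bI_r$ imposes) and then isolate and evaluate the genuinely surviving second-order contribution, all while keeping track of the widely different magnitudes $N^{\alpha_k}$ of the eigenvalue channels flowing through $\hat{\bLambda}^{-1}$ and of the bilinear and quadratic moments of $\bE$. It is here that the three rate restrictions must be matched to the remainders they control: $\alpha_r>1/2$ suppresses the bilinear pieces, $N^{1-\alpha_r}/\sqrt{T}\to0$ kills the would-be-leading diagonal piece, and $\sqrt{T}N^{\frac12\alpha_1-\frac32\alpha_r}\to c_1$ keeps $c_1\bG$ (and the boundary term $c_1\nu\bD^{-1}\bGamma\bD^{-1}$) finite and non-degenerate while all higher-order remainders vanish. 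A secondary point is the $\bW$-block, where --- unlike the $\hat{\bF}$-block --- the different-strength channels do not collapse, so the limit has to be carried through as $\bSigma_{\bW\bF^{0}}\bG$ rather than simplified prematurely.
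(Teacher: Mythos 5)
Your skeleton (the LS identity, the eigen-equation expansion of $\hat{\bF}-\bF^{*}\hat{\bH}$, separate treatment of the $\hat{\bF}$- and $\bW$-blocks, and the disposal of $T^{-1/2}(\hat{\bF}-\bF^{*}\hat{\bH})'\bepsilon$ for the variance part) matches the paper's. But there is a genuine error in where you localize the bias, and it would derail the computation. You declare the two cross terms (carrying $\bE\bB^{*}\bF^{*\prime}$ and $\bF^{*}\bB^{*\prime}\bE'$) to be $o_p(1)$ of order $N^{\alpha_1/2-\alpha_r}$ and conclude that ``the bias localizes entirely in the $\bE\bE'$ pieces.'' The paper's proof shows exactly the reverse: the term $T^{-3/2}\hat{\bF}'\bE\bE'\hat{\bF}\hat{\bLambda}^{-1}$ (and its $\bW$-analogue) is bounded by $O_p(N^{1-\alpha_r}/\sqrt{T})+O_p(\sqrt{T}N^{-3\alpha_r/2})+O_p(N^{1/2-\alpha_r})$ and is \emph{entirely} negligible under the stated rates, whereas the bias $c_1\bG+c_1\nu\bD^{-1}\bGamma\bD^{-1}$ comes from the cross terms after one further substitution of the eigen-expansion. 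The obstruction to your route is structural: $\bGamma=\lim T^{-1}\sum_t\Var(\bN^{-1/2}\bB^{0\prime}\be_t)$ can only be manufactured from the quadratic form $\bN^{-1/2}\bB^{0\prime}\bE'\bE\bB^0\bN^{-1/2}/T$, i.e.\ it needs a loading matrix on \emph{both} sides of $\bE'\bE$. The matrix $\E[\bE\bE']$ is a $T\times T$ object with entries $\sum_i\E[e_{t,i}e_{s,i}]$ in which the loadings never appear, so no contraction of it against $\bF^0$, $\bW$ or $\hat{\bLambda}^{-1}$ can produce $\bGamma$; carried through honestly, your ``mean part of $\bE\bE'$'' analysis would return a zero bias.

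The slip in your order count for the cross terms is that you substitute $\hat{\bF}=\bF^0+o_p(1)$ inside them and keep only the leading, mean-zero part, e.g.\ $\frac{1}{\sqrt{T}}\frac{1}{T}\bF^{0\prime}\bE\bB^0(\cdots)\hat{\bLambda}^{-1}=O_p(N^{\alpha_1/2-\alpha_r})$, which indeed vanishes. But the $o_p(1)$ correction to $\hat{\bF}$, fed back through the same eigen-equation, yields the piece
\begin{align}
\frac{1}{\sqrt{T}}\,\frac{\hat{\bF}'\bF^0}{T}\,\bB^{0\prime}\bE'\Bigl(\tfrac{1}{T}\bE\bB^0\bF^{0\prime}\hat{\bF}\hat{\bLambda}^{-1}\Bigr)\hat{\bLambda}^{-1},
\end{align}
which has a non-degenerate mean and is $O_p(\sqrt{T}N^{\alpha_1/2-3\alpha_r/2})\to c_1\bG$ --- strictly \emph{larger} than the part you retained, since $\sqrt{T}N^{-\alpha_r/2}\asymp N^{\alpha_r-\alpha_1/2}\to\infty$ under the theorem's rate condition. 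The companion piece obtained by expanding $\hat{\bF}'\bE$ in the other cross term gives the $\nu\bD^{-1}\bGamma\bD^{-1}$ contribution (surviving only when $\alpha_1=\alpha_r$), and the same iteration in the $\bW$-block gives $\bSigma_{\bW\bF^0}\bG$. So the correct bookkeeping is: kill the pure $\bE\bE'$ and triple-$\bE$ terms by the rate assumptions, and extract the bias from the second-order iteration of the two cross terms --- precisely the terms your proposal discards at first order.
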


Theorem \ref{thm:bias_Hhat} establishes the asymptotic normality of $\sqrt{T}(\hat{\bdelta}-\bdelta_{\hat{\bH}})$, with the bias appearing in the location. Because $\nu$ takes the values one (when $\alpha_1=\alpha_r$) or zero (when $\alpha_1>\alpha_r$) only, when $c_1=0$ the asymptotic bias becomes zero and the distribution is identical to the results in \citet[Theorem 4]{jiang2023revisiting}.

Essentially, asymptotic bias is due to the correlation between the regressor matrix $\hat{\bF}$ and the estimation error shown in \eqref{augmodel_Hhat}. Therefore, when $\bW$ and $\bF^*$ are orthogonal, $\hat{\bbeta}$ will not have an asymptotic bias since $\bSigma_{\bW \bF^0}=\bzero$ and $\bSigma_{\bZ^0 \bZ^0}$ is block diagonal.

Note that the bias term $\bkappa_{\bdelta^*}$ is estimable, and the associated bias-corrected estimator is given in \eqref{bcHhats} in Section \ref{sec: MC}.

In general, $\bSigma_{\bdelta}$ can be consistently estimated by $\hat{\bSigma}_{\bdelta}=(T^{-1}\hat{\bZ}^{\prime}\hat{\bZ})^{-1}
\hat{\bSigma}_{\bZ^0 \bepsilon}
(T^{-1}\hat{\bZ}^{\prime}\hat{\bZ})^{-1}
$
so long as $\hat{\bSigma}_{\bZ^0 \bepsilon}$ is consistent to ${\bSigma}_{\bZ^0 \bepsilon}$. 
A choice of $\hat{\bSigma}_{\bZ^0 \bepsilon}$ depends on the property of $\epsilon_t$. For example, when $\epsilon_t$ is heteroskedastic, choose $\hat{\bSigma}_{\bZ^0 \bepsilon}=
T^{-1}\sum_{t=1+h}^{T+h}\hat\bz_t \hat{\epsilon}_t^2 \hat\bz_t'$, where $\hat{\epsilon}_{t+h}=y_{t+h}-\hat{\bdelta}'\hat{\bz}_t$.

The expressions $c_1 \bG = \lim_{N, T \rightarrow \infty} \sqrt{T}\bN^{\frac{1}{2}} \bGamma  \bD^{-2} \bN^{-\frac{3}{2}} $ and 
$\nu \bD^{-1}\bGamma  \bD^{-1}$ 
suggest a complicated asymptotic bias structure, depending on the structure of $(\alpha_1,\dots,\alpha_r)$. When all the divergence rates are identical, in which $\alpha=\alpha_1=\cdots=\alpha_r$, $c_1 = \lim_{N, T \rightarrow \infty} \sqrt{T}/N^{\alpha}$ and $\nu=1$ so that $c_1\bG = c_1\bGamma \bD^{-2}$ and $\nu \bD^{-1}\bGamma \bD^{-1} = \bD^{-1}\bGamma \bD^{-1}$. 
When $\alpha_1>\alpha_r$, $\nu=0$ so that the term $\bD^{-1}\bGamma \bD^{-1}$ disappears, while $c_1 = \lim_{N, T \rightarrow \infty}\sqrt{T}N^{\frac{1}{2}\alpha_1-\frac{3}{2}\alpha_r}$ and the structure of the matrix $\bG$ depends on how many factors diverge at the rate of $N^{\alpha_1}$ and $N^{\alpha_r}$. To see this, for simplicity, suppose that all the factors diverge in different rates (i.e., $\alpha_1>\alpha_2>\cdots>\alpha_r$). In $c_1 \bG$, $\sqrt{T}\bGamma\bD^{-2}$ is pre-multiplied by $\bN^{\frac{1}{2}}=\diag(N^{\frac{1}{2}\alpha_1},...,N^{\frac{1}{2}\alpha_r})$ and post-multiplied by $\bN^{-\frac{3}{2}}=\diag(N^{-\frac{3}{2}\alpha_1},...,N^{-\frac{3}{2}\alpha_r})$. 
The combination of elements that disappear at the slowest rate $\sqrt{T}N^{\frac{1}{2}\alpha_1-\frac{3}{2}\alpha_r}$ is given by the product of the first element in $\bN^{\frac{1}{2}}$ and the last element in $\bN^{-\frac{3}{2}}$. Therefore, the only non-zero element in $c_1\bG$ is the $(1,r)$-th element, as other elements tend to zero. 
Now let the first $u$ factors and the last $s$ factors ($u+s\leq r)$ share the equal divergence rates (i.e. $\alpha_1=\cdots=\alpha_u>\cdots>\alpha_{r-s+1}=\cdots=\alpha_r$). Then, a similar discussion leads to the conclusion that 
the $u\times s$ sub-matrix in the upper right corner of $c_1\bG$ is non-zero.

The structure of $c_1\bG$ for different sets of $\alpha_k$, $k=1,\dots,r$, is summarized in the following corollary.

\begin{cor}\label{cor1}
Suppose that the conditions for the results in Theorem \ref{thm:bias_Hhat} are satisfied and $c_1 \in (0,\infty)$. 
Consider an $r\times 1$ vector $\balpha=(\alpha_1,\alpha_2,\ldots,\alpha_r)'$. Let an $r\times 1$ vector of a binary variable be $\be_{\alpha_1}$, which replaces elements in $\balpha$ with $1$ if they are $\alpha_1$ and $0$ otherwise. Similarly, define an $r\times 1$ vector of a binary variable $\be_{\alpha_r}$ for $\alpha_r$. By construction, $||\be_{\alpha_1}||_2^2+||\be_{\alpha_r}||_2^2\leq r$. 
If $\alpha_1=\alpha_r$, then
$c_1\bG = c_1 \bGamma  \bD^{-2}$.
If $\alpha_1>\alpha_r$, then
$c_1\bG = c_1 (\be_{\alpha_1} \be_{\alpha_r}')\odot\bGamma  \bD^{-2}$.
\end{cor}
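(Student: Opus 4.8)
The plan is to prove the corollary by a direct entry-by-entry analysis of the $r\times r$ matrix $\sqrt{T}\,\bN^{1/2}\bGamma\bD^{-2}\bN^{-3/2}$, whose limit defines $c_1\bG$ in Theorem \ref{thm:bias_Hhat}. Since $\bN^{1/2}$ and $\bN^{-3/2}$ are diagonal, its $(j,k)$-th entry is simply
\[
\sqrt{T}\,N^{\frac{1}{2}\alpha_j}\,(\bGamma\bD^{-2})_{jk}\,N^{-\frac{3}{2}\alpha_k}
= \sqrt{T}\,N^{\frac{1}{2}\alpha_j-\frac{3}{2}\alpha_k}\,(\bGamma\bD^{-2})_{jk},
\qquad j,k=1,\dots,r,
\]
so the only thing that varies across entries is the scalar rate $\sqrt{T}\,N^{\frac{1}{2}\alpha_j-\frac{3}{2}\alpha_k}$ multiplying the bounded quantity $(\bGamma\bD^{-2})_{jk}$, which is bounded because $\bGamma>0$ is finite by Assumption \ref{ass:factor and loadings}(v) and the $d_k$ are bounded away from zero. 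Only the last of the three rate conditions in Theorem \ref{thm:bias_Hhat}, namely $\sqrt{T}\,N^{\frac{1}{2}\alpha_1-\frac{3}{2}\alpha_r}\to c_1\in(0,\infty)$, will be used.

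Next I would factor, for each $(j,k)$,
\[
\sqrt{T}\,N^{\frac{1}{2}\alpha_j-\frac{3}{2}\alpha_k}
= \bigl(\sqrt{T}\,N^{\frac{1}{2}\alpha_1-\frac{3}{2}\alpha_r}\bigr)\,
N^{\frac{1}{2}(\alpha_j-\alpha_1)+\frac{3}{2}(\alpha_r-\alpha_k)},
\]
and note that, by Assumption \ref{ass:signal}, $\alpha_1\ge\alpha_j$ and $\alpha_k\ge\alpha_r$ for all $j,k$, so the exponent $\frac{1}{2}(\alpha_j-\alpha_1)+\frac{3}{2}(\alpha_r-\alpha_k)$ is nonpositive; it equals zero precisely when $\alpha_j=\alpha_1$ and $\alpha_k=\alpha_r$, and is strictly negative otherwise. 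Combining this with $\sqrt{T}\,N^{\frac{1}{2}\alpha_1-\frac{3}{2}\alpha_r}\to c_1\in(0,\infty)$ and the convergence/boundedness of $(\bGamma\bD^{-2})_{jk}$ shows that the $(j,k)$-th entry of $\sqrt{T}\,\bN^{1/2}\bGamma\bD^{-2}\bN^{-3/2}$ tends to $c_1(\bGamma\bD^{-2})_{jk}$ when $\alpha_j=\alpha_1$ and $\alpha_k=\alpha_r$, and to $0$ otherwise.

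To finish, I would recognize this surviving pattern as a Hadamard product. By the definition of the binary vectors, the $(j,k)$-th entry of $\be_{\alpha_1}\be_{\alpha_r}'$ equals $(\be_{\alpha_1})_j(\be_{\alpha_r})_k$, which is $1$ exactly when $\alpha_j=\alpha_1$ and $\alpha_k=\alpha_r$ and $0$ otherwise; hence the limiting matrix is precisely $c_1\,(\be_{\alpha_1}\be_{\alpha_r}')\odot\bGamma\bD^{-2}$, which is the $\alpha_1>\alpha_r$ case and has its nonzero mass in the $u\times s$ upper-right block. When $\alpha_1=\alpha_r$ all the $\alpha_k$ coincide, so $\be_{\alpha_1}=\be_{\alpha_r}=\bone_r$, the Hadamard product with $\bone_r\bone_r'$ acts as the identity, and $c_1=\lim\sqrt{T}/N^{\alpha_1}$, recovering $c_1\bG=c_1\bGamma\bD^{-2}$ and matching the discussion preceding the corollary.

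I expect no deep obstacle: the statement is essentially a bookkeeping consequence of Theorem \ref{thm:bias_Hhat}. The only points requiring care are (i) that every entry outside the upper-right block genuinely vanishes, which relies on the exponent above being \emph{strictly} negative off that block together with boundedness of $(\bGamma\bD^{-2})_{jk}$ so that a vanishing scalar rate annihilates the entry; and (ii) that the conclusion holds at the level of the whole $r\times r$ matrix, which is immediate since convergence is entrywise and $r$ is a fixed finite dimension, so the entrywise limits assemble into the matrix limit without further argument.
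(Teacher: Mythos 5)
Your proposal is correct and is essentially the argument the paper intends: the paper omits a formal proof of Corollary \ref{cor1} and points back to the discussion in Section \ref{subsec:Hhat}, which is precisely the entrywise rate comparison you carry out — the $(j,k)$ entry of $\sqrt{T}\bN^{1/2}\bGamma\bD^{-2}\bN^{-3/2}$ scales as $\sqrt{T}N^{\frac{1}{2}\alpha_j-\frac{3}{2}\alpha_k}$, which survives only when $\alpha_j=\alpha_1$ and $\alpha_k=\alpha_r$. Your explicit factorization through $\sqrt{T}N^{\frac{1}{2}\alpha_1-\frac{3}{2}\alpha_r}\to c_1$ and the identification of the surviving block with $(\be_{\alpha_1}\be_{\alpha_r}')\odot\bGamma\bD^{-2}$ is simply a cleaner write-up of what the paper leaves implicit.
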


Theorem \ref{thm:bias_Hhat} and Corollary \ref{cor1} can be viewed as the generalized version of the results in \cite{GoncalvesPerron2014,gonccalves2020bootstrapping} for WF models, which tell us the following. If $\alpha_1 =\alpha_r = 1$, then the model becomes a SF model, and the condition $\sqrt{T}N^{\frac{1}{2}\alpha_1-\frac{3}{2}\alpha_r}\to c_1$ reduces to $\sqrt{T}/N\to c_1$ and our result reduces to the result in Theorem 2.1 in \cite{GoncalvesPerron2014}.
The asymptotic bias expression remains the same so long as $\alpha_1=\alpha_r(=\alpha)$, whilst the associated condition $\sqrt{T}/N^{\alpha}\to c_1$ implies that, other things being equal, the weaker the model, the larger the bias.
For $\alpha_1\neq\alpha_r$, the condition $\sqrt{T}N^{\frac{1}{2}\alpha_1-\frac{3}{2}\alpha_r}\to c_1$ implies that, all else being equal, the larger the deviation between $\alpha_1$ and $\alpha_r$ or the weaker the model, the larger the bias. Furthermore, all else being equal, the more heterogeneous the divergence rates are, the more the asymptotic bias deviates from that with $\alpha_1=\alpha_r$.



\subsection{Bias analysis with $\hat{\bH}_q$}


We next show the asymptotic bias of $\sqrt{T}(\hat{\bdelta}-\bdelta_{\hat{\bH}_q})$. We reveal that the bias is generally smaller than that of $\sqrt{T}(\hat{\bdelta}-\bdelta_{\hat{\bH}})$.

\begin{thm}\label{thm:bias_H3}
Suppose Assumptions \ref{ass:eigen}--\ref{ass:Aug_errors} hold. If $\alpha_r>\frac{1}{2}$, $\frac{N^{1-\alpha_r}}{\sqrt{T}} \to  0$, and $\sqrt{T}N^{-\alpha_r} \to c_2 \in [0,\infty)$, as $N, T \to \infty$, we have 
\[
\sqrt{T}(\hat{\bdelta}-\bdelta_{\hat{\bH}_q}) \CD N\left(c_2 \bar{\bkappa}_{\bdelta^*}, \bSigma_{\bdelta}\right)
\]
with
\begin{align*}
\bar{\bkappa}_{\bdelta^*} = \bSigma_{\bZ^0 \bZ^0}^{-1}  \binom{\mathbf{0}}{\bSigma_{\bW \bF^0} \bar{\bG} } \bH_0^{-1}  \bgamma^*,
\end{align*}
where $c_2 \bar{\bG} = \lim_{N, T \rightarrow \infty}\sqrt{T}\bN^{-\frac{1}{2}} \bD^{-1}\bGamma  \bD^{-1}\bN^{-\frac{1}{2}} $. 
If $\alpha_1=\alpha_r$, then $c_1=c_2$ and $c_2\bar{\bG}=c_2\bD^{-1}\bGamma  \bD^{-1}$. 
If $\alpha_1 > \alpha_r$, then   
$c_2\bar{\bG} = c_2(\be_{\alpha_r} \be_{\alpha_r}')
\odot\bD^{-1}\bGamma  \bD^{-1}$ .

\end{thm}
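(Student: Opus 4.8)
The plan is to follow the proof of Theorem~\ref{thm:bias_Hhat}, exploiting an exact algebraic identity that is special to $\hat\bH_q$ and forces the factor block of the bias to vanish. Starting from \eqref{pseudomodel} with error $\bepsilon - (\hat\bF - \bF^*\hat\bH_q)\bgamma_{\hat\bH_q}$, the LS algebra gives
\begin{align*}
	\sqrt{T}\big(\hat\bdelta - \bdelta_{\hat\bH_q}\big) = \big(T^{-1}\hat\bZ'\hat\bZ\big)^{-1}\Big(T^{-1/2}\hat\bZ'\bepsilon - \sqrt{T}\,T^{-1}\hat\bZ'(\hat\bF - \bF^*\hat\bH_q)\,\bgamma_{\hat\bH_q}\Big).
\end{align*}
Since $\hat\bH_q = (T^{-1}\hat\bF'\bF^*)^{-1}$ and $T^{-1}\hat\bF'\hat\bF = \bI_r$, we have $T^{-1}\hat\bF'(\hat\bF - \bF^*\hat\bH_q) = \bI_r - (T^{-1}\hat\bF'\bF^*)\hat\bH_q = \bzero$ \emph{exactly}, so the factor block of $\hat\bZ'(\hat\bF - \bF^*\hat\bH_q)$ vanishes identically. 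This is the structural reason the location takes the form $\binom{\bzero}{\cdot}$, and (since $\bSigma_{\bZ^0\bZ^0}$ is block diagonal when $\bSigma_{\bW\bF^0}=\bzero$) the reason the bias disappears altogether once $\bW$ and $\bF^*$ are uncorrelated. The asymptotic normality and the covariance $\bSigma_{\bdelta}$ are inherited verbatim from the proof of Theorem~\ref{thm:bias_Hhat}: under $\alpha_r>\tfrac12$ and $N^{1-\alpha_r}/\sqrt{T}\to 0$ one has $\hat\bF = \bF^0 + o_p(\cdot)$, $T^{-1}\hat\bZ'\hat\bZ\CP\bSigma_{\bZ^0\bZ^0}$, $T^{-1/2}(\hat\bF - \bF^0)'\bepsilon = o_p(1)$, and $T^{-1/2}\hat\bZ'\bepsilon\CD N(\bzero,\bSigma_{\bZ^0\bepsilon})$ by Assumptions~\ref{ass:2errors}--\ref{ass:Aug_errors}.

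It remains to evaluate $\sqrt{T}\,T^{-1}\bW'(\hat\bF - \bF^*\hat\bH_q)\bgamma_{\hat\bH_q}$. Combining $\bF^0 = \bF^*\bH$, the representation $\bH^{-1} = T^{-1}\bF^{0\prime}\bF^*$ from \eqref{HalaHhat}, and $\hat\bH_q^{-1} = T^{-1}\hat\bF'\bF^*$ gives $\hat\bH_q^{-1} - \bH^{-1} = T^{-1}(\hat\bF - \bF^0)'\bF^*$, whence the clean decomposition
\begin{align*}
	\hat\bF - \bF^*\hat\bH_q = (\hat\bF - \bF^0) + \bF^0\,\big[T^{-1}(\hat\bF - \bF^0)'\bF^0\big]\,\bH^{-1}\hat\bH_q,
\end{align*}
so that $\sqrt{T}\,T^{-1}\bW'(\hat\bF - \bF^*\hat\bH_q) = \sqrt{T}\,T^{-1}\bW'(\hat\bF - \bF^0) + (T^{-1}\bW'\bF^0)\big[\sqrt{T}\,T^{-1}(\hat\bF - \bF^0)'\bF^0\big]\bH^{-1}\hat\bH_q$. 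Into both pieces I substitute the eigen-expansion used for Theorem~\ref{thm:bias_Hhat} and in \cite{jiang2023revisiting},
\begin{align*}
	\hat\bF - \bF^0 = \bF^0\big(\bLambda\,T^{-1}\bF^{0\prime}\hat\bF\,\hat\bLambda^{-1} - \bI_r\big) + T^{-1}\bF^0\bB^{0\prime}\bE'\hat\bF\hat\bLambda^{-1} + T^{-1}\bE\bB^0\bF^{0\prime}\hat\bF\hat\bLambda^{-1} + T^{-1}\bE\bE'\hat\bF\hat\bLambda^{-1},
\end{align*}
replace $\hat\bF\to\bF^0$, $\hat\bLambda\to\bLambda=\bD\bN$, $\hat\bH_q,\bH\to\bH_0$, $T^{-1}\bW'\bF^0\to\bSigma_{\bW\bF^0}$, and isolate the non-vanishing conditional means. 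The constraint $T^{-1}\hat\bF'\hat\bF=T^{-1}\bF^{0\prime}\bF^0=\bI_r$ yields $\operatorname{sym}\big(T^{-1}(\hat\bF-\bF^0)'\bF^0\big) = -\tfrac12\,T^{-1}(\hat\bF-\bF^0)'(\hat\bF-\bF^0)$, and with $\hat\bF-\bF^0 \approx \bE\bB^0\bLambda^{-1}$ and $\bB^{0\prime}\E[\bE'\bE]\bB^0 \approx T\bN^{1/2}\bGamma\bN^{1/2}$ this gives $\E\big[T^{-1}(\hat\bF-\bF^0)'\bF^0\big]\approx-\tfrac12\bN^{-1/2}\bD^{-1}\bGamma\bD^{-1}\bN^{-1/2}$; the remaining means (from the ``normalization'' term $\bF^0(\bLambda\,T^{-1}\bF^{0\prime}\hat\bF\,\hat\bLambda^{-1}-\bI_r)$ inside $\sqrt{T}\,T^{-1}\bW'(\hat\bF-\bF^0)$, obtained from expansions of $T^{-1}\bF^{0\prime}\hat\bF$ and of $\hat\bLambda$) are evaluated similarly, while the $T^{-1}\bE\bE'\hat\bF\hat\bLambda^{-1}$ term contracted with $\bW$ is $O_p(N^{1-\alpha_r}/\sqrt{T})=o_p(1)$ and the $T^{-1}\bE\bB^0\bF^{0\prime}\hat\bF\hat\bLambda^{-1}$, $T^{-1}\bF^0\bB^{0\prime}\bE'\hat\bF\hat\bLambda^{-1}$ terms are mean zero of order $N^{-\alpha_r/2}$. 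After the cancellation discussed below the surviving means sum to $\sqrt{T}\,T^{-1}\bW'(\hat\bF - \bF^*\hat\bH_q)\CP-c_2\,\bSigma_{\bW\bF^0}\bar{\bG}$, hence $\sqrt{T}\,T^{-1}\bW'(\hat\bF-\bF^*\hat\bH_q)\bgamma_{\hat\bH_q}\CP-c_2\,\bSigma_{\bW\bF^0}\bar{\bG}\,\bH_0^{-1}\bgamma^*$, and substituting back gives the location $c_2\bar{\bkappa}_{\bdelta^*}$. Finally, $c_1=c_2$ when $\alpha_1=\alpha_r$ because the two rate conditions then coincide, in which case $\bN^{-1/2}=N^{-\alpha_r/2}\bI_r$ gives $c_2\bar{\bG}=c_2\bD^{-1}\bGamma\bD^{-1}$; and when $\alpha_1>\alpha_r$ the $(k,l)$ entry of $\sqrt{T}\bN^{-1/2}\bD^{-1}\bGamma\bD^{-1}\bN^{-1/2}$ equals $\sqrt{T}N^{-(\alpha_k+\alpha_l)/2}(\bD^{-1}\bGamma\bD^{-1})_{kl}$, which tends to $c_2(\bD^{-1}\bGamma\bD^{-1})_{kl}$ if $\alpha_k=\alpha_l=\alpha_r$ and to zero otherwise — the rate-counting argument of Corollary~\ref{cor1} — yielding $c_2\bar{\bG}=c_2(\be_{\alpha_r}\be_{\alpha_r}')\odot\bD^{-1}\bGamma\bD^{-1}$.

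The main obstacle is the bookkeeping in the second step: correctly identifying which conditional means survive, and above all showing that all remainders are $o_p(1)$ \emph{in the weak-factor regime}, where $\hat\bLambda^{-1}$ can inflate entries by $N^{-\alpha_r}$ and the mismatched exponents in $\bN^{\pm1/2}$, $\bN^{\pm3/2}$ must be tracked coordinate by coordinate through the moment bounds in Assumptions~\ref{ass:errors}--\ref{ass:factor and loadings}. The delicate point is that the ``large'' contribution of size $\sqrt{T}\bN^{1/2}\bGamma\bD^{-2}\bN^{-3/2}$ — the one that drives the $\hat\bH$ bias in Theorem~\ref{thm:bias_Hhat} and can diverge when $\alpha_1>\alpha_r$ — must be shown to cancel \emph{exactly}, not merely to leading order; structurally this is the same fact as $T^{-1}\hat\bF'(\hat\bF-\bF^*\hat\bH_q)=\bzero$, namely that the component of $\hat\bF-\bF^*\hat\bH_q$ along $\hat\bF$ (which for $\hat\bH$ is precisely where that bias resides) is identically zero, so that only the smaller $\bN^{-1/2}\bD^{-1}\bGamma\bD^{-1}\bN^{-1/2}$ term remains. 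Comparing the resulting $c_2\bar{\bkappa}_{\bdelta^*}$ (zero factor block, and $c_2=0$ whenever $\alpha_1>\alpha_r$ with $c_1\in(0,\infty)$) with $c_1\bkappa_{\bdelta^*}$ then delivers the claim that the $\hat\bH_q$-bias is generically the smaller of the two.
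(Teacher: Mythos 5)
Your two structural observations are precisely the ones the paper's proof rests on: the exact identity $T^{-1}\hat\bF'(\hat\bF-\bF^*\hat\bH_q)=\bI_r-(T^{-1}\hat\bF'\bF^*)\hat\bH_q=\bzero$ is what kills the factor block of the bias, and the normality/variance part is inherited from Theorem \ref{thm:bias_Hhat} via $T^{-1/2}(\hat\bF-\bF^*\hat\bH_q)'\bepsilon=o_p(1)$ (Lemma \ref{lem:F*epsilon}(ii)) together with $T^{-1}\hat\bZ'\hat\bZ\CP\bSigma_{\bZ^0\bZ^0}$. Up to that point your argument and the paper's coincide, and your final sign and the rate-counting for the two cases $\alpha_1=\alpha_r$ versus $\alpha_1>\alpha_r$ are also correct.

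The gap is in how you evaluate $\sqrt{T}\,T^{-1}\bW'(\hat\bF-\bF^*\hat\bH_q)$. Writing $\tilde\bH_q:=\bH^{-1}\hat\bH_q=(T^{-1}\hat\bF'\bF^0)^{-1}$, so that $\bF^*\hat\bH_q=\bF^0\tilde\bH_q$ and $T^{-1}(\hat\bF-\bF^0)'\bF^0=\tilde\bH_q^{-1}-\bI_r$, your decomposition is exactly $\sqrt{T}\,T^{-1}\bW'(\hat\bF-\bF^0)+(T^{-1}\bW'\bF^0)\sqrt{T}(\bI_r-\tilde\bH_q)$, and you propose to read off the limit of each piece from its conditional mean. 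But neither piece has an identifiable limit on its own: each contains $(T^{-1}\bW'\bF^0)\sqrt{T}(\tilde\bH_q-\bI_r)$ with opposite signs, and $\sqrt{T}(\tilde\bH_q-\bI_r)$ is exactly the object the paper declares analytically intractable (it is what produces the inestimable $\bh_{\bgamma^*}$ in Theorem \ref{thm:bias_H}). Worse, under the conditions of this theorem it is only $O_p(\sqrt{T}N^{\frac12\alpha_1-\frac32\alpha_r})=O_p(c_2N^{\frac12(\alpha_1-\alpha_r)})$, which can diverge when $\alpha_1>\alpha_r$ and $c_2>0$, so the two pieces need not even be $O_p(1)$ separately. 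Your normalization identity pins down only the symmetric part of $T^{-1}(\hat\bF-\bF^0)'\bF^0$, and only with coefficient $-\tfrac12$ while $\bar\bG$ carries coefficient $1$, so the other half is deferred to "remaining means evaluated similarly"; the antisymmetric part is left uncontrolled. The paper avoids all of this by never separating the two pieces: it substitutes the exact expansion $\hat\bF-\bF^0\tilde\bH_q=T^{-1}\bE\bE'\hat\bF\hat\bLambda^{-1}+T^{-1}\bE\bB^0\bF^{0\prime}\hat\bF\hat\bLambda^{-1}-\bF^0(T^{-1}\hat\bF'\bF^0)^{-1}T^{-1}\hat\bF'\bE\hat\bB(\hat\bB'\hat\bB)^{-1}$ (from $\hat\bB=T^{-1}\bX'\hat\bF$ and the eigen-equation) and shows in Lemma \ref{lem:bias_Hhat3}, term by term under the weak-factor rates, that after contraction with $T^{-1}\bW'$ only the term involving $T^{-1}\bB^{0\prime}\bE'\bE\bB^0$ sandwiched between $\hat\bLambda^{-1}$ and $(\hat\bB'\hat\bB)^{-1}$ survives, converging to $c_2\bSigma_{\bW\bF^0}\bar\bG$ directly. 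To repair your route you would have to recombine your two pieces at the level of that expansion before taking any limits, at which point you are reproducing the paper's lemma rather than bypassing it.
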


Theorem \ref{thm:bias_H3} establishes the asymptotic normality of $\sqrt{T}(\hat{\bdelta}-\bdelta_{\hat{\bH}_q})$ with the bias generally smaller than that in Theorem \ref{thm:bias_Hhat}.
To see this, we compare their difference, $c_1{\bkappa}_{\bdelta^*}-c_2\bar{\bkappa}_{\bdelta^*}$, under the comparable condition of $c_1=c_2$ (and $\nu=1$), or, equivalently, $\alpha_1 = \alpha_r$. 
Under the simplest error assumption, $\be_{t}\sim \text{i.i.d.}(\bzero,\sigma_e^2 \bI_N)$, we have $\bGamma=\sigma_e^2 \bD$, and hence $\bD^{-1}\bGamma\bD^{-1}=\bGamma\bD^{-2}=\bG = \bar{\bG}$. Therefore, in this case a direct calculation yields 
\begin{align}
\|{\bkappa}_{\bdelta^*}-\bar{\bkappa}_{\bdelta^*}\|_2
=\left\|\bSigma_{\bZ^0 \bZ^0}^{-1}  \binom{{\bG}+\bD^{-1}\bGamma\bD^{-1}}{\bSigma_{\bW \bF^0} ({\bG}-\bar{\bG}) } \bH_0^{-1}  \bgamma^*\right\|_2
=\left\|\bSigma_{\bZ^0 \bZ^0}^{-1}  \binom{2\sigma_e^2\bD^{-1}}{\bzero } \bH_0^{-1}  \bgamma^*\right\|_2 \geq 0,
\end{align}
meaning that the magnitude of the bias of $\sqrt{T}(\hat{\bdelta}-\bdelta_{\hat{\bH}})$ is not smaller than that of $\sqrt{T}(\hat{\bdelta}-\bdelta_{\hat{\bH}_q})$. 
We have ${\bkappa}_{\bdelta^*}=\bar{\bkappa}_{\bdelta^*}$ only when $\bgamma^* = \bzero$.
It should also be noted that, other things being equal, the bias difference increases as the scaled noise to signal ratio ($\sigma_e^2\bD^{-1}$) and/or $||\bgamma^*| |_2$ increases. 

Regarding the conditions, only difference is that $a_{NT}:=\sqrt{T}N^{\frac{1}{2}\alpha_1-\frac{3}{2}\alpha_r} \to c_1$ in Theorem \ref{thm:bias_Hhat} is replaced by $b_{NT}:=\sqrt{T}N^{-\alpha_r} \to c_2$ in Theorem \ref{thm:bias_H3}.  We find that $b_{NT}$ is not slower than $a_{NT}$ by taking the ratio: $a_{NT}/b_{NT}=N^{\frac{1}{2}(\alpha_1-\alpha_r)}\geq 1$, and $c_1=c_2(\neq0)$ only when $\alpha_1=\alpha_r$. 
This is essentially because the approximation $T^{-1}\bW'(\hat{\bF}-\bF^* \hat{\bH}_{{q}})=o_p(1)$ has a much faster convergence rate than the approximation $T^{-1}\bW'(\hat{\bF}-\bF^* \hat{\bH})=o_p(1)$. 

The condition $\sqrt{T}/N^{\alpha_r}\to c_2$ in Theorem \ref{thm:bias_H3} implies that, all else being equal, the weaker the model, the larger the bias.
The bias term $\bar{\bkappa}_{\bdelta^*}$ is estimable and the associated bias-corrected estimator is given in \eqref{bcHhats} in Section \ref{sec: MC}. 

The expression $c_2 \bar{\bG} = \lim_{N, T \rightarrow \infty}\sqrt{T}\bN^{-\frac{1}{2}} \bD^{-1}\bGamma  \bD^{-1}\bN^{-\frac{1}{2}}$ again suggests a complicated asymptotic bias structure, depending on the structure of $(\alpha_1,\dots,\alpha_r)$. When all the divergence rates are identical, in which $\alpha=\alpha_1=\cdots=\alpha_r$, $c_2 = \lim_{N, T \rightarrow \infty} \sqrt{T}/N^{\alpha}$ and $c_2 \bar{\bG} = c_2\bD^{-1}\bGamma \bD^{-1}$. 
When $\alpha_1>\alpha_r$, $c_2 = \lim_{N, T \rightarrow \infty}\sqrt{T}/N^{\alpha_r}$ and the structure of the asymptotic bias depends on how many factors diverge at the rate of $N^{\alpha_r}$. In $c_2 \bar{\bG}$, $\sqrt{T}\bGamma$ is pre- and post-multiplied by the diagonal matrix $\bD^{-1}\bN^{-\frac{1}{2}}$ with $\bN^{-\frac{1}{2}}=\diag(N^{-\frac{1}{2}\alpha_1},\dots,N^{-\frac{1}{2}\alpha_r})$. 
If $s\leq r$ elements in $\balpha$ take the value $\alpha_r$ (i.e. $\alpha_1>\dots>\alpha_{r-s+1}=\dots=\alpha_r$), the elements that disappear at the slowest rate in $\bN^{-\frac{1}{2}}$ are the last $s$ diagonal elements. Therefore, the non-zero elements in $c_2\bar{\bG}$ are the $s\times s$ block at its bottom-right corner, which leads to the last sentence in Theorem \ref{thm:bias_H3}.

Note also that the bias $\bar{\bkappa}_{\bdelta^*}$ is completely eliminated (i.e. $\bar{\bkappa}_{\bdelta^*}=\mathbf{0}$) if the observed factor $\bW$ is uncorrelated with $\bF^*$ (i.e. $\bSigma_{\bW \bF^0}=\mathbf{0}$). 
To exploit this property, we consider extracting the factor from the predictors after projecting out the observable factor, $\bw_t$, in Section \ref{sec:Mw}.


\subsection{Bias analysis with $\mathbf{H}$}\label{sec:jackknife}
In this section we consider the asymptotic bias of $\sqrt{T}(\hat\bdelta  - \bdelta^0)$. For any choice of a data dependent invertible rotation matrix, say $\hat{\bR}$, we can decompose $\sqrt{T}(\hat\bdelta  - \bdelta^0)=\sqrt{T}(\hat{\bdelta}  - \bdelta_{\hat{\bR}})+\sqrt{T}({\bdelta}_{\hat{\bR}}  - \bdelta^0)$. Thus, in principle, for any choice of $\hat{\bR}$, the asymptotic bias of $\sqrt{T}(\hat\bdelta  - \bdelta^0)$ has to be the sum of the biases due to $\sqrt{T}(\hat{\bdelta}  - \bdelta_{\hat{\bR}})$ and $\sqrt{T}({\bdelta}_{\hat{\bR}} - \bdelta^0)$.

In this section, we employ the decomposition
\begin{align}
\sqrt{T}(\hat\bdelta  - \bdelta^0) = \sqrt{T}(\hat\bdelta  - \bdelta_{\hat{\bH}_q}) + \sqrt{T}(\bdelta_{\hat{\bH}_q}-\bdelta^0).
\end{align}
The first term has been investigated in Theorem \ref{thm:bias_H3} and the second term is evaluated as
\begin{align}
\sqrt{T}(\bdelta_{\hat{\bH}_q}-\bdelta^0)=\binom{\sqrt{T}(\hat{\bH}_q^{-1}-\bH^{-1})\bgamma^*}{\mathbf{0}}=O_p(\sqrt{T}N^{\frac{1}{2}\alpha_1-\frac{3}{2}\alpha_r}), 
\end{align}
but no explicit bias expression is provided. We derive the asymptotic bias by assuming that $\sqrt{T}(\bdelta_{\hat{\bH}_q}-\bdelta^0)$ tends to a constant vector in probability, say $c_1\bh_{\bgamma^*}$, when $\sqrt{T}N^{\frac{1}{2}\alpha_1-\frac{3}{2}\alpha_r} \to c_1 \in [0,\infty)$ as $N, T \to \infty$. 
For the purpose of deriving the asymptotic bias, this additional assumption is not overly restrictive, since it includes the case in which the bias tends to zero. 
\begin{thm}\label{thm:bias_H}
Suppose that Assumptions \ref{ass:eigen}--\ref{ass:Aug_errors} hold and that $\alpha_r>\frac{1}{2}$, $\frac{N^{1-\alpha_r}}{\sqrt{T}} \to  0$, $\sqrt{T}N^{\frac{1}{2}\alpha_1-\frac{3}{2}\alpha_r} \to c_1 \in [0,\infty)$, $\sqrt{T}N^{-\alpha_r} \to c_2 \in [0,\infty)$ as $N, T \to \infty$. Then, further assuming that $\sqrt{T}(\bdelta_{\hat{\bH}_q}-\bdelta^0) \CP c_1 \bh_{\bgamma^*}$ where $\bh_{\bgamma^*}$ is a constant vector whose last $p$ rows are zero, we have 
\begin{align*}
& \sqrt{T}(\hat\bdelta  - \bdelta^0)  \CD
N\left(c_1 \bh_{\bgamma^*} + c_2 \bar{\bkappa}_{\bdelta^*} ,\bSigma_{\bdelta}\right).
\end{align*}
\end{thm}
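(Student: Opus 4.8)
\textbf{Proof proposal for Theorem \ref{thm:bias_H}.}
The plan is to use the additive decomposition already displayed just before the statement,
\begin{align*}
\sqrt{T}(\hat\bdelta - \bdelta^0) = \sqrt{T}(\hat\bdelta - \bdelta_{\hat{\bH}_q}) + \sqrt{T}(\bdelta_{\hat{\bH}_q} - \bdelta^0),
\end{align*}
and to handle the two summands separately: the first by a direct appeal to Theorem \ref{thm:bias_H3}, the second by the maintained convergence assumption. Because the second summand has a \emph{deterministic} limit, the two pieces can be combined by Slutsky's theorem without tracking any cross-covariance, so no joint-convergence argument is needed.

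First I would check that all hypotheses of Theorem \ref{thm:bias_H3} are in force here: Assumptions \ref{ass:eigen}--\ref{ass:Aug_errors}, $\alpha_r>\tfrac12$, $N^{1-\alpha_r}/\sqrt{T}\to0$, and — crucially — $\sqrt{T}N^{-\alpha_r}\to c_2\in[0,\infty)$ are all among the hypotheses of Theorem \ref{thm:bias_H}. Hence Theorem \ref{thm:bias_H3} applies verbatim and delivers
\begin{align*}
\sqrt{T}(\hat\bdelta - \bdelta_{\hat{\bH}_q}) \CD N\left(c_2 \bar{\bkappa}_{\bdelta^*}, \bSigma_{\bdelta}\right).
\end{align*}
For the remainder term, I would note that $\bdelta_{\hat{\bH}_q}=(\bgamma_{\hat{\bH}_q}',\bbeta')'$ and $\bdelta^0=(\bgamma^{0\prime},\bbeta')'$ share the same last $p$ coordinates, so
\begin{align*}
\sqrt{T}(\bdelta_{\hat{\bH}_q} - \bdelta^0) = \binom{\sqrt{T}(\hat{\bH}_q^{-1}-\bH^{-1})\bgamma^*}{\bzero_p},
\end{align*}
which is $O_p(\sqrt{T}N^{\frac{1}{2}\alpha_1-\frac{3}{2}\alpha_r})$ by the bound recorded before the theorem (obtained from $\hat{\bH}_q^{-1}-\bH^{-1}=T^{-1}(\hat{\bF}-\bF^0)'\bF^*$ together with the WF expansion of $\hat{\bF}-\bF^0$). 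Under the maintained assumption this $O_p(1)$ quantity converges in probability to the constant vector $c_1\bh_{\bgamma^*}$, whose last $p$ entries vanish — consistent with the block structure of the display above.

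Finally I would combine the two pieces: since one summand converges in distribution and the other converges in probability to a constant, Slutsky's theorem gives
\begin{align*}
\sqrt{T}(\hat\bdelta - \bdelta^0) \CD N\left(c_1\bh_{\bgamma^*} + c_2\bar{\bkappa}_{\bdelta^*}, \bSigma_{\bdelta}\right),
\end{align*}
as claimed. There is no computational obstacle here — the heavy lifting was already carried out in Theorem \ref{thm:bias_H3} — and the only delicate point is conceptual: the maintained assumption $\sqrt{T}(\bdelta_{\hat{\bH}_q}-\bdelta^0)\CP c_1\bh_{\bgamma^*}$ is precisely what upgrades the $O_p(1)$ bound into an identifiable limiting bias, and the fact that this limit is deterministic (rather than a non-degenerate random vector) is what allows the two summands to be combined without establishing their joint limiting law. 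I would also remark that the assumption is mild in that it trivially accommodates $c_1=0$, i.e.\ the case of no contribution from the rotation discrepancy. $\square$
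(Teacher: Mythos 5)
Your proposal is correct and follows essentially the same route as the paper: decompose $\sqrt{T}(\hat\bdelta-\bdelta^0)$ into $\sqrt{T}(\hat\bdelta-\bdelta_{\hat{\bH}_q})$ plus $\sqrt{T}(\bdelta_{\hat{\bH}_q}-\bdelta^0)$, invoke Theorem \ref{thm:bias_H3} for the first piece, use the maintained convergence-in-probability assumption for the second, and combine by Slutsky since the second limit is deterministic. The paper merely carries the explicit expansion from the proof of Theorem \ref{thm:bias_H3} (isolating the $\bar{\bD}_{32}$ bias term) rather than citing its conclusion, but the argument is the same.
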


The theorem tells that the bias of $\sqrt{T}(\hat\bdelta  - \bdelta^0)$ has an additional bias due to the difference between $\hat{\bH}_q$ and ${\bH}$ compared to the bias of $\sqrt{T}(\hat\bdelta  - \bdelta_{\hat{\bH}_q})$. When $\alpha_r=\alpha_1$, we have $c_2=c_1$ and the asymptotic bias is $c_2( \bh_{\bgamma^*} + \bar{\bkappa}_{\bdelta^*})$. If $\alpha_r < \alpha_1$ and $c_1 \in (0,\infty)$, then $c_2=0$, hence the asymptotic bias is $c_1\bh_{\bgamma^*}$. The sign and the magnitude of $\bh_{\bgamma^*}$ cannot be identified analytically, but the experimental results below seem to suggest that it has the same sign as the sign of $\bar{\bkappa}_{\bdelta^*}$, rather than they canceling out. 

Unlike the asymptotic biases $\bkappa_{\bdelta^*}$ and $\bar{\bkappa}_{\bdelta^*}$ in Theorems \ref{thm:bias_Hhat} and \ref{thm:bias_H3}, 
the asymptotic bias of $\sqrt{T}(\hat\bdelta  - \bdelta^0)$ is not parametrically estimable as it contains $\bh_{\bgamma^*}$. 
Here, we propose a subsampling method to reduce the bias, called the split-panel jackknife; this has been originally proposed to correct the $O(T^{-1})$ bias of fixed-effects panel data estimators by \cite{DhaeneJochman2015}, and then extended to correct the $O(N^{-1})$ bias as well by \cite{FERNANDEZVAL2016291}. To the best of our knowledge, this is the first application to the estimation of factor-augmented models.

To define the split-panel jackknife, 
consider a partition of $\{1, \ldots, N\}$ into two half-panels, $\mathcal{N}_1 :=$ $\{1, \ldots, \lfloor N/ 2 \rfloor\}$ and $\mathcal{N}_2 :=\{\lfloor N/ 2 \rfloor+1, \ldots, N\}$. 
Let $\hat{\bF}_j$ and $\hat{\bdelta}_j$ be the PC estimator using the subsample $\mathcal{N}_j$ with $T$ observations and the associated augmented regression estimator for $j=1,2$. Then, the split-panel jackknife bias-corrected estimator is given by\footnote{For finite samples, we introduce randomization of the order of the cross-sectional units to avoid potentially biased information on factors in $\mathcal{N}_j$. See \eqref{bcjkest_R} in Section \ref{sec: MC} for the procedure.}
\begin{align*}\label{bcjkest_1}
\hat{\bdelta}_{bcjk} = 2 \hat{\bdelta}-\frac{1}{2}\left(\hat{\bdelta}_{1}+\hat{\bdelta}_{2}\right).
\end{align*}

Now we derive the asymptotic bias of the jackknife bias-corrected estimator relative to the parameter $\bdelta^0$. 
\begin{thm}\label{thm:bias_JC}
Suppose that the same assumptions hold as in Theorem \ref{thm:bias_H}. Then, we have 
\begin{align*}
& \sqrt{T}(\hat\bdelta_{bcjk}  - \bdelta^0)  \CD
N\left((2-2^{\frac{1}{2}(3\alpha_r-\alpha_1)})c_1 \bh_{\bgamma^*} + (2-2^{\alpha_r})c_2 \bar{\bkappa}_{\bdelta^*} ,\bSigma_{\bdelta}\right).
\end{align*} 
\end{thm}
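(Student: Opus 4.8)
The plan is to substitute the asymptotic expansion behind Theorem \ref{thm:bias_H} into the defining identity of the split-panel estimator, applied once to the full panel and once to each randomized half-panel, and then to track how the two distinct bias rates rescale when $N$ is replaced by $\lfloor N/2\rfloor$. First I would write the exact decomposition
\[
\sqrt{T}\bigl(\hat{\bdelta}_{bcjk}-\bdelta^0\bigr)=2\,\sqrt{T}\bigl(\hat{\bdelta}-\bdelta^0\bigr)-\frac{1}{2}\sum_{j=1}^{2}\sqrt{T}\bigl(\hat{\bdelta}_{j}-\bdelta^0\bigr).
\]
For the full-panel term, Theorem \ref{thm:bias_H} gives $\sqrt{T}(\hat{\bdelta}-\bdelta^0)=\bxi_T+c_1\bh_{\bgamma^*}+c_2\bar{\bkappa}_{\bdelta^*}+o_p(1)$, where $\bxi_T$ is the leading stochastic term of the LS estimator. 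The structural fact that drives the argument is that, in the regime $\sqrt{T}N^{\frac12\alpha_1-\frac32\alpha_r}\to c_1$ and $\sqrt{T}N^{-\alpha_r}\to c_2$, the factor-estimation error feeds into $\sqrt{T}(\hat{\bdelta}-\bdelta^0)$ only through the deterministic bias, so that $\bxi_T=\bSigma_{\bZ^0\bZ^0}^{-1}T^{-1/2}\bZ^{0\prime}\bepsilon+o_p(1)\CD N(\bzero,\bSigma_{\bdelta})$ depends on the data only through $(\bF^0,\bW,\bepsilon)$; in particular $\bxi_T$ is, up to $o_p(1)$, unchanged if the factors are extracted from only half of the cross-section.

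Next I would run the proof of Theorem \ref{thm:bias_H} verbatim on the sub-panel $\mathcal{N}_j$, which uses $\lfloor N/2\rfloor$ cross-sectional units and the full $T$ observations. The randomization of the cross-sectional ordering ensures that $\mathcal{N}_j$ inherits Assumptions \ref{ass:eigen}--\ref{ass:Aug_errors} with the same limiting objects — the same exponents $(\alpha_1,\dots,\alpha_r)$, the same $\bH_0$, $\bGamma$, $\bSigma_{\bZ^0\bZ^0}$, $\bSigma_{\bZ^0\bepsilon}$, hence the same shape vectors $\bh_{\bgamma^*}$, $\bar{\bkappa}_{\bdelta^*}$ and limiting variance $\bSigma_{\bdelta}$ — the only change being that $N$ is replaced by $\lfloor N/2\rfloor$ inside the two rate factors. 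Since $\lfloor N/2\rfloor^{\frac12\alpha_1-\frac32\alpha_r}=2^{\frac12(3\alpha_r-\alpha_1)}N^{\frac12\alpha_1-\frac32\alpha_r}(1+o(1))$ and $\lfloor N/2\rfloor^{-\alpha_r}=2^{\alpha_r}N^{-\alpha_r}(1+o(1))$, the two constants become $2^{\frac12(3\alpha_r-\alpha_1)}c_1$ and $2^{\alpha_r}c_2$, so that
\[
\sqrt{T}\bigl(\hat{\bdelta}_{j}-\bdelta^0\bigr)=\bxi_T+2^{\frac12(3\alpha_r-\alpha_1)}c_1\bh_{\bgamma^*}+2^{\alpha_r}c_2\bar{\bkappa}_{\bdelta^*}+o_p(1),\qquad j=1,2,
\]
with the same $\bxi_T$ as in the full-panel expansion up to $o_p(1)$.

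Plugging the three expansions into the decomposition, the stochastic parts collapse, $\bigl(2-\frac12-\frac12\bigr)\bxi_T=\bxi_T$, so the limiting variance remains $\bSigma_{\bdelta}$, while the bias parts combine as
\begin{align*}
&2\bigl(c_1\bh_{\bgamma^*}+c_2\bar{\bkappa}_{\bdelta^*}\bigr)-\frac{1}{2}\sum_{j=1}^{2}\bigl(2^{\frac12(3\alpha_r-\alpha_1)}c_1\bh_{\bgamma^*}+2^{\alpha_r}c_2\bar{\bkappa}_{\bdelta^*}\bigr)\\
&\qquad=\bigl(2-2^{\frac12(3\alpha_r-\alpha_1)}\bigr)c_1\bh_{\bgamma^*}+\bigl(2-2^{\alpha_r}\bigr)c_2\bar{\bkappa}_{\bdelta^*}.
\end{align*}
Because all three expansions share the common leading term $\bxi_T$, Slutsky's theorem (no joint CLT is needed) yields $\sqrt{T}(\hat{\bdelta}_{bcjk}-\bdelta^0)\CD N\bigl((2-2^{\frac12(3\alpha_r-\alpha_1)})c_1\bh_{\bgamma^*}+(2-2^{\alpha_r})c_2\bar{\bkappa}_{\bdelta^*},\bSigma_{\bdelta}\bigr)$, which is the claim.

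The main obstacle is the second step: showing rigorously that $\hat{\bdelta}_j$ obeys exactly the Theorem \ref{thm:bias_H} expansion with only $N\mapsto\lfloor N/2\rfloor$ altered. Two points need care. (a) The half-panel population rotation $\bH^{(j)}$, and hence the half-panel pseudo-true value, differs both from $\bH$ and from the common target $\bdelta^0$; one must show that this discrepancy is absorbed into the rescaled constant $2^{\frac12(3\alpha_r-\alpha_1)}c_1\bh_{\bgamma^*}$ rather than producing a fresh term, which is exactly what the maintained convergence $\sqrt{T}(\bdelta_{\hat{\bH}_q}-\bdelta^0)\CP c_1\bh_{\bgamma^*}$ of Theorem \ref{thm:bias_H}, now imposed for the randomized half-panels as well, provides. (b) One must verify that halving the cross-section does not perturb $\bGamma$, $\bD$, $\bSigma_{\bZ^0\bZ^0}$ and $\bSigma_{\bZ^0\bepsilon}$ beyond the rate factors already extracted; this follows from the weak cross-sectional dependence in Assumption \ref{ass:errors}(iii) together with the randomized partition, and the condition $\alpha_r>\frac{1}{2}$ is precisely what keeps the subsampling fluctuation of $\sum_{i\in\mathcal{N}_j}\bb_i^{*}\bb_i^{*\prime}$ asymptotically negligible against the signal.
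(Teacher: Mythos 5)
Your proposal is correct and follows essentially the same route as the paper: decompose $\sqrt{T}(\hat\bdelta_{bcjk}-\bdelta^0)$ into the full-panel and half-panel expansions from Theorem \ref{thm:bias_H}, observe that the leading stochastic term $\bSigma_{\bZ^0\bZ^0}^{-1}T^{-1/2}\bZ^{0\prime}\bepsilon$ is common to all three (so the variance is unchanged and no joint CLT is required), and rescale the two bias rates via $(N/2)^{\frac{1}{2}\alpha_1-\frac{3}{2}\alpha_r}=2^{\frac{1}{2}(3\alpha_r-\alpha_1)}N^{\frac{1}{2}\alpha_1-\frac{3}{2}\alpha_r}$ and $(N/2)^{-\alpha_r}=2^{\alpha_r}N^{-\alpha_r}$. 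The two delicate points you flag — that the half-panel rotation's discrepancy must be absorbed by imposing the $c_1\bh_{\bgamma^*}$-type convergence on the half-panels as well, and that the limiting objects $\bGamma$, $\bSigma_{\bZ^0\bZ^0}$, etc.\ are unaffected by halving — are exactly the assumptions the paper's proof also makes (it simply posits $\plim\bH_j=\bH_0$ and the rescaled convergence of the half-panel rotation bias).
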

The theorem tells that the proposed jackknife bias correction \textit{always} reduces the bias of $\sqrt{T}(\hat\bdelta  - \bdelta^0)$, but the effectiveness of the bias reduction depends on the weakness of the factor model. For the SF model with $\alpha_r=1$, the asymptotic bias is removed completely. When $\alpha_r=\alpha_1$, the asymptotic bias is $(2-2^{\alpha_r})c_2(\bh_{\bgamma^*} + \bar{\bkappa}_{\bdelta^*})$, which is always smaller than the asymptotic bias of $\sqrt{T}(\hat\bdelta  - \bdelta^0)$, but the correction becomes less effective as $\alpha_r$ deviates from unity. If $\alpha_r < \alpha_1$ and $c_1 \in(0,\infty)$, then the asymptotic bias is $(2-2^{\frac{1}{2}(3\alpha_r-\alpha_1)})c_1 \bh_{\bgamma^*}$, which is again always smaller than that of $\sqrt{T}(\hat\bdelta  - \bdelta^0)$, but the correction becomes less effective as $\frac{1}{2}(3\alpha_r - \alpha_1)$ deviates from unity.


\section{Bias Analysis After Model Transformation}\label{sec:Mw}

So far, we have analyzed the asymptotic biases of the augmented regression estimator $\hat{\bdelta}$ relative to the rotated parameter $\bdelta^*$ with the different rotation matrices, $\hat{\bH}$, $\hat{\bH}_q$, and $\bH$. All of the results indicate that the bias will be present due to the replacement of rotated $\bF^*$ with the PC estimator $\hat{\bF}$. 
In this section, we introduce an estimation procedure that forces the latent factor and the observed factor to be uncorrelated. Such a procedure will eliminate the bias. In particular, the bias of the estimator relative to the rotated parameter with $\hat{\bH}_q$ becomes zero.

To begin the discussion, let $
\mathbf{M}_w=\mathbf{I}_T- \bP_w$ with $\bP_w = \mathbf{W}\left(\mathbf{W}^{\prime} \mathbf{W}\right)^{-1} \mathbf{W}^{\prime}$. In the first step, we consider the transformed model of $\bX$, $\bX_w = \mathbf{M}_w \mathbf{X}$, which is written as
\begin{align}
\mathbf{X}_w=\bF_w^* \mathbf{B}^{* \prime}+\mathbf{E}_w=\mathbf{F}_w^0 \mathbf{B}_w^{0 \prime}+\mathbf{E}_w,
\end{align}
where $\mathbf{F}_w^* = \bM_w \mathbf{F}^*$, $\mathbf{F}_w^0=\bM_w\bF^* \bH_w$, $\mathbf{B}_w^0= \bB^* \bH_w^{'-1}$,  $\mathbf{E}_w=\bM_w \bE$, and the rotation matrix $\bH_w$ is the analogous counterpart of $\bH$ in the previous section but for the model of $\bX_w$. 
Suppose Assumptions \ref{ass:eigen}--\ref{ass:Aug_errors} hold to this model, but defining the variables and parameters with the subscript $w$ in Assumptions \ref{ass:signal}--\ref{ass:Aug_errors}.
$\hat\bF_w$ is $\sqrt{T}$ times the eigenvectors associated with the first $r$ largest eigenvalues of $T^{-1}{\bX}_w\bX_w'$. 
$\hat{\bH}_{w}$ and $\hat{\bH}_{q,w}$ are the data dependent rotation matrices corresponding to $\hat{\bH}$ and $\hat{\bH}_q$ in the previous sections but for the model of $\bX_w$.
Now consider the augmented regression model, which can be re-written using $\bF_w^*$ as
\begin{align}
\nonumber
\mathbf{y}
=\mathbf{F}^* \boldsymbol{\gamma}^*+\mathbf{W} \boldsymbol{\beta}+\boldsymbol{\epsilon}
=\mathbf{F}_w^* \boldsymbol{\gamma}^*+\mathbf{W} \bbeta_w +\bepsilon 
=\mathbf{F}_w^0 \boldsymbol{\gamma}_w^0+\mathbf{W} \bbeta_w +\bepsilon,
\end{align}
where $\bbeta_w = \left(\mathbf{W}^{\prime} \mathbf{W}\right)^{-1} \mathbf{W}^{\prime}\bF^*\bgamma^*+\bbeta$ and $\boldsymbol{\gamma}_w^0=\bH_w^{-1}\bgamma^*$.
The relevant feasible augmented regression estimator is 
\begin{align}
\hat{\bdelta}_w = (\hat{\bZ}_w'\hat{\bZ}_w)^{-1}\hat{\bZ}_w'\by,
\end{align}
where $\hat{\bZ}_w=(\hat{\bF}_w,\bW)$. As easily seen, the approximations $\hat{\bF}_w = \bF_w \hat{\bH}_w + o_p(1)$, 
$\hat{\bF}_w = \bF_w \hat{\bH}_{q,w} + o_p(1)$ and 
$\hat{\bF}_w = \bF_w^0  + o_p(1)$ lead us to study the asymptotic biases of 
$\sqrt{T}(\hat{\bdelta}_w - {\bdelta}_{\hat{\bH}_w})$, 
$\sqrt{T}(\hat{\bdelta}_w - {\bdelta}_{\hat{\bH}_{q,w}})$ and
$\sqrt{T}(\hat{\bdelta}_w - {\bdelta}_w^0)$,
where 
${\bdelta}_{\hat{\bH}_w}=(\bgamma_{\hat{\bH}_w}',\bbeta_w')'$ with $\bgamma_{\hat{\bH}_w}=\hat{\bH}_w^{-1}\bgamma^{*}$,
${\bdelta}_{\hat{\bH}_{q,w}}=(\bgamma_{\hat{\bH}_{q,w}}',\bbeta_w')'$ with $\bgamma_{\hat{\bH}_{q,w}}=\hat{\bH}_{q,w}^{-1}\bgamma^{*}$, and
${\bdelta}_w^0=(\bgamma_w^{0\prime},\bbeta_w')'$. 
It is straightforward to prove the ``$\bX_w$'' versions of Theorems \ref{thm:bias_Hhat}-\ref{thm:bias_JC}, putting $\plim_{N,T\to\infty} T^{-1}\bW'\bF_w^0 =\bSigma_{\bW \bF_w^0}= \mathbf{0}$.

There are a few comments to make. After the transformation, the augmented regression coefficient on $\bZ_w^* = (\bF_w^* , \bW)$ changes to $\bdelta_w^* = (\bgamma^{*\prime},\bbeta_w')'$, which is different from $\bdelta^*$ 
unless $\bW'\bF^* = \mathbf{0}$. 
Also, because of the orthogonality $\bW' \bF_w^0 = \mathbf{0}$, the bias of $\hat{\bbeta}_w-{\bbeta}_w$ is always zero, and the bias corrections are only applied to $\hat{\bgamma}_w$ with respect to the relevant ``parameters''.

We conclude this section by providing the result for the asymptotic bias of $\sqrt{T}(\hat{\bdelta}_w - {\bdelta}_{\hat{\bH}_{q,w}})$, which is completely eliminated due to the transformation.

\begin{thm}\label{thm:bias_Hw3}
Suppose Assumptions \ref{ass:eigen}--\ref{ass:Aug_errors} for the versions of $\bX_w$ hold. If $ \alpha_r>\frac{1}{2}$, $\frac{N^{1-\alpha_r}}{\sqrt{T}} \to  0$,  $\sqrt{T}N^{-\alpha_r} \to c_2 \in (0,\infty)$, as $N, T \to \infty$, we have 
\begin{align*}
& \sqrt{T}(\hat\bdelta_w - \bdelta_{\hat{\bH}_{q,w}})  \CD
N\left(\mathbf{0},\bSigma_{\bdelta_w}\right).
\end{align*} 
\end{thm}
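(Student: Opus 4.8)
The plan is to derive Theorem \ref{thm:bias_Hw3} as a direct corollary of the ``$\bX_w$'' version of Theorem \ref{thm:bias_H3}, exploiting the built-in orthogonality $\bW'\bF_w^0=\mathbf{0}$. Concretely, since Assumptions \ref{ass:eigen}--\ref{ass:Aug_errors} are assumed to hold for the transformed model $\bX_w = \bM_w\bX = \bF_w^0\bB_w^{0\prime}+\bE_w$, and since $\hat\bF_w = \bF_w^*\hat\bH_{q,w}+o_p(1)$ with $\hat\bH_{q,w}=(T^{-1}\hat\bF_w'\bF_w^*)^{-1}$, the exact same argument that proves Theorem \ref{thm:bias_H3} applies verbatim with every object decorated by the subscript $w$. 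This yields
\begin{align*}
\sqrt{T}(\hat\bdelta_w - \bdelta_{\hat\bH_{q,w}}) \CD N\!\left(c_2\,\bSigma_{\bZ_w^0\bZ_w^0}^{-1}\binom{\mathbf{0}}{\bSigma_{\bW\bF_w^0}\,\bar\bG_w}\bH_{w,0}^{-1}\bgamma^*,\ \bSigma_{\bdelta_w}\right),
\end{align*}
where $\bSigma_{\bW\bF_w^0}=\plim_{N,T\to\infty}T^{-1}\bW'\bF_w^0$. The key observation is that by construction $\bF_w^0=\bM_w\bF^*\bH_w$, so $\bW'\bF_w^0=\bW'\bM_w\bF^*\bH_w=\mathbf{0}$ \emph{exactly} at every sample size, because $\bM_w$ is the residual-maker orthogonal to the column space of $\bW$. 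Hence $\bSigma_{\bW\bF_w^0}=\mathbf{0}$, the entire upper block of $\bar\bkappa_{\bdelta_w^*}$ is already zero (as in Theorem \ref{thm:bias_H3}, the top block of $\bar\bkappa$ is $\mathbf{0}$), and now the lower block vanishes too, so the mean of the limiting normal collapses to $\mathbf{0}$.

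The steps I would carry out, in order: (i) state that Assumptions \ref{ass:eigen}--\ref{ass:Aug_errors} hold for the $w$-decorated quantities by hypothesis, so that all the intermediate lemmas underlying Theorem \ref{thm:bias_H3} (the expansion of $\hat\bF_w$, the rate $T^{-1}\bW'(\hat\bF_w-\bF_w^*\hat\bH_{q,w})=o_p(1)$, the CLT for $T^{-1/2}\bZ_w^{0\prime}\bepsilon$, and the convergence $T^{-1}\bZ_w^{0\prime}\bZ_w^0\CP\bSigma_{\bZ_w^0\bZ_w^0}$) transfer unchanged; (ii) invoke the $\bX_w$-version of Theorem \ref{thm:bias_H3} to obtain the displayed limiting normal with bias $c_2\bar\bkappa_{\bdelta_w^*}$; (iii) verify the algebraic identity $\bW'\bM_w=\mathbf{0}$, hence $\bW'\bF_w^0=\mathbf{0}$, hence $\bSigma_{\bW\bF_w^0}=\mathbf{0}$; (iv) substitute into $\bar\bkappa_{\bdelta_w^*}=\bSigma_{\bZ_w^0\bZ_w^0}^{-1}\binom{\mathbf{0}}{\bSigma_{\bW\bF_w^0}\bar\bG_w}\bH_{w,0}^{-1}\bgamma^*=\mathbf{0}$ and conclude. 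I would also note that $\bSigma_{\bZ_w^0\bZ_w^0}$ is block-diagonal under $\bSigma_{\bW\bF_w^0}=\mathbf{0}$, which is consistent with but not needed for the bias conclusion.

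The only genuine subtlety — and the place I expect to spend a sentence or two of care rather than hard work — is the legitimacy of imposing Assumptions \ref{ass:eigen}--\ref{ass:Aug_errors} on the transformed model. The transformation by $\bM_w$ alters the factors ($\bF_w^*=\bM_w\bF^*$) and the idiosyncratic errors ($\bE_w=\bM_w\bE$), so one is implicitly assuming that projection by $\bW$, a $T\times p$ matrix with $p$ fixed, preserves the eigenvalue separation (Assumption \ref{ass:eigen}, \ref{ass:signal}), the weak-dependence and moment bounds on $\bE_w$ (Assumption \ref{ass:errors}), and the CLT/weak-dependence conditions (Assumptions \ref{ass:2errors}, \ref{ass:Aug_errors}). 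Because $\bP_w$ has fixed rank $p$ and $\|\bM_w\|_2\le 1$, these are mild and are precisely what the phrase ``Suppose Assumptions \ref{ass:eigen}--\ref{ass:Aug_errors} for the versions of $\bX_w$ hold'' grants us; I would simply remark that the transfer is immediate given this hypothesis, and that the rank-$p$ perturbation does not affect the $\lambda_k\asymp N^{\alpha_k}$ rates since $\rank(\bP_w\bF^*\bB^{*\prime})\le p$. Everything else is a mechanical relabeling of the proof of Theorem \ref{thm:bias_H3}, with the final bias killed by the exact orthogonality $\bW'\bF_w^0=\mathbf{0}$ rather than by an asymptotic vanishing.
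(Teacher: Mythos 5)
Your proposal is correct and follows essentially the same route as the paper: the paper itself states in Section 4 that the $\bX_w$-versions of the theorems follow by putting $\bSigma_{\bW\bF_w^0}=\mathbf{0}$, and its appendix proof of this theorem is exactly the relabeled argument with the bias killed by orthogonality to $\bW$. The only cosmetic difference is that the paper exploits the exact finite-sample identities $\bW'\bF_w^0=\mathbf{0}$ \emph{and} $\bW'\hat{\bF}_w=T^{-1}\bW'\bX_w\bX_w'\hat{\bF}_w\hat{\bLambda}_w^{-1}=\mathbf{0}$, so the entire term $T^{-1/2}\hat{\bZ}_w'(\hat{\bF}_w-\bF_w^0\tilde{\bH}_{q,w})\tilde{\bH}_{q,w}^{-1}\bgamma_w^0$ is identically zero rather than merely having zero probability limit — the same orthogonality you use, applied one step earlier.
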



\section{Monte Carlo Experiments}\label{sec: MC}

In this section, we examine the finite sample performance of the estimators of
the factor-augmented regressions. In particular, we focus on the bias, standard deviation and size of the t-test.

\subsection{Design}\label{sec:design}

$\mathbf{X=F}^{0}\mathbf{B}^{0\prime}+\mathbf{E}$, $\mathbf{X=(}%
x_{t,i}\mathbf{)}$, $i=1,\dots,N$, $t=1,\dots,T$, $\mathbf{F}^{0}\in\mathbb{R}%
^{T\times r}$, $\mathbf{B}^{0}\in\mathbb{R}^{N\times r}$ is generated as
follows. Define a positive definite matrix $\mathbf{D}=\diag(d_{1},\dots,d_{r})$
and $\mathbf{N}=\diag(N^{\alpha_{1}},\dots,N^{\alpha_{r}})$. Form a $T\times
N$ matrix $\mathbf{A}$ whose elements are independent draws from $N(0,1)$ for
each replication. Obtain the singular value decomposition $\mathbf{A=USV}%
^{\prime}$, and set $\mathbf{F}^{0}$ the first $r$ columns of $\mathbf{U}$
multiplied by $\sqrt{T}$ and $\mathbf{B}^{0}$ the first $r$ columns of
$\mathbf{V}$ post-multiplied by $\mathbf{D}^{1/2}\mathbf{N}^{1/2}$. Then, given an invertible square matrix of order $r$, $\bH$, set
$\mathbf{F}^{\ast}=\mathbf{F}^{0}\mathbf{H}^{-1}$ and $\mathbf{B}^{\ast
}=\mathbf{B}^{0}\mathbf{H}^{\prime}$. 
We set $\mathbf{H}=\left(
\begin{smallmatrix}
1 & 1/2\\
1/2 & 2
\end{smallmatrix}\right)$.
The $t^{th}$ rows of $\mathbf{E}$ are generated as $\be_{t}=\rho_{e}\mathbf{e}_{t-1}+(1-\rho_{e}^{2})^{1/2}\boldsymbol{\Sigma}_{e}^{1/2}\boldsymbol{\xi}_{t}$ for $t=2,\dots,T$ with $\be_{1}\sim N(\bzero,\bI_N)$, where
$\boldsymbol{\xi}_{t}\sim i.i.d.N(\mathbf{0},\mathbf{I}_{N})$. 
It is set to $\rho_{e}=0.2$. 
We consider $\boldsymbol{\Sigma}%
_{e}=\sigma_{e}^{2}\mathbf{R}_{s}$, 
where $\mathbf{R}_{s}$ is the correlation matrix of $\left(  \mathbf{I}%
_{N}-\theta\mathbf{S}_{s}\right)  \left(  \mathbf{I}_{N}-\theta\mathbf{S}%
_{s}\right)  ^{\prime}$, $\mathbf{S}_{s}$ is the row normalized $s^{th}$-order
rook contiguity spatial matrix. We set $s=2$, $\sigma_{e}=0.5$ and $\theta=0.5$. 
The factor-augmented regression is generated as%
\[
y_{t+1}=\mathbf{f}_{t}^{0\prime} \bgamma^0+\mathbf{w}_{t}%
^{\prime}\boldsymbol{\beta}+\epsilon_{t+1}\text{, }t=1,\dots,T,
\]
where $\mathbf{f}_{t}^{0\prime}$ is the $t^{th}$ row vector of $\mathbf{F}%
^{0}$ and $\mathbf{w}_{t}=(w_{t,1},\dots,w_{t,p})^{\prime}$ with $w_{t,p}=1$
and
\[
w_{t,\ell}=\sigma_{w}(\rho_{fw}\mathbf{f}_{t}^{0\prime}\mathbf{1}_{r}%
r^{-1/2}+\sqrt{1-\rho_{fw}^{2}}\zeta_{t,\ell}),
\]
where $\zeta_{t,\ell}\sim i.i.d.N(0,1\mathbf{)}$, $\ell=1,\dots,p-1$, and
$\epsilon_{t+1}\sim i.i.d.N(0,\sigma_{\epsilon}^{2})$. We set
$\boldsymbol{\gamma}^0=\mathbf{1}_{r}$ and $\boldsymbol{\beta}=\bone_{p}$, so
that $\boldsymbol{\gamma}^{\ast}=\mathbf{H}\boldsymbol{\gamma}^0$.

As discussed in the theory, the correlation between $\mathbf{w}_{t}$ and
$\mathbf{f}_{t}$ affects the asymptotic bias of the estimator. We consider $\rho_{fw}=\{0,0.6\}$, 
while setting to $\sigma_{w}^{2}=1$ and $\sigma_{\epsilon}^{2}=0.5$. We choose $r=2$ and $p=2$,
$(\alpha_{1},\alpha_{2})=(1,1)$, $(1,0.8),$ $(0.8,0.6)$ with $(d_{1}%
,d_{2})=(0.05,0.2)$, $(0.2,0.2)$ and $(0.2,0.2)$, respectively. Different
values of $d_{1}$ and $d_{2}$ for the model with $\alpha_{1}=\alpha_{2}$ are
necessary to ensure the identification of the two largest eigenvalues of
$\mathbb{E(}\mathbf{x}_{t}\mathbf{x}_{t}^{\prime}\mathbb{)}$, $\lambda_{1}$
and $\lambda_{2}$. The experiments are conducted for
$(T,N)=(50,50),(100,100),(200,200)$ with 1,000 replications.

Supposing $\{\mathbf{x}_{t},\mathbf{w}_{t},\mathbf{y}_{t}\}$ are observable in
practice, $\mathbf{F}^{0}$ is estimated by PC using $\mathbf{X}$, which is the
first $r$ largest eigenvectors of $\mathbf{XX}^{\prime}/T$ multiplied by
$\sqrt{T}$. The PC estimator is denoted by $\mathbf{\hat{F}}$. The factor
augmented model is estimated by regressing $y_{t+1}$ on $\mathbf{\hat{z}}%
_{t}=(\mathbf{\hat{f}}_{t}^{\prime},\mathbf{w}_{t}^{\prime})^{\prime}$, which
gives the estimates $\boldsymbol{\hat{\delta}}=(\boldsymbol{\hat{\gamma}%
}^{\prime},\boldsymbol{\hat{\beta}}^{\prime})^{\prime}$. 
Using the different rotation matrices, we compute the average (i.e. bias), the standard deviation, and two-sided t-test at the 5\% level for $\hat{\bdelta}-{\bdelta}_{\hat{\bH}}$, $\hat{\bdelta}-{\bdelta}_{\hat{\bH}_q}$ and $\hat{\bdelta}-{\bdelta}^0$.

In addition, we consider associated bias-corrected estimators. Let the bias estimates of $\hat{\bdelta}-{\bdelta}_{\hat{\bH}}$ and $\hat{\bdelta}-{\bdelta}_{\hat{\bH}_q}$ be
\begin{align*}
\hat{\bkappa}_{\bdelta^*}
=-(T^{-1}\mathbf{\hat{Z}}^{\prime}%
\mathbf{\hat{Z}})^{-1}
\begin{pmatrix}
\hat{\bG}+ \hat{{\bar{\bG}}}\\
T^{-1}\mathbf{W}^{\prime}\mathbf{\hat{F}\hat{G}}
\end{pmatrix}
\boldsymbol{\hat{\gamma}},\quad
\hat{\bar{\bkappa}}_{\bdelta^*}
=(T^{-1}\mathbf{\hat{Z}}^{\prime
}\mathbf{\hat{Z}})^{-1}\left(
\begin{array}
[c]{c}%
\mathbf{0}\\
T^{-1}\mathbf{W}^{\prime}\mathbf{\hat{F}}\hat{{\bar{\bG}}}
\end{array}
\right)\boldsymbol{\hat{\gamma}}
\end{align*}
with
\begin{align*}
\mathbf{\hat{G}=\mathbf{\hat{B}}^{\prime}}\boldsymbol{\hat{\Sigma}}_{e}\mathbf{\mathbf{\hat{B}}(\hat{B}}^{\prime}\mathbf{\hat{B})}^{-2},\quad
\hat{{\bar{\bG}}}=(\mathbf{\mathbf{\hat{B}}^{\prime}\mathbf{\hat{B}}%
})^{-1}\mathbf{\mathbf{\hat{B}}^{\prime}}\boldsymbol{\hat{\Sigma}}%
_{e}\mathbf{\mathbf{\hat{B}}}(\mathbf{\hat{B}}^{\prime}\mathbf{\hat{B}})^{-1},
\end{align*}
where $\boldsymbol{\hat{\Sigma}}_{e}$ is the POET estimator of \cite{RunyuEtAl2024}, which extends \cite{FanEtAl2013} for the estimation in WF models. The associated bias corrected estimators are defined by 
\begin{align}\label{bcHhats}
\hat{\bdelta}_{bc\hat{\bH}}=\hat{\bdelta}-\hat{{\bkappa}}_{\bdelta^*}, \quad
\boldsymbol{\hat{\delta}}_{bc\hat{\bH}_{q}}=\boldsymbol{\hat{\delta}}-\hat{\bar{\bkappa}}_{\bdelta^*},
\end{align}
and the bias, standard deviation, and size of the t-test at the 5\% level of $\hat{\bdelta}_{bc\hat{\bH}} - \bdelta_{\hat{\bH}}$ and $\hat{\bdelta}_{bc\hat{\bH}_q} - \bdelta_{\hat{\bH}_q}$ are reported.
The panel-split jackknife bias corrected estimator is computed as
\begin{align}\label{bcjkest_R}
\boldsymbol{\hat{\delta}}_{bcjk}=2\boldsymbol{\hat{\delta}}-\boldsymbol{\hat
{\delta}}_{jk}\ \text{with }\boldsymbol{\hat{\delta}}_{jk}=R^{-1}%
\sum\nolimits_{s=1}^{R}(\boldsymbol{\hat{\delta}}_{\mathcal{N}_{1}^{(s)}%
}+\boldsymbol{\hat{\delta}}_{\mathcal{N}_{2}^{(s)}})/2   
\end{align}
where $\boldsymbol{\hat{\delta}}_{\mathcal{N}_{j}^{(s)}}$ is obtained
regressing $\mathbf{y}$ on ($\mathbf{\hat{F}}_{\mathcal{N}_{j}^{(s)}%
},\mathbf{W}$), where $\mathbf{\hat{F}}_{\mathcal{N}_{j}^{(s)}}$ is the PC
factor extracted from $\mathbf{X}_{\mathcal{N}_{j}^{(s)}}$, where
$\mathbf{X}_{\mathcal{N}_{j}^{(s)}}=\{\mathbf{x}_{i\in\mathcal{N}%
_{j}^{(s)}}\}$, for $j=1,2$, with $\mathcal{N}_{1}^{(s)}$ is the
first half and $\mathcal{N}_{2}^{(s)}$ is the second half of $\mathbf{X}%
^{(s)}$, whose $N$ columns are randomly re-ordered over $s=1,\dots,R$. 
This randomization is to avoid potentially biased information on factors in $\mathcal{N}_j$. 
When $N$ is odd, $\mathcal{N}_{1}^{(s)}$ and $\mathcal{N}_{2}^{(s)}$ are chosen to
contain one common index. The order and the sign of the columns of $\mathbf{\hat
{F}}_{\mathcal{N}_{j}^{(s)}}$ are determined in line with those of $\mathbf{\hat{F}%
}$, based on the correlation between the pair $(\mathbf{\hat{F}}%
_{\mathcal{N}_{j}^{(s)}},\mathbf{\hat{F}})$, for each of $j=1,2$. We have chosen
$R=100$. The bias, standard deviation and size of the t-test at the 5\% level of $\hat{\bdelta}_{bcjk} - \bdelta^0$ are reported.

Using the same sample, we investigate similar statistics for the model with
the extracted factors from $\mathbf{M}_{w}\mathbf{X}$. Specifically, we define
$\mathbf{F}_{w}^{\ast}=\mathbf{M}_{w}\mathbf{F}^{\ast}$\ where $\mathbf{M}%
_{w}=\mathbf{I}_{T}-\mathbf{P}_{w}$ with $\mathbf{P}_{w}=\mathbf{W(W}^{\prime
}\mathbf{W)}^{-1}\mathbf{W}^{\prime}$, so that $\mathbf{F}_{w}^{0}%
=\mathbf{F}_{w}^{\ast}\mathbf{H}_{w}$ and $\mathbf{B}_{w}^{0}=\mathbf{B}%
^{\ast}\mathbf{H}_{w}^{\prime-1}$ with $\mathbf{H}_{w}=\mathbf{L}%
_{w}\mathbf{V}_{w}^{-1/2}\boldsymbol{\Pi}$, where $\mathbf{V}_{w}%
=\mathbf{L}_{w}^{\prime}(T^{-1}\mathbf{F}_{w}^{\ast\prime}\mathbf{F}_{w}%
^{\ast})\mathbf{L}_{w}$, $\mathbf{L}_{w}$ is the $r\times r$ eigenvector
matrix of $(\mathbf{\mathbf{B}^{\ast\prime}\mathbf{B}}^{\ast})(T^{-1}%
\mathbf{F}_{w}^{\ast\prime}\mathbf{F}_{w}^{\ast})$, $\mathbf{\Pi}$ is a
diagonal matrix with elements either $-1$ or $1$, which makes the diagonal of
$\mathbf{H}_{w}$ positive. The PC estimator $\mathbf{\hat{F}}_{w}$ is
the $\sqrt{T}$ times $r$ eigenvectors corresponding to the $r$ largest
eigenvalues of $\mathbf{M}_{w}\mathbf{XX}^{\prime}\mathbf{M}_{w}/T$ and
$\mathbf{\hat{B}}_{w}=\mathbf{X}^{\prime}\mathbf{\hat{F}}_{w}/T$. The
augmented model of the $T\times1$ vector is re-written as
$\mathbf{y}   =\mathbf{F}^{\ast}\boldsymbol{\gamma}^{\ast}+\mathbf{W}
\boldsymbol{\beta}+\boldsymbol{\epsilon} =\mathbf{F}_{w}^{*}\boldsymbol{\gamma}^{*}+\mathbf{W}%
\boldsymbol{\beta}_{w}+\boldsymbol{\epsilon}$, 
where $\boldsymbol{\beta}_{w}=\mathbf{(W}^{\prime}\mathbf{W)}^{-1}%
\mathbf{W}^{\prime}\mathbf{F}^{\ast}\boldsymbol{\gamma}^{\ast}%
+\boldsymbol{\beta}$.
Regression of $y$ on $\mathbf{z}%
_{w,t}=(\mathbf{\hat{f}}_{w}^{\prime},\mathbf{w}_{t}^{\prime})^{\prime}$ gives
$\boldsymbol{\hat{\delta}}_{w}=(\boldsymbol{\hat{\gamma}}_{w}^{\prime
},\boldsymbol{\hat{\beta}}_{w}^{\prime})^{\prime}$, which is the estimator of
$\boldsymbol{\delta}_{w}^{0}=(\boldsymbol{\gamma}_w^{0\prime}%
,\boldsymbol{\beta}_{w}^{\prime})^{\prime}$. We investigate analogous
counterpart statistics, $\boldsymbol{\delta}_{\hat{\bH}_w}$, $\boldsymbol{\delta
}_{\hat{\bH}_{q,w}}$, $\boldsymbol{\hat{\delta}}_{bc\hat{\bH}_w}$,
$\boldsymbol{\hat{\delta}}_{bc\hat{\bH}_{q,w}}$, $\boldsymbol{\hat{\delta}%
}_{bcjk,w}$ with respect to the relevant ``parameters''.

\subsection{Results}

\subsubsection{``Parameters'' }
Before discussing the performance of the estimators, we would like to draw our attention to the ``parameters" we are estimating. In the literature, including \cite{BaiNg2006} and \citet{GoncalvesPerron2014,gonccalves2020bootstrapping}, the LS estimator $\hat{\bdelta}$ is considered to estimate $\bdelta_{\hat{\bH}}=(\bgamma^{* \prime}\hat{\bH}^{-1} , \bbeta')'$, which is noise-dependent via $\hat{\bH}$. On the other hand, $\bdelta^0$ is a pure function of the signals. Normalizing $\bdelta^0 :=\mathbf{1}$, mean and standard deviation of the second elements of $\bgamma^0 :=\bgamma^{* \prime}{\bH}^{-1}$, $\bgamma_{\hat{\bH}_q} :=\bgamma^{* \prime}\hat{\bH}_q^{-1}$ and $\bgamma_{\hat{\bH}} :=\bgamma^{* \prime}\hat{\bH}^{-1}$ over the replications are plotted over $N=T=50,100,200$ and $(\alpha_1,\alpha_2)=(1.0,1.0),(1.0,0.8),(0.8,0.6)$ in Figures \ref{fig:mean_gamma_hats} and \ref{fig:sd_gamma_hats}, respectively.
\begin{figure}[!htb]	
\centering
\begin{subfigure}[b]{0.32\textwidth}
\centering
\includegraphics[width=\textwidth]{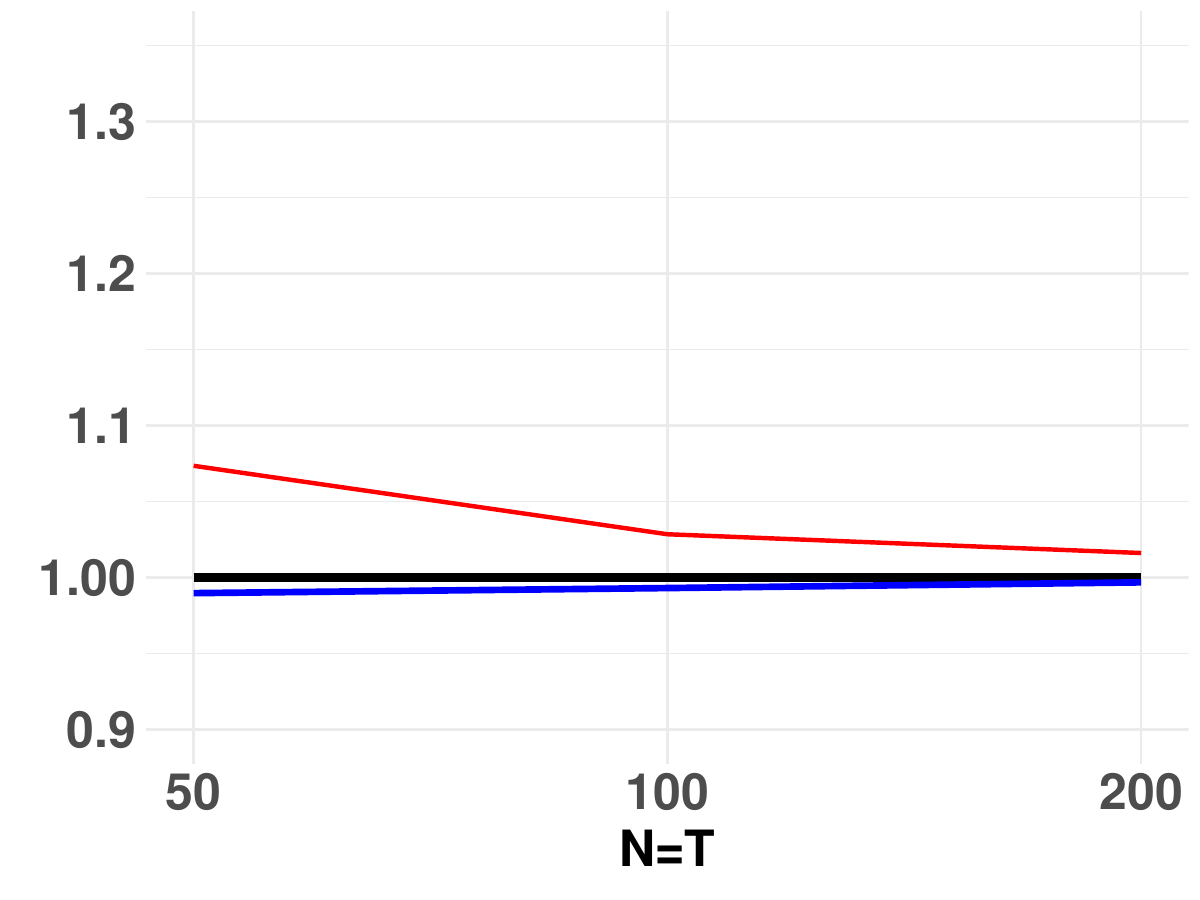}
\caption{$\alpha_2 = 1.0$}
\label{fig:bias_f2_06_10}
\end{subfigure}
\hfill
\begin{subfigure}[b]{0.32\textwidth}
\centering
\includegraphics[width=\textwidth]{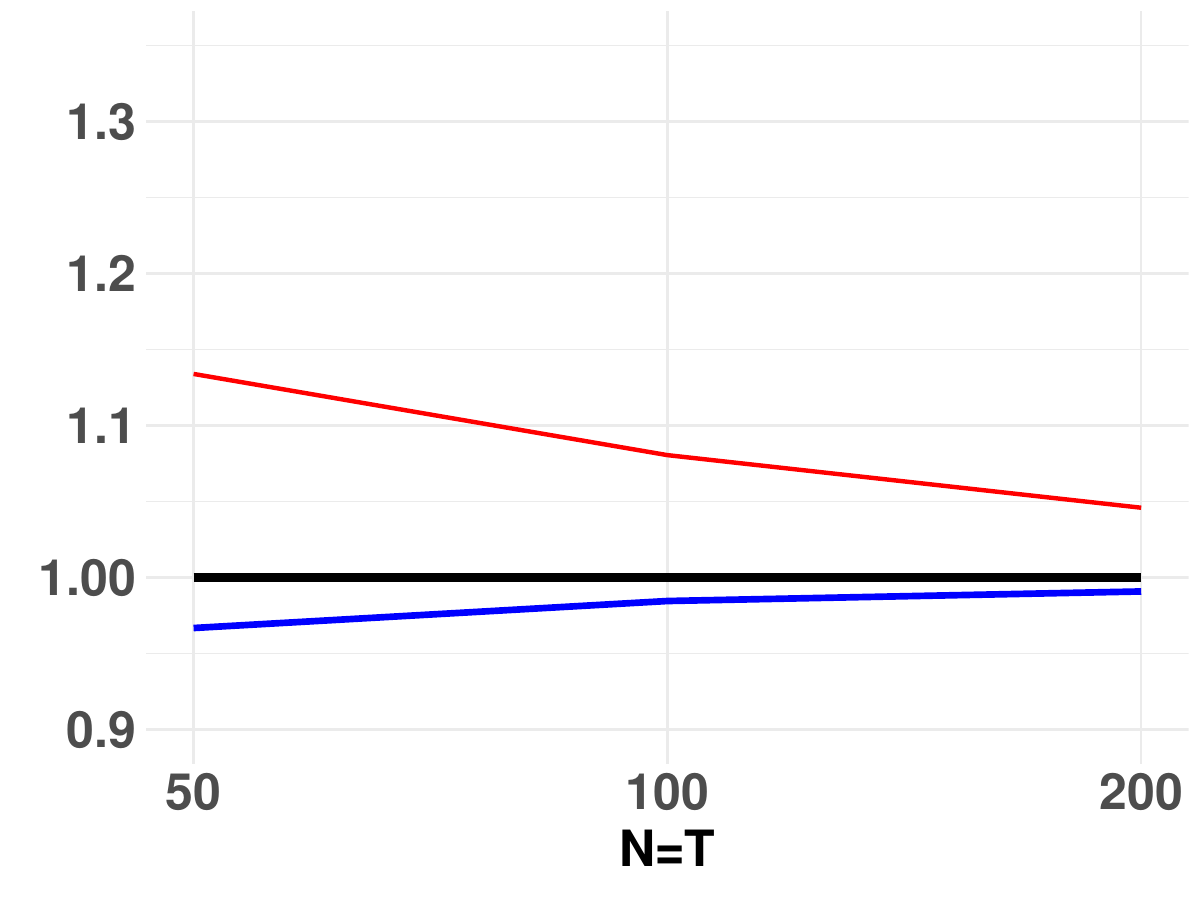}
\caption{$\alpha_2 = 0.8$}
\label{fig:bias_f2_06_08}
\end{subfigure}
\hfill
\begin{subfigure}[b]{0.32\textwidth}
\centering
\includegraphics[width=\textwidth]{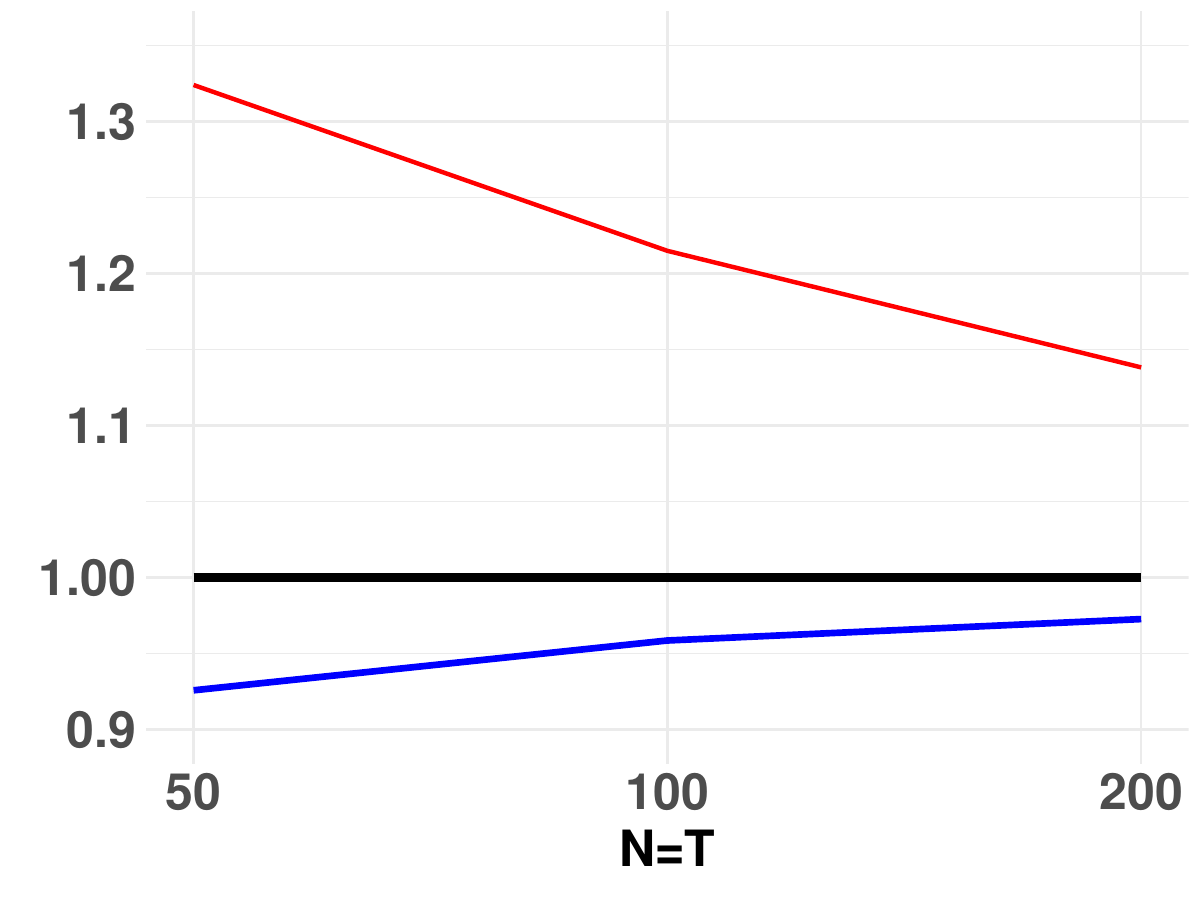}
\caption{$\alpha_2 = 0.6$}
\label{fig:bias_f2_06_06}
\end{subfigure}
\includegraphics[width=0.33\textwidth]{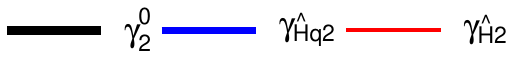}
\caption{Mean of ``parameters" $\gamma_2^0$, 
${\gamma}_{\hat{\bH}_{q},2}$ and
${\gamma}_{\hat{\bH},2}$ over the replications}

\label{fig:mean_gamma_hats}
\end{figure}


\begin{figure}[!htb]	
\centering
\begin{subfigure}[b]{0.32\textwidth}
\centering
\includegraphics[width=\textwidth]{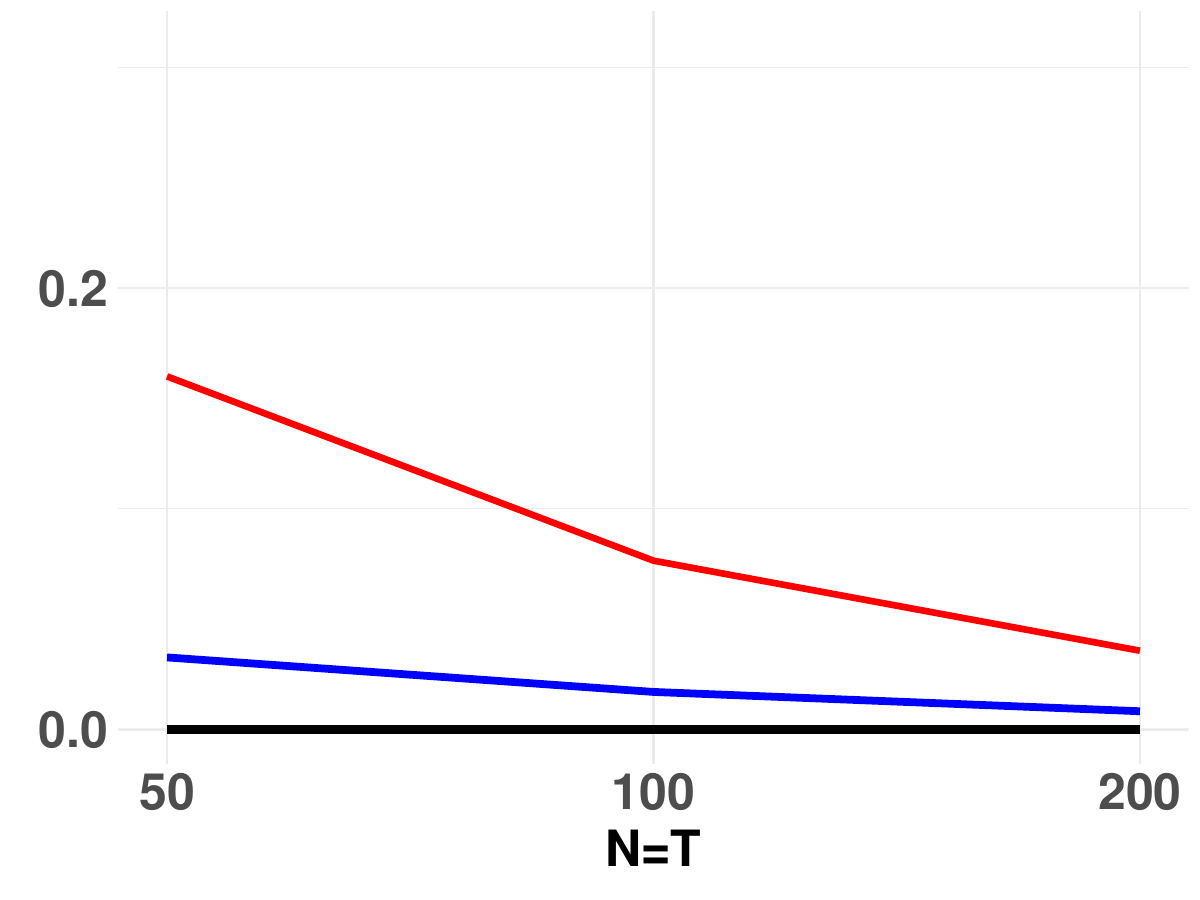}
\caption{$\alpha_2 = 1.0$}
\label{fig:bias_f2_06_10}
\end{subfigure}
\hfill
\begin{subfigure}[b]{0.32\textwidth}
\centering
\includegraphics[width=\textwidth]{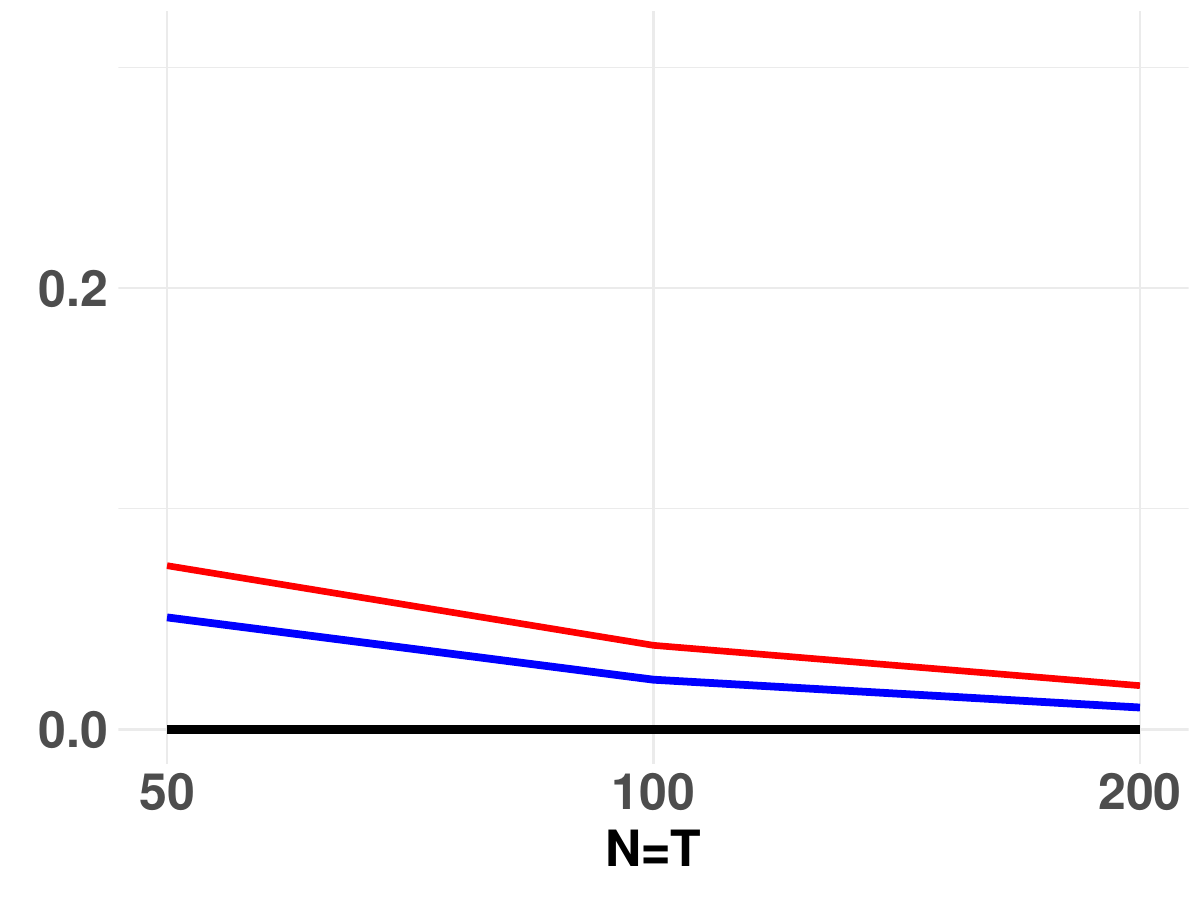}
\caption{$\alpha_2 = 0.8$}
\label{fig:bias_f2_06_08}
\end{subfigure}
\hfill
\begin{subfigure}[b]{0.32\textwidth}
\centering
\includegraphics[width=\textwidth]{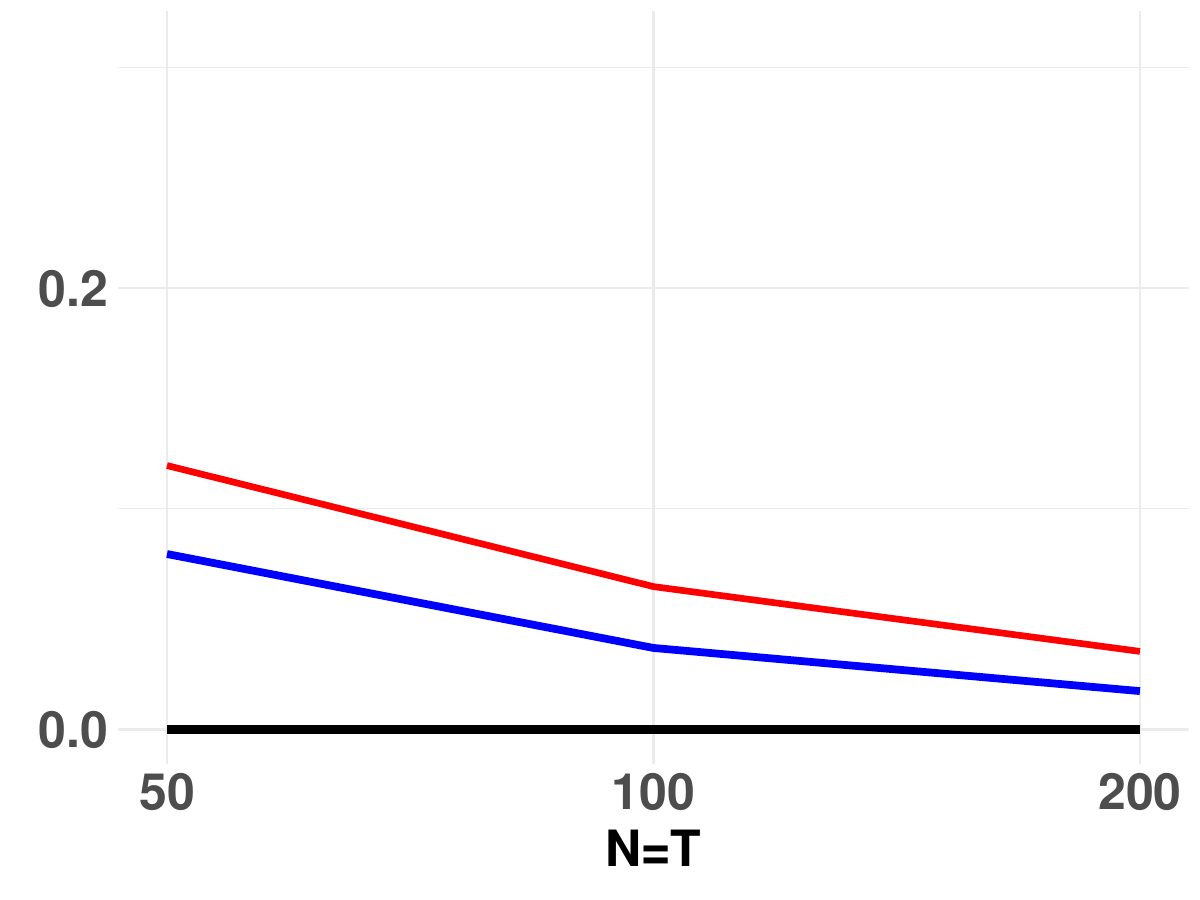}
\caption{$\alpha_2 = 0.6$}
\label{fig:bias_f2_06_06}
\end{subfigure}
\includegraphics[width=0.33\textwidth]{newimages/legend_Parameters_SigE3_rhoe2_bias_rhowf0_a1_f2.pdf}
\caption{Standard deviation of ``parameters" $\gamma_2^0$, 
${\gamma}_{\hat{\bH}_{q},2}$ and
${\gamma}_{\hat{\bH},2}$ over the replications}

\label{fig:sd_gamma_hats}
\end{figure}

As can be seen, the smaller the sample size and the weaker the model, the more $\bgamma_{\hat{\bH}}$ deviates upwards and $\bgamma_{\hat{\bH}_q}$ deviates downwards from $\bgamma^0$ and their variations tend to increase.
It is a visual confirmation that the values of the ``parameters'' $\bgamma^0$, $\bgamma_{\hat{\bH}}$ and $\bgamma_{\hat{\bH}_q}$ can be very different even for relatively large sample sizes despite their asymptotic equivalence.

Furthermore, even when all the elements in $\bgamma^0$ are equal, it is unlikely that a similar property holds for $\bgamma_{\hat{\bH}}$ and $\bgamma_{\hat{\bH}_q}$. For example, for $N=T=50$ and the model with $(\alpha_1,\alpha_2)=(1,1)$, in the above experiment, the average values of the $r\times1$ vectors $\bgamma^0$, $\bgamma_{\hat{\bH}}$ and $\bgamma_{\hat{\bH}_q}$ are $(1.00,1.00)'$, $(1.27,1.07)'$ and $(0.94,0.99)'$, respectively. Although all considered rotation matrices share the common probability limit, it does not seem plausible to treat $\bgamma^0$, $\bgamma_{\hat{\bH}}$ and $\bgamma_{\hat{\bH}_q}$ indifferently in practice. 

Therefore, researchers may want to be clear about which ``parameter'' is estimated by $\hat{\bgamma}$ in their analysis. Our preferred parameter is $\bgamma^0$, because it is the coefficient on the latent factor $\bff_t^0$ that the PC estimator $\hat{\bff}_t$ consistently estimates. 

In what follows we will discuss the bias of the LS estimator $\hat{\bdelta}$ and its bias-corrected versions relative to $\bdelta^0$, $\bdelta_{\hat{\bH}}$ and $\bdelta_{\hat{\bH}_q}$.

\subsubsection{Coefficient on the second factor, $\gamma_2$}

In this section we examine the finite sample performance of the estimated coefficient on the second factor $\hat{f}_{2,t}$, $\hat{\gamma}_2$ and its bias-corrected versions, relative to $\gamma_2^0$, $\gamma_{\hat{\bH}_q,2}$ and $\gamma_{\hat{\bH},2}$. Specifically, we report the bias, standard deviation (s.d.), and a t-test at the 5\% level of $\hat{\gamma}_2 - \gamma_2^0$, $\hat{\gamma}_2 - \gamma_{\hat{\bH}_q,2}$, $\hat{\gamma}_2 - \gamma_{\hat{\bH},2}$ and their bias-corrected versions, $\hat{\gamma}_{bcjk,2} - \gamma_2^0$, $\hat{\gamma}_{bc\hat{\bH}_q,2} - \gamma_{\hat{\bH}_q,2}$, $\hat{\gamma}_{bc\hat{\bH},2} - \gamma_{\hat{\bH},2}$, respectively. The bias-corrected versions are shown as dashed lines in the figures.

\begin{figure}[!htb]	
\centering
\begin{subfigure}[b]{0.32\textwidth}
\centering
\includegraphics[width=\textwidth]{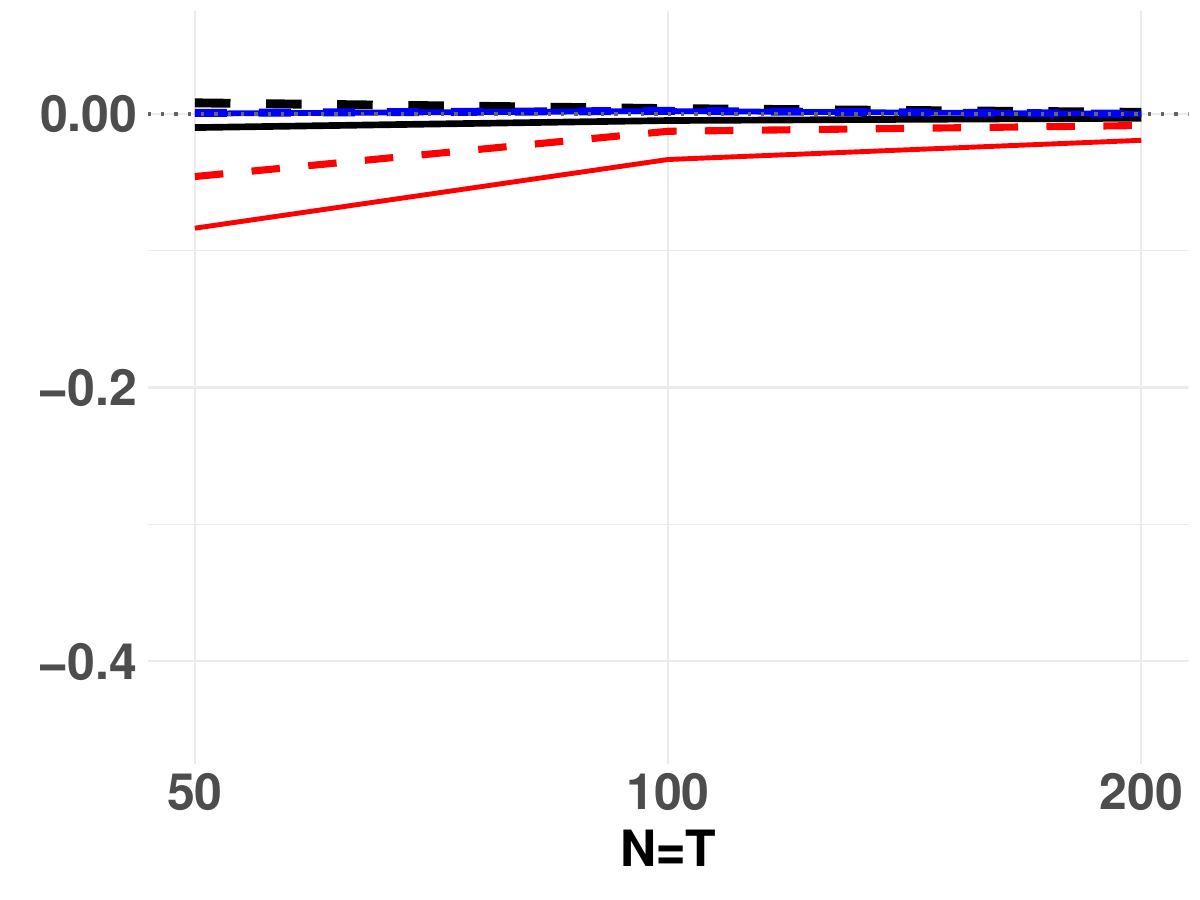}
\caption{$\rho_{wf}=0.0,\alpha_2 = 1.0$}
\label{fig:bias_f2_00_10}
\end{subfigure}
\hfill
\begin{subfigure}[b]{0.32\textwidth}
\centering
\includegraphics[width=\textwidth]{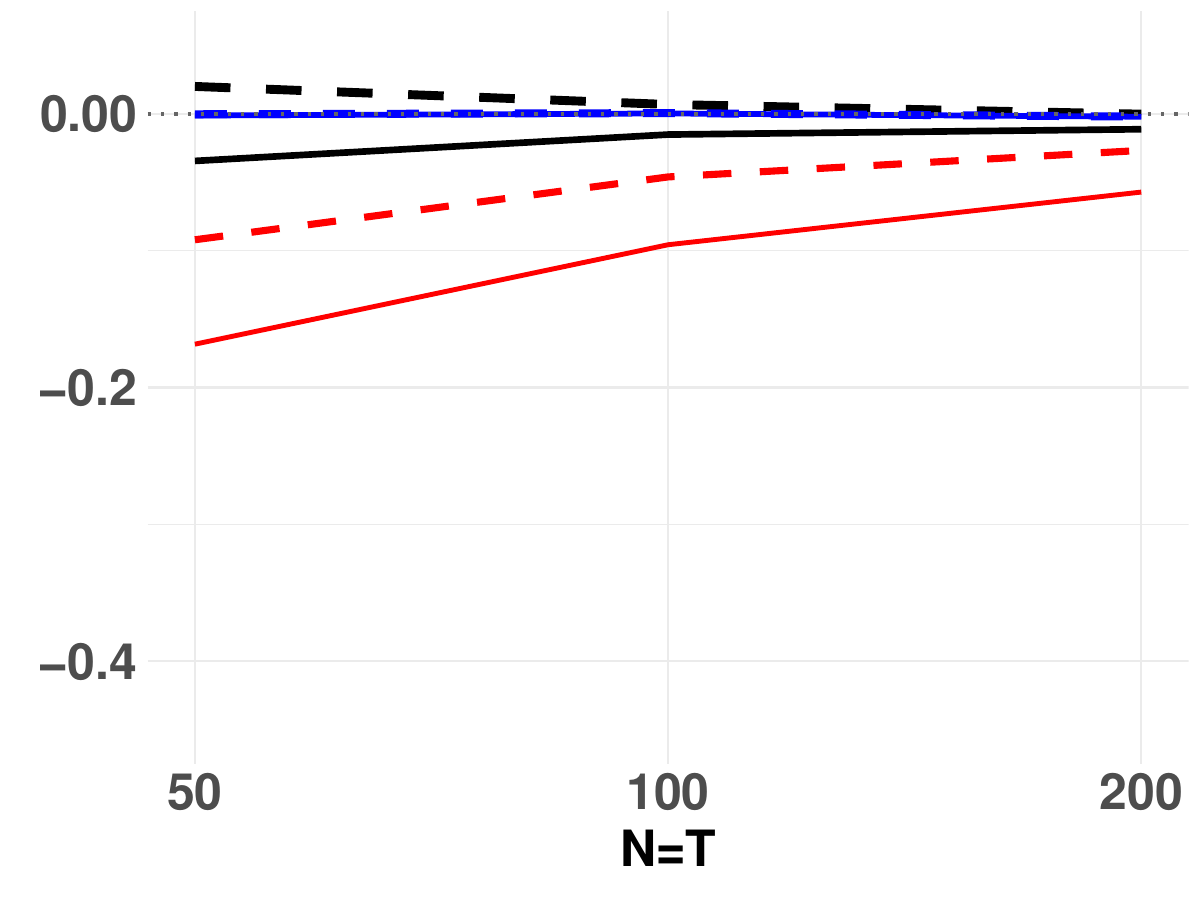}
\caption{$\rho_{wf}=0.0,\alpha_2 = 0.8$}
\label{fig:bias_f2_00_08}
\end{subfigure}
\hfill
\begin{subfigure}[b]{0.32\textwidth}
\centering
\includegraphics[width=\textwidth]{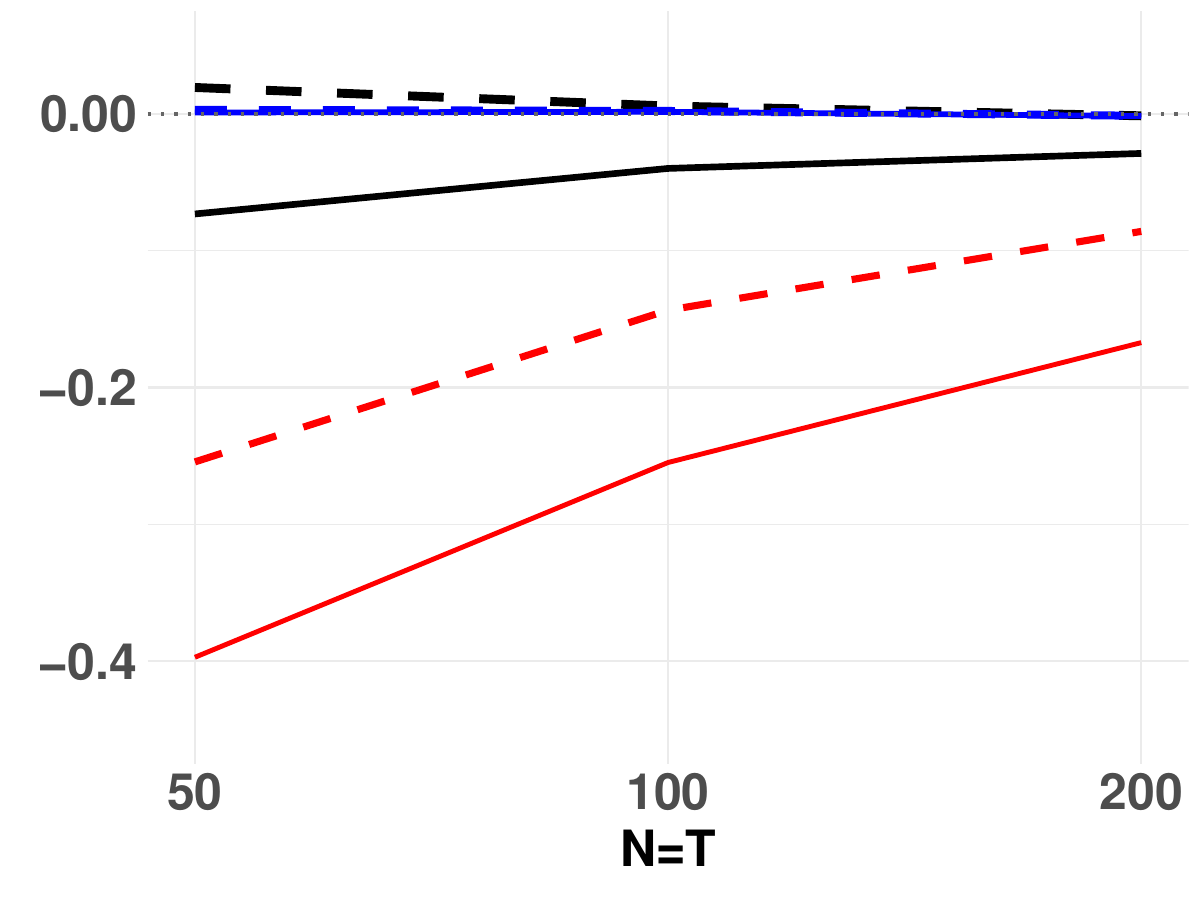}
\caption{$\rho_{wf}=0.0,\alpha_2 = 0.6$}
\label{fig:bias_f2_00_06}
\end{subfigure}

\centering
\begin{subfigure}[b]{0.32\textwidth}
\centering
\includegraphics[width=\textwidth]{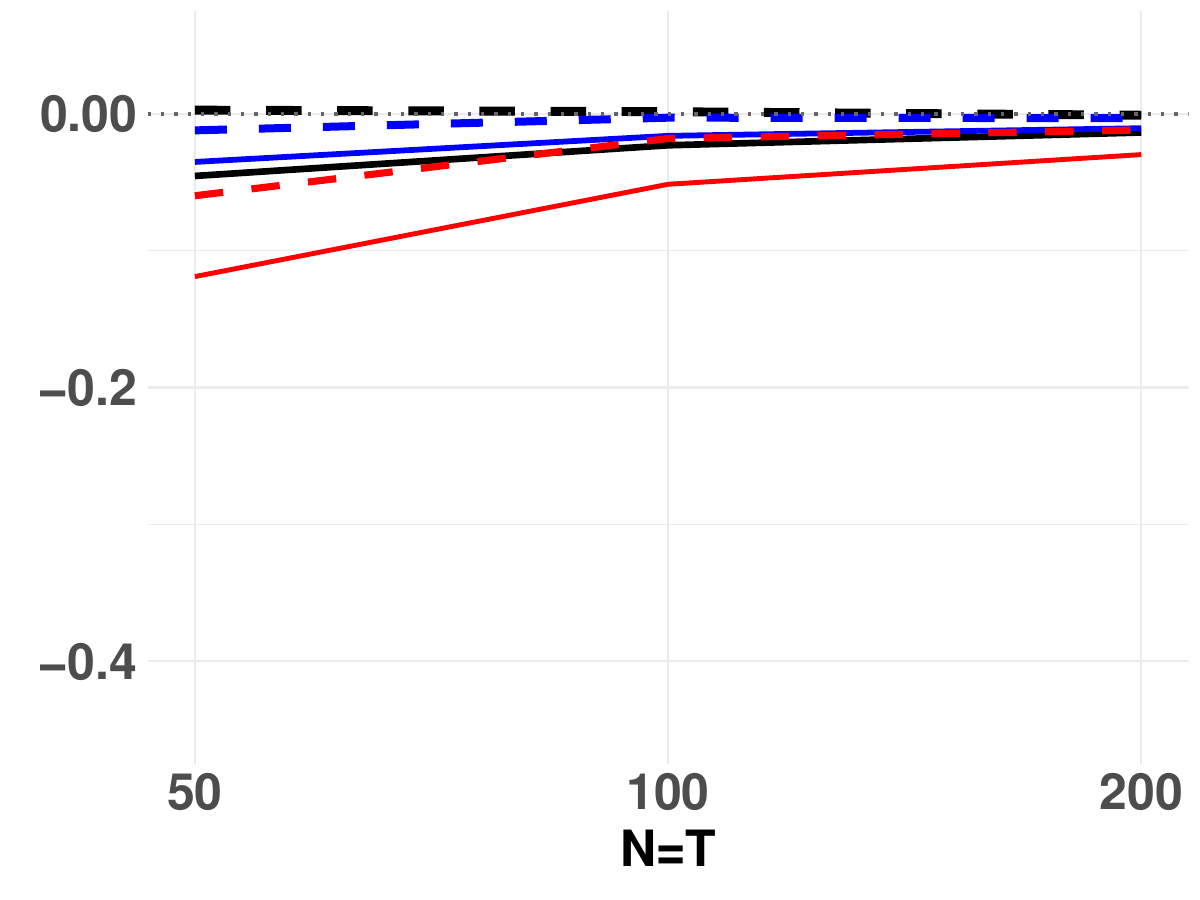}
\caption{$\rho_{wf}=0.6,\alpha_2 = 1.0$}
\label{fig:bias_f2_06_10}
\end{subfigure}
\hfill
\begin{subfigure}[b]{0.32\textwidth}
\centering
\includegraphics[width=\textwidth]{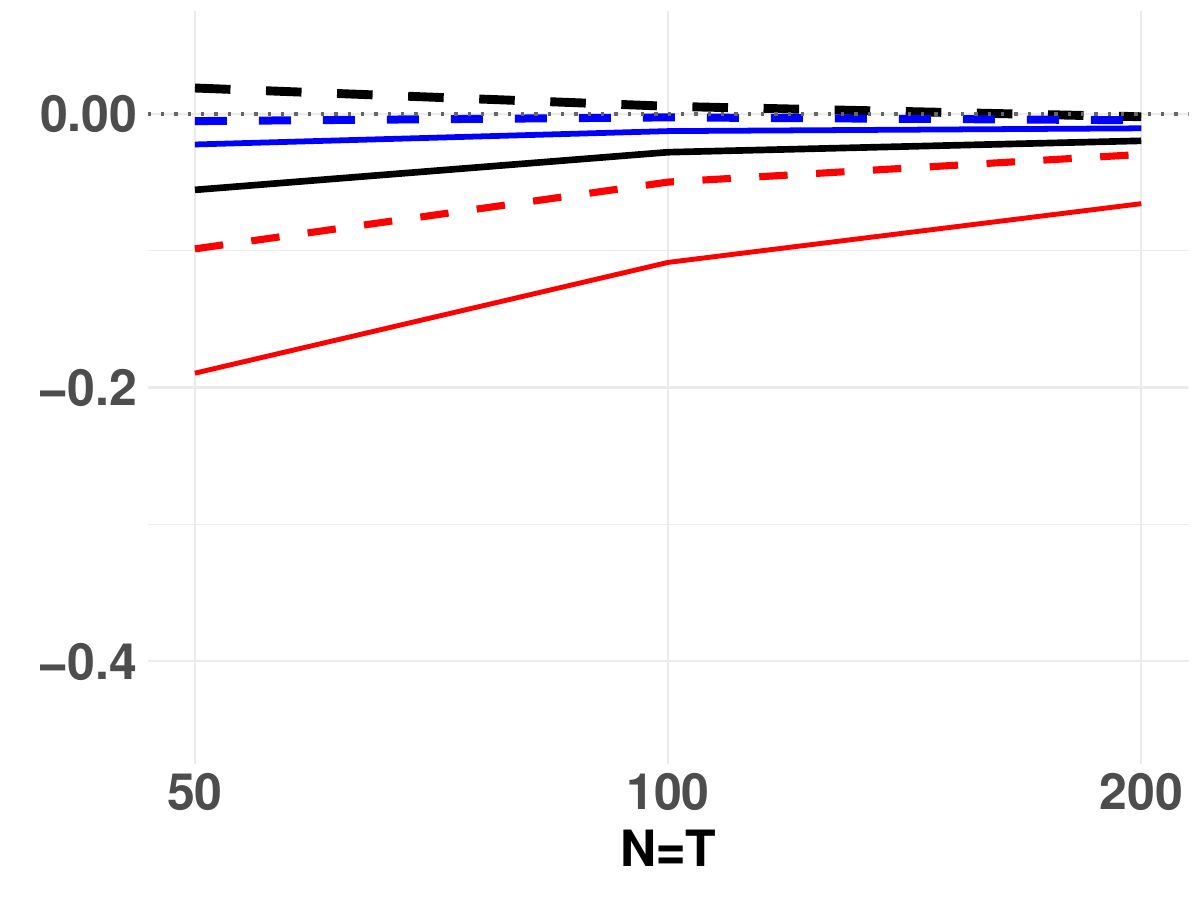}
\caption{$\rho_{wf}=0.6,\alpha_2 = 0.8$}
\label{fig:bias_f2_06_08}
\end{subfigure}
\hfill
\begin{subfigure}[b]{0.32\textwidth}
\centering
\includegraphics[width=\textwidth]{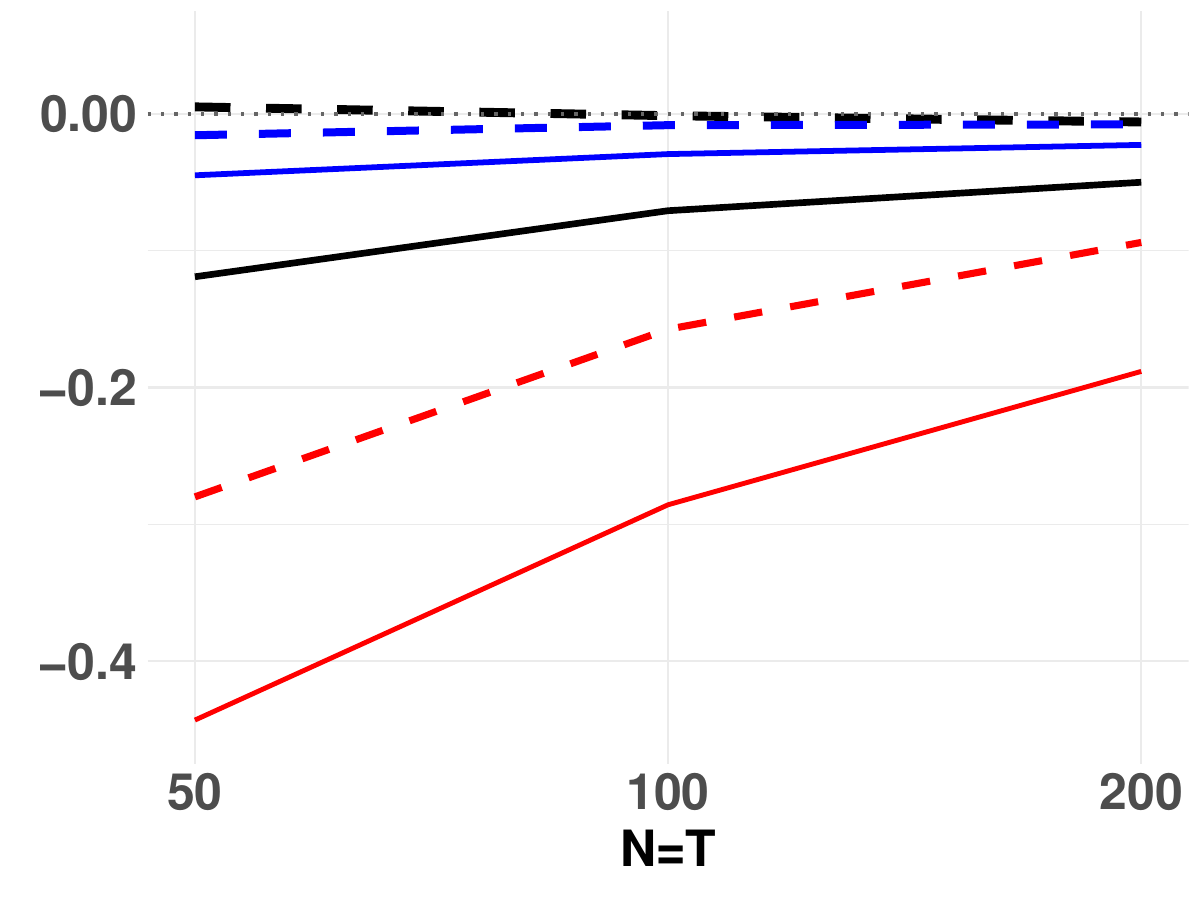}
\caption{$\rho_{wf}=0.6,\alpha_2 = 0.6$}
\label{fig:bias_f2_06_06}
\end{subfigure}

\centering
\includegraphics[width=0.50\textwidth]{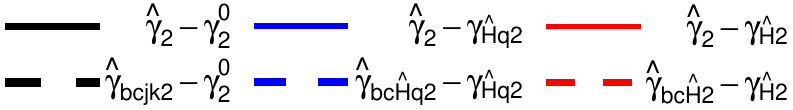}
\caption{Bias of $\hat{\gamma}_2$ and its bias corrected versions for cross and serially correlated $e_{t,i}$}
\label{fig:bias.f2}

\end{figure}

The average bias over the replications is reported in Figure \ref{fig:bias.f2}. The bias $\hat{\gamma}_2 - \gamma_{\hat{\bH},2}$ is always negative and the largest of all in magnitude, and the bias becomes more significant for the smaller sample sizes and weaker factors. With the bias-corrected version, $\hat{\gamma}_{bc\hat{\bH},2} - \gamma_{\hat{\bH},2}$, always has a smaller bias than without bias-correction, but still other estimators have smaller biases in magnitude.
In contrast, $\hat{\gamma}_2 - \gamma_{\hat{\bH}_q,2}$ has virtually no bias when $w_t$ and $\bff_t^*$ are uncorrelated (i.e. $\rho_{wf} = 0$). It has very small bias when $\rho_{wf} = 0.6$, but the bias-corrected version successfully reduces the bias; see $\hat{\gamma}_{bc\hat{\bH}_q,2} - \gamma_{\hat{\bH}_q,2}$ in the figure.

The most reasonable parameter that is considered to be estimated by $\hat{\gamma}_2$ is the parameter of the pure signals, $\gamma_2^0$, as we have argued. The bias $\hat{\gamma}_2 - \gamma_2^0$ is moderately negatively biased which gets worse for smaller sample sizes and weaker factors. In contrast, the proposed jackknife bias-correction reduces the bias very successfully; see $\hat{\gamma}_{bcjk,2} - \gamma_2^0$ in the figure.

\begin{figure}[!htb]	
\centering
\begin{subfigure}[b]{0.32\textwidth}
\centering
\includegraphics[width=\textwidth]{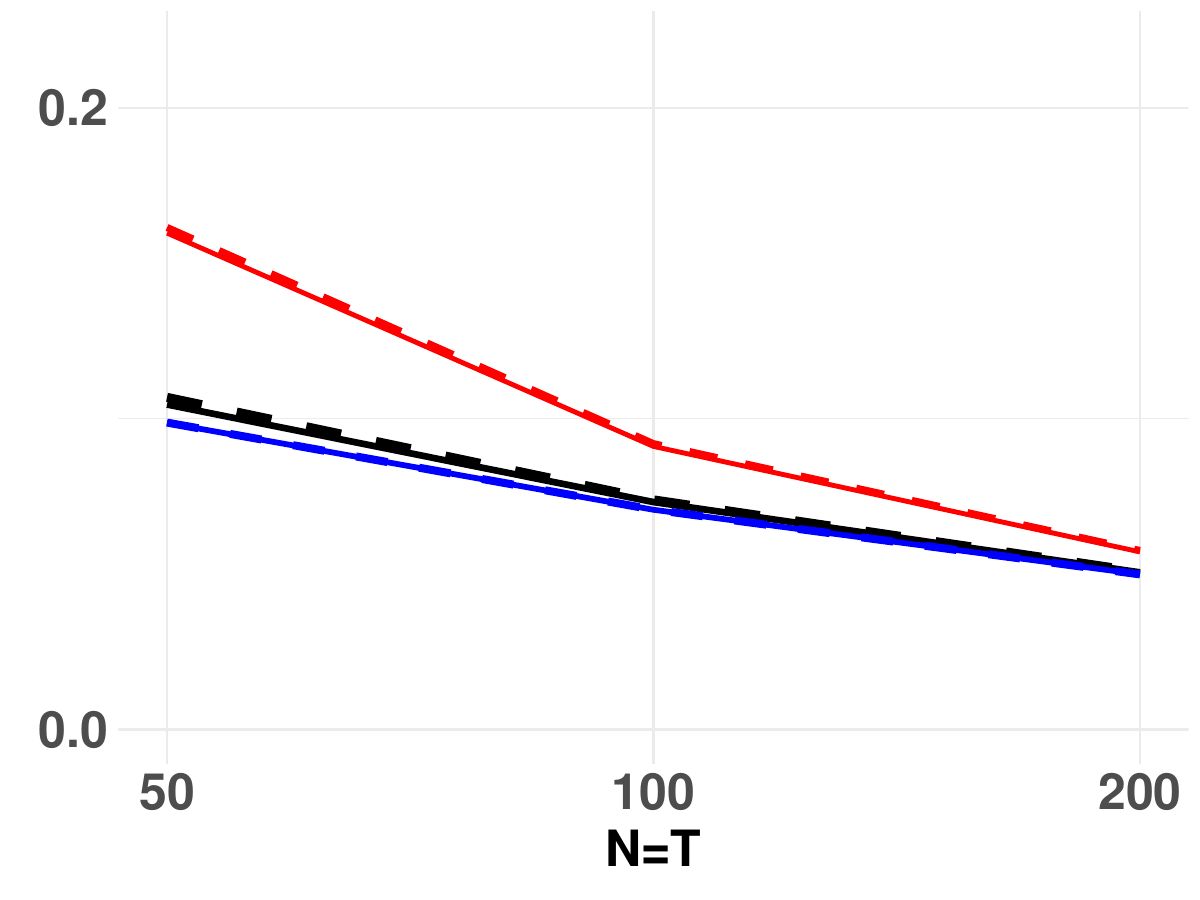}
\caption{$\rho_{wf}=0.0,\alpha_2 = 1.0$}
\label{fig:sd_f2_00_10}
\end{subfigure}
\hfill
\begin{subfigure}[b]{0.32\textwidth}
\centering
\includegraphics[width=\textwidth]{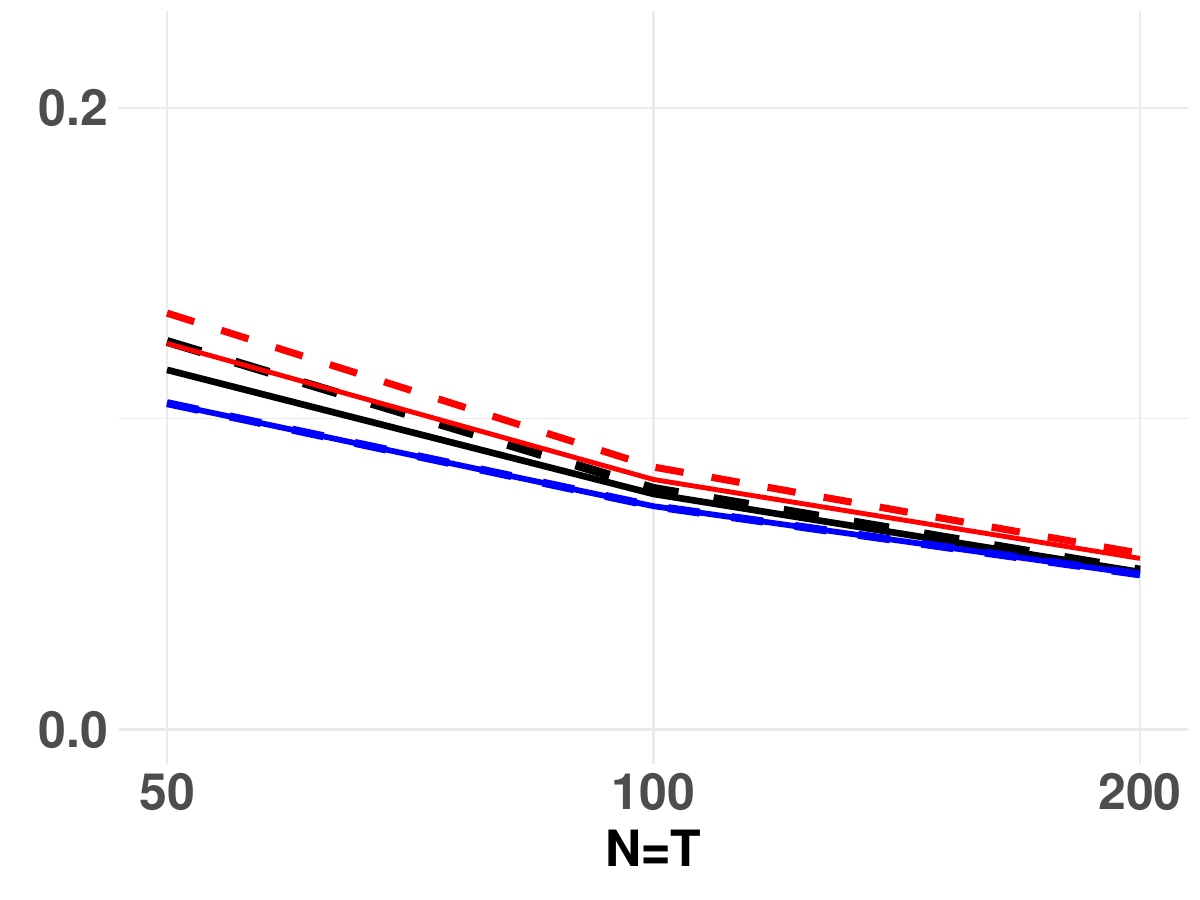}
\caption{$\rho_{wf}=0.0,\alpha_2 = 0.8$}
\label{fig:sd_f2_00_08}
\end{subfigure}
\hfill
\begin{subfigure}[b]{0.32\textwidth}
\centering
\includegraphics[width=\textwidth]{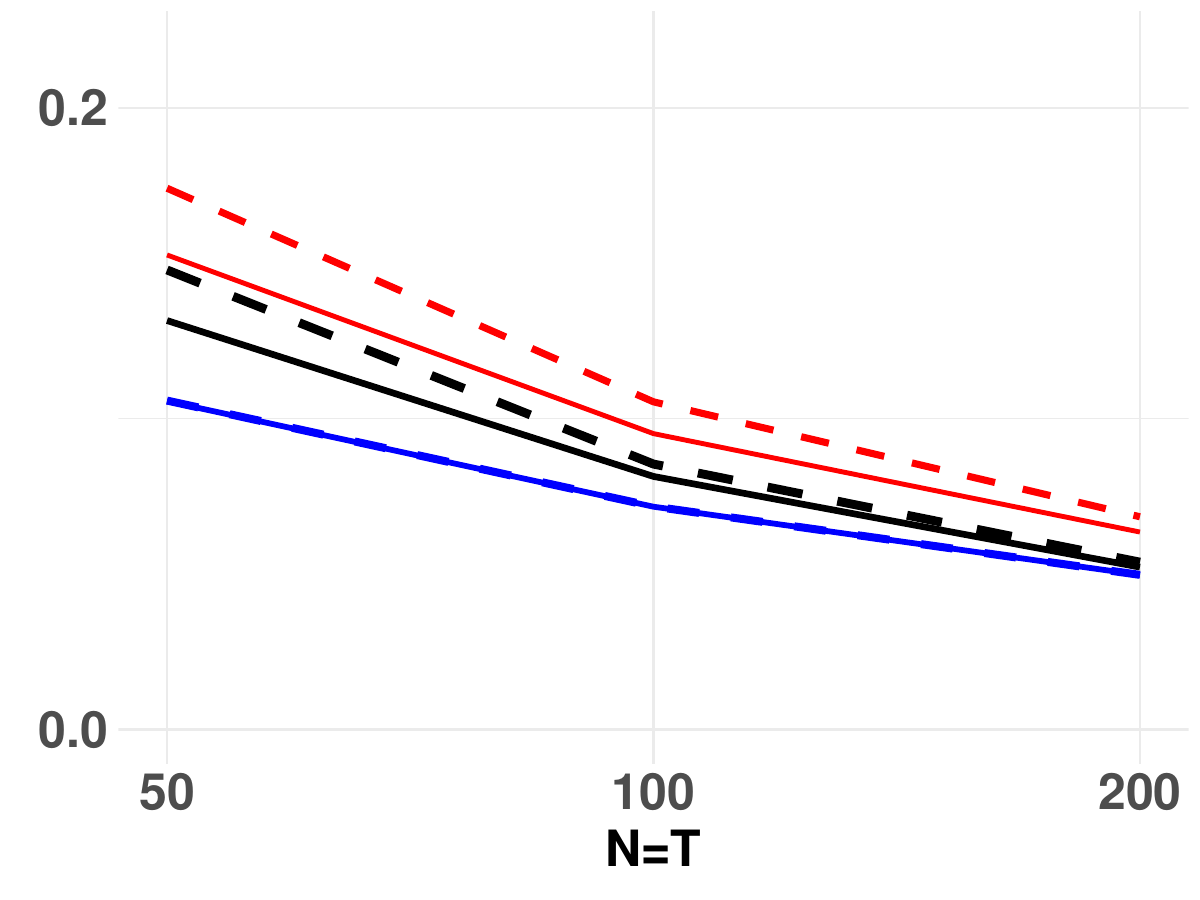}
\caption{$\rho_{wf}=0.0,\alpha_2 = 0.6$}
\label{fig:sd_f2_00_06}
\end{subfigure}

\centering
\begin{subfigure}[b]{0.32\textwidth}
\centering
\includegraphics[width=\textwidth]{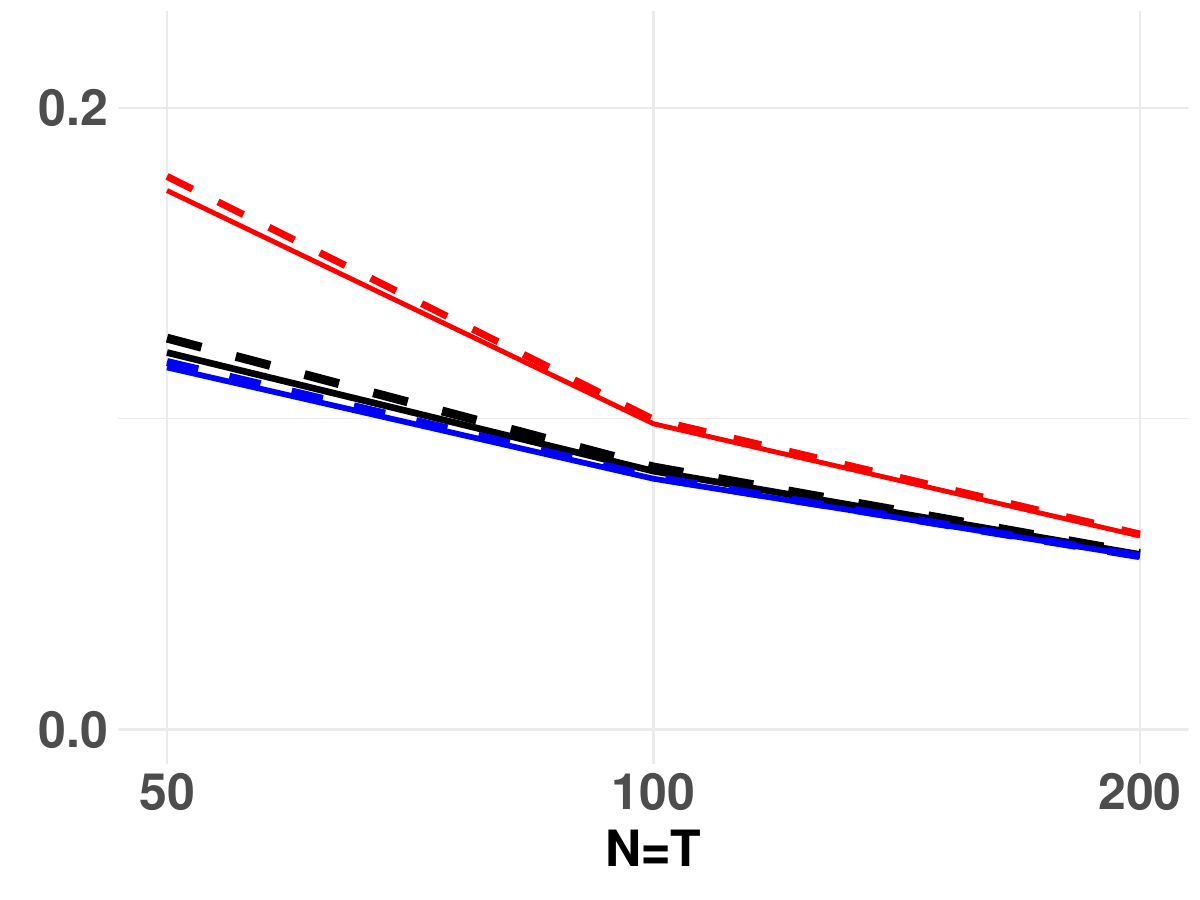}
\caption{$\rho_{wf}=0.6,\alpha_2 = 1.0$}
\label{fig:sd_f2_06_10}
\end{subfigure}
\hfill
\begin{subfigure}[b]{0.32\textwidth}
\centering
\includegraphics[width=\textwidth]{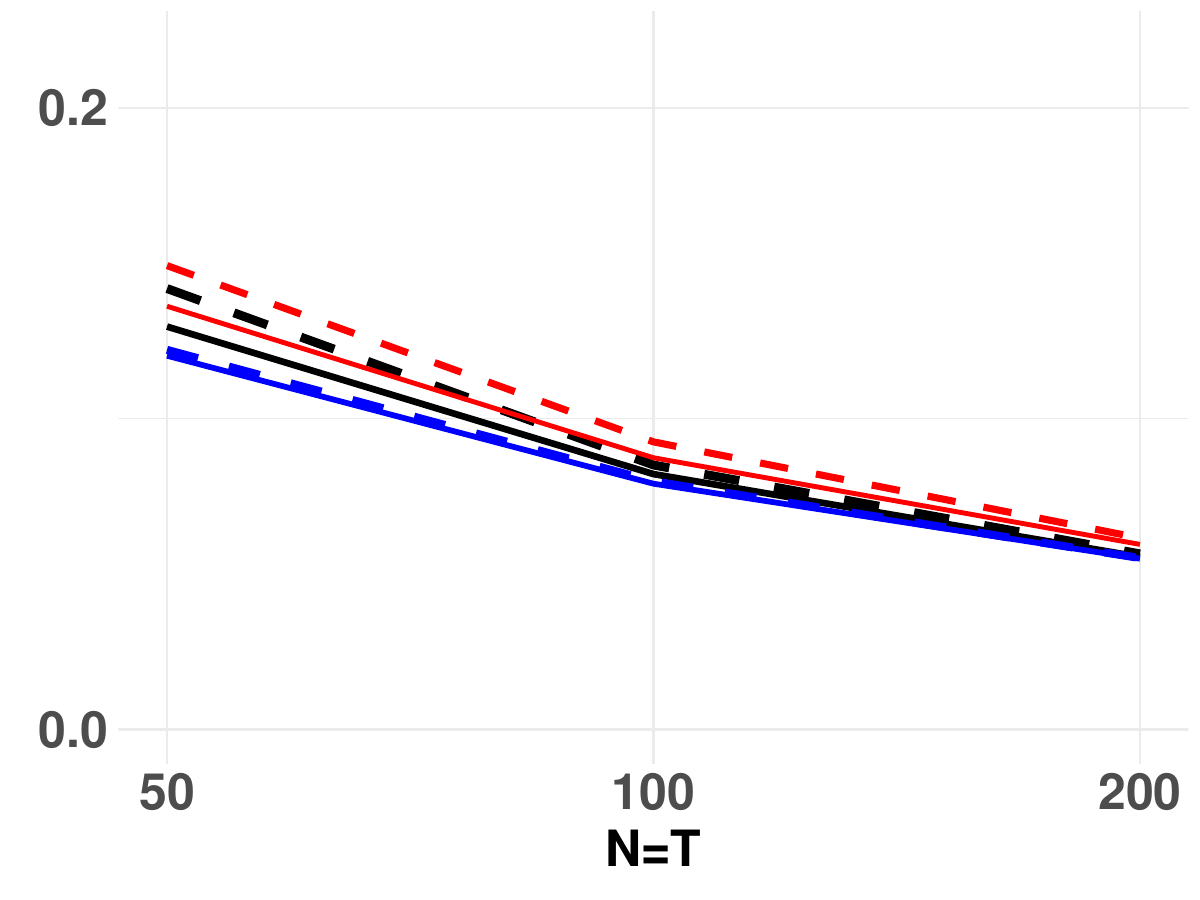}
\caption{$\rho_{wf}=0.6,\alpha_2 = 0.8$}
\label{fig:sd_f2_06_08}
\end{subfigure}
\hfill
\begin{subfigure}[b]{0.32\textwidth}
\centering
\includegraphics[width=\textwidth]{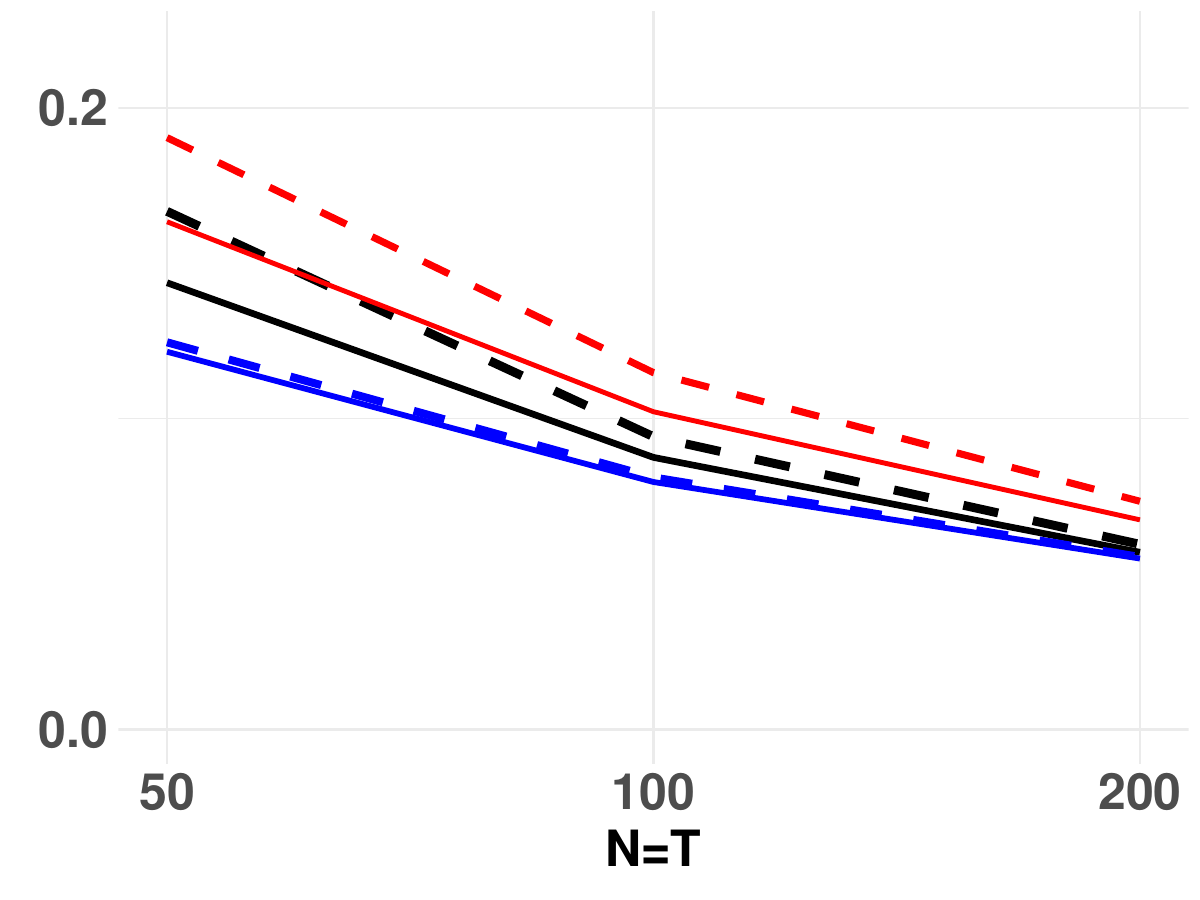}
\caption{$\rho_{wf}=0.6,\alpha_2 = 0.6$}
\label{fig:sd_f2_06_06}
\end{subfigure}

\centering
\includegraphics[width=0.50\textwidth]{newimages/legend_SigE3_rhoe2_bias_rhowf0_a1_f2.pdf}
\caption{Standard deviation of $\hat{\gamma}_2$ and its bias corrected versions for cross and serially correlated $e_{t,i}$}
\label{fig:sd.f2}

\end{figure}

\begin{figure}[!htb]	
\centering
\begin{subfigure}[b]{0.32\textwidth}
\centering
\includegraphics[width=\textwidth]{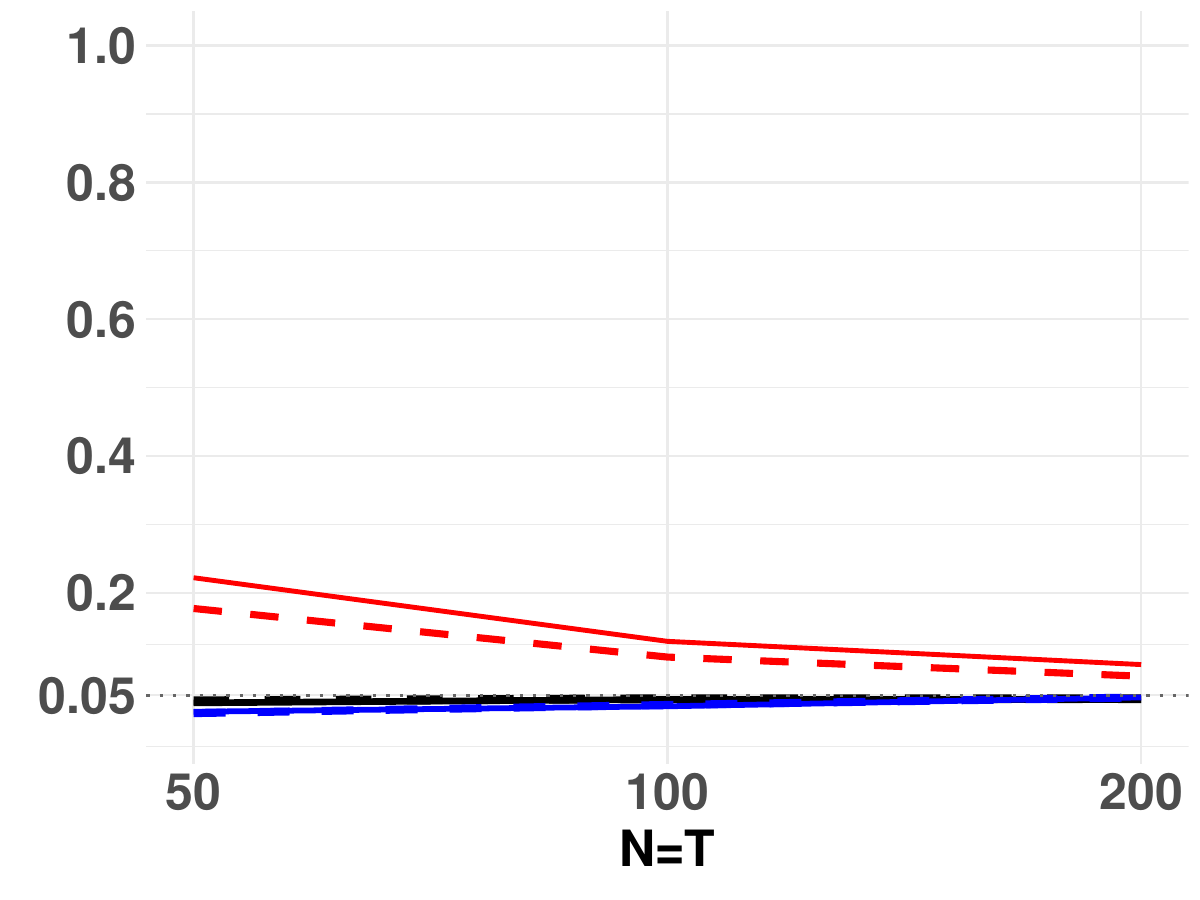}
\caption{$\rho_{wf}=0.0,\alpha_2 = 1.0$}
\label{fig:test_f2_00_10}
\end{subfigure}
\hfill
\begin{subfigure}[b]{0.32\textwidth}
\centering
\includegraphics[width=\textwidth]{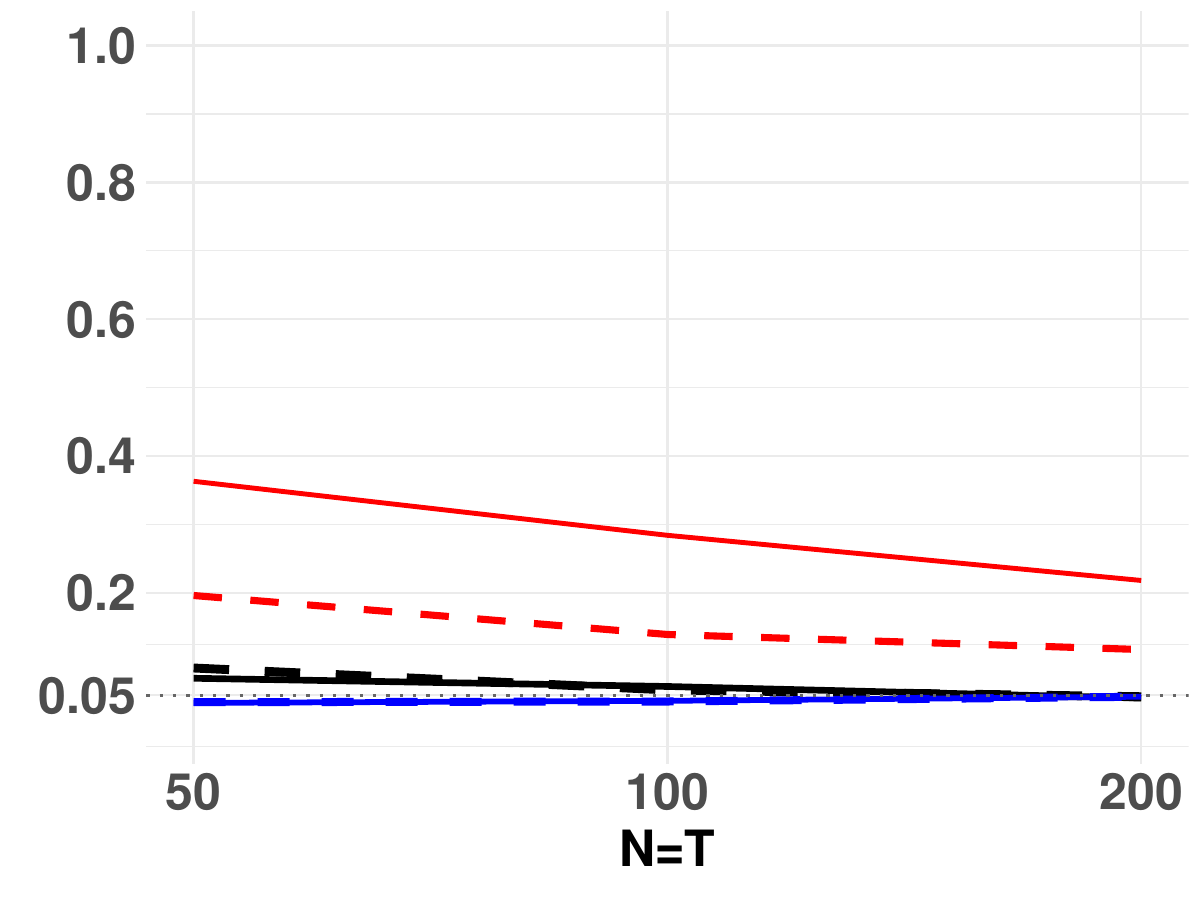}
\caption{$\rho_{wf}=0.0,\alpha_2 = 0.8$}
\label{fig:test_f2_00_08}
\end{subfigure}
\hfill
\begin{subfigure}[b]{0.32\textwidth}
\centering
\includegraphics[width=\textwidth]{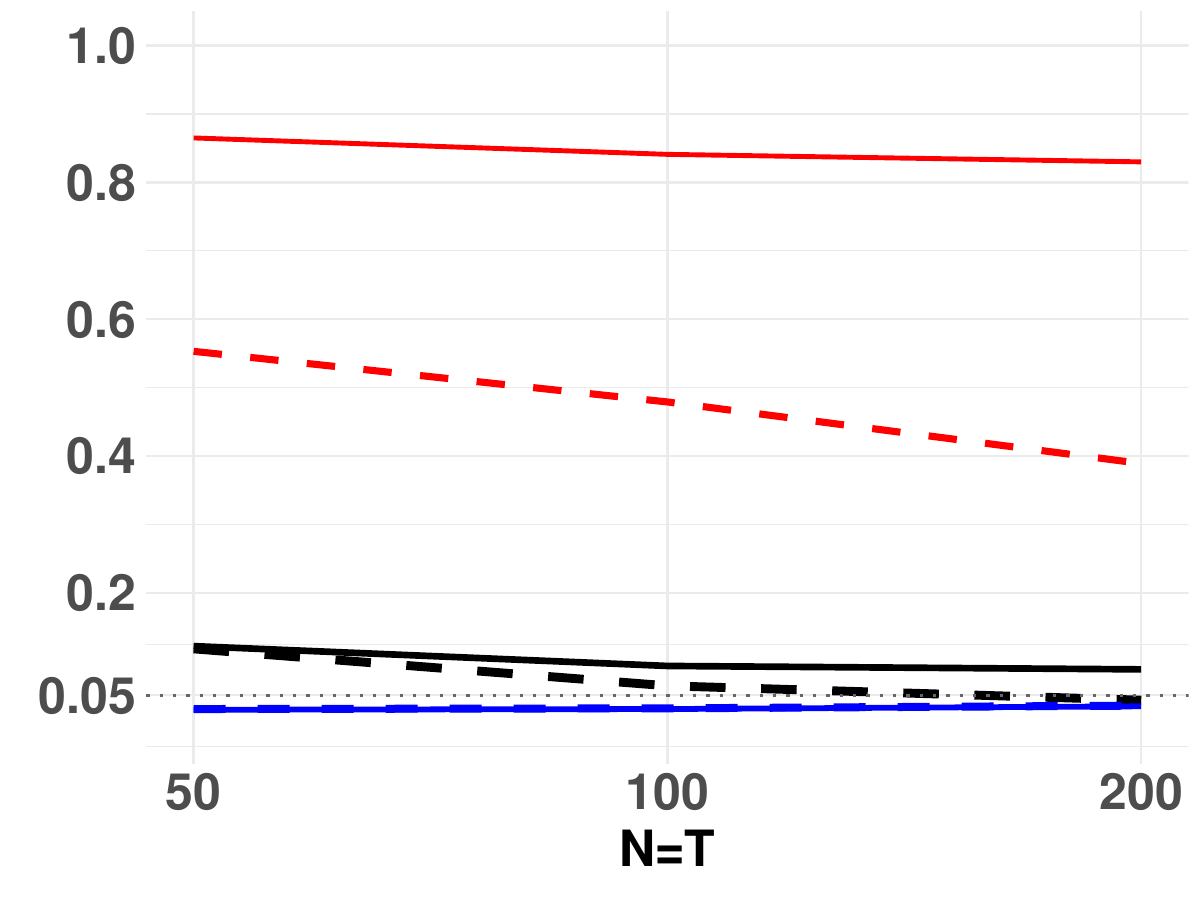}
\caption{$\rho_{wf}=0.0,\alpha_2 = 0.6$}
\label{fig:test_f2_00_06}
\end{subfigure}

\centering
\begin{subfigure}[b]{0.32\textwidth}
\centering
\includegraphics[width=\textwidth]{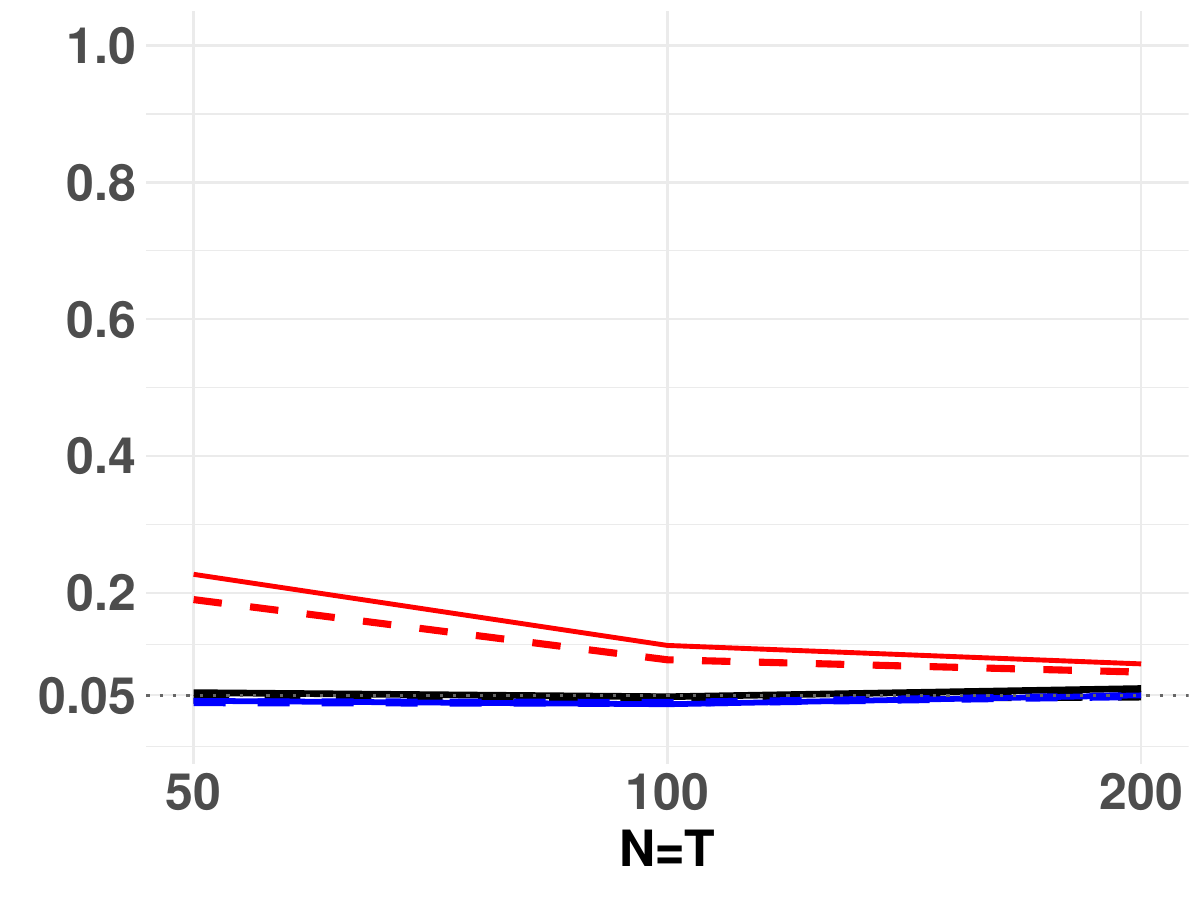}
\caption{$\rho_{wf}=0.6,\alpha_2 = 1.0$}
\label{fig:test_f2_06_10}
\end{subfigure}
\hfill
\begin{subfigure}[b]{0.32\textwidth}
\centering
\includegraphics[width=\textwidth]{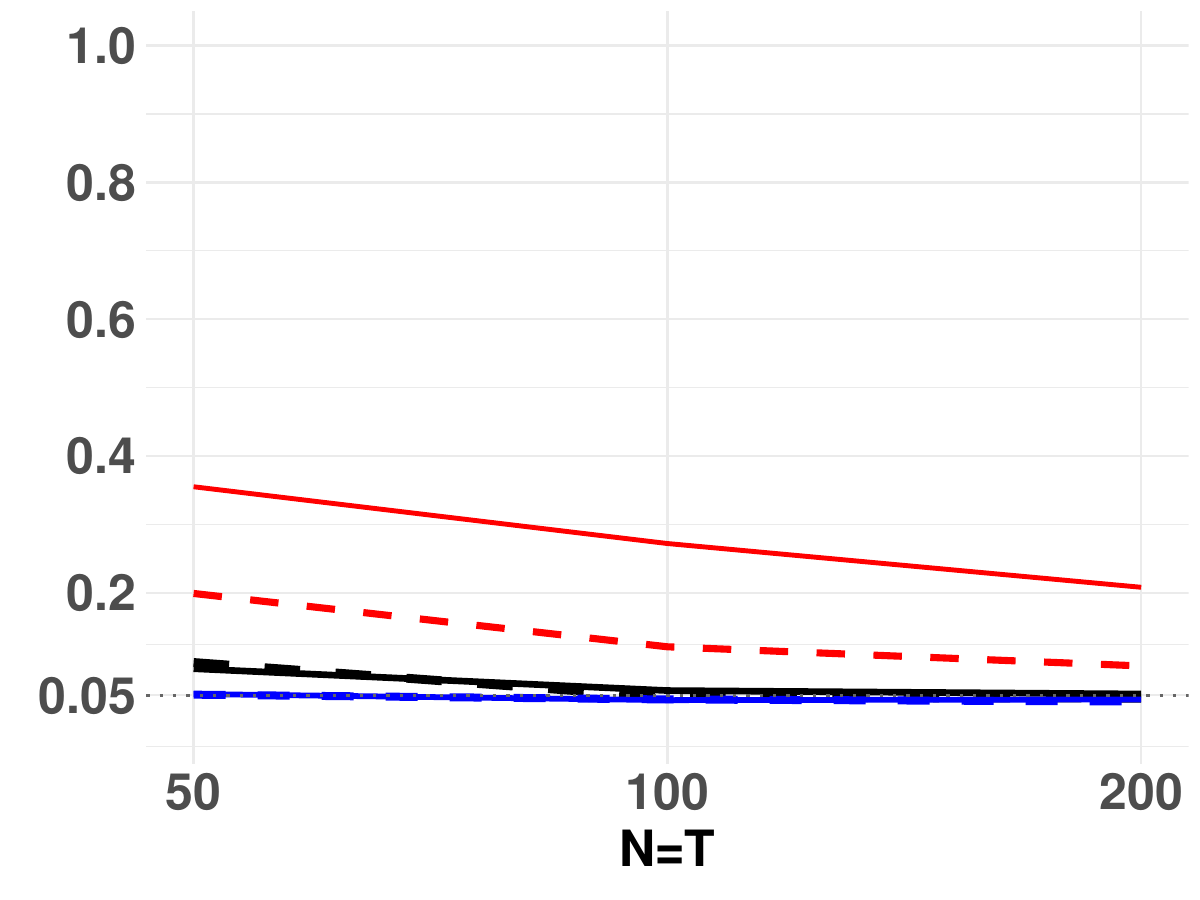}
\caption{$\rho_{wf}=0.6,\alpha_2 = 0.8$}
\label{fig:test_f2_06_08}
\end{subfigure}
\hfill
\begin{subfigure}[b]{0.32\textwidth}
\centering
\includegraphics[width=\textwidth]{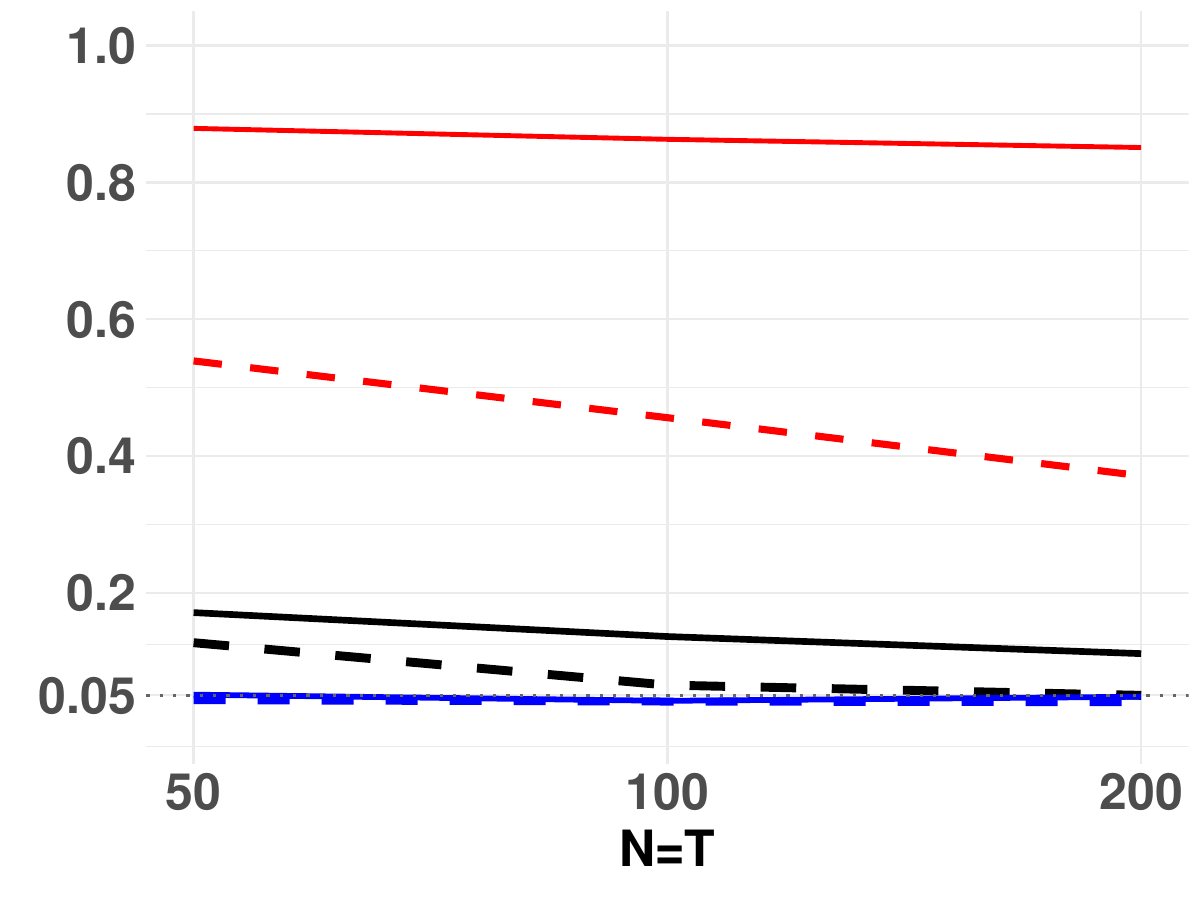}
\caption{$\rho_{wf}=0.6,\alpha_2 = 0.6$}
\label{fig:test_f2_06_06}
\end{subfigure}

\centering
\includegraphics[width=0.50\textwidth]{newimages/legend_SigE3_rhoe2_bias_rhowf0_a1_f2.pdf}
\caption{Size of the t-tests at the 5\% level using $\hat{\gamma}_2$ and its bias corrected versions for cross and serially correlated $e_{t,i}$}
\label{fig:test.f2}

\end{figure}

According to Figure \ref{fig:sd.f2}, the standard deviations of $\hat{\gamma}_2 - \gamma_{\hat{\bH}_q,2}$ and its bias-corrected one are the smallest, closely followed by that of $\hat{\gamma}_2 - \gamma_2^0$. For the strong factor model, the jackknife bias-correction does not increase the variation much, but it moderately does for the very weak factor model with the small sample size, which quickly goes down as the sample size increases. In most cases, the variation of $\hat{\gamma}_{bc\hat{\bH},2} - \gamma_{\hat{\bH},2}$ is the largest, followed by $\hat{\gamma}_{2} - \gamma_{\hat{\bH},2}$.

Reflecting the bias and the variation inflation, the size of the t-test, which is reported in Figure \ref{fig:test.f2}, is affected. The size of the test based on $\hat{\gamma}_{bcjk,2} - \gamma_2^0$ and its bias-corrected versions are always around the nominal level. The size of the test based on $\hat{\gamma}_2 - \gamma_2^0$ is correct unless the model is very weak. The test based on the jackknife corrected estimator is correct except for the very weak factor and the very small sample size.   
The tests based on $\hat{\gamma}_2 - \gamma_{\hat{\bH},2}$ and its bias-corrected are the most unreliable, suffering from enormous size distortion.

\subsubsection{Coefficient on the observed factor, $\beta$}
Let us turn our attention to the performance of the estimated coefficient on the observed factor $w_{t}$, $\beta$ and its bias-corrected versions. Note that the elements of the ``parameters'' $\bdelta^0$, $\bdelta_{\hat{\bH}_q}$ and $\bdelta_{\hat{\bH}}$ corresponding to $w_t$ are all $\beta$, and their uncorrected estimators are identical. However, the corresponding elements of the bias-corrected estimators may have different values if the latent factors and the observable factors are correlated. Specifically, we report the bias, standard deviation (s.d.), and a t-test at the 5\% level of $\hat{\beta} - \beta$ and the bias-corrected estimators relative to $\beta$, $\hat{\beta}_{bcjk} - \beta$, $\hat{\beta}_{bc\hat{\bH}_q} - \beta$, $\hat{\beta}_{bc\hat{\bH}} - \beta$. The bias-corrected versions are shown as dashed lines in the figures.

\begin{figure}[!htb]	
\centering
\begin{subfigure}[b]{0.32\textwidth}
\centering
\includegraphics[width=\textwidth]{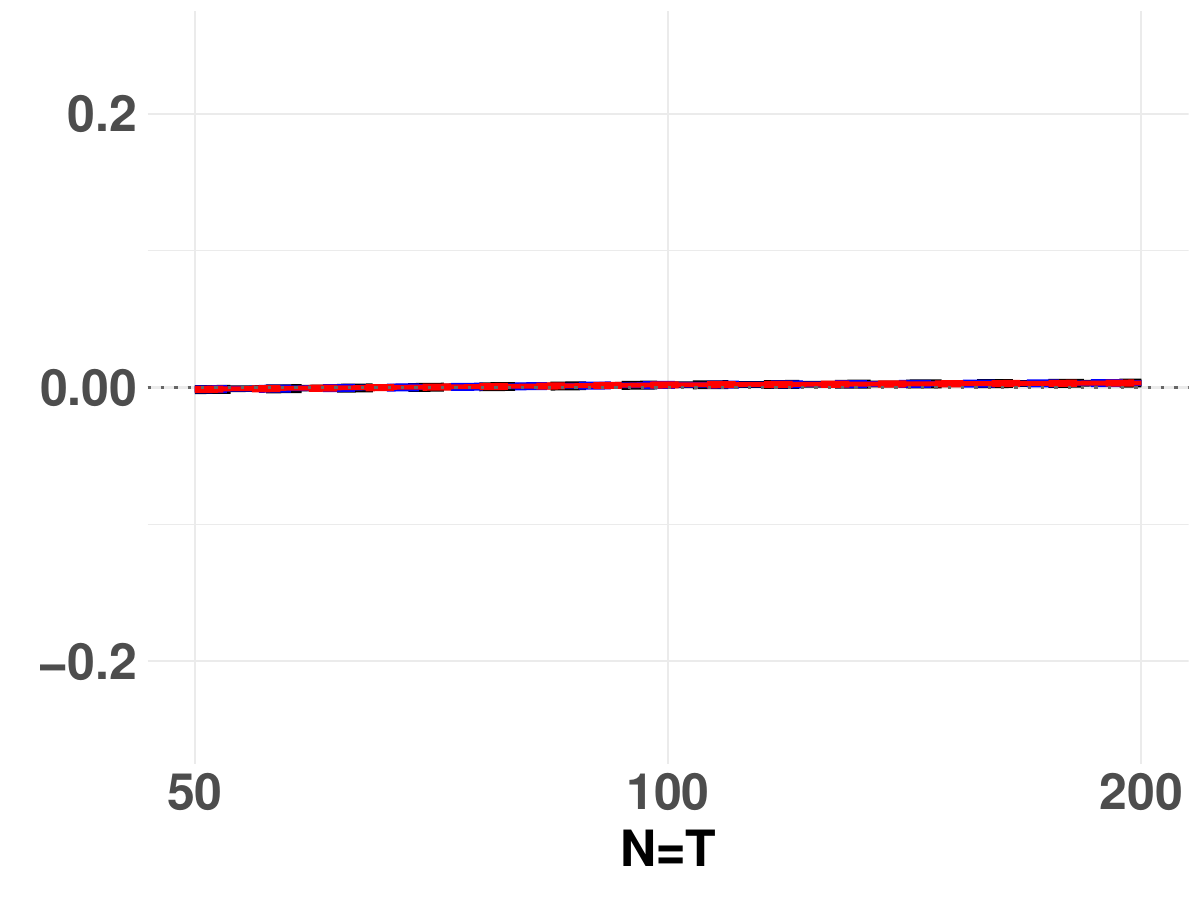}
\caption{$\rho_{wf}=0.0,\alpha_2 = 1.0$}
\label{fig:bias_w_00_10}
\end{subfigure}
\hfill
\begin{subfigure}[b]{0.32\textwidth}
\centering
\includegraphics[width=\textwidth]{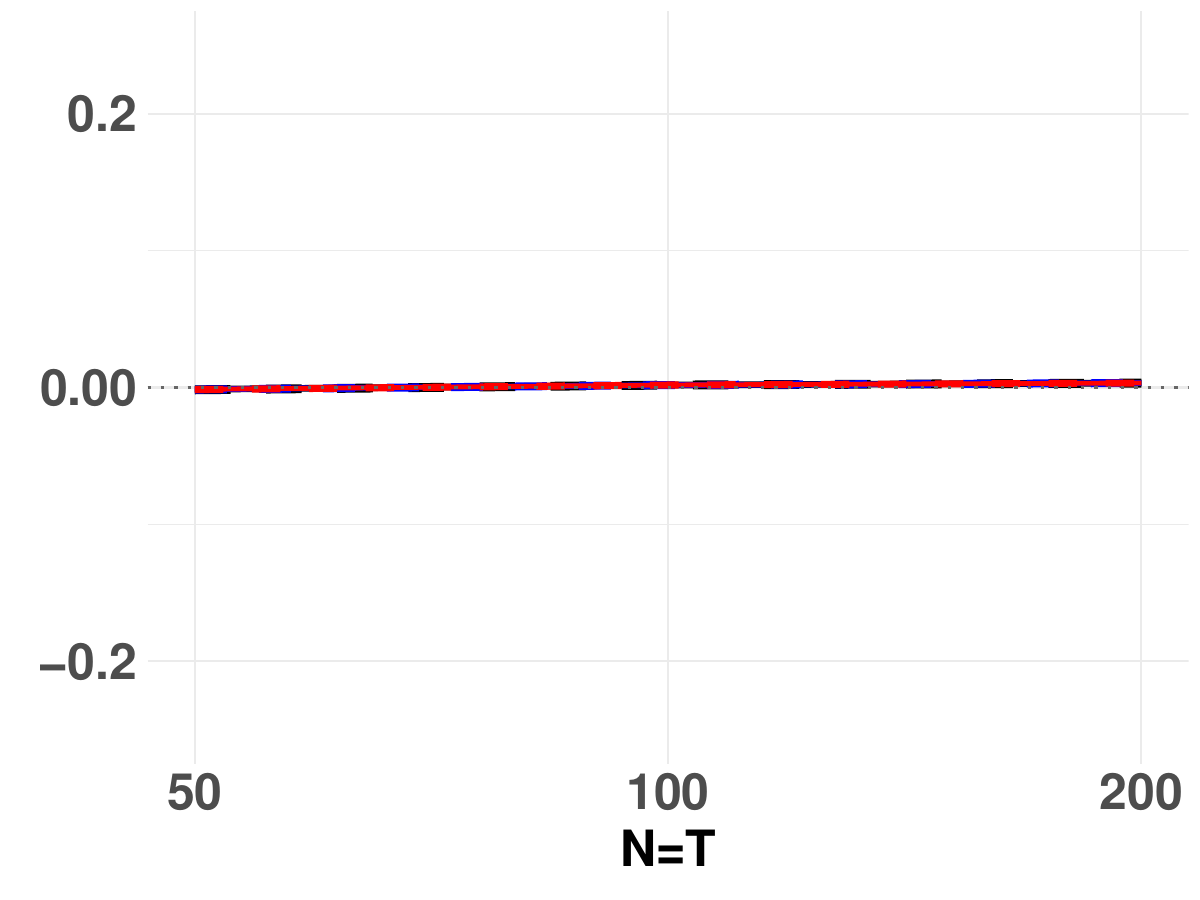}
\caption{$\rho_{wf}=0.0,\alpha_2 = 0.8$}
\label{fig:bias_w_00_08}
\end{subfigure}
\hfill
\begin{subfigure}[b]{0.32\textwidth}
\centering
\includegraphics[width=\textwidth]{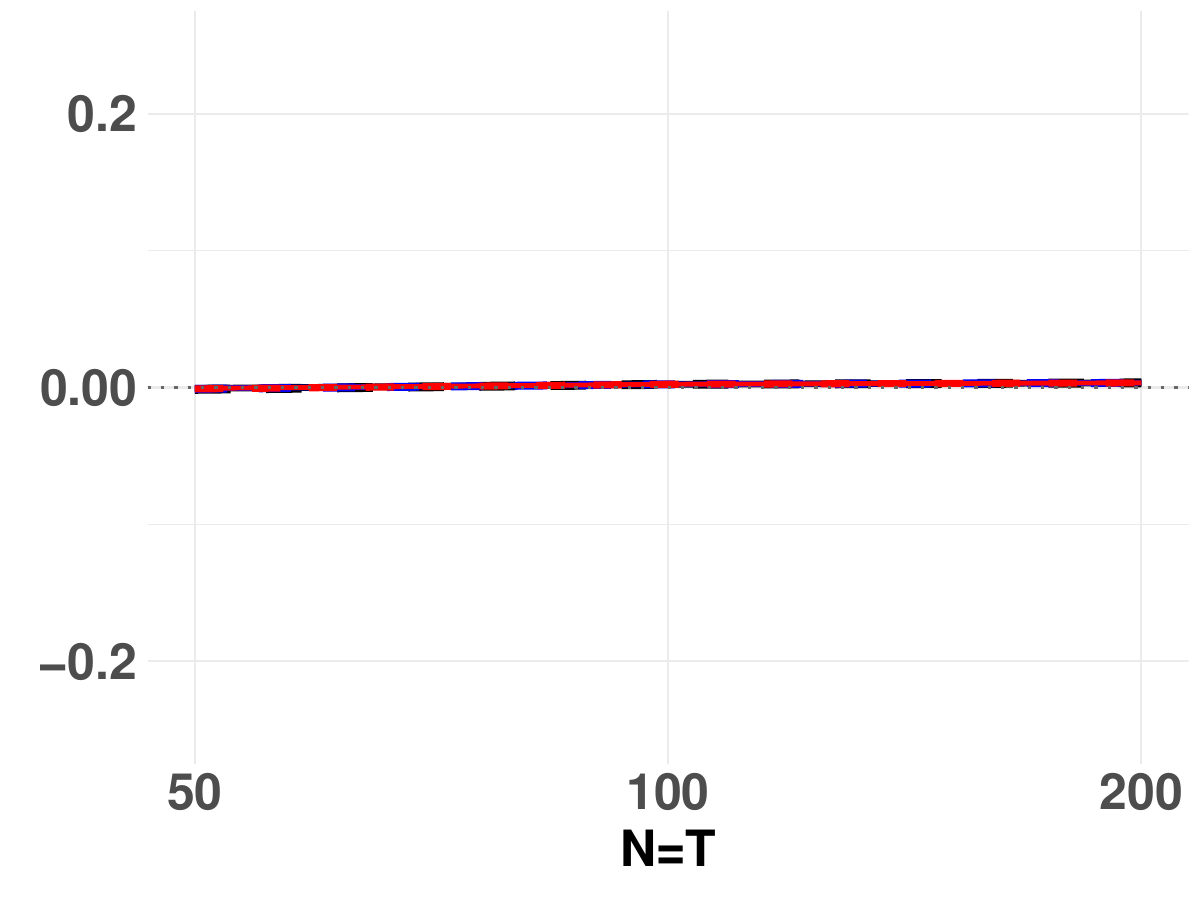}
\caption{$\rho_{wf}=0.0,\alpha_2 = 0.6$}
\label{fig:bias_w_00_06}
\end{subfigure}

\centering
\begin{subfigure}[b]{0.32\textwidth}
\centering
\includegraphics[width=\textwidth]{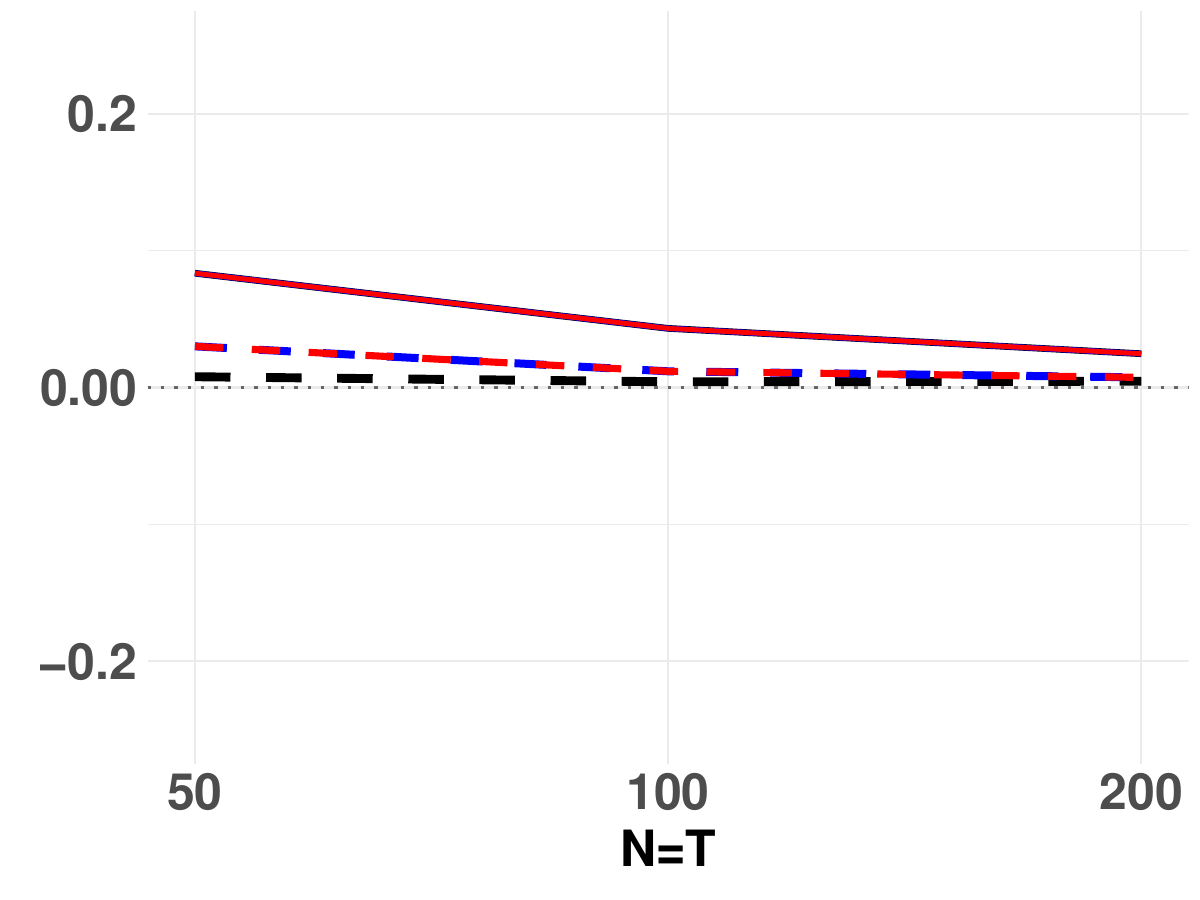}
\caption{$\rho_{wf}=0.6,\alpha_2 = 1.0$}
\label{fig:bias_w_06_10}
\end{subfigure}
\hfill
\begin{subfigure}[b]{0.32\textwidth}
\centering
\includegraphics[width=\textwidth]{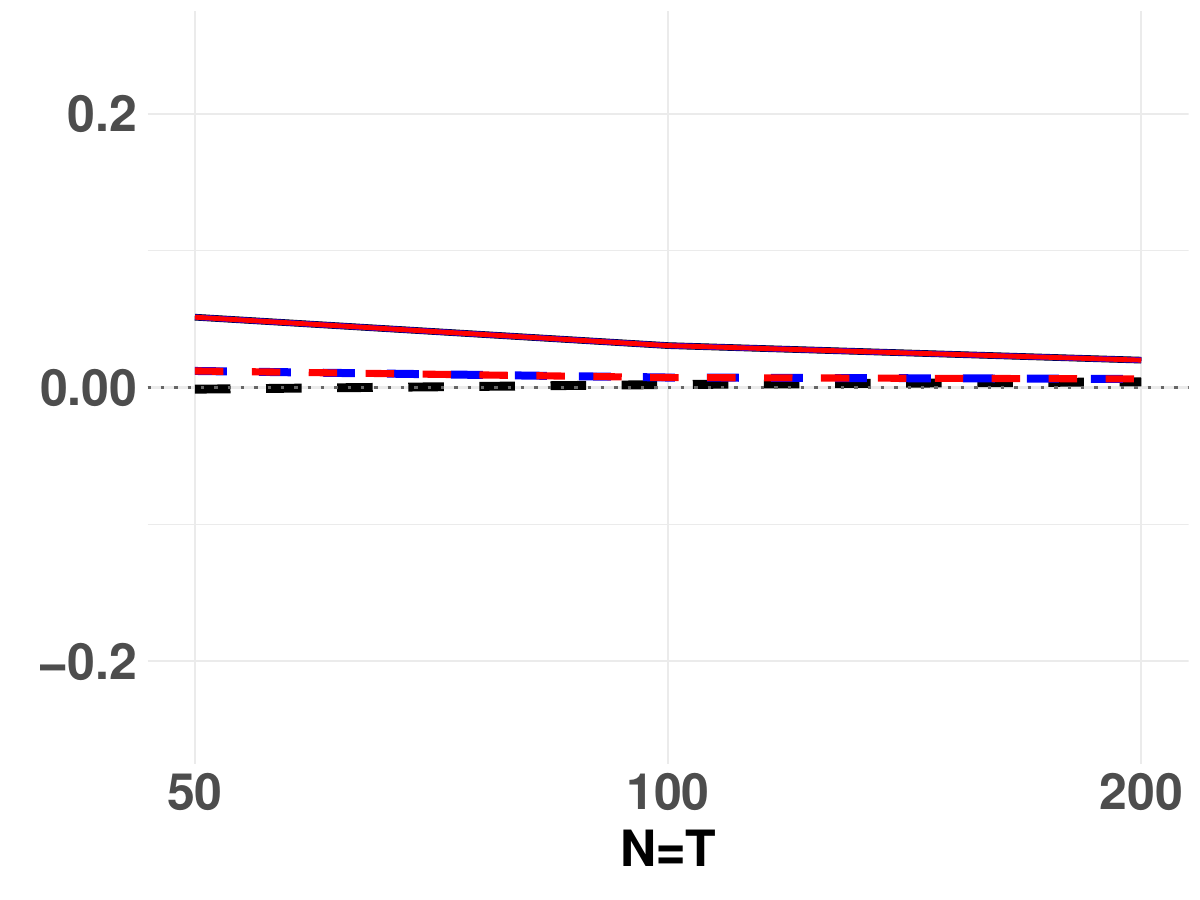}
\caption{$\rho_{wf}=0.6,\alpha_2 = 0.8$}
\label{fig:bias_w_06_08}
\end{subfigure}
\hfill
\begin{subfigure}[b]{0.32\textwidth}
\centering
\includegraphics[width=\textwidth]{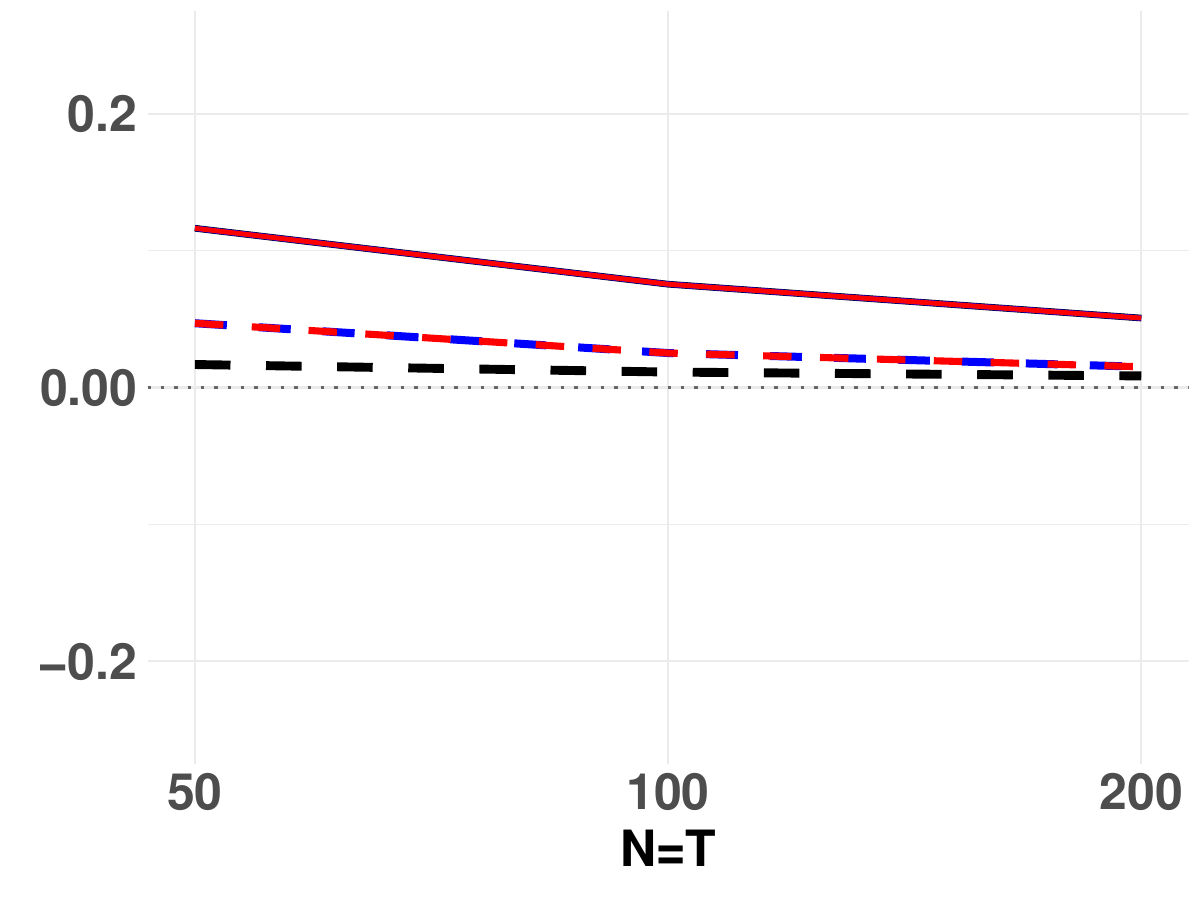}
\caption{$\rho_{wf}=0.6,\alpha_2 = 0.6$}
\label{fig:bias_w_06_06}
\end{subfigure}

\centering
\includegraphics[width=0.45\textwidth]{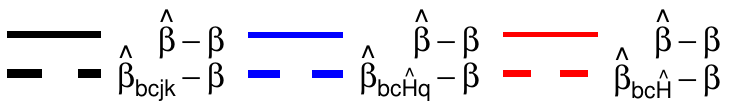}
\caption{Bias of $\hat{\beta}$ and its bias corrected versions for cross and serially correlated $e_{t,i}$}
\label{fig:bias.w}

\end{figure}
\begin{figure}[!htb]	
\centering
\begin{subfigure}[b]{0.32\textwidth}
\centering
\includegraphics[width=\textwidth]{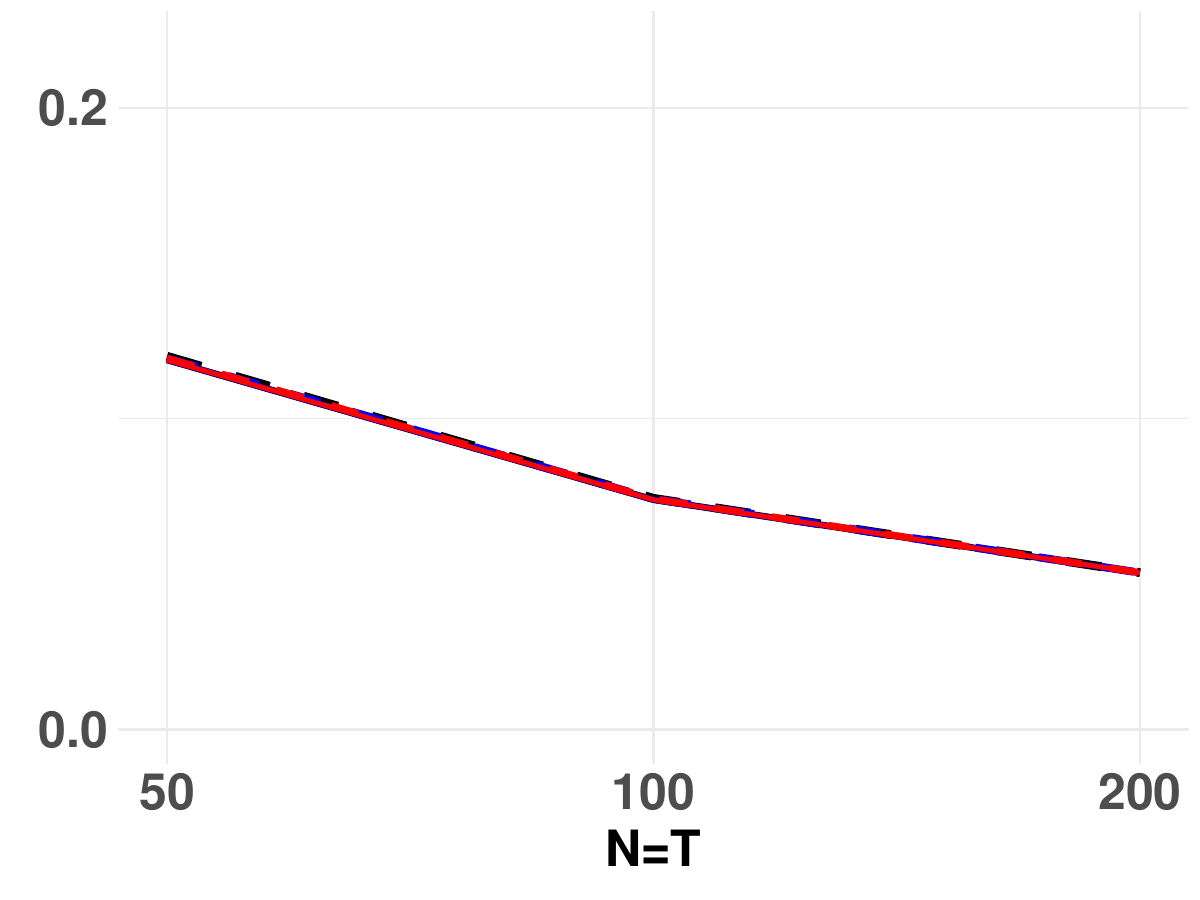}
\caption{$\rho_{wf}=0.0,\alpha_2 = 1.0$}
\label{fig:sd_w_00_10}
\end{subfigure}
\hfill
\begin{subfigure}[b]{0.32\textwidth}
\centering
\includegraphics[width=\textwidth]{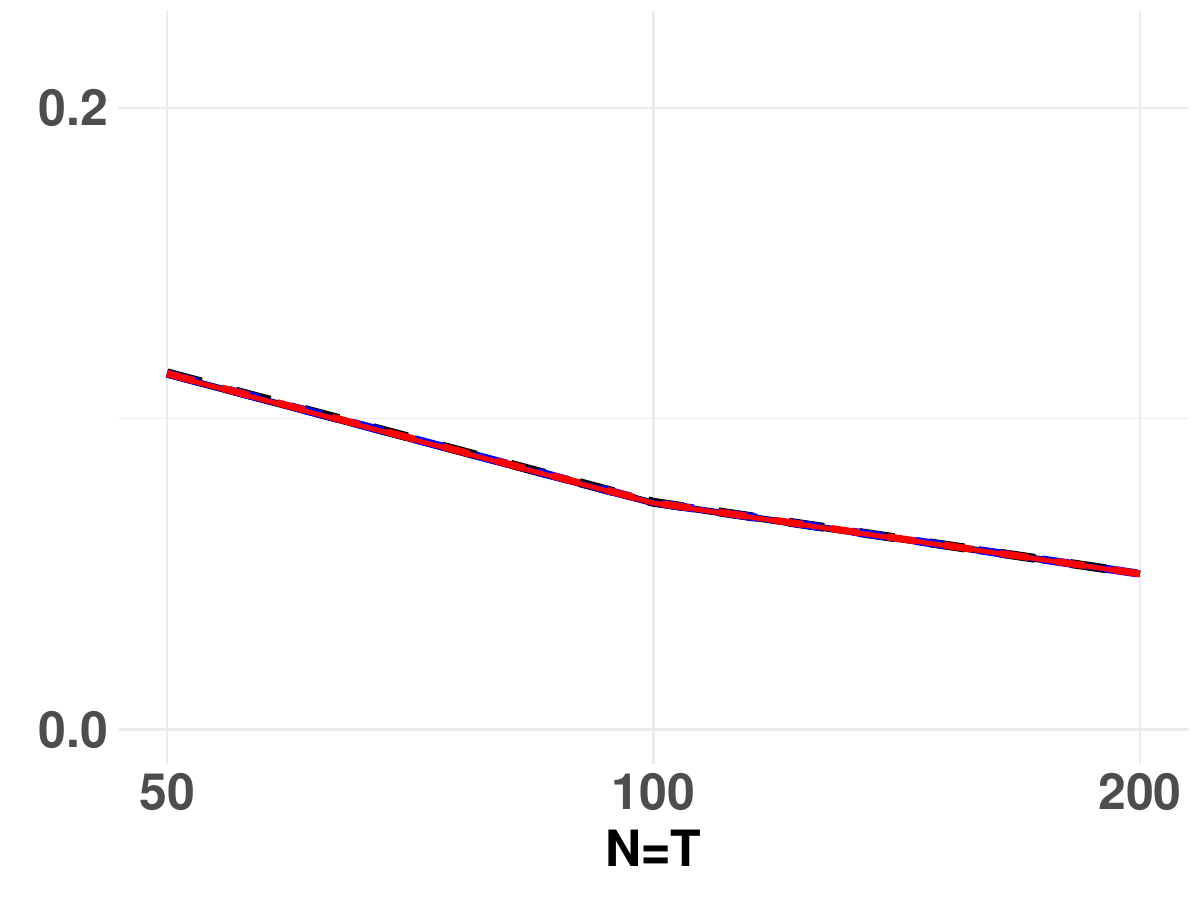}
\caption{$\rho_{wf}=0.0,\alpha_2 = 0.8$}
\label{fig:sd_w_00_08}
\end{subfigure}
\hfill
\begin{subfigure}[b]{0.32\textwidth}
\centering
\includegraphics[width=\textwidth]{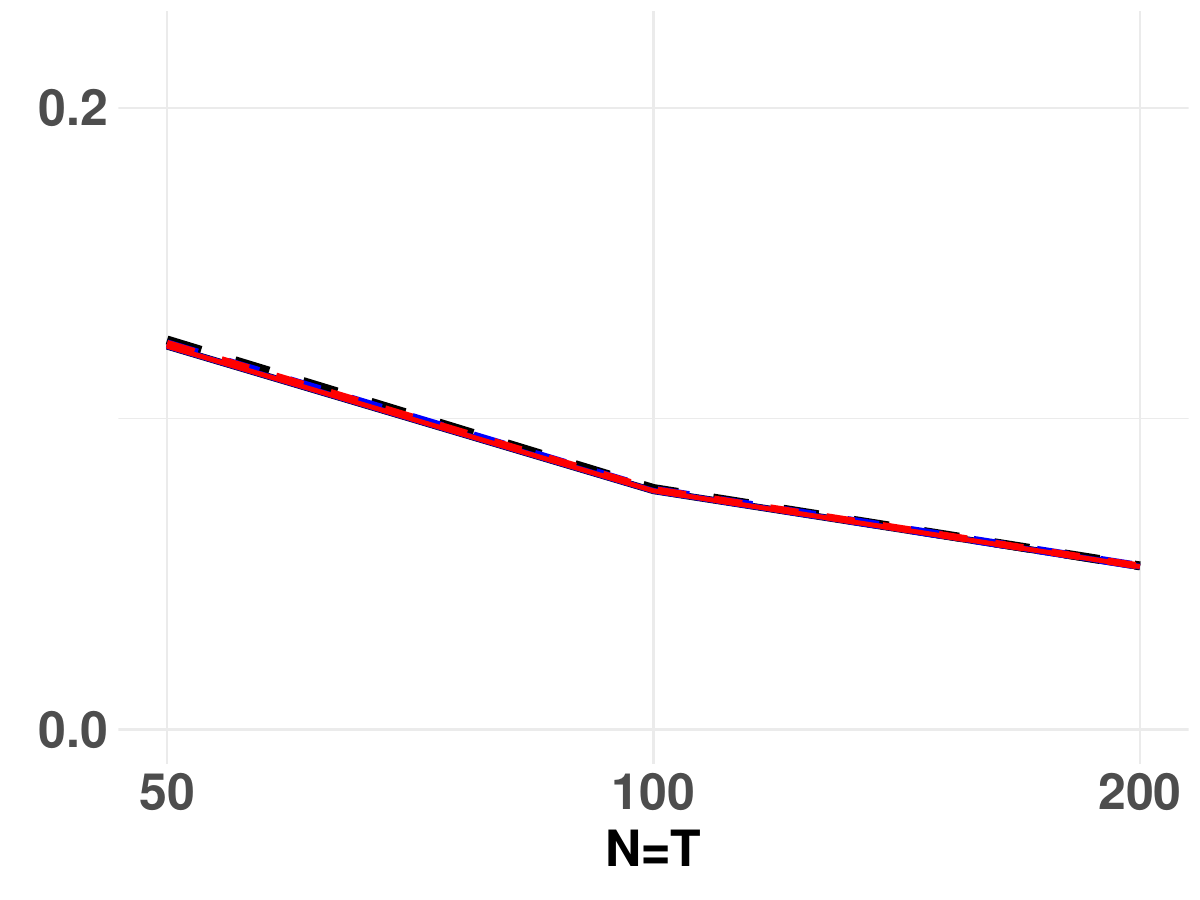}
\caption{$\rho_{wf}=0.0,\alpha_2 = 0.6$}
\label{fig:sd_w_00_06}
\end{subfigure}

\centering
\begin{subfigure}[b]{0.32\textwidth}
\centering
\includegraphics[width=\textwidth]{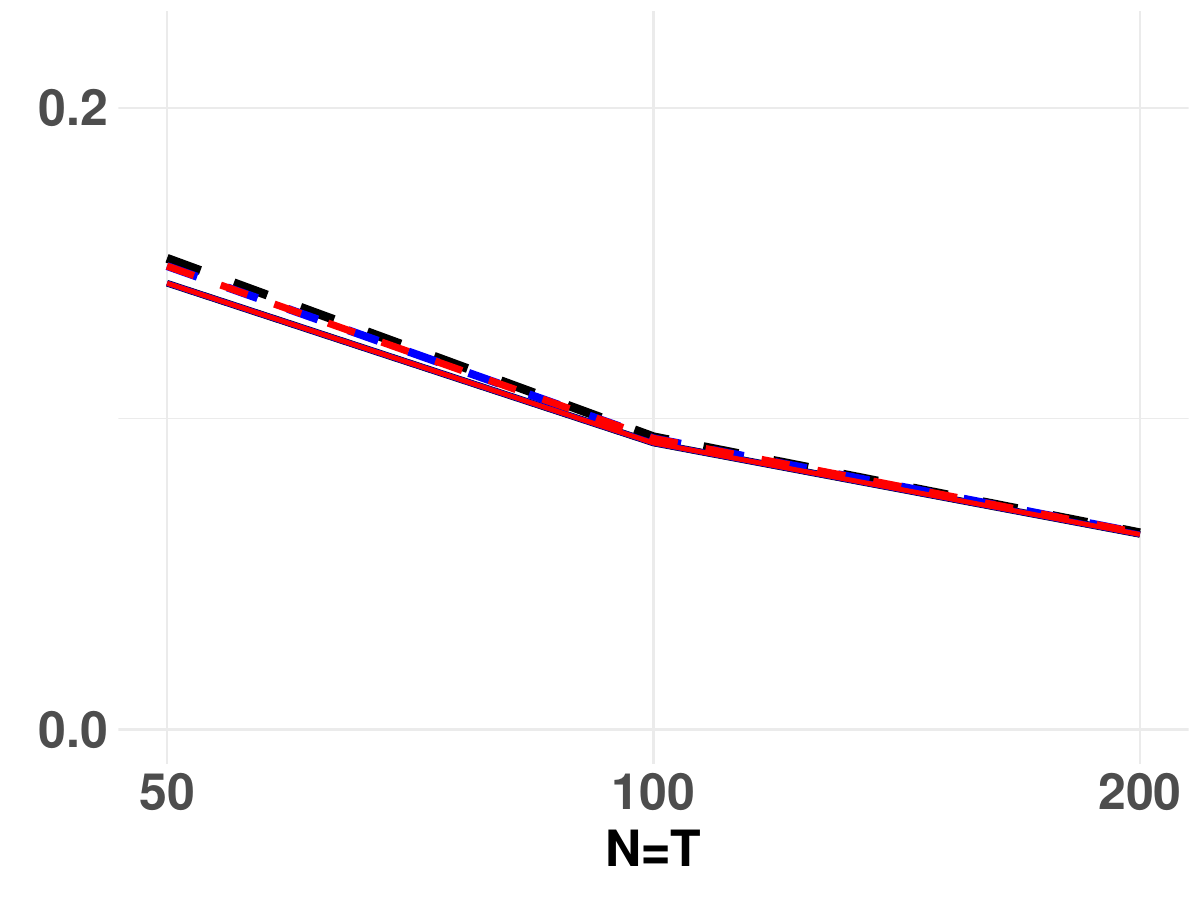}
\caption{$\rho_{wf}=0.6,\alpha_2 = 1.0$}
\label{fig:sd_w_06_10}
\end{subfigure}
\hfill
\begin{subfigure}[b]{0.32\textwidth}
\centering
\includegraphics[width=\textwidth]{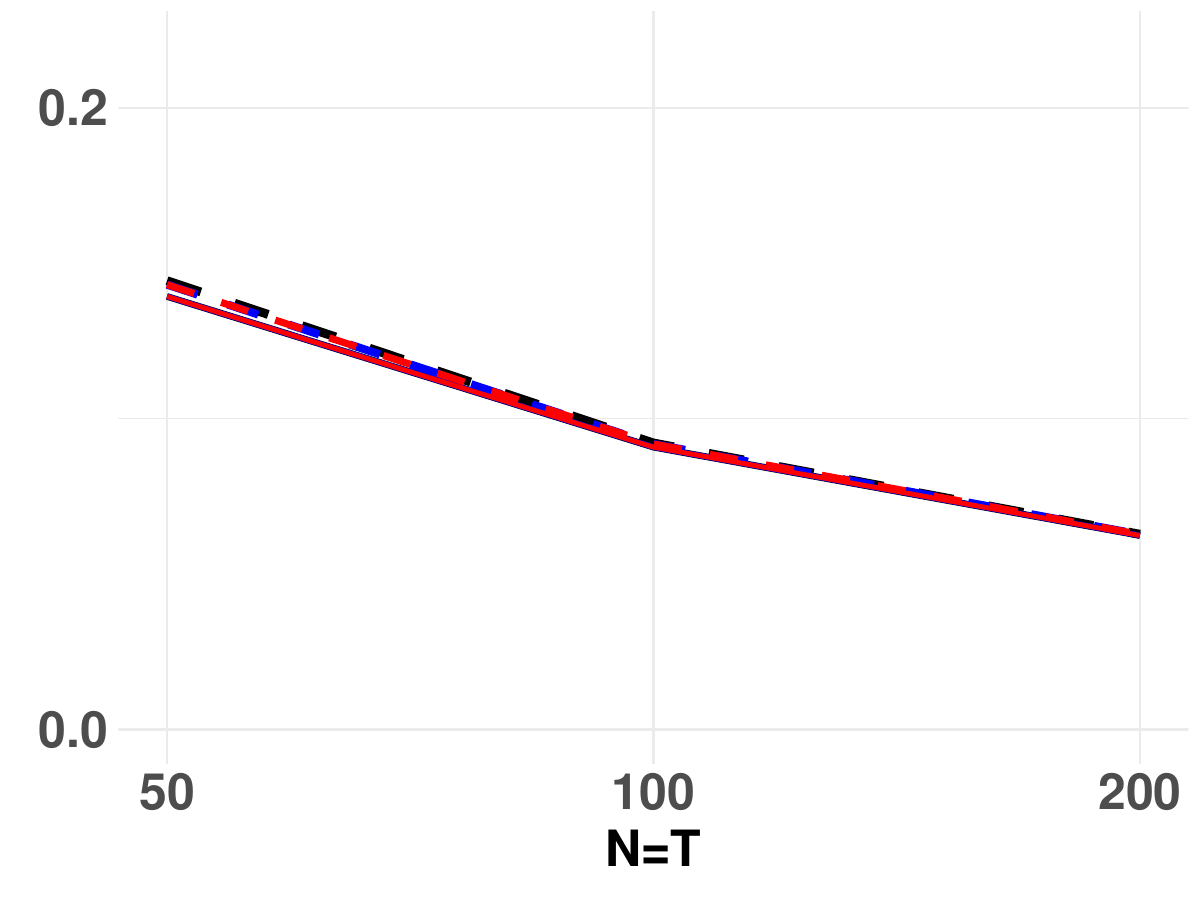}
\caption{$\rho_{wf}=0.6,\alpha_2 = 0.8$}
\label{fig:sd_w_06_08}
\end{subfigure}
\hfill
\begin{subfigure}[b]{0.32\textwidth}
\centering
\includegraphics[width=\textwidth]{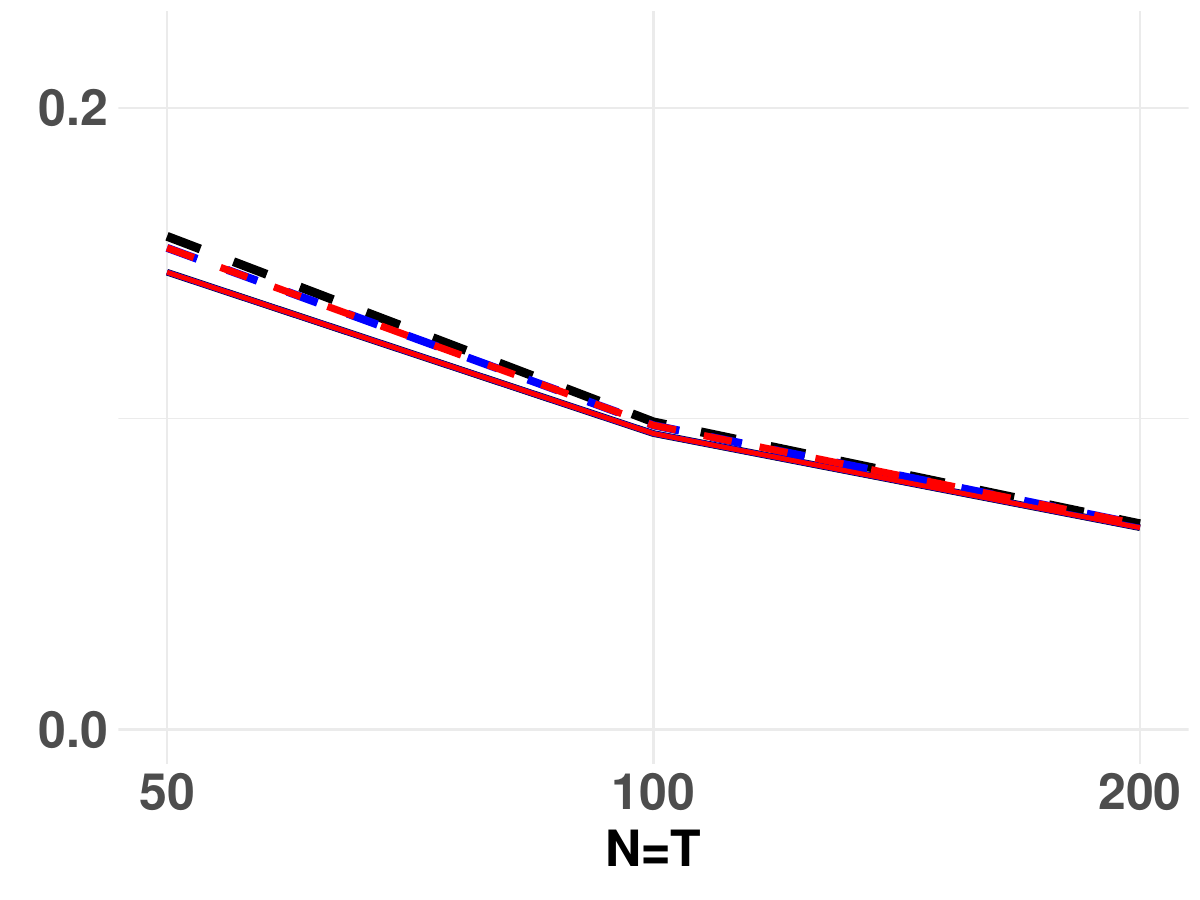}
\caption{$\rho_{wf}=0.6,\alpha_2 = 0.6$}
\label{fig:sd_w_06_06}
\end{subfigure}

\centering
\includegraphics[width=0.45\textwidth]{newimages/legend_SigE3_rhoe2_bias_rhowf0_a1_w.pdf}
\caption{Standard deviation of $\hat{\beta}$ and its bias corrected versions for cross and serially correlated $e_{t,i}$}
\label{fig:sd.w}

\end{figure}

Figure \ref{fig:bias.w} shows the average bias over the replications. The bias of all the estimators is zero when $\bff_t^*$ and $w_t$ are uncorrelated (i.e. $\rho_{wf}=0$) as expected, because the bias is caused by the estimation effect of $\hat{\bff}_t$ and it is not transmitted to the estimator of $\beta$. 
The picture changes dramatically when $\bff_t^*$ and $w_t$ are correlated (i.e. $\rho_{wf}=0.6$). The LS estimator $\hat{\beta}$ is biased, and the magnitude of the bias tends to increase for weaker models and for smaller sample sizes. The bias-corrections $\hat{\beta}_{bc\hat{\bH}_q}$ and $\hat{\beta}_{bc\hat{\bH}}$ successfully reduce the bias, always by almost the same amount. The clear winner in terms of bias reduction is the jackknife estimator, $\hat{\beta}_{bcjk}$. The standard deviations of all the estimators shown in Figure \ref{fig:sd.w} are almost identical when $\bff_t^*$ and $w_t$ are uncorrelated (i.e. $\rho_{wf}=0$), while the standard deviations of all the bias-corrected estimators are similar but very slightly larger than the non-corrected estimator when $\rho_{wf}=0.6$. 

\begin{figure}[!htb]	
\centering
\begin{subfigure}[b]{0.32\textwidth}
\centering
\includegraphics[width=\textwidth]{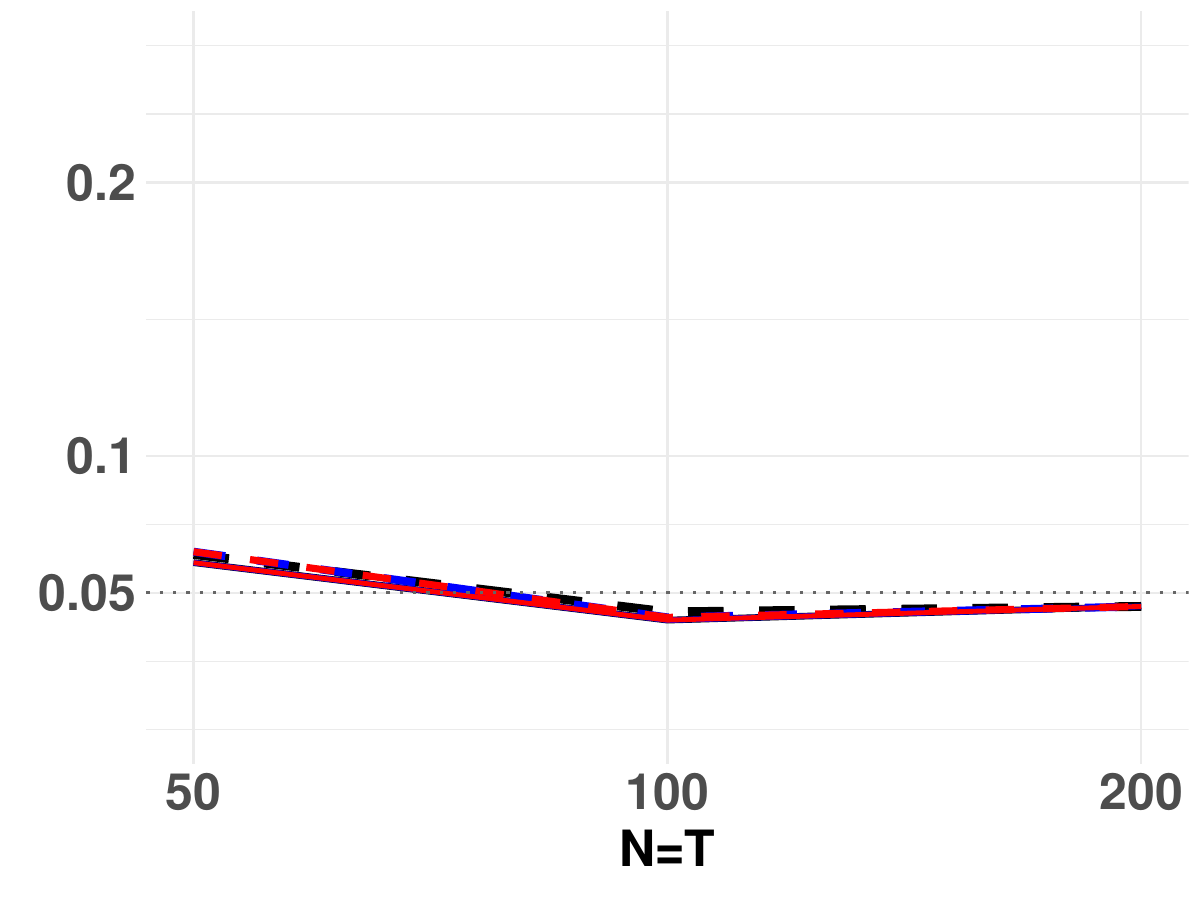}
\caption{$\rho_{wf}=0.0,\alpha_2 = 1.0$}
\label{fig:test_w_00_10}
\end{subfigure}
\hfill
\begin{subfigure}[b]{0.32\textwidth}
\centering
\includegraphics[width=\textwidth]{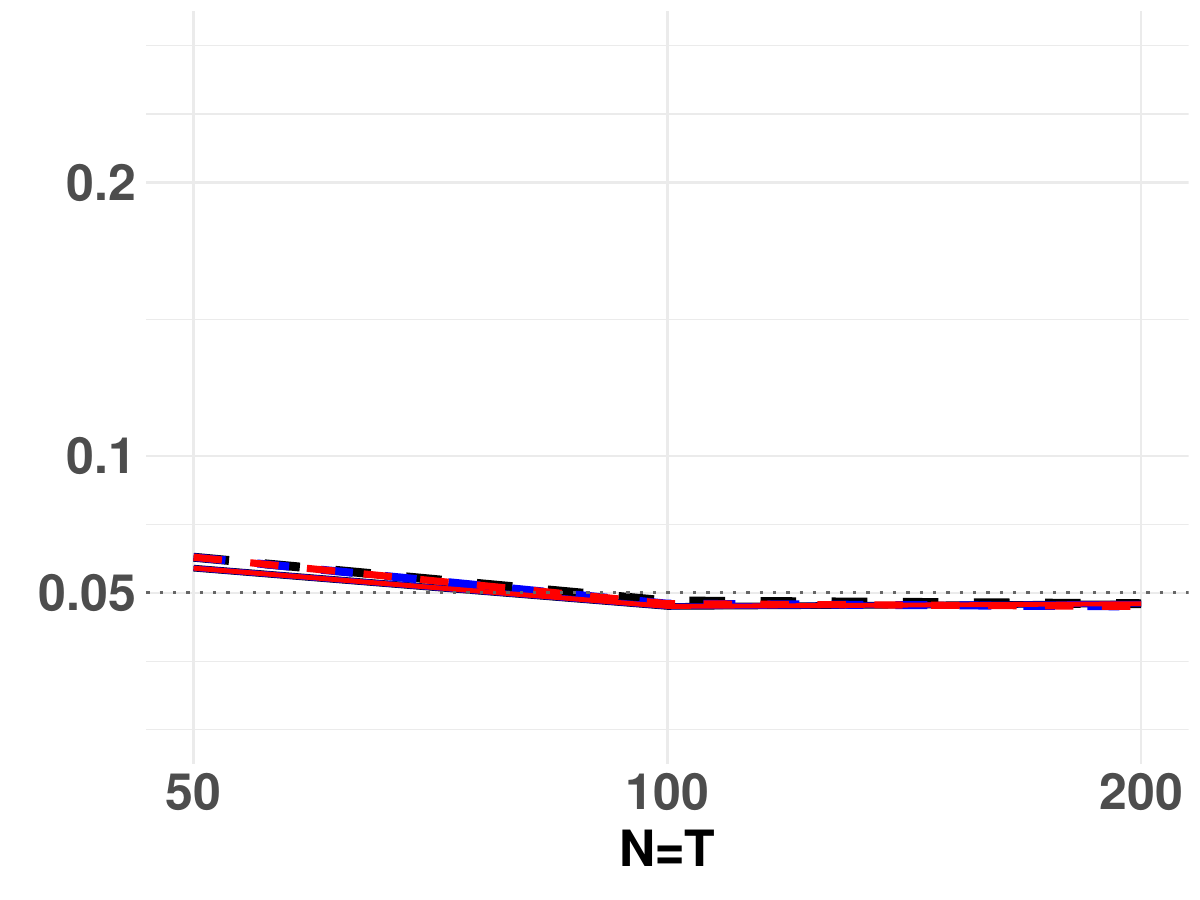}
\caption{$\rho_{wf}=0.0,\alpha_2 = 0.8$}
\label{fig:test_w_00_08}
\end{subfigure}
\hfill
\begin{subfigure}[b]{0.32\textwidth}
\centering
\includegraphics[width=\textwidth]{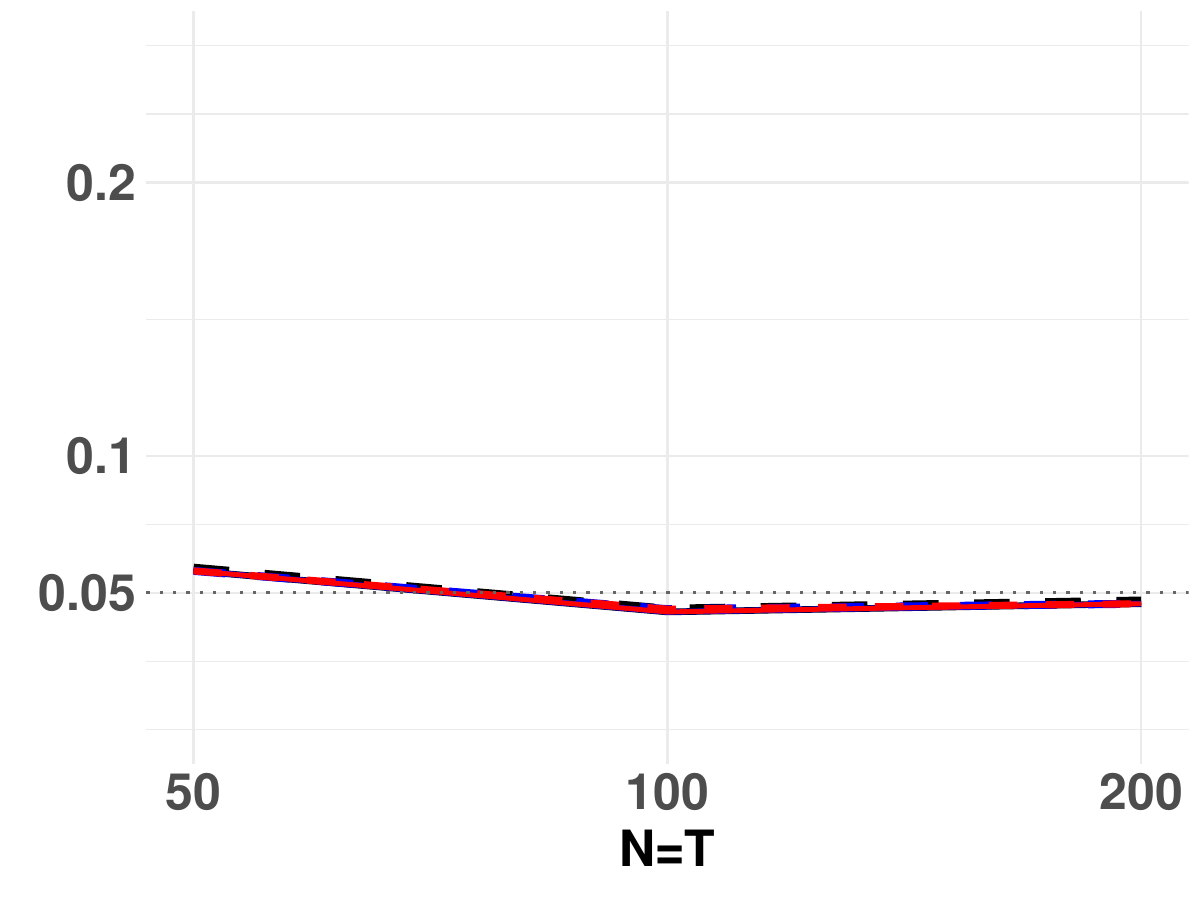}
\caption{$\rho_{wf}=0.0,\alpha_2 = 0.6$}
\label{fig:test_w_00_06}
\end{subfigure}

\centering
\begin{subfigure}[b]{0.32\textwidth}
\centering
\includegraphics[width=\textwidth]{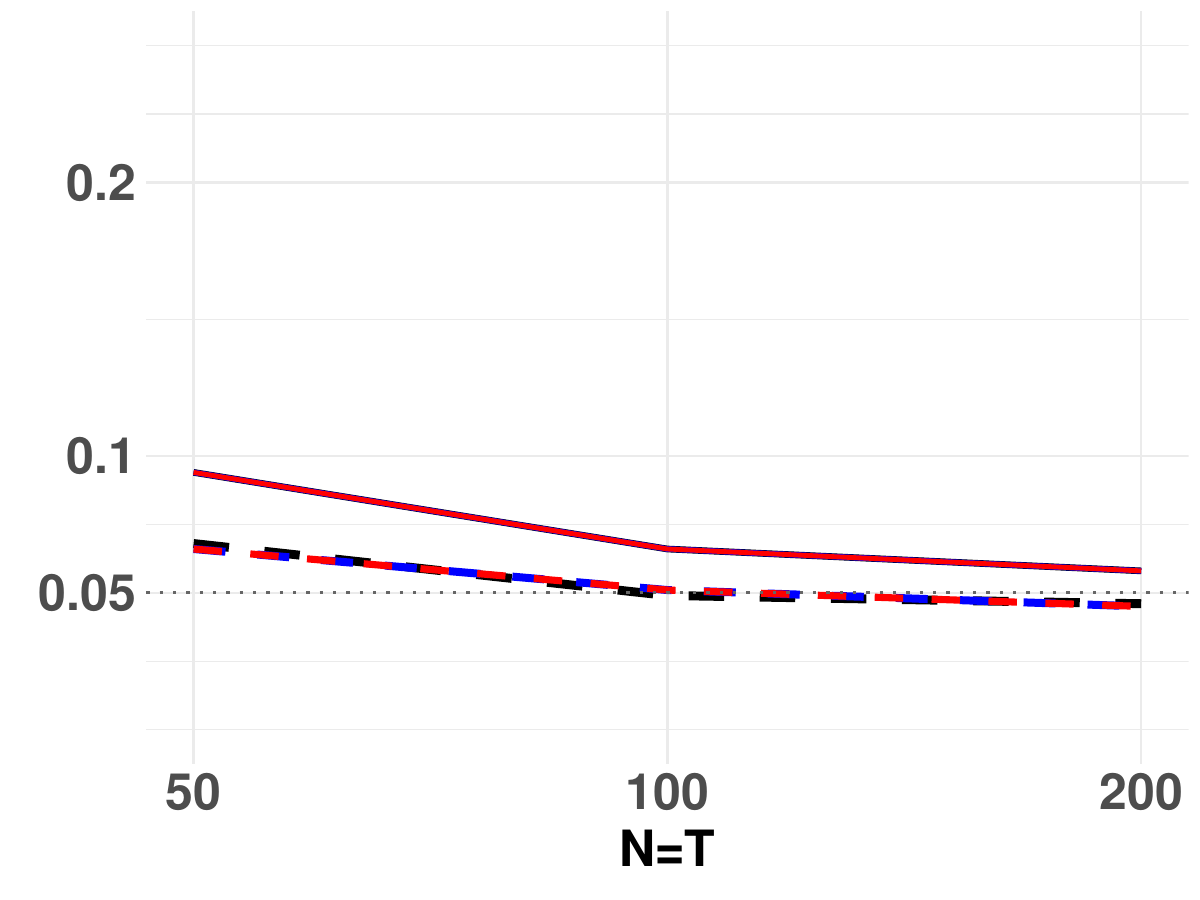}
\caption{$\rho_{wf}=0.6,\alpha_2 = 1.0$}
\label{fig:test_w_06_10}
\end{subfigure}
\hfill
\begin{subfigure}[b]{0.32\textwidth}
\centering
\includegraphics[width=\textwidth]{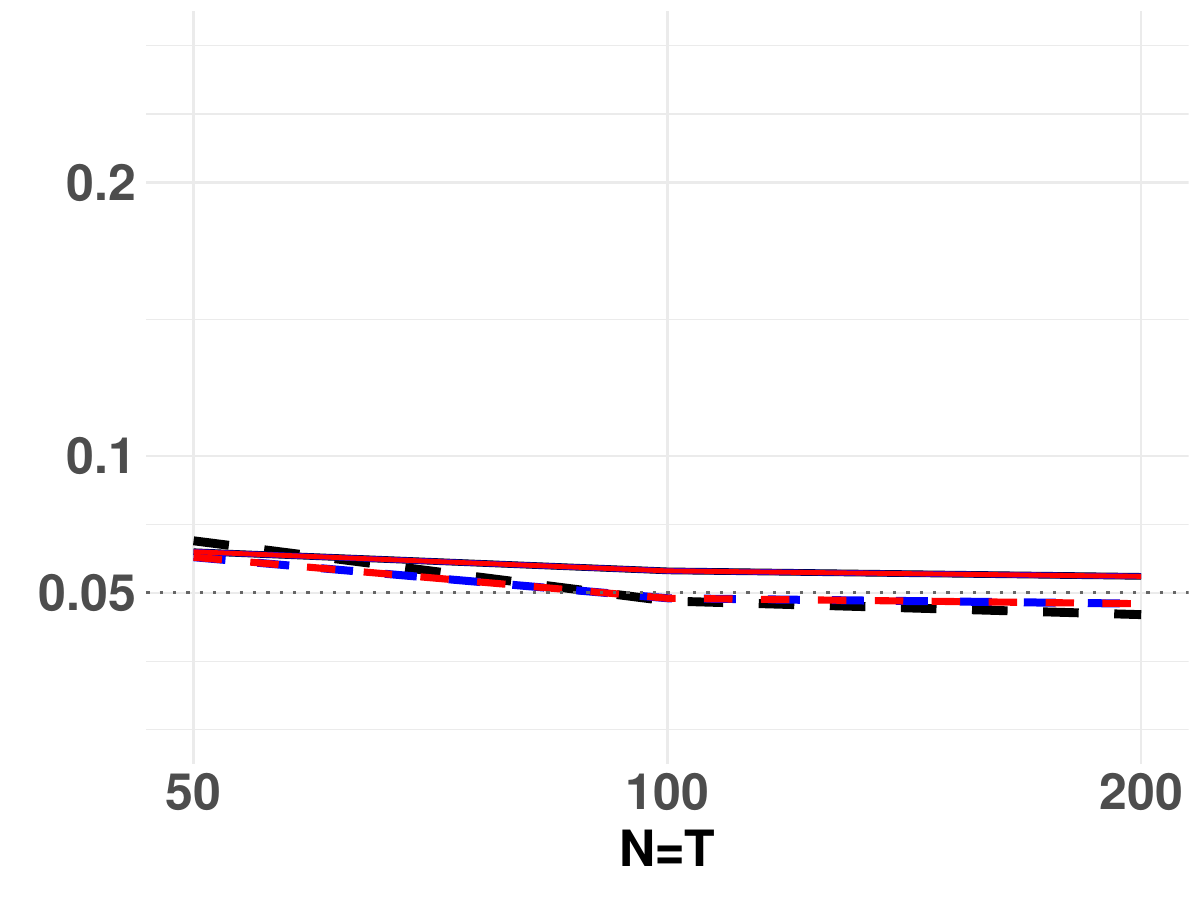}
\caption{$\rho_{wf}=0.6,\alpha_2 = 0.8$}
\label{fig:test_w_06_08}
\end{subfigure}
\hfill
\begin{subfigure}[b]{0.32\textwidth}
\centering
\includegraphics[width=\textwidth]{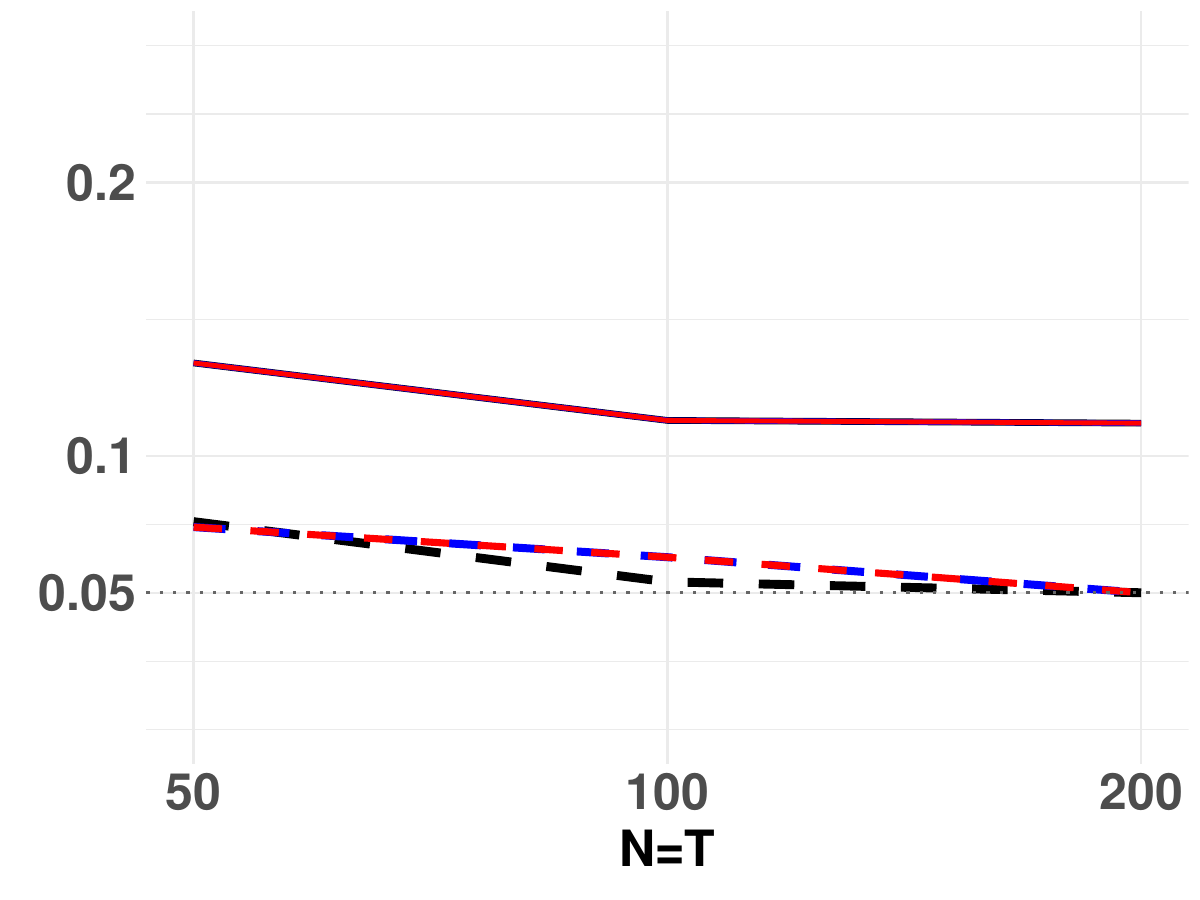}
\caption{$\rho_{wf}=0.6,\alpha_2 = 0.6$}
\label{fig:test_w_06_06}
\end{subfigure}

\centering
\includegraphics[width=0.50\textwidth]{newimages/legend_SigE3_rhoe2_bias_rhowf0_a1_w.pdf}
\caption{Size of the t-tests using $\hat{\beta}$ and its bias corrected versions for cross and serially correlated $e_{t,i}$}
\label{fig:test.w}

\end{figure}

The size of the tests is summarized in Figure \ref{fig:test.w}. The size of the tests based on all the estimators is correct when $\rho_{wf}=0.0$, while the size of the tests based on the uncorrected estimators tends to deviate from the nominal level, which is successfully corrected by all the bias-correction methods. 

Finally, the experimental results for the models augmented with factors extracted from the prediction variables orthogonalized to the observed factor, $\bM_w \bX$ are presented in Figures \ref{fig:bias.f2w}-\ref{fig:test.ww} in the online appendix. These results confirm that the $\hat{\bdelta}_w - \bdelta_{\hat{\bH}_{q,w}}$ has little bias for all the designs, including the case with $\rho_{wf}=0.6$, for the small sample size and for the weakest factor model, as predicted by our theory. However, if researchers are interested in estimating the parameter $\bdelta_w^0$ rather than the random vector $\bdelta_{\hat{\bH}_{q,w}}$, the jackknife estimator $\hat\bdelta_{bcjk,w}$ may be preferred.

\subsection{Size-adjusted power curve of the test of significance}
To see how the asymptotic bias can affect the power curve of the t-test of significance, $H_0 : \delta_{\bR,k}=0$ against $H_1 : \delta_{\bR,k}\neq 0$ for $\bR = \bH, \hat{\bH}_q, \hat{\bH}$, $k=1,\dots,r+p$, we conduct the following experiments. The data generating process (DGP) is identical to that described in Section \ref{sec:design}, except that we change the value of an element in $\bdelta^0$ under the test between $-0.4$ and $0.4$ by $0.025$, keeping other elements at unity. We have chosen the case for $(\alpha_1, \alpha_2)=(0.8,0.6)$, $N=T=100$ and $\rho_{fw}=0.6$.
The t-ratios for $\gamma_{\bR,2}$ and $\beta$ based on $\hat{\bdelta}$, $\hat{\bdelta}_{bcjk}$, $\hat{\bdelta}_{bc\hat{\bH}_q}$ and $\hat{\bdelta}_{bc\hat{\bH}}$ are calculated.
Table \ref{table:sizequantile} reports the size of the significance tests for the second factor and $w_t$, using the critical value from the standard normal distribution. 
There are moderate size distortions in the significance tests for the second factor based on $\hat{\gamma}_{bcjk,2}$, and higher size distortion based on $\hat{\gamma}_{bc\hat{\bH},2}$.
In view of the size distortion, the two-sided tests are then implemented to compare the size-adjusted power curves using the 95\% quantiles of the absolute values of the t-ratios under the null over the replications as the critical values, which are also reported in the table.


\begin{table}[!htb]
\centering
\caption{The size of the significance tests for the second factor and $w_t$ and the 95\% quantile of the absolute values of the associated t-ratios based on different estimators}
\label{table:sizequantile}
\begin{tabular}{lcccclccc}
\hline
&     & Size of the Test & 95\% quantile  &     &     &     & Size of the Test & 95\% quantile  \bigstrut\\
\cline{1-4}\cline{6-9}
$\hat{\gamma}_2$ &     & 6.1\% & 2.06 &     & $\hat{\beta}$ &     & 7.3\% & 2.14 \bigstrut[t]\\
$\hat{\gamma}_{bcjk,2}$ &     & 7.9\% & 2.15 &     & $\hat{\beta}_{bcjk}$ &     & 5.1\% & 1.97 \\
$\hat{\gamma}_{bc\hat{\bH}_q,2}$ &     & 6.4\% & 2.04 &     & $\hat{\beta}_{bc\hat{\bH}_q}$ &     & 5.2\% & 1.97 \\
$\hat{\gamma}_{bc\hat{\bH},2}$ &     & 10.0\% & 2.28 &     & $\hat{\beta}_{bc\hat{\bH}}$&     & 5.2\% & 1.97 \bigstrut[b]\\
\hline
\end{tabular}%
\end{table}

The estimated size-adjusted power curve for $\gamma_{\bR,2}$ and $\beta$ is shown in Figures \ref{fig:pwcrv}.
From Figure \ref{fig:pwcrv.g} we can see that the power curves of the significance tests for the second factor based on the three bias-corrected estimators are virtually identical, while the power curve of the test based on the least squares estimator is asymmetric most likely due to the bias in the parameter estimates. 
The similarity of the power curves for the bias-corrected estimators breaks down for testing the significance of the observed factor $w_t$, which is shown in Figure \ref{fig:pwcrv.b}. 
The power curve for the significance test based on the least squares estimator is biased, with the curve shifted substantially to the left. 
The bias correction towards the `parameters' with the data dependent rotations $\hat{\bH}_q$ and $\hat{\bH}$ mitigates the bias of the power curve, but the location shift still remains. In contrast, the jackknife bias correction successfully corrects the bias and restores the symmetry of the power curve at $\beta=0$.

To summarize the power curve analysis, it is generally recommended to use the bias-corrected estimators for the significance test, and the split-panel jackknife estimator seems to be the most reliable among the bias-corrected estimators considered.
The recommendation does not apply in the special case of joint significance tests for all latent factors, since there is no bias in the least squares estimator under the null hypothesis; to see this, substitute $\bgamma^* = \bzero$ into the results in Theorems \ref{thm:bias_Hhat}-\ref{thm:bias_Hw3} and all asymptotic biases disappear.

\begin{figure}[h!]	
\centering
\begin{subfigure}[b]{0.49\textwidth}
\centering
\includegraphics[width=\textwidth]{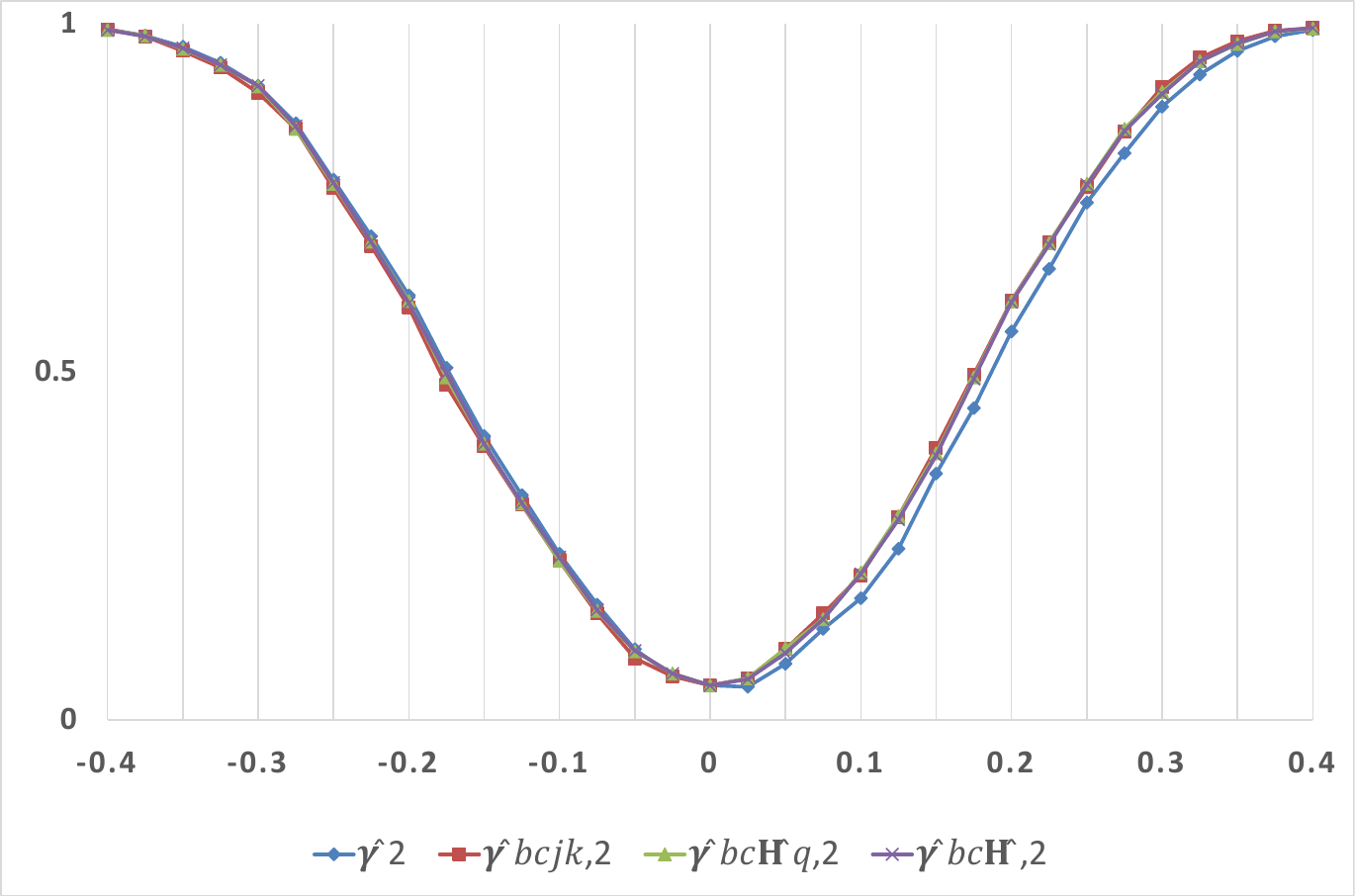}
\caption{Second factor}
\label{fig:pwcrv.g}
\end{subfigure}
\hfill
\begin{subfigure}[b]{0.49\textwidth}
\centering
\includegraphics[width=\textwidth]{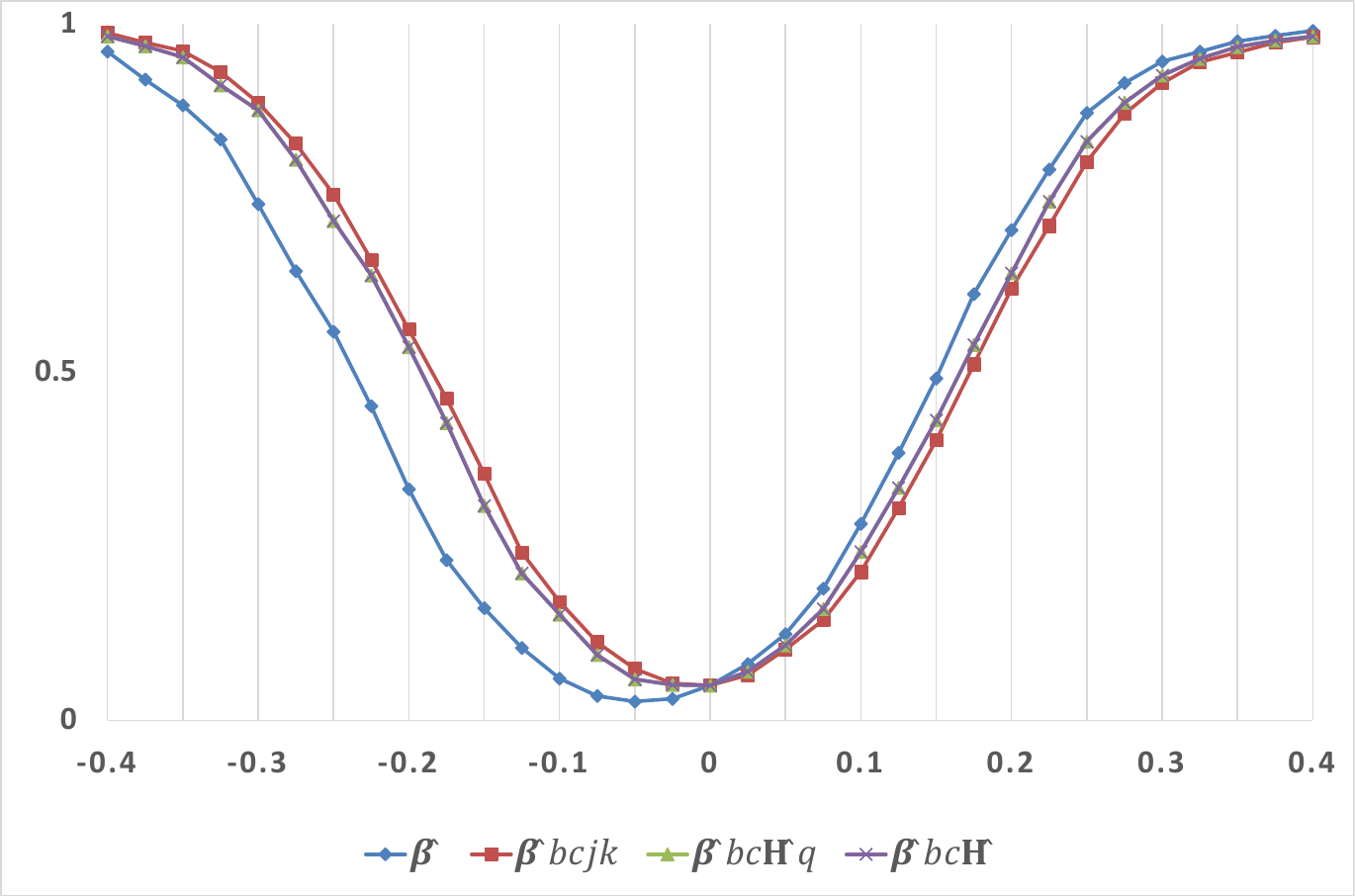}
\caption{Observed factor $w_t$}
\label{fig:pwcrv.b}
\end{subfigure}

\caption{Size-adjusted power curve for testing the significance of the second factor and $w_t$ based on $\hat{\bdelta}$, $\hat{\bdelta}_{bcjk}$, $\hat{\bdelta}_{bc\hat{\bH}_q}$ and $\hat{\bdelta}_{bc\hat{\bH}}$}
\label{fig:pwcrv}
\end{figure}


\section{Empirical application}\label{sec:emp}

It is important to test which factors are significant in factor-augmented regressions. This is because the extracted PC factors are ordered by their importance in the covariation of the predictors, which may not correspond to their predictive power for the particular series of interest; see further discussion in \citet{BaiNg2008,BaiNg2009} and \cite{ChengHansen2015}.

We consider the factor-augmented forecast regression of bond yields $y_{t+h}$ on
the extracted factors from a large number of predictor variables $x_{t,i}$ and the
observed predictor, $w_{t}$. We use the dataset used in \cite{LudvigsonNg2009}, which is provided by Sydney Ludvigson's website. The data consists of
the continuously compounded (log) annual excess returns on an $2$-year
discount bond at month $t$, $y_{t+12}$, and a balanced panel of
$i=1,\dots,131$ monthly macroeconomic series at month $t$, $x_{t,i}$, spanning
the period $t=$January 1982,..., December 2002, which are standardized. For
$x_{t,i}$, the Edge Distribution (ED) estimator of \cite{Onatski2010} with $r_{\max}=9$ gives $\hat{r}=3$. Before running the regression, three PC factors, $\mathbf{\hat{f}}_{t}=(\hat{f}_{t1},\hat{f}_{t2},\hat{f}_{t3})^{\prime}$, are extracted from $x_{t,i}$ and the \cite{CochranePiazzesi2005} (observed) factor, $CP_{t}$, which is a linear combination of five forward Treasury yield spreads, is obtained. The correlations between $CP_{t}$ and $\mathbf{\hat{f}}_{t}^{\prime}$
were \{-0.11, 0.14, -0.11\}, respectively, which are significant or
insignificant on the borderline at the 10\% level test. We run a regression of
$y_{t+12}$ on $\{\mathbf{\hat{f}}_{t}^{\prime},CP_{t},1\}$. We also run a
similar regression but on $\{\mathbf{\hat{f}}_{wt}^{\prime},CP_{t},1\}$, where
$\mathbf{\hat{f}}_{wt}$ is extracted from the data variable from which $CP$ is projected out.

\begin{table}[!htb]
\centering
\caption{Prediction regression results of the yield of the two-year maturity bond}
\label{table:emp1}
\begin{tabular}
[c]{lrccccc}\hline
$y_{t+12}$ &  & $\hat{f}_{t1}$ & $\hat{f}_{t2}$ & $\hat{f}_{t3}$ & $CP_{t}$ &
$R^{2}$\\\hline
$\boldsymbol{\hat{\delta}}$ &  & 0.47*** & 0.13 & -0.01 & 0.36*** & 0.14\\
\multicolumn{1}{r}{} &  & (3.24) & (0.92) & (-0.21) & (3.15) & \\
$\boldsymbol{\hat{\delta}}_{bc\hat{\bH}}$ &  & 0.50*** & 0.14 & -0.01 &
0.36*** & 0.14\\
\multicolumn{1}{r}{} &  & (3.43) & (1.00) & (-0.20) & (3.15) & \\
$\boldsymbol{\hat{\delta}}_{bc\hat{\bH}_{q}}$ &  & 0.47*** & 0.13 & -0.01 &
0.36*** & 0.14\\
\multicolumn{1}{r}{} &  & (3.24) & (0.92) & (-0.21) & (3.15) & \\
$\boldsymbol{\hat{\delta}}_{bcjk}$ &  & 0.49*** & 0.19 & 0.01 & 0.36*** &
0.14\\
\multicolumn{1}{r}{} &  & (3.33) & (1.29) & (0.15) & (3.18) & \\\hline
\multicolumn{1}{r}{} &  & $\hat{f}_{w,t1}$ & $\hat{f}_{w,t2}$ & $\hat
{f}_{w,t3}$ & $CP_{t}$ & $R^{2}$\\\hline
$\boldsymbol{\hat{\delta}}_{w}$ &  & 0.47*** & 0.13 & -0.02 & 0.35*** & 0.14\\
\multicolumn{1}{r}{} &  & (3.22) & (0.92) & (-0.28) & (3.09) & \\
$\boldsymbol{\hat{\delta}}_{bc\hat{\bH}_w}$ &  & 0.50*** & 0.14 & -0.02 &
0.35*** & 0.14\\
\multicolumn{1}{r}{} &  & (3.41) & (1.00) & (-0.28) & (3.09) & \\
$\boldsymbol{\hat{\delta}}_{bc\hat{\bH}_{q,w}}$ &  & 0.47*** & 0.13 & -0.02 &
0.35*** & 0.14\\
\multicolumn{1}{r}{} &  & (3.22) & (0.92) & (-0.28) & (3.09) & \\
$\boldsymbol{\hat{\delta}}_{bcjk,w}$ &  & 0.49*** & 0.18 & 0.00 & 0.34*** &
0.14\\
\multicolumn{1}{r}{} &  & (3.32) & (1.27) & (0.03) & (3.03) & \\\hline
\end{tabular}

\begin{minipage}{10cm}
\vspace{0.1cm}
\vspace{0.1cm}
\small  Notes: Values in parentheses are t-ratio using HAC s.e. *, **, *** indicate significant at 10, 5 and 1\% level respectively.
\end{minipage}

\end{table}

The estimation results are summarized in Table \ref{table:emp1}. 
The value in the parentheses is the t-ratio based on Newey-West heteroskedasticity and autocorrelation consistent (HAC) standard errors with the threshold being the integer part of $T^{1/4}$. 
As can be seen, the augmented regressions with $\hat{\bF}$ and $\hat{\bF}_w$ give very similar results. The first factor and CP are significant at the 1\% level, while the test fails to reject $H_0 : \gamma_2 = 0$, but the magnitude of the t-statistic is larger with the jackknife estimator due to the positive upward correction.

It may be of practical interest to find out whether the orthonormal latent factors have the same explanatory power of $y_{t+12}$. In this application, it seems
reasonable to consider the pair $(f_{t1},f_{t2})$ for such a comparison.
Accordingly, we have computed the value of  $\hat{\gamma}_{1} -\hat{\gamma}_{2}$ for different estimates and the corresponding t-test statistics for $H_{0}:\gamma_{1}=\gamma_{2}$ versus $H_{1}:\gamma_{1}\neq\gamma_{2}$. Very similar estimates are obtained for the transformed model, $\bM_w \bX$, which seems reasonable given that the correlations between $\hat{\bF}$ and $CP$ are not strong. 

The results are summarized in Table \ref{table:emp2}. 
As can be seen, the test of the equal explanatory power of the first and  the second factors is rejected with the estimators $\hat{\bdelta}$, $\hat{\bdelta}_{bc\hat{\bH}}$ and $\hat{\bdelta}_{bc\hat{\bH}_q}$, but not rejected with $\hat{\bdelta}_{bcjk}$. This is due to the jackknife bias-correction which makes the estimate of the difference $\gamma_1 - \gamma_2$ smaller.  
Very similar comments apply to the results for the transformed model, $\bM_w \bX$.

\begin{table}[!htb]
\centering
\caption{The estimated difference $\gamma_1 - \gamma_2$ and the t-test statistic for $H_0 : \gamma_1 = \gamma_2$}
\label{table:emp2}
\begin{tabular}[c]{cccccc}
\hline
& \multicolumn{1}{c}{} & $\boldsymbol{\hat{\delta}}$ & $\boldsymbol{\hat
{\delta}}_{bc\mathbf{\hat{\bH}}}$ & $\boldsymbol{\hat{\delta}}_{bc\mathbf{\hat
	{\bH}}_{q}}$ & $\boldsymbol{\hat{\delta}}_{bcjk}$\\\hline
	$\gamma_{1}-\gamma_{2}$ &  & 0.34* & 0.36* & 0.34* & 0.30\\
	&  & (1.75) & (1.84) & (1.75) & (1.55)\\\hline
	&  & $\boldsymbol{\hat{\delta}}_{w}$ & $\boldsymbol{\hat{\delta}%
	}_{bc\mathbf{\hat{\bH}}_w}$ & $\boldsymbol{\hat{\delta}}_{bc\mathbf{\hat{\bH}}%
_{q,w}}$ & $\boldsymbol{\hat{\delta}}_{bcjk,w}$\\\hline
$\gamma_{w1}-\gamma_{w2}$ &  & 0.35* & 0.37* & 0.35* & 0.31\\
&  & (1.78) & (1.88) & (1.78) & (1.61)\\\hline
\end{tabular}

\begin{minipage}{10cm}
\vspace{0.1cm}
\vspace{0.1cm}
\small  Notes: Values in parentheses are t-ratio using HAC s.e. *, **, *** indicate significant at 10, 5 and 1\% level respectively.
\end{minipage}
\end{table}

\section{Conclusion}\label{sec:con}
In this paper we have studied the asymptotic bias of the least squares (LS) estimator for the factor-augmented model, $y_{t+h}={\bgamma^*}'\bff_t^*+\bbeta' \bw_t+\epsilon_{t+h}, t=1,\dots,T$,
replacing the latent factor $\bff_t^*$ with the principal component (PC) estimator $\hat{\bff}_t$ extracted from a large set of predictors, $\{x_{t,i}\}_{i=1}^N$. Unlike the existing literature, we allow for the predictors $x_{t,i}$ follow more general weak factor (WF) models, in which the $r$ largest eigenvalues of the sample covariance matrix of $x_{t,i}$ may diverge at different rates, $N^{\alpha _{k}}$, $0<\alpha _{k}\leq 1$, $k=1,\dots,r$. The literature typically assumes the strong factor (SF) model, in which $\alpha_1=\dots=\alpha_r=1$.

As discussed in \cite{BaiNg2023} and \cite{jiang2023revisiting}, there are choices of rotation matrices $\bR$ for approximations $\hat{\bff}_t = {\bR}'\bff_t^* + o_p(1)$. Accordingly, the first term of the augmented model is approximated by ${\bgamma^*}'\bff_t^*=\bgamma_{{\bR}}'\hat{\bff}_t + o_p(1)$, where $\bgamma_{{\bR}}={\bR}^{-1}\bgamma^*$, hence resulting in different ``parameters'' $\bdelta_{\bR}=(\bgamma_{{\bR}}',\bbeta')'$ estimated by the least squares estimator $\hat{\bdelta}=(\hat{\bgamma}',\hat{\bbeta}')'$, obtained by regressing $y_{t+h}$ on $(\hat{\bff}_t',\bw_t')$.
Replacing the regressor $\bff_t^*$ with $\hat{\bff}_t$ generally results in non-zero correlation between the regressor $(\hat{\bff}_t',\bw_t')$ and the replacement error, which can lead to an asymptotic bias of $\hat{\bdelta}$.  
We have studied the asymptotic bias of $\sqrt{T}(\hat{\bdelta}-\bdelta_{\bR})$ 
for three different choices of $\bR$ for the approximation: $\hat{\bH}$ which is data ($x_{t,i}$) dependent and commonly used for the approximation in the literature including \cite{BaiNg2006}, \cite{GoncalvesPerron2014,gonccalves2020bootstrapping}; another data dependent matrix $\hat{\bH}_q$, whose estimation error appears to be orthogonal to $\hat{\bff}_t$; and $\bH$, which is the population matrix $\bH$ of $\hat{\bH}$ and $\hat{\bH}_q$.

Our study has shown that if $\sqrt{T}/N^{(3\alpha_r - \alpha_1)/2} \to c_1\in[0,\infty)$ as $N,T\to\infty$, $\sqrt{T}(\hat{\bdelta}-\bdelta_{\hat{\bH}})$ has an asymptotic bias for WF models, which generalizes the results in \cite{GoncalvesPerron2014,gonccalves2020bootstrapping} for SF models. It is shown that the asymptotic bias expression for WF models is more complicated than for SF models, because the former depends on how many exponents $(\alpha_1,\dots,\alpha_r)$ are the same as $\alpha_1$ and $\alpha_r$.
We have also shown that if $\sqrt{T}/N^{\alpha_r} \to c_2\in[0,\infty)$ as $N,T\to\infty$, $\sqrt{T}(\hat{\bdelta}-\bdelta_{\hat{\bH}_q})$ has an asymptotic bias that is generally smaller in magnitude than that of $\sqrt{T}(\hat{\bdelta}-\bdelta_{\hat{\bH}})$ and also the convergence rate is generally not slower than that of $\sqrt{T}(\hat{\bdelta}-\bdelta_{\hat{\bH}})$, and even the former is faster when $\alpha_1\neq\alpha_r$.
The structure of the asymptotic bias depends on how many exponents $(\alpha_1,\dots,\alpha_r)$ are the same as $\alpha_r$.
Importantly, the asymptotic biases of $\sqrt{T}(\hat{\bdelta}-\bdelta_{\hat{\bH}})$ and $\sqrt{T}(\hat{\bdelta}-\bdelta_{\hat{\bH}_q})$ are parametrically estimable; thus, analytical bias corrections are feasible in practice.
Moreover, it turns out that the asymptotic bias of $\sqrt{T}(\hat{\bdelta}-\bdelta_{\hat{\bH}_q})$ disappears completely when $\bw_t$ and $\bff_t^*$ are uncorrelated. To exploit this property, we propose to extract the factor from the predictors ($x_{t,i}$) after projecting out the observable factor, $\bw_t$.

We have also studied the asymptotic bias with the population rotation matrix, $\sqrt{T}(\hat{\bdelta}-\bdelta^0)$ with $\bdelta^0:=\bdelta_{{\bH}}$, where $\bdelta^0=(\bgamma^{0\prime},\bbeta')'$, $\bgamma^0:= {\bH}^{-1}\bgamma^{*}$. It is shown that if $\sqrt{T}/N^{(3\alpha_r - \alpha_1)/2} \to c_1\in[0,\infty)$ as $N,T\to\infty$, $\sqrt{T}(\hat{\bdelta}-\bdelta^0)$ has an asymptotic bias which, unlike the biases with $\hat{\bH}$ and $\hat{\bH}_q$, cannot be estimated parametrically.
In view of this, we have proposed to use a subsampling method, called a split-panel jackknife bias correction, which is generally less computationally expensive than bootstrapping, while allowing for more general cross and serial correlations in $e_{t,i}$.

The finite sample evidence has shown that $\sqrt{T}(\hat{\bdelta}-\bdelta_{\hat{\bH}_q})$ has the least bias, the least size distortion of t-tests, with the smallest standard errors, while the performance of $\sqrt{T}(\hat{\bdelta}-\bdelta_{\hat{\bH}})$ is far worse than others throughout the design. Our preferred jackknife bias-corrected estimator with respect to the parameter $\bdelta^0$ comes in second, closely following the performance of the bias-corrected estimator with respect to $\bdelta_{\hat{\bH}_q}$.
In addition, the empirical power curve of the significance test shows that, on balance, the split-panel jackknife estimator appears to be the most reliable among the estimators compared.

We apply the bias-corrected estimator to the factor-augmented forecast regression of bond yields $y_{t+h}$ on the factors extracted from 131 monthly macroeconomic series and one observed predictor, the \cite{CochranePiazzesi2005} factor, over the period January 1982 to December 2002. The results show that the jackknife appears to effectively correct the bias of the LS estimator, thus providing more reliable inference.

A couple of implications follow from the results of this paper. First, the different approximation for $\hat{\bff}_t$ with different rotation matrices that are asymptotically equivalent implies that the LS estimator $\hat{\bdelta}$ estimates different ``parameters'' and they may have different asymptotic biases. Therefore, the researchers should clarify which ``parameter'' they are estimating with $\hat{\bdelta}$. This is very important because in finite samples the ``parameters'' for different rotation matrices can take very different values, as shown in Figure \ref{fig:mean_gamma_hats}. Our recommendation is to primarily consider $\bdelta^0$ as the parameter estimated by $\hat{\bdelta}$. 
Second, the bias in $\hat{\bdelta}$ can be significant and should not be ignored in practice. The empirical results in Table \ref{table:emp1} illustrate the effectiveness of the jackknife bias correction of $\hat{\bdelta}$ relative to $\bdelta^0$. As shown in Table \ref{table:emp2}, the test results for parameter restrictions with jackknife bias correction are significantly different from those without bias correction.
Note that, as our theory tells that the bias correction is not necessary 
in the special case of joint significance testing of all factors, since there is no bias in the LS estimator under the null hypothesis of $\bgamma^*=\bzero$ (all asymptotic biases in Theorems \ref{thm:bias_Hhat}-\ref{thm:bias_Hw3} disappear).

Finally, while we have focused on the analysis of the asymptotic bias of $\hat{\bdelta}$ in this paper, it is also of great interest to extend our analysis to bootstrapping $\sqrt{T}(\hat{\bdelta}-\bdelta_{\bR})$ for different choices of $\bR$. In particular, the advantage of the bootstrapping is that it can provide a higher-order approximation not only for the asymptotic bias but also for the distribution of $\sqrt{T}(\hat{\bdelta}-\bdelta_{\bR})$. This line of research is pursued in a companion paper, \cite{jiang2024bootstrap}, to which interested readers may wish to refer.

\section*{Acknowledgment}
We are grateful to Jia Chen, Naoko Hara, Yohei Yamamoto and Yang Zu for helpful discussions and useful comments.

\section*{Funding}
This work was supported by JSPS KAKENHI (grant numbers 21H00700, 21H04397, 23K25501 and 24K16343).

\bibliographystyle{chicago}
\bibliography{references_wfr}

\newpage
\appendix
\setcounter{page}{1}
\setcounter{section}{0}

\begin{center}
{\Large Supplementary Material for \\[7mm]
{\LARGE Bias Correction in Factor-Augmented Regression Models with Weak Factors}} \\[10mm]
\textsc{\large Peiyun Jiang$^\dagger$,} 
\textsc{\large Yoshimasa Uematsu$^*$,} \textsc{\large Takashi Yamagata$^\ddagger$} \\[5mm]
$^\dagger$\textit{\large Faculty of Economics and Business Administration, Tokyo Metropolitan University} \\[1mm]
$^*$\textit{\large Department of Social Data Science, Hitotsubashi University} \\[1mm]
$^\ddagger$\textit{\large Department of Economics and Related Studies, University of York} \\[1mm]
$^\ddagger$\textit{\large Institute of Social Economic Research, Osaka University}
\end{center}

\section{Proofs of results in Section \ref{sec:bias}}
\setcounter{lem}{0}
\renewcommand{\thelem}{A.\arabic{lem}}
\renewcommand{\theequation}{A.\arabic{equation}}
\setcounter{equation}{0}

To simplify the proof, we introduce new rotation matrices defined as follows: 
\[
\tilde{\bH}= {\bB^0}'\bB^0\frac{{\bF^0}'\hat{\bF}}{T}{\hat{\bLambda}}^{-1}, \;	\tilde{\bH}_q= \left(\frac{1}{T}{\hat{\bF}}'\bF^0\right)^{-1},  \; \tilde{\bH}_b= \bB^{0\prime} {\hat{\bB}}\left( {\hat{\bB}}'\hat{\bB}\right)^{-1},\; \tilde{\bQ}=  \frac{1}{T}{\hat{\bF}}'\bF^0.
\]
As shown in \citep[Lemma B.4]{jiang2023revisiting}, since $\bF^0$ and $\bB^0$ satisfy the PC1 conditions in \cite{BaiNg2013}, these rotation matrices denoted with “tilde” are asymptotically equivalent to $\bI_r$.
Moreover, the equality $\hat{\bH}=\bH\tilde{\bH}$ implies the equivalences $ \bF^* \hat{\bH} = \bF^0 \tilde{\bH}$ and $ \bF^* \hat{\bH}_q = \bF^0 \tilde{\bH}_q.$ Given the good properties of the rotation matrices with “tilde”, we derive the following lemmas concerning these matrices rather than those with “hat” to prove the theorems.

\begin{lem}
\label{lem:Rotation} Define $\Delta_{NT}
=
\frac{N^{1-\alpha_r}}{T}+ N^{\frac{1}{2}\alpha_1- \alpha_r} \frac{N^{1- \alpha_r}}{T} + N^{\frac{1}{2}\alpha_{1}-\frac{3}{2}\alpha_{r}} +\frac{N^{\frac{1}{2}\alpha_{1}-\alpha_{r}}}{\sqrt{T}}.$
Suppose that Assumptions \ref{ass:eigen}--\ref{ass:Aug_errors} hold. If $\frac{N^{1-\alpha_r}}{T}\to 0$, then, we have
\begin{flalign*}
&(i) ~		\left\|  \frac{1}{\sqrt{T}} (\hat{\bF}- \bF^0\tilde{\bH}_b) \right\|_{\F} 
=  O_p \left( \frac{N^{1- \alpha_r}}{T}  \right)+ O_p \left(  N^{-\frac{1}{2}\alpha_{r}} \right), & \\
&(ii) ~		\left\|  \frac{1}{T} \bE'(\hat{\bF}- \bF^0\tilde{\bH}_b) \right\|_{\F} 
=  O_p \left( \frac{N^{1-\frac{1}{2}\alpha_r}}{T}  \right)+ O_p \left(  N^{-\frac{1}{2}\alpha_{r}} \right), & \\
& (iii) ~		\left\|  \frac{1}{T} {\bB^0}'\bE'(\hat{\bF}- \bF^0\tilde{\bH}_b) \right\|_{\F} 
= O_p \left( N^{\frac{1}{2}\alpha_{1}-\frac{1}{2}\alpha_{r}}\right)+ O_p \left(  N^{\frac{1}{2}\alpha_1-\frac{1}{2}\alpha_r} \frac{N^{1-\frac{1}{2}\alpha_r}}{T} \right), & \\
&(iv) ~	 	\left\|  \frac{1}{T} \bN^{-\frac{1}{2}}{\bB^0}'\bE'(\hat{\bF}- \bF^0\tilde{\bH}_b) \right\|_{\F} 
= O_p \left( N^{ -\frac{1}{2}\alpha_{r}}\right)+ O_p \left(    \frac{N^{1- \alpha_r}}{T} \right), & \\
&(v) ~		\left\| \frac{1}{T}{\bF^0}'(\hat{\bF}-\bF^0\tilde{\bH}) \right\|_{\F} 
= O_p \left(\Delta_{NT} \right), & \\
&(vi)~ \left\|	 \frac{{\hat{\bF}}'\bF^0}{T}-\bI_r \right\|_{\F} = O_p(\Delta_{NT}) .
\end{flalign*}	
\end{lem}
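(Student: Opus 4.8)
\emph{Proof proposal.} The plan is to make one algebraic identity the backbone of all six parts. Starting from the eigen-equation $T^{-1}\bX\bX'\hat{\bF}=\hat{\bF}\hat{\bLambda}$, substituting $\bX=\bF^0\bB^{0\prime}+\bE$, expanding $\bX\bX'$ into four blocks, and observing that the two blocks carrying $\bB^{0\prime}$ combine into $\bF^0\bB^{0\prime}\bX'\hat{\bF}\hat{\bLambda}^{-1}=\bF^0\bB^{0\prime}\hat{\bB}(\hat{\bB}'\hat{\bB})^{-1}=\bF^0\tilde{\bH}_b$, I obtain the master decomposition
\[
\hat{\bF}-\bF^0\tilde{\bH}_b \;=\; \frac{1}{T}\,\bE\bB^0\bF^{0\prime}\hat{\bF}\hat{\bLambda}^{-1}\;+\;\frac{1}{T}\,\bE\bE'\hat{\bF}\hat{\bLambda}^{-1}\;=:\;\bR_1+\bR_2 .
\]
Then (i)--(iv) are just Frobenius-norm bounds on $\bA\,(\hat{\bF}-\bF^0\tilde{\bH}_b)$ for $\bA\in\{T^{-1/2}\bI_T,\ T^{-1}\bE',\ T^{-1}\bB^{0\prime}\bE',\ T^{-1}\bN^{-1/2}\bB^{0\prime}\bE'\}$, each obtained by plugging in the master decomposition and controlling $\bA\bR_1$ and $\bA\bR_2$ separately.

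Before that I would record the eigenvalue rates: by Weyl's inequality $|\hat{\lambda}_k-\lambda_k|$ is bounded by the operator norm of $T^{-1}(\bF^0\bB^{0\prime}\bE'+\bE\bB^0\bF^{0\prime}+\bE\bE')$, which under Assumptions \ref{ass:errors}(ii)--(iv) and \ref{ass:factor and loadings}(iii) together with $N^{1-\alpha_r}/T\to0$ is $o_p(N^{\alpha_k})$ precisely because $\alpha_r>\tfrac12$ forces the signal level $N^{\alpha_r}$ to dominate both the cross term and the noise level $\|\bE\|_2^2/T=O_p(\max\{N,T\}/T)$. Hence $\hat{\bLambda}=\bLambda(\bI_r+o_p(1))$ componentwise; in particular $\bN^{a}\hat{\bLambda}^{-1}\bN^{b}$ behaves like $\bN^{a+b-1}\bD^{-1}$, which is what lets me avoid the lossy bound $\|\hat{\bLambda}^{-1}\|_2=O_p(N^{-\alpha_r})$. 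Also $T^{-1}\bF^{0\prime}\hat{\bF}=O_p(1)$ by Cauchy--Schwarz on the normalisations $T^{-1}\bF^{0\prime}\bF^0=T^{-1}\hat{\bF}'\hat{\bF}=\bI_r$ (alternatively from \citep[Lemma B.4]{jiang2023revisiting}).

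For $\bR_1$ I would factor $\bE\bB^0=(\bE\bB^0\bN^{-1/2})\bN^{1/2}$ and use $\|\bN^{-1/2}\bB^{0\prime}\bE'\|_{\F}=O_p(\sqrt T)$ (Assumption \ref{ass:factor and loadings}(ii)), so that the remaining $\bN^{1/2}(T^{-1}\bF^{0\prime}\hat{\bF})\hat{\bLambda}^{-1}$ carries the $N^{-\alpha_r/2}$-type factor; for $\bR_2$ I would substitute $\hat{\bF}=\bF^0\tilde{\bH}_b+\bR_1+\bR_2$ back into its right-hand side, bound the leading piece via $\|\bE\|_2^2=O_p(\max\{N,T\})$ and the finer moment bound $\|T^{-1/2}\bF^{0\prime}\bE\|_{\F}=O_p(\sqrt N)$ from Assumption \ref{ass:factor and loadings}(iii), and absorb the self-referential $\bR_1,\bR_2$ pieces as lower order (this is where $N^{1-\alpha_r}/T\to0$ is used). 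Pre-multiplying by $T^{-1}\bE'$, $T^{-1}\bB^{0\prime}\bE'$, $T^{-1}\bN^{-1/2}\bB^{0\prime}\bE'$ introduces the extra factors $\|\bE'\bE\|$, $\|\bB^{0\prime}\bE'\bE\|$, $\|\bN^{-1/2}\bB^{0\prime}\bE'\bE\|$, again handled by Assumptions \ref{ass:errors} and \ref{ass:factor and loadings}, the $\bN^{-1/2}$ in (iv) simply buying the extra $N^{-\alpha_r/2}$ relative to (iii). For (v) I would first note $\tilde{\bH}_b-\tilde{\bH}=T^{-1}\bB^{0\prime}\bE'\hat{\bF}\hat{\bLambda}^{-1}$ (immediate from $\tilde{\bH}=\bLambda(T^{-1}\bF^{0\prime}\hat{\bF})\hat{\bLambda}^{-1}$ and $\bB^{0\prime}\hat{\bB}=\bLambda(T^{-1}\bF^{0\prime}\hat{\bF})+T^{-1}\bB^{0\prime}\bE'\hat{\bF}$), bound it as in (iii)--(iv), then write $T^{-1}\bF^{0\prime}(\hat{\bF}-\bF^0\tilde{\bH})=T^{-1}\bF^{0\prime}(\hat{\bF}-\bF^0\tilde{\bH}_b)+(\tilde{\bH}_b-\tilde{\bH})$ (using $T^{-1}\bF^{0\prime}\bF^0=\bI_r$) and plug in the master decomposition; collecting the error-moment, eigenvalue-perturbation, and $\tilde{\bH}_b-\tilde{\bH}$ contributions yields exactly the four terms of $\Delta_{NT}$. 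Finally for (vi): $T^{-1}\hat{\bF}'\bF^0=\tilde{\bH}_b'+[T^{-1}\bF^{0\prime}(\hat{\bF}-\bF^0\tilde{\bH}_b)]'$, where the bracket is $O_p(\Delta_{NT})$ by the (v)-computation, so it remains to show $\tilde{\bH}_b-\bI_r=O_p(\Delta_{NT})$; this follows by expanding $\bI_r=T^{-1}\hat{\bF}'\hat{\bF}$ with $\hat{\bF}=\bF^0\tilde{\bH}_b+\bR_1+\bR_2$ to get $\tilde{\bH}_b'\tilde{\bH}_b=\bI_r+O_p(\Delta_{NT})$ (asymptotic orthogonality), combined with the componentwise control $\hat{\bLambda}=\bLambda(\bI_r+o_p(1))$ applied to the defining near-symmetry of $\tilde{\bH}_b$.

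The hard part is the rate bookkeeping in (v)--(vi): the operator-norm bound on $\hat{\bLambda}^{-1}$ is too crude, so every occurrence of $\bB^0$ or $\bF^0$ adjacent to $\bE$ must be paired with exactly the $\bN^{\pm1/2}$-weighting under which Assumptions \ref{ass:factor and loadings}(ii)--(iv) deliver $O_p(\sqrt T)$ (or $O_p(\sqrt N)$) control, and the eigenvalue perturbation must be pushed one order past Weyl (using that the leading perturbation of $\lambda_k$ is the signal-space projection of $\bE\bE'/T$, of order $N^{1-\alpha_k}/T$) so that the final rate is the genuine $\Delta_{NT}$ rather than something weaker. Pinning down (vi) at the full $\Delta_{NT}$ rate is the most delicate step, since it couples this sharpened eigenvalue perturbation with the asymptotic orthogonality of $\tilde{\bH}_b$; the condition $\alpha_r>\tfrac12$ is exactly what makes both go through.
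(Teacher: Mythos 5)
The paper does not actually prove this lemma in-house: its ``proof'' is a one-line deferral to Lemmas~1 and B.3--B.4 of \cite{jiang2023revisiting}. Your reconstruction nevertheless lands on exactly the machinery that source (and this paper's own Appendix) uses: the master identity $\hat{\bF}-\bF^0\tilde{\bH}_b=T^{-1}\bE\bB^0\bF^{0\prime}\hat{\bF}\hat{\bLambda}^{-1}+T^{-1}\bE\bE'\hat{\bF}\hat{\bLambda}^{-1}$ is precisely the decomposition the authors display for the $\bX_w$ model in \eqref{f-fH_4w}, and your identity $\tilde{\bH}_b-\tilde{\bH}=T^{-1}\bB^{0\prime}\bE'\hat{\bF}\hat{\bLambda}^{-1}$ is their \eqref{eq:H4-H}. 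Your treatment of $\bR_1$ via the weighting $\bE\bB^0=(\bE\bB^0\bN^{-1/2})\bN^{1/2}$ paired with $\bN^{1/2}(T^{-1}\bF^{0\prime}\hat{\bF})\hat{\bLambda}^{-1}=O_p(N^{-\alpha_r/2})$, and of $\bR_2$ via $\|\bE\|_2^2=O_p(\max\{N,T\})$, reproduces the stated rates for (i)--(iv), and the split of (v) into $T^{-1}\bF^{0\prime}(\hat{\bF}-\bF^0\tilde{\bH}_b)$ plus $\tilde{\bH}_b-\tilde{\bH}$ does deliver the four terms of $\Delta_{NT}$. So the skeleton is the right one.

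Two points are genuine gaps rather than omitted routine detail. First, your opening claim that Weyl's inequality gives $|\hat{\lambda}_k-\lambda_k|=o_p(N^{\alpha_k})$ for every $k$ is false under heterogeneous exponents: the perturbation $T^{-1}(\bF^0\bB^{0\prime}\bE'+\bE\bB^0\bF^{0\prime}+\bE\bE')$ has operator norm of order $\sqrt{N^{\alpha_1}\max\{N,T\}/T}+\max\{N,T\}/T$, which need not be $o(N^{\alpha_r})$ when $\alpha_1>\alpha_r$; you acknowledge this in your closing paragraph, but the argument as written would not establish $\bN\hat{\bLambda}^{-1}=O_p(1)$, $\bN^{-1}\hat{\bLambda}=O_p(1)$, or the crucial two-sided weighted bound $\bN^{-1/2}(T^{-1}\hat{\bF}'\bF^0)\bN^{1/2}=O_p(1)$ --- these require the eigenvalue-by-eigenvalue analysis of Lemmas~B.1--B.3 in \cite{jiang2023revisiting}, not Weyl. (Relatedly, you invoke $\alpha_r>\tfrac12$, which is a hypothesis of the theorems but not of this lemma; only $N^{1-\alpha_r}/T\to0$ is assumed here.) Second, part (vi) is where the real content lies and your sketch stops short: $\tilde{\bH}_b'\tilde{\bH}_b=\bI_r+o_p(1)$ only pins $\tilde{\bH}_b$ down to an orthogonal matrix, and identifying the limit as $+\bI_r$ at rate $\Delta_{NT}$ requires combining near-orthogonality with the fact that $\tilde{\bQ}\bLambda\tilde{\bQ}'\approx\hat{\bLambda}$ with both sides diagonal and the $\lambda_k$ separated ($\lambda_r<\dots<\lambda_1$ with distinct $d_k N^{\alpha_k}$), plus the sign normalization --- i.e., the identification argument of Lemma~B.4 of \cite{jiang2023revisiting}, which ``near-symmetry'' does not substitute for.
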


\begin{proof}[Proof of Lemma \ref{lem:Rotation}] The proofs have been given in \citep[Lemma 1 and Lemmas B.3-B.4]{jiang2023revisiting}.
\end{proof}

\begin{lem} Suppose that Assumptions \ref{ass:eigen}--\ref{ass:Aug_errors} hold.  If $\frac{N^{1-\alpha_r}}{T}\to 0$, then, we have, as $N, T \to \infty$,
\label{lem:F*epsilon}
\begin{flalign}
&(i) ~ \left\| \frac{1}{\sqrt{T}} (\hat{\bF}-\bF^0\tilde{\bH})'\bepsilon  \right\|_{\F}=O_p \left(\sqrt{T} N^{-\frac{3}{2}\alpha_r}\right) +O_p\left( \frac{N^{1-\alpha_r}}{\sqrt{T}} \right)+O_p\left( N^{\frac{1}{2}- \alpha_r} \right),\label{eq:F*ep_H}&\\
&(ii) ~ \left\| \frac{1}{\sqrt{T}} (\hat{\bF}-\bF^0\tilde{\bH}_q)'\bepsilon  \right\|_{\F}=O_p \left(\sqrt{T} N^{-\frac{3}{2}\alpha_r}\right) +O_p\left( \frac{N^{1-\alpha_r}}{\sqrt{T}} \right)+O_p\left( N^{\frac{1}{2}- \alpha_r} \right).\label{eq:F*ep_H3}&
\end{flalign}
\end{lem}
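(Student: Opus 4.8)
The plan is to expand the principal-component sampling error through its eigen-equation and bound the resulting pieces against $T^{-1/2}\bepsilon$ term by term, in the spirit of \cite{Bai2003} and the companion arguments in \cite{jiang2023revisiting}, but carrying the weak-factor scaling $\hat{\bLambda}^{-1}\asymp\bD^{-1}\bN^{-1}$ (entrywise) through every step. Since $\bF^0,\bB^0$ satisfy $T^{-1}\bF^{0\prime}\bF^0=\bI_r$ and $\bB^{0\prime}\bB^0=\bLambda$, the eigen-equation $T^{-1}\bX\bX'\hat{\bF}=\hat{\bF}\hat{\bLambda}$ with $\bX=\bF^0\bB^{0\prime}+\bE$ and the definition of $\tilde{\bH}$ give
\[
\hat{\bF}-\bF^0\tilde{\bH}=\left(T^{-1}\bF^0\bB^{0\prime}\bE'\hat{\bF}+T^{-1}\bE\bB^0\bF^{0\prime}\hat{\bF}+T^{-1}\bE\bE'\hat{\bF}\right)\hat{\bLambda}^{-1}.
\]
Transposing and post-multiplying by $T^{-1/2}\bepsilon$ yields $T^{-1/2}(\hat{\bF}-\bF^0\tilde{\bH})'\bepsilon=\hat{\bLambda}^{-1}(B_1+B_2+B_3)$ with $B_1=T^{-3/2}\hat{\bF}'\bE\bB^0\bF^{0\prime}\bepsilon$, $B_2=T^{-3/2}\hat{\bF}'\bF^0\bB^{0\prime}\bE'\bepsilon$ and $B_3=T^{-3/2}\hat{\bF}'\bE\bE'\bepsilon$. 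In each $B_j$ I would insert $\hat{\bF}=\bF^0\tilde{\bH}_b+(\hat{\bF}-\bF^0\tilde{\bH}_b)$, handling the leading $\bF^0\tilde{\bH}_b$ part with the moment assumptions and the correction with Lemma \ref{lem:Rotation}(i)--(iv).

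The ``signal'' terms $B_1,B_2$ are the more routine ones. For $B_2$, Assumption \ref{ass:2errors} gives $\|T^{-1/2}\bN^{-1/2}\bB^{0\prime}\bE'\bepsilon\|_{\F}=O_p(1)$; together with $T^{-1}\hat{\bF}'\bF^0=\bI_r+O_p(\Delta_{NT})$ (Lemma \ref{lem:Rotation}(vi)) and the diagonal form $\hat{\bLambda}^{-1}\asymp\bD^{-1}\bN^{-1}$, the product $\hat{\bLambda}^{-1}B_2$ becomes $T^{-1/2}$ times a $\bD^{-1}\bN^{-1/2}$-weighting of $\bN^{-1/2}\bB^{0\prime}\bE'\bepsilon$, of order $O_p(N^{-\alpha_r/2})\le O_p(N^{1/2-\alpha_r})$ for $\alpha_r\le 1$, plus a cross remainder of order $N^{\alpha_1/2-\alpha_r}\Delta_{NT}$ that is absorbed into the stated rates since $\alpha_1/2<\alpha_r$. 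For $B_1$ I would use $\|T^{-1/2}\bF^{0\prime}\bE\bB^0\bN^{-1/2}\|_{\F}=O_p(1)$ (Assumption \ref{ass:factor and loadings}(iv)), $\|T^{-1/2}\bF^{0\prime}\bepsilon\|_{\F}=O_p(1)$ (Assumption \ref{ass:Aug_errors}(iii)) and Lemma \ref{lem:Rotation}(iii)--(iv) for the $(\hat{\bF}-\bF^0\tilde{\bH}_b)$-part; after the $\hat{\bLambda}^{-1}$ scaling these collapse, using $\alpha_1\le1$ and $\alpha_r>1/2$ to simplify exponents, to the three rates in the statement. The recurring point is that $\hat{\bLambda}^{-1}$ must be used through its diagonal form and paired with the $\bN^{-1/2}$-normalized quantities in Assumptions \ref{ass:factor and loadings}--\ref{ass:2errors}: the scalar bound $\|\hat{\bLambda}^{-1}\|_2=O_p(N^{-\alpha_r})$ alone is not tight enough in the weak-factor case.

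The delicate step, which I expect to be the main obstacle, is the idiosyncratic cross term $B_3=T^{-3/2}\hat{\bF}'\bE\bE'\bepsilon$, precisely because $e_{t,i}$ is permitted to be both serially and cross-sectionally dependent. I would split $\bE\bE'=\E[\bE\bE']+(\bE\bE'-\E[\bE\bE'])$. By Assumptions \ref{ass:errors}(ii)--(iii) the matrix $\E[\bE\bE']$ has absolute row sums $O(N)$, so $\|\E[\bE\bE']\|_2=O(N)$, and the mean contribution $\hat{\bLambda}^{-1}T^{-3/2}\hat{\bF}'\E[\bE\bE']\bepsilon$ has order $N^{-\alpha_r}\cdot T^{-3/2}\cdot\|\hat{\bF}\|_2\cdot O(N)\cdot\|\bepsilon\|_{\F}=N^{-\alpha_r}\cdot T^{-3/2}\cdot\sqrt{T}\cdot N\cdot\sqrt{T}=N^{1-\alpha_r}/\sqrt{T}$. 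The deviation part needs a variance/fourth-moment computation (Assumption \ref{ass:errors}(i)) together with the summability in Assumption \ref{ass:errors}(ii)--(iii), again split via $\hat{\bF}=\bF^0\tilde{\bH}_b+(\hat{\bF}-\bF^0\tilde{\bH}_b)$ and controlled with Lemma \ref{lem:Rotation}, to show it is $o_p$ of the stated bound; the operator-norm bound $\|\bE\|_2^2=O_p(\max\{N,T\})$ (Assumption \ref{ass:errors}(iv)) is useful here but only in combination with the $\hat{\bLambda}^{-1}\asymp\bD^{-1}\bN^{-1}$ scaling rather than on its own.

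Part (ii) follows along the same lines. The cleanest route is to write $\hat{\bF}-\bF^0\tilde{\bH}_q=(\hat{\bF}-\bF^0\tilde{\bH})+\bF^0(\tilde{\bH}-\tilde{\bH}_q)$, so that $T^{-1/2}(\hat{\bF}-\bF^0\tilde{\bH}_q)'\bepsilon=T^{-1/2}(\hat{\bF}-\bF^0\tilde{\bH})'\bepsilon+(\tilde{\bH}-\tilde{\bH}_q)'\,T^{-1/2}\bF^{0\prime}\bepsilon$; the first summand is controlled by part (i), while $\|(\tilde{\bH}-\tilde{\bH}_q)'T^{-1/2}\bF^{0\prime}\bepsilon\|_{\F}\le\|\tilde{\bH}-\tilde{\bH}_q\|_{\F}\,O_p(1)=O_p(\Delta_{NT})$, because $\tilde{\bH}$ and $\tilde{\bH}_q$ both equal $\bI_r+O_p(\Delta_{NT})$ by Lemma \ref{lem:Rotation}(v)--(vi), and $\Delta_{NT}$ is dominated by the three rates in the statement. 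Alternatively, one can rerun the expansion of the first paragraph directly with $\tilde{\bH}_q$ in place of $\tilde{\bH}$, exploiting the exact orthogonality $\hat{\bF}'(\hat{\bF}-\bF^0\tilde{\bH}_q)=\bzero$ (which follows from $T^{-1}\hat{\bF}'\hat{\bF}=\bI_r=T^{-1}\hat{\bF}'\bF^0\tilde{\bH}_q$); either way the three stated rates reappear.
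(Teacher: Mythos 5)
Your expansion of $\hat{\bF}-\bF^0\tilde{\bH}$ through the eigen-equation and your treatment of the two ``signal'' terms coincide with the paper's proof (your $B_1,B_2$ are the paper's $\bA_3,\bA_2$). The problem is the dominant term $B_3=T^{-3/2}\hat{\bLambda}^{-1}\hat{\bF}'\bE\bE'\bepsilon$, where your plan genuinely diverges and has a gap. Splitting $\bE\bE'=\E[\bE\bE']+(\bE\bE'-\E[\bE\bE'])$ handles the mean correctly (your $N^{1-\alpha_r}/\sqrt{T}$ computation is fine), but the deviation is left as an unexecuted ``variance computation,'' and it does not close under the paper's assumptions. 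For the $\bF^0\tilde{\bH}_b$ part of $\hat{\bF}$, the only bound available from Assumption \ref{ass:errors} is the operator norm $\|\bE\bE'-\E[\bE\bE']\|_2=O_p(N+T)$, which yields $N^{-\alpha_r}(N+T)/\sqrt{T}=N^{1-\alpha_r}/\sqrt{T}+\sqrt{T}N^{-\alpha_r}$, and $\sqrt{T}N^{-\alpha_r}$ is $N^{\alpha_r/2}$ times larger than the target rate $\sqrt{T}N^{-\frac{3}{2}\alpha_r}$; a genuine variance calculation for $T^{-3/2}\bF^{0\prime}(\bE\bE'-\E[\bE\bE'])\bepsilon$ would require summability of fourth-order cross-cumulants of $e_{t,i}$ (a condition of the type of Assumption F1 in \cite{Bai2003}), which is not among Assumptions \ref{ass:errors}--\ref{ass:2errors}. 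The paper avoids centering altogether: it factorizes $\hat{\bF}'\bE\bE'\bepsilon=(\hat{\bF}'\bE)(\bE'\bepsilon)$, bounds $\|\bE'\bepsilon\|_{\F}\le\|\bE\|_2\|\bepsilon\|_2=O_p((\sqrt{N}+\sqrt{T})\sqrt{T})$, and pairs this with $\|\bF^{0\prime}\bE\|_{\F}=O_p(\sqrt{NT})$ and Lemma \ref{lem:Rotation}(ii); after the $\hat{\bLambda}^{-1}$ scaling this produces exactly the three stated rates, the $\sqrt{T}N^{-\frac{3}{2}\alpha_r}$ term coming from the $(\hat{\bF}-\bF^0\tilde{\bH}_b)'\bE$ piece. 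Replace your treatment of $B_3$ with this factorization and part (i) closes. (A minor further point: your absorption of the cross remainder in $B_2$ invokes $\alpha_1/2<\alpha_r$, which is not implied by the lemma's hypotheses alone; the paper sidesteps this by using the weighted bound $\bN^{-\frac{1}{2}}(T^{-1}\hat{\bF}'\bF^0)\bN^{\frac{1}{2}}=O_p(1)$ rather than the $\bI_r+O_p(\Delta_{NT})$ expansion.)

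For part (ii), your route differs from the paper's and is a genuine shortcut: writing $\hat{\bF}-\bF^0\tilde{\bH}_q=(\hat{\bF}-\bF^0\tilde{\bH})+\bF^0(\tilde{\bH}-\tilde{\bH}_q)$, using $\tilde{\bH}=\bI_r+O_p(\Delta_{NT})$ and $\tilde{\bH}_q=\bI_r+O_p(\Delta_{NT})$ from Lemma \ref{lem:Rotation}(v)--(vi) together with $\|T^{-1/2}\bF^{0\prime}\bepsilon\|_{\F}=O_p(1)$, reduces (ii) to (i) plus an $O_p(\Delta_{NT})$ term, and each summand of $\Delta_{NT}$ is indeed dominated by the three displayed rates under $N^{1-\alpha_r}/T\to0$ and $\alpha_1\le1$. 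The paper instead re-derives the expansion with $\tilde{\bH}_q$ and bounds $T^{-1}\hat{\bF}'\bE\hat{\bB}(\hat{\bB}'\hat{\bB})^{-1}$ term by term; your argument is shorter and, once (i) is repaired, valid.
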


\begin{proof}[Proof of Lemma \ref{lem:F*epsilon}]
From the definition $\hat{\bB}= \bX^{\prime} \hat{\bF}/T$, we can express $\hat{\bB}$ as
\begin{align}
\label{Bhat}
\hat{\bB}&=\frac{1}{T} \bB^0 {\bF^0}^{\prime} \hat{\bF}+\frac{1}{T} \bE^{\prime} \hat{\bF}  \\
\nonumber
{\bB^0}' \hat{\bB}
&=\frac{1}{T} {\bB^0}' \bB^0 {\bF^0}' \hat{\bF}+\frac{1}{T} {\bB^0}' \bE^{\prime} \hat{\bF}  \\
{\bB^0}' \hat{\bB} \hat{\bLambda}^{-1}&=\frac{1}{T} {\bB^0}' \bB^0 {\bF^0}' \hat{\bF}\hat{\bLambda}^{-1}+\frac{1}{T} {\bB^0}' \bE^{\prime} \hat{\bF} \hat{\bLambda}^{-1}.
\label{eq:H4-H}
\end{align}
By pre-multiplying $\hat{\bB}'$ to \eqref{Bhat}, we obtain
\begin{eqnarray*}
{\hat{\bB}}' \hat{\bB}&=&\frac{1}{T} {\hat{\bB}}' \bB^0 {\bF^0}' \hat{\bF}+\frac{1}{T} {\hat{\bB}}' \bE^{\prime} \hat{\bF} \\
{\hat{\bB}}' \hat{\bB}&=&\frac{1}{T}\hat{\bF}'\bF^0 {\bB^0}' \hat{\bB} +\frac{1}{T} \hat{\bF}' \bE{\hat{\bB}}  \\
\tilde{\bH}_q &=& {\bB^0}' \hat{\bB} \hat{\bLambda}^{-1}+ \left( \frac{1}{T}\hat{\bF}' {\bF^0}\right)^{-1}\frac{1}{T} \hat{\bF}' \bE{\hat{\bB}}\left({\hat{\bB}}' \hat{\bB}\right)^{-1}.
\end{eqnarray*}
Since $\hat{\bLambda}$ and $\hat{\bF}$ are the eigenvalues and eigenvectors of $\bX\bX'/T$, respectively, we have
\begin{align*} 
T^{-1}\bX\bX'\hat{\bF}
= \hat{\bF}\hat{\bLambda}.
\end{align*}
Using the model $\mathbf{X}=\bF^0 \bB^{0\prime}+\mathbf{E}$, $\bX\bX'$ can be expanded as
\begin{align*}\label{XX}
\bX\bX'
= \bF^0{\bB^0}'\bB^0{\bF^0}' 
+ \bE\bB^0{\bF^0}'
+ \bF^0{\bB^0}'\bE'
+\bE\bE'.
\end{align*}
The estimated factor matrix $\hat{\bF}$ can be written as
\[
\hat{\bF}=\frac{1}{T}\left( \bF^0{\bB^0}'\bB^0{\bF^0}' 
+ \bE\bB^0{\bF^0}'
+ \bF^0{\bB^0}'\bE'
+\bE\bE' \right) \hat{\bF}\hat{\bLambda}^{-1}.
\]	
Thus, we obtain
\begin{align}
\label{fhat-fh_hat}
& \hat{\bF} - \bF^0 \tilde{\bH} =  \frac{1}{T} \bE\bE'\hat{\bF}\hat{\bLambda}^{-1}+\frac{1}{T} \bE\bB^0
{\bF^0 }'\hat{\bF} \hat{\bLambda}^{-1} + \frac{1}{T}  \bF^0
{\bB^0 }'\bE'\hat{\bF} \hat{\bLambda}^{-1}, \\
\nonumber
&\hat{\bF}  - \bF^0 \tilde{\bH}_q =  \frac{1}{T} \bE\bE'\hat{\bF}\hat{\bLambda}^{-1}+\frac{1}{T} \bE\bB^0
{\bF^0 }'\hat{\bF}   \hat{\bLambda}^{-1} - \bF^0  \left( \frac{1}{T}\hat{\bF}' {\bF^0}\right)^{-1}\frac{1}{T} \hat{\bF}' \bE{\hat{\bB}}\left({\hat{\bB}}' \hat{\bB}\right)^{-1}.\\
\label{fhat-fh_q}
\end{align}
(i) Applying the expansion in \eqref{fhat-fh_hat},
\begin{align*}
&    \frac{1}{\sqrt{T}} (\hat{\bF}-\bF^0\tilde{\bH})'\bepsilon   \\
&= \frac{1}{T^{3/2}} \hat{\bLambda}^{-1} {\hat{\bF}}' \bE\bE' \bepsilon  + \frac{1}{T^{3/2}}  \hat{\bLambda}^{-1}{\hat{\bF}}'\bF^0 {\bB^0}'{\bE}'\bepsilon    +\frac{1}{T^{3/2}}  \hat{\bLambda}^{-1}{\hat{\bF}}'\bE \bB^0 {\bF^0}'\bepsilon  \\
&= \bA_1+\bA_2+\bA_3 .
\end{align*}
We derive the order of magnitude for three terms $\bA_1$, $\bA_2$ and $\bA_3$.  By the argument given in \citep[Lemmas B.1--B.2]{jiang2023revisiting}, we know that $\bN^{-1}\hat{\bLambda}=O_p(1)$ and $\bN^{-\frac{1}{2}} \frac{1}{T} \hat{\bF}^{\prime} \bF^0 \bN^{\frac{1}{2}}=O_p(1)$. We decompose $\bA_1$ as 
\begin{align}
\nonumber
&  \left\|   \hat{\bF}^{\prime} \bE\bE' \bepsilon  \right\|_{\F} \\
\nonumber
& \le  \left\| (\hat{\bF}-\bF^0\tilde{\bH}_b)' \bE\bE' \bepsilon \right\|_{\F} +\left\| \tilde{\bH}_b' \bF^{0\prime} \bE\bE' \bepsilon \right\|_{\F} \\
\nonumber
& \le \left\| (\hat{\bF}-\bF^0\tilde{\bH}_b)' \bE \right\|_{\F} \left\| \bE' \bepsilon \right\|_{\F} 
+ \left\|  \bF^{0\prime} \bE \right\|_{\F} \left\|\bE' \bepsilon \right\|_{\F}\\
\nonumber
& =T O_p\left( \frac{N^{1-\frac{1}{2}\alpha_r}}{T}+N^{-\frac{1}{2}\alpha_r} \right) O_p((\sqrt{N}+\sqrt{T})\sqrt{T}) +O_p(\sqrt{NT}(\sqrt{N}+\sqrt{T})\sqrt{T}) \\
& = O_p(T^2N^{-\frac{1}{2}\alpha_r})+O_p(NT)+O_p(\sqrt{N}T^{3/2}),
\label{Feee}
\end{align}
where we used Lemma \ref{lem:Rotation}(ii) and $\frac{N^{1-\alpha_r}}{T}\to 0$.
Thus, we have
\begin{align*}
&   \left\|  \bA_1  \right\|_{\F}  
= O_p \left(\sqrt{T} N^{-\frac{3}{2}\alpha_r}\right) + O_p \left(\frac{N^{1-  \alpha_r}}{\sqrt{T}}\right)+ O_p \left(N^{\frac{1}{2}- \alpha_r} \right).
\end{align*}
For $\bA_2$,
\begin{align*}
\left\|  \bA_2  \right\|_{\F}  & =\left\|\frac{1}{\sqrt{T}} \hat{\bLambda}^{-1} \bN^{\frac{1}{2}}\left(\bN^{-\frac{1}{2}} \frac{1}{T} \hat{\bF}^{\prime} \bF^0 \bN^{\frac{1}{2}}\right)\left(\bN^{-\frac{1}{2}} \bB^{0\prime} \bE^{\prime} \bepsilon\right)\right\|_{\F} \\
& =O_p\left( \frac{1}{\sqrt{T}} N^{-\frac{1}{2} \alpha_r} \sqrt{T}\right)=O_p\left( \frac{1}{\sqrt{ N^{\alpha_r}}}\right),
\end{align*}
where we have used $ \|T^{-\frac{1}{2}}\bN^{-\frac{1}{2}} \bB^{0\prime} \bE^{\prime} \bepsilon\|_{\F}=O_p(1)$ in Assumption \ref{ass:2errors}. For $\bA_3$, Lemma \ref{lem:Rotation}(iii) implies that
\begin{align*}
\left\|  \bA_3  \right\|_{\F} &\le \left\|   \frac{1}{T^{3/2}} \hat{\bLambda}^{-1}  \left(\hat{\bF}-\bF^{0} \tilde{\bH}_b\right)^{\prime} \bE \bB^0  \bF^{0\prime} \bepsilon \right\|_{\F} 
+\left\|\frac{1}{T^{3/2}} \hat{\bLambda}^{-1} \tilde{\bH}_b^{\prime} \bF^0 \bE \bB^{0} \bF^{0\prime} \bepsilon\right\|_{\F} \\
&=
O_p\left(N^{-\alpha_r}\right) O_p \left( N^{\frac{1}{2}\alpha_{1}-\frac{1}{2}\alpha_{r}}+  N^{\frac{1}{2}\alpha_1-\frac{1}{2}\alpha_r} \frac{N^{1-\frac{1}{2}\alpha_r}}{T} \right)+O_p\left( \frac{1}{T^{3/2}} N^{-\alpha_r}\sqrt{TN^{\alpha_1}}\sqrt{T}  \right)\\
&=O_p\left( N^{\frac{1}{2}\alpha_{1}-\alpha_{r}}N^{-\frac{1}{2}\alpha_{r}}  \right)
+ O_p\left(N^{\frac{1}{2}\alpha_{1}-\alpha_{r}} \frac{ N^{1-\alpha_r}}{T} \right) 
+O_p\left(   \frac{N^{\frac{1}{2}\alpha_1-\alpha_r}}{\sqrt{T}}   \right),
\end{align*}
by $ \|T^{-\frac{1}{2}}\bN^{-\frac{1}{2}} \bB^{0\prime} \bE^{\prime} \bF^0\|_{\F}=O_p(1)$ in Assumption \ref{ass:factor and loadings}(iv).

Since $ \frac{N^{1-\alpha_r}}{T}\to 0$, the term $\bA_1$ is the dominant component. Thus, we obtain
\[
\left\| \frac{1}{\sqrt{T}} (\hat{\bF}-\bF^0\tilde{\bH})'\bepsilon  \right\|_{\F}
=O_p \left(\sqrt{T} N^{-\frac{3}{2}\alpha_r}\right) +O_p\left( \frac{N^{1-\alpha_r}}{\sqrt{T}} \right)
+O_p\left( N^{\frac{1}{2}- \alpha_r} \right).
\]
(ii) Using the expansion \eqref{fhat-fh_q}, we can write
\begin{align}
&    \frac{1}{\sqrt{T}} (\hat{\bF}-\bF^0\tilde{\bH}_q)'\bepsilon   \\
&= \frac{1}{T^{3/2}} \hat{\bLambda}^{-1} {\hat{\bF}}' \bE\bE' \bepsilon  + \frac{1}{T^{3/2}}  \hat{\bLambda}^{-1}{\hat{\bF}}'\bF^0 {\bB^0}'{\bE}'\bepsilon  -  \frac{1}{\sqrt{T}} \left({\hat{\bB}}' \hat{\bB}\right)^{-1}\frac{1}{T}{\hat{\bB}}' \bE'\hat{\bF}\left( \frac{1}{T}\hat{\bF}' {\bF^0}\right)^{-1}\bF^{0\prime}  \bepsilon  .
\label{fhat-fh_q*eplison}
\end{align}
Since the first two terms on the right-hand side of \eqref{fhat-fh_q*eplison} are the same as $\bA_1$ and $\bA_2$, we focus only on the third component. We first show the upper bound of the following term:
\begin{align*}
& \frac{1}{T} \hat{\bF}' \bE{\hat{\bB}}\left({\hat{\bB}}' \hat{\bB}\right)^{-1}
\\
&= 
\frac{1}{T} (\hat{\bF}-\bF^0\tilde{\bH}_b)' \bE(\hat{\bB}-\bB^0\tilde{\bQ}')\left({\hat{\bB}}' \hat{\bB}\right)^{-1}\\
& \quad\quad +
\frac{1}{T} (\hat{\bF}-\bF^0\tilde{\bH}_b)' \bE \bB^0\tilde{\bQ}' \left({\hat{\bB}}' \hat{\bB}\right)^{-1}
+
\frac{1}{T} \tilde{\bH}_b'\bF^{0\prime} \bE \hat{\bB}\left({\hat{\bB}}' \hat{\bB}\right)^{-1}
\\
&  = \bB_{1}+ \bB_{2}+ \bB_{3}.
\end{align*}
We analyze each term separately. The definition 
$
\hat{\bB} = \frac{1}{T} \bX'\hat{\bF} 
=\frac{1}{T}  \bB^0{\bF^0}'\hat{\bF} +\frac{1}{T} \bE'\hat{\bF}
$
implies 
\begin{align}
\hat{\bB}- \bB^0\tilde{\bQ}' = \frac{1}{T} \bE'(\hat{\bF}- \bF^0 \tilde{\bH}_b)+ \frac{1}{T} \bE'\bF^0 \tilde{\bH}_b. \label{B}
\end{align}	 
Since $\tilde{\bH}_b=O_p(1)$ in \citep[Lemma 2(iii)]{jiang2023revisiting} and $\frac{N^{1-\alpha_r}}{T}\to 0$, it follows that 	
\begin{align}
& \left\| \hat{\bB} - \bB^0 \tilde{\bQ}' \right\|_{\F}\\
& \le  \left\|\frac{1}{T} \bE'(\hat{\bF}- \bF^0 \tilde{\bH}_b)\right\|_{\F}+\left\| \frac{1}{T} \bE'\bF^0 \tilde{\bH}_b \right\|_{\F}\\
&=  O_p\left(\frac{N^{1-\frac{1}{2}\alpha_r}}{T}\right)+ O_p\left(N^{-\frac{1}{2}\alpha_{r}}\right) +   O_p\left(\sqrt{\frac{N}{T}} \right)\\
&= O_p\left(\sqrt{\frac{N}{T}}\right)+O_p\left(N^{-\frac{1}{2}\alpha_{r}}\right).
\label{consis of Bhat}
\end{align}
For the first term $\bB_1$, Lemma \ref{lem:Rotation}(ii) and \eqref{consis of Bhat} imply
\begin{align*}
\left\|  \bB_1 \right\|_{\F} 
& \le  O_p \left( \left(\frac{N^{1-\frac{1}{2}\alpha_r}}{T} +N^{- \frac{1}{2}\alpha_r}\right) \left(\sqrt{\frac{N}{T}}+N^{-\frac{1}{2}\alpha_{r}} \right)N^{-\alpha_r}\right)\\ 
&=
O_p \left(  \frac{N^{\frac{3}{2}-\frac{3}{2}\alpha_r}}{T^{\frac{3}{2}}}  \right) 
+
O_p \left(   \frac{N^{ \frac{1}{2} -\frac{3}{2}\alpha_r}}{\sqrt{T}}  \right)
+
O_p \left( \frac{N^{1-2\alpha_r}}{T}  \right)+
O_p \left(  N^{-2\alpha_r}   \right).
\end{align*}
The dominating term is $ \bB_{2}$:
\begin{align*}
\left\|  \bB_2 \right\|_{\F}
&=\left\|  \frac{1}{T} (\hat{\bF}-\bF^0\tilde{\bH}_b)' \bE \bB^0\tilde{\bQ}' \left({\hat{\bB}}' \hat{\bB}\right)^{-1} \right\|_{\F}\\
&=
O_p \left( \left( N^{ -\frac{1}{2}\alpha_{r}} +     \frac{N^{1- \alpha_r}}{T} \right)N^{-\frac{1}{2}\alpha_r}\right) \\
&=
O_p \left(  N^{ - \alpha_{r}}\right) +     O_p \left( \frac{N^{1- \alpha_r}}{T}  N^{-\frac{1}{2}\alpha_r}\right),
\end{align*}
where we have used Lemma \ref{lem:Rotation}(iv). For $\bB_3$, adding and subtracting terms, we have
\begin{align*}
\left\|  \bB_{3} \right\|_{\F} 
& \le \left\| \frac{1}{T} \tilde{\bH}_b'\bF^{0\prime} \bE (\hat{\bB}-\bB^0\tilde{\bQ}')\left(\hat{\bB}' \hat{\bB}\right)^{-1}\right\|_{\F}
+ \left\| \frac{1}{T} \tilde{\bH}_b'\bF^{0\prime} \bE \bB^0\tilde{\bQ}'\left(\hat{\bB}' \hat{\bB}\right)^{-1}\right\|_{\F}\\
& = 
O_p \left(  \frac{\sqrt{NT}}{T} \left( \sqrt{\frac{N}{T}}+ N^{- \frac{1}{2}\alpha_r}\right)N^{-\alpha_r} \right) 
+
O_p \left(  \frac{1}{\sqrt{T} }  N^{- \frac{1}{2}\alpha_r} \right) 
\\
&= O_p \left(   \frac{N^{1-\alpha_r}}{T}  \right) 
+O_p \left(  \frac{N^{\frac{1}{2}-\frac{3}{2}\alpha_r} }{\sqrt{T}} \right) 
+O_p \left(   \frac{1}{\sqrt{T N^{\alpha_r}}} \right) ,
\end{align*}
where $\|\bF^{0\prime}\bE\|_{\F}=O_p(\sqrt{NT})$ as shown in \citep[Lemma A.1]{BaiNg2023}. Thus, we obtain
\begin{align}
\label{FEB}
\left\| \frac{1}{T}  \hat{\bF}^{'}\bE \hat{\bB} ({\hat{\bB}}'\hat{\bB})^{-1}\right\|_{\F} 
= O_p \left(   \frac{N^{1-\alpha_r}}{T}  \right)  
+ O_p \left(  N^{ - \alpha_{r}}\right)
+O_p \left(   \frac{1}{\sqrt{T N^{\alpha_r}}} \right).
\end{align}
It follows that the third term on the right-hand side of \eqref{fhat-fh_q*eplison} is bounded by
\begin{align*}
&\left\| \frac{1}{\sqrt{T}} \left({\hat{\bB}}' \hat{\bB}\right)^{-1}\frac{1}{T}{\hat{\bB}}' \bE'\hat{\bF}\left( \frac{1}{T}\hat{\bF}' {\bF^0}\right)^{-1}\bF^{0\prime}  \bepsilon\right\|_{\F} \\
& \le \left\| \left({\hat{\bB}}' \hat{\bB}\right)^{-1}\frac{1}{T}{\hat{\bB}}' \bE'\hat{\bF}\right\|_{\F} \left\|\left( \frac{1}{T}\hat{\bF}' {\bF^0}\right)^{-1} \right\|_{\F} \left\|\frac{1}{\sqrt{T}} \bF^{0\prime}  \bepsilon\right\|_{\F}\\
& = O_p \left(   \frac{N^{1-\alpha_r}}{T}  \right)  
+ O_p \left(  N^{ - \alpha_{r}}\right)
+O_p \left(   \frac{1}{\sqrt{T N^{\alpha_r}}} \right),
\end{align*}
which is not larger than $\bA_1$. Thus, \eqref{fhat-fh_q*eplison} has the same bound as (i):
\[
\left\| \frac{1}{\sqrt{T}} (\hat{\bF}-\bF^0\tilde{\bH}_q)'\bepsilon  \right\|_{\F}
=O_p \left(\sqrt{T} N^{-\frac{3}{2}\alpha_r}\right) +O_p\left( \frac{N^{1-\alpha_r}}{\sqrt{T}} \right)
+O_p\left( N^{\frac{1}{2}- \alpha_r} \right).
\]
\end{proof}

\begin{lem} 
\label{lem:bias_Hhat}
Suppose Assumptions \ref{ass:eigen}--\ref{ass:Aug_errors} hold. If $\alpha_r>\frac{1}{2}$, $\frac{N^{1-\alpha_r}}{\sqrt{T}} \to  0$, and $\sqrt{T}N^{\frac{1}{2}\alpha_1-\frac{3}{2}\alpha_r} \to c_1 \in [0,\infty)$, as $N, T \to \infty$, we have
\begin{flalign*}
&(i) ~  \frac{1}{\sqrt{T}}  \hat{\bF}^{\prime}(\hat{\bF}-\bF^0\tilde{\bH})   =c_1   (\bG+\nu\bD^{-1}\bGamma\bD^{-1})+o_p\left( 1 \right),&\\
&(ii) ~  \frac{1}{\sqrt{T}}  \bW'(\hat{\bF}-\bF^0\tilde{\bH})   =c_1  \bSigma_{\bW \bF^0} \bG+o_p\left( 1 \right),&
\end{flalign*}
where $c_1 \bG = \lim_{N,T\to\infty} \sqrt{T}\bN^{\frac{1}{2}} \bGamma  \bD^{-2} \bN^{-\frac{3}{2}} $, $\nu = \lim_{N\to\infty} N^{-\frac{1}{2}(\alpha_1-\alpha_r)}$ 
and\\ $\bSigma_{\bW \bF^0} = \plim_{N,T \rightarrow \infty} T^{-1}\bW'\bF^0$.
\end{lem}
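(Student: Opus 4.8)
The plan is to route the whole argument through the loading gap $\bDelta:=\hat{\bB}-\bB^{0}\tilde{\bQ}^{\prime}$ and to reduce both displays to the single matrix $\bB^{0\prime}\bE^{\prime}\bE\bB^{0}$, which Assumption \ref{ass:factor and loadings}(v) controls. I will use throughout the PC identities $\bX\hat{\bB}=\hat{\bF}\hat{\bLambda}$ and $\hat{\bB}\hat{\bLambda}=T^{-1}\bX^{\prime}\bX\hat{\bB}$; substituting $\bX=\bF^{0}\bB^{0\prime}+\bE$ in the second, and simplifying the ``signal'' block with $T^{-1}\bF^{0\prime}\bX\hat{\bB}\hat{\bLambda}^{-1}=T^{-1}\bF^{0\prime}\hat{\bF}=\tilde{\bQ}^{\prime}$, gives the fixed-point expansions
\begin{align*}
\bDelta&=\tfrac1T\bE^{\prime}\bF^{0}\bB^{0\prime}\hat{\bB}\hat{\bLambda}^{-1}+\tfrac1T\bE^{\prime}\bE\hat{\bB}\hat{\bLambda}^{-1},\\
\hat{\bF}-\bF^{0}\tilde{\bH}&=\big(\bF^{0}\bB^{0\prime}\bDelta+\bE\hat{\bB}\big)\hat{\bLambda}^{-1}.
\end{align*}
By \citep[Lemmas B.1--B.4]{jiang2023revisiting} and Lemma \ref{lem:Rotation}, $\tilde{\bQ},\tilde{\bH}\CP\bI_{r}$, $\bN^{-1}\hat{\bLambda}=O_{p}(1)$ with $\hat{\bLambda}$ close to $\bLambda=\bD\bN$, and $\|\bDelta\|_{\F}=O_{p}(\sqrt{N/T})+O_{p}(N^{-\alpha_{r}/2})$, so in each piece below I may replace $\tilde{\bQ}\to\bI_{r}$, $\hat{\bB}\to\bB^{0}$, $\hat{\bLambda}\to\bLambda$, every such replacement error being shown $o_{p}(1)$ after multiplication by $\sqrt{T}$ and the relevant powers of $\hat{\bLambda}^{-1}$, using $\|\bE\|_{2}^{2}=O_{p}(\max\{N,T\})$, $\|\bF^{0\prime}\bE\|_{\F}=O_{p}(\sqrt{NT})$ and the moment bounds of Assumption \ref{ass:factor and loadings}.

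For part (i), since $\hat{\bF}^{\prime}\hat{\bF}=T\bI_{r}$ and $\tilde{\bH}=\bLambda\tilde{\bQ}^{\prime}\hat{\bLambda}^{-1}$, one has the exact identity $T^{-1/2}\hat{\bF}^{\prime}(\hat{\bF}-\bF^{0}\tilde{\bH})=\sqrt{T}\,(\hat{\bLambda}-\tilde{\bQ}\bLambda\tilde{\bQ}^{\prime})\hat{\bLambda}^{-1}$, and writing $\hat{\bLambda}=\hat{\bB}^{\prime}\hat{\bB}=(\bB^{0}\tilde{\bQ}^{\prime}+\bDelta)^{\prime}(\bB^{0}\tilde{\bQ}^{\prime}+\bDelta)$ gives $\hat{\bLambda}-\tilde{\bQ}\bLambda\tilde{\bQ}^{\prime}=\tilde{\bQ}\bB^{0\prime}\bDelta+\bDelta^{\prime}\bB^{0}\tilde{\bQ}^{\prime}+\bDelta^{\prime}\bDelta$. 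Plugging in the fixed-point expansion of $\bDelta$, the $\tfrac1T\bE^{\prime}\bE\hat{\bB}\hat{\bLambda}^{-1}$ part of $\tilde{\bQ}\bB^{0\prime}\bDelta$ produces, after the outer $\sqrt{T}(\cdot)\hat{\bLambda}^{-1}$, the term $\sqrt{T}\,\bN^{1/2}\big(T^{-1}\bN^{-1/2}\bB^{0\prime}\bE^{\prime}\bE\bB^{0}\bN^{-1/2}\big)\bD^{-2}\bN^{-3/2}$; since $\sum_{i}\bb_{i}^{0}e_{t,i}=\bB^{0\prime}$ times the $t$-th row of $\bE$, Assumption \ref{ass:factor and loadings}(v) identifies the bracket with $\bGamma$ and the limit with $c_{1}\bG$. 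The same part of $\bDelta^{\prime}\bB^{0}\tilde{\bQ}^{\prime}$ produces instead $\sqrt{T}\,\bD^{-1}\bN^{-1/2}\big(T^{-1}\bN^{-1/2}\bB^{0\prime}\bE^{\prime}\bE\bB^{0}\bN^{-1/2}\big)\bN^{-1/2}\bD^{-1}$, whose $(j,k)$ entry carries the scalar $\sqrt{T}N^{-(\alpha_{j}+\alpha_{k})/2}=c_{1}N^{(3\alpha_{r}-\alpha_{1}-\alpha_{j}-\alpha_{k})/2}(1+o(1))$, an exponent that is $\le0$ and vanishes only when $\alpha_{1}=\cdots=\alpha_{r}$; hence this term tends to $c_{1}\nu\bD^{-1}\bGamma\bD^{-1}$, consistently with $\nu=0$ when $\alpha_{1}>\alpha_{r}$. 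Everything else is $o_{p}(1)$: $\bDelta^{\prime}\bDelta$ contributes $T^{-3/2}\bF^{0\prime}\bE\bE^{\prime}\bF^{0}\bLambda^{-1}=O_{p}(N^{1-\alpha_{r}}/\sqrt{T})$ together with third- and fourth-order-in-$\bE$ pieces bounded by $N^{(1-3\alpha_{r})/2}/\sqrt{T}$ and $\sqrt{T}N^{-2\alpha_{r}}$; the cross pieces involving $\bB^{0\prime}\bE^{\prime}\bF^{0}$ and $\bF^{0\prime}\bE\bB^{0}$ are mean zero of entrywise order $N^{\alpha_{j}/2-\alpha_{k}}$; all of these tend to zero under $\alpha_{r}>\tfrac12$ and $N^{1-\alpha_{r}}/\sqrt{T}\to0$.

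For part (ii), from the second fixed-point display, $T^{-1/2}\bW^{\prime}(\hat{\bF}-\bF^{0}\tilde{\bH})=\sqrt{T}\,(T^{-1}\bW^{\prime}\bF^{0})\bB^{0\prime}\bDelta\hat{\bLambda}^{-1}+T^{-1/2}\bW^{\prime}\bE\hat{\bB}\hat{\bLambda}^{-1}$. The second summand is $o_{p}(1)$: $\bW^{\prime}\bE$ has mean zero and $T^{-1/2}\bW^{\prime}\bE\bB^{0}$ is entrywise $O_{p}(N^{\alpha_{k}/2})$ by $\|T^{-1/2}\bN^{-1/2}\bB^{0\prime}\bE^{\prime}\bW\|_{2}=O_{p}(1)$ — a consequence of Assumption \ref{ass:factor and loadings}(iv) since $\bz_{t}^{0}$ contains $\bw_{t}$ — so after $\hat{\bLambda}^{-1}$ it is $O_{p}(N^{-\alpha_{k}/2})$. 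In the first summand, $T^{-1}\bW^{\prime}\bF^{0}\CP\bSigma_{\bW\bF^{0}}$ by Assumption \ref{ass:Aug_errors}(iii), and substituting $\bB^{0\prime}\bDelta$ the only non-negligible structure is again $\bSigma_{\bW\bF^{0}}\sqrt{T}\,\bN^{1/2}\big(T^{-1}\bN^{-1/2}\bB^{0\prime}\bE^{\prime}\bE\bB^{0}\bN^{-1/2}\big)\bD^{-2}\bN^{-3/2}\to c_{1}\bSigma_{\bW\bF^{0}}\bG$, the $\bB^{0\prime}\bE^{\prime}\bF^{0}$ piece being mean zero and $o_{p}(1)$. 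No analog of the $\nu$-term appears here because there is no ``$\bDelta$ on both sides of $\bB^{0\prime}$'' structure — that arose in (i) precisely because $\hat{\bLambda}=\hat{\bB}^{\prime}\hat{\bB}$ is quadratic in $\hat{\bB}$ — and this is also the formal reason $T^{-1/2}\bW^{\prime}(\hat{\bF}-\bF^{0}\tilde{\bH})$ is the smaller of the two quantities.

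The hard part is the weak-factor bookkeeping. Every matrix here ($\bF^{0\prime}\bE$, $\bB^{0\prime}\bE^{\prime}$, $\bB^{0\prime}\bE^{\prime}\bE\bB^{0}$, $\bDelta$, $\hat{\bLambda}$) carries row- and column-specific rates indexed by $(\alpha_{1},\dots,\alpha_{r})$, so after conjugation by $\bN^{1/2}(\cdot)\bN^{-3/2}$ only the $\alpha_{1}$-rows paired with $\alpha_{r}$-columns survive in $\bG$ (this is Corollary \ref{cor1}), while the symmetric conjugation $\bN^{-1/2}(\cdot)\bN^{-1/2}$ survives only when all rates coincide, producing the factor $\nu$. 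One must therefore verify piece by piece — roughly a dozen pieces once the fixed-point expansion of $\bDelta$ is substituted — that nothing other than the two (respectively one) surviving structures contributes, under exactly $\alpha_{r}>\tfrac12$, $N^{1-\alpha_{r}}/\sqrt{T}\to0$, $\sqrt{T}N^{\alpha_{1}/2-3\alpha_{r}/2}\to c_{1}$. Two delicate points: (a) replacing $\hat{\bLambda}^{-1}$ by $\bLambda^{-1}=\bD^{-1}\bN^{-1}$ costs a ratio of diverging quantities that must be killed after the $\bN^{1/2}(\cdot)\bN^{-3/2}$ scaling, which is where $\alpha_{r}>\tfrac12$ is genuinely used; (b) one cannot be cavalier about which of $\tilde{\bH}$, $\tilde{\bH}_{b}$, $\tilde{\bQ}$ appears where, since in some subterms their deviations from $\bI_{r}$ sit exactly at the bias order — the cleanest route, taken above, is never to iterate \eqref{fhat-fh_hat} but to work entirely through $\bDelta$ and its fixed-point equation.
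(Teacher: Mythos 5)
Your proposal is correct and reaches the same two dominant structures as the paper, but by a different decomposition. The paper starts from the eigen-equation expansion of $\hat{\bF}-\bF^0\tilde{\bH}$ (its \eqref{fhat-fh_hat}), left-multiplies by $\hat{\bF}'$ (resp.\ $\bW'$), and then re-expands the inner $\hat{\bF}$'s via $\tilde{\bH}_b$, producing the terms $\bC_1,\dots,\bC_{33}$ (resp.\ $\bD_1,\dots,\bD_{33}$); the bias comes from $\bC_{22}+\bC_{32}$ in (i) and from $\bD_{32}$ in (ii). You instead route everything through the loading gap $\bDelta=\hat\bB-\bB^0\tilde\bQ'$ and, for (i), the exact identity $T^{-1/2}\hat\bF'(\hat\bF-\bF^0\tilde\bH)=\sqrt T(\hat\bLambda-\tilde\bQ\bLambda\tilde\bQ')\hat\bLambda^{-1}$ (valid because $\hat\bF'\hat\bF=T\bI_r$, $\bB^{0\prime}\bB^0=\bLambda$ and hence $\tilde\bH=\bLambda\tilde\bQ'\hat\bLambda^{-1}$), which reduces (i) to the quadratic expansion of $\hat\bLambda=\hat\bB'\hat\bB$ in $\bDelta$. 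Your $\tilde\bQ\bB^{0\prime}\bDelta$-term is the paper's $\bC_{32}$ (limit $c_1\bG$, scaling $\bN^{1/2}(\cdot)\bN^{-3/2}$), your $\bDelta'\bB^0\tilde\bQ'$-term is the paper's $\bC_{22}$ (limit $c_1\nu\bD^{-1}\bGamma\bD^{-1}$, scaling $\bN^{-1/2}(\cdot)\bN^{-1/2}$), and your observation that (ii) carries no $\nu$-term because $\bW'(\hat\bF-\bF^0\tilde\bH)$ is linear rather than quadratic in $\bDelta$ is precisely the structural reason the paper's $\bD_3$ contains only one surviving piece. The exponent bookkeeping also matches: only $\alpha_1$-rows against $\alpha_r$-columns survive in $\bG$; the symmetric conjugation survives only when all rates coincide; $\alpha_r>\tfrac12$ kills the $N^{\alpha_1/2-\alpha_r}$ cross terms (since $\alpha_1\le 1<2\alpha_r$) and the higher-order-in-$\bE$ pieces; $N^{1-\alpha_r}/\sqrt T\to0$ kills the $\bF^{0\prime}\bE\bE'\bF^0$ contribution from $\bDelta'\bDelta$. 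What your route buys is transparency: the exact identity in (i) and the single fixed-point equation for $\bDelta$ replace the paper's two-stage re-expansion and make the $\bG$ versus $\nu\bD^{-1}\bGamma\bD^{-1}$ asymmetry between (i) and (ii) visible at a glance. What it costs is that the $o_p(1)$ replacements ($\hat\bB\to\bB^0\tilde\bQ'$, $\hat\bLambda\to\bLambda$, $\tilde\bQ\to\bI_r$) inside already-scaled expressions still require essentially the same dozen order-of-magnitude checks that you only sketch, so the total workload is comparable to the paper's.
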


\begin{proof}[Proof of Lemma \ref{lem:bias_Hhat}] (i) Applying the expansion in \eqref{fhat-fh_hat}, we can rewrite
\begin{align*}
\frac{1}{\sqrt{T}} \hat{\bF} '(\hat{\bF}-\bF^0\tilde{\bH}) 
& =\frac{1}{\sqrt{T}} \hat{\bF}'\left( \frac{1}{T} \bE\bE'\hat{\bF}\hat{\bLambda}^{-1}+\frac{1}{T} \bE\bB^0
{\bF^0 }'\hat{\bF} \hat{\bLambda}^{-1} + \frac{1}{T}  \bF^0
{\bB^0 }'\bE'\hat{\bF} \hat{\bLambda}^{-1}\right)   \\
&=\bC_1+\bC_2+\bC_3.
\end{align*}
Consider the first term $\bC_1$, 
\begin{align*}
\left\| \bC_1 \right\|_{\F} 
& \le   \left\|   \frac{1}{T^{3/2}}(\hat{\bF}-\bF^0\tilde{\bH}_q)' \bE\bE'\hat{\bF} \hat{\bLambda}^{-1}\right\|_{\F} 
+ \left\|   \frac{1}{T^{3/2}}\tilde{\bH}_q' \bF^{0\prime} \bE\bE'\hat{\bF} \hat{\bLambda}^{-1}\right\|_{\F}\\
& = O_p \left( \left( \frac{N^{1-\frac{1}{2}\alpha_r}}{T}+N^{-\frac{1}{2}\alpha_r}  \right)(\sqrt{N}+\sqrt{T})N^{-\alpha_r}\right) 
+
O_p \left(   \frac{\sqrt{NT}}{T}  (\sqrt{N}+\sqrt{T})N^{-\alpha_r}\right) \\
&= O_p   \left( \frac{N^{1- \alpha_r}}{\sqrt{T}}+ \sqrt{T}N^{-\frac{3}{2}\alpha_r}+N^{\frac{1}{2}-\alpha_r}  \right) ,
\end{align*}
where we have used Lemma \ref{lem:Rotation}(ii) and Assumption \ref{ass:errors}(iv). The second term can be decomposed into
\begin{align*}
\bC_2 
&  =   \sqrt{T} \left[ \frac{1}{T}(\hat{\bF}-\bF^0\tilde{\bH}_b)' \bE\bB^0 \right] 
\frac{{\bF^0 }'\hat{\bF}}{T}  \hat{\bLambda}^{-1}+\frac{1}{\sqrt{T}}  \tilde{\bH}_b'   \bF^{0\prime}\bE\bB^0
\frac{{\bF^0 }'\hat{\bF}}{T}  \hat{\bLambda}^{-1}\\
& 
= \frac{1}{T^{3/2}} \hat{\bLambda}^{-1} {\hat{\bF}}' \bE\bE' \bE\bB^0 
\frac{{\bF^0 }'\hat{\bF}}{T} \hat{\bLambda}^{-1}  + \frac{1}{T^{3/2}}  \hat{\bLambda}^{-1}{\hat{\bF}}'\bF^0 {\bB^0}'{\bE}'\bE\bB^0 
\frac{{\bF^0 }'\hat{\bF}}{T} \hat{\bLambda}^{-1}   \\
&\quad \quad 
+\frac{1}{\sqrt{T}}  \tilde{\bH}'   \bF^{0\prime}\bE\bB^0
\frac{{\bF^0 }'\hat{\bF}}{T}  \hat{\bLambda}^{-1}
\\
& =\bC_{21}+\bC_{22}+\bC_{23}.
\end{align*}
Taking norms,
\begin{align*}
& \left\| \bC_{21} \right\|_{\F} 
=   O_p \left(  N^{-\alpha_r} \left( \frac{N}{T} + 1\right)\sqrt{T}N^{-\frac{1}{2}\alpha_r} \right) 
=
O_p \left(  \frac{N^{1-\alpha_r}}{\sqrt{T}}N^{-\frac{1}{2}\alpha_r} \right) 
+
O_p \left(  \sqrt{T}N^{ -\frac{3}{2} \alpha_r}  \right),\\
& \left\|\bC_{22}\right\|_{\F} 
= 
O_p \left(  \sqrt{T}N^{ - \alpha_r}  \right) ,\\
& \left\| \bC_{23} \right\|_{\F} 
=
O_p \left(   N^{-\frac{1}{2}\alpha_r} \right) ,
\end{align*}
since $\|\bE\bB^0\bN^{-\frac{1}{2}}\|_{\F}=O_p(\sqrt{T})$ under Assumption \ref{ass:factor and loadings}(ii).
Now considering
\begin{align*}
\bC_3  &   =
\frac{1}{\sqrt{T}} \frac{\hat{\bF}'\bF^0 }{T}
{\bB^0 }'\bE'(\hat{\bF}-\bF^0\tilde{\bH}_b ) \hat{\bLambda}^{-1} 
+
\frac{1}{\sqrt{T}}\frac{\hat{\bF}'\bF^0 }{T}
{\bB^0 }'\bE'\bF^0\tilde{\bH}_b \hat{\bLambda}^{-1} 
\\
&= 
\frac{1}{\sqrt{T}} \frac{\hat{\bF}'\bF^0 }{T}
{\bB^0 }'\bE'\left(\frac{1}{T} \bE\bE'\hat{\bF}\hat{\bLambda}^{-1} \right) \hat{\bLambda}^{-1}  
+
\frac{1}{\sqrt{T}} \frac{\hat{\bF}'\bF^0 }{T}
{\bB^0 }'\bE'\left( \frac{1}{T} \bE\bB^0
{\bF^0 }'\hat{\bF} \hat{\bLambda}^{-1} \right) \hat{\bLambda}^{-1} \\
& \quad\quad +
\frac{1}{\sqrt{T}}\frac{\hat{\bF}'\bF^0 }{T}
{\bB^0 }'\bE'\bF^0\tilde{\bH}_b \hat{\bLambda}^{-1} 
\\
&  = \bC_{31}+\bC_{32}+\bC_{33}.
\end{align*}
We analyze the order of the magnitude of each term in $\bC_3$ and obtain
\begin{align*}
\left\| \bC_{31} \right\|_{\F} 
&=  \frac{1}{\sqrt{T}}O_p \left(  \sqrt{TN^{\alpha_1}}\left( \frac{N}{T} + 1\right)\sqrt{T}N^{-2\alpha_r} \right) \\
& 
=
O_p \left(  \frac{N^{1-\alpha_r}}{\sqrt{T}}N^{\frac{1}{2}\alpha_1-\alpha_r} \right) 
+
O_p \left(  \sqrt{T}N^{ \frac{1}{2}\alpha_1-2\alpha_r}  \right) ,\\
\left\|\bC_{32}\right\|_{\F} 
&= 
O_p \left(  \sqrt{T}N^{\frac{1}{2}\alpha_1 -\frac{3}{2} \alpha_r}  \right) ,\\
\left\| \bC_{33} \right\|_{\F} 
&=  \frac{1}{\sqrt{T}}O_p \left(  \sqrt{TN^{\alpha_1}}  N^{- \alpha_r} \right) 
=
O_p \left(   N^{\frac{1}{2}\alpha_1-\alpha_r} \right) .
\end{align*}
\textbf{Case 1: $\alpha_1=\alpha_r=\alpha$.} Suppose that assumptions and conditions in Theorem \ref{thm:bias_Hhat} hold.  The terms $\bC_{22}$ and $\bC_{32}$ are the dominant terms of the same order $O_p( \sqrt{T}N^{  -\alpha_r})$ and the remaining terms are asymptotically negligible. Since $\sqrt{T}/N^{ \alpha} \to c_1 \in[0,\infty)$, $\nu=1$, and
\begin{align*}
& \bN \hat{\bLambda}^{-1} \left(\bN^{-\frac{1}{2}} \frac{\hat{\bF}' \bF^0}{T} \bN^{\frac{1}{2}}\right) \frac{\bN^{-\frac{1}{2}} \bB^{0\prime} \bE' \bE \bB^0 \bN^{-\frac{1}{2}}}{T}\left(\bN^{\frac{1}{2}} \frac{\bF^{0\prime} \hat{\bF}}{T} \bN^{-\frac{1}{2}}\right) \bN \hat{\bLambda}^{-1}
\CP \bD^{-1} \bGamma \bD^{-1}, \\
&\left(\bN^{-\frac{1}{2}} \frac{\hat{\bF}' \bF^0}{T} \bN^{\frac{1}{2}}\right) \frac{\bN^{-\frac{1}{2}} \bB^{0\prime} \bE' \bE \bB^0 \bN^{-\frac{1}{2}}}{T}\left(\bN^{\frac{1}{2}} \frac{\bF^{0\prime} \hat{\bF}}{T} \bN^{-\frac{1}{2}}\right) \bN^{2} \hat{\bLambda}^{-2}
\CP \bGamma \bD^{-2},
\end{align*}
where we have used $\bN \hat{\bLambda}^{-1} \CP \bN  {\bLambda}^{-1}=\bD^{-1}$ by \citep[Lemma B.3(iv)]{jiang2023revisiting} and Assumption \ref{ass:factor and loadings}(v), we can compute the probability limit of the dominants as follows:
\begin{align*}
\bC_{22}
& =\sqrt{T} \bN^{-\frac{1}{2}} \hat{\bLambda}^{-1} \bN\left(\bN^{-\frac{1}{2}} \frac{\hat{\bF}' \bF^0}{T} \bN^{\frac{1}{2}}\right) \frac{\bN^{-\frac{1}{2}} \bB^{0\prime} \bE' \bE \bB^0 \bN^{-\frac{1}{2}}}{T}\left(\bN^{\frac{1}{2}} \frac{\bF^{0\prime} \hat{\bF}}{T} \bN^{-\frac{1}{2}}\right) \bN \hat{\bLambda}^{-1}\bN^{-\frac{1}{2}} \\
&\CP  c_1   \bD^{-1}\bGamma  \bD^{-1},\\
\bC_{32}
& =\sqrt{T} \bN^{\frac{1}{2}}\left(\bN^{-\frac{1}{2}} \frac{\hat{\bF}' \bF^0}{T} \bN^{\frac{1}{2}}\right) \frac{\bN^{-\frac{1}{2}} \bB^{0\prime} \bE' \bE \bB^0 \bN^{-\frac{1}{2}}}{T}\left(\bN^{\frac{1}{2}} \frac{\bF^{0\prime} \hat{\bF}}{T} \bN^{-\frac{1}{2}}\right) \bN^{2} \hat{\bLambda}^{-2}\bN^{-\frac{3}{2}}\\
&\CP  c_1 \bG,
\end{align*}
where 
$c_1  \bD^{-1}\bGamma  \bD^{-1}= \lim\limits_{N, T \rightarrow \infty}\sqrt{T}\bN^{-\frac{1}{2}} \bD^{-1}\bGamma  \bD^{-1}\bN^{-\frac{1}{2}} $ and $c_1 \bG = \lim\limits_{N, T \rightarrow \infty} \sqrt{T}\bN^{\frac{1}{2}} \bGamma  \bD^{-2} \bN^{-\frac{3}{2}} =c_1 \bGamma  \bD^{-2}$. Thus, 
\[
\frac{1}{\sqrt{T}}  \hat{\bF}^{\prime}(\hat{\bF}-\bF^0\tilde{\bH})   =c_1   (\bG+ \bD^{-1}\bGamma  \bD^{-1})+o_p\left( 1 \right).
\]
\textbf{Case 2: $\alpha_1>\alpha_r$.} The term $\bC_{22}$ is no longer larger than $\bC_{32}$. 
If $\sqrt{T}N^{\frac{1}{2}\alpha_1-\frac{3}{2}\alpha_r} \to c_1 \in[0,\infty)$, we have
\begin{align*}
& \bC_{22} \CP  \mathbf{0},  \quad \bC_{32} \CP  c_1  \bG,
\end{align*}
where $\bG$ is defined by $\bG = (\be_{\alpha_1} \be_{\alpha_r}')
\odot\bGamma  \bD^{-2}$ in Corollary \ref{cor1}. 
Thus, 
\[
\frac{1}{\sqrt{T}}  \hat{\bF}^{\prime}(\hat{\bF}-\bF^0\tilde{\bH})   =c_1 \bG+o_p\left( 1 \right).
\]
(ii)  Applying the expansion in \eqref{fhat-fh_hat}, we can rewrite
\begin{align*}
\frac{1}{\sqrt{T}} \bW '(\hat{\bF}-\bF^0\tilde{\bH}) 
& =\frac{1}{\sqrt{T}}  \bW '\left( \frac{1}{T} \bE\bE'\hat{\bF}\hat{\bLambda}^{-1}+\frac{1}{T} \bE\bB^0
{\bF^0 }'\hat{\bF} \hat{\bLambda}^{-1} + \frac{1}{T}  \bF^0
{\bB^0 }'\bE'\hat{\bF} \hat{\bLambda}^{-1}\right)   \\
&=\bD_1+\bD_2+\bD_3.
\end{align*}
The first term is bounded by
\begin{align*}
\left\| \bD_1 \right\|_{\F} 
& \le   \left\|   \frac{1}{T^{3/2}}\bW' \bE\bE'(\hat{\bF}-\bF^0\tilde{\bH}_q) \hat{\bLambda}^{-1}\right\|_{\F} 
+ \left\|   \frac{1}{T^{3/2}} \bW'\bE\bE'\bF^{0}\tilde{\bH}_q \hat{\bLambda}^{-1}\right\|_{\F}\\
& = O_p \left( \sqrt{N}\left( \frac{N^{1-\frac{1}{2}\alpha_r}}{T}+N^{-\frac{1}{2}\alpha_r}  \right) N^{-\alpha_r}\right) 
+
O_p \left(   \frac{NT}{T^{3/2}} N^{-\alpha_r}\right) \\
&= O_p   \left( N^{\frac{1}{2}-\frac{3}{2}\alpha_r}   \right) 
+ O_p   \left(  \frac{N^{1- \alpha_r}}{\sqrt{T}} \right) ,
\end{align*}
by using Lemma \ref{lem:Rotation}(ii) and Assumption \ref{ass:factor and loadings}(iii). For the second term, 
\begin{align*}
& \left\| \bD_2 \right\|_{\F} 
=\left\| \frac{1}{\sqrt{T}}   \bW \bE\bB^0\bN^{-\frac{1}{2}}
\left(\bN^{\frac{1}{2}} \frac{{\bF^0 }'\hat{\bF}}{T} \bN^{-\frac{1}{2}}\right) \bN^{\frac{1}{2}}\hat{\bLambda}^{-1}\right\|_{\F}
= O_p \left(   N^{-\frac{1}{2} \alpha_r}  \right),
\end{align*}
by using Assumption \ref{ass:factor and loadings}(iv). The third term can be decomposed into
\begin{align*}
\bD_3  &   =
\frac{1}{\sqrt{T}} \frac{\bW'\bF^0 }{T}
{\bB^0 }'\bE'(\hat{\bF}-\bF^0\tilde{\bH}_b ) \hat{\bLambda}^{-1} 
+
\frac{1}{\sqrt{T}}\frac{\bW'\bF^0 }{T}
{\bB^0 }'\bE'\bF^0\tilde{\bH}_b \hat{\bLambda}^{-1} 
\\
&= 
\frac{1}{\sqrt{T}} \frac{\bW'\bF^0 }{T}
{\bB^0 }'\bE'\left(\frac{1}{T} \bE\bE'\hat{\bF}\hat{\bLambda}^{-1} \right) \hat{\bLambda}^{-1}  
+
\frac{1}{\sqrt{T}}\frac{\bW'\bF^0 }{T}
{\bB^0 }'\bE'\left( \frac{1}{T} \bE\bB^0
{\bF^0 }'\hat{\bF} \hat{\bLambda}^{-1} \right) \hat{\bLambda}^{-1} \\
& \quad\quad +
\frac{1}{\sqrt{T}}\frac{\bW'\bF^0 }{T}
{\bB^0 }'\bE'\bF^0\tilde{\bH}_b \hat{\bLambda}^{-1} 
\\
&  = \bD_{31}+\bD_{32}+\bD_{33}.
\end{align*}
Following the same arguments as for $\bC_{31}$ and $\bC_{33}$, we have
\begin{align*}
\left\| \bD_{31} \right\|_{\F} 
& =  \frac{1}{\sqrt{T}}O_p \left(  \sqrt{TN^{\alpha_1}}\left( \frac{N}{T} + 1\right)\sqrt{T}N^{-2\alpha_r} \right) 
\\
&=
O_p \left(  \frac{N^{1-\alpha_r}}{\sqrt{T}}N^{\frac{1}{2}\alpha_1-\alpha_r} \right) 
+
O_p \left(  \sqrt{T}N^{ \frac{1}{2}\alpha_1-2\alpha_r}  \right) ,\\
\left\| \bD_{33} \right\|_{\F} 
&=  \frac{1}{\sqrt{T}}O_p \left(  \sqrt{TN^{\alpha_1}}  N^{- \alpha_r} \right) 
=
O_p \left(   N^{\frac{1}{2}\alpha_1-\alpha_r} \right) .
\end{align*}
Since $\sqrt{T}N^{\frac{1}{2}\alpha_1-\frac{3}{2}\alpha_r} \to c_1 \in[0,\infty)$ and
\[
\frac{\bN^{-\frac{1}{2}} \bB^{0\prime} \bE' \bE \bB^0 \bN^{\frac{1}{2}}}{T}\left(\bN^{\frac{1}{2}} \frac{\bF^{0\prime} \hat{\bF}}{T} \bN^{-\frac{1}{2}}\right) \bN^{2} \hat{\bLambda}^{-2}
\CP \bGamma \bD^{-2},
\]
we calculate the probability limit of the dominant term:
\begin{align*}
\bD_{32}
& =\sqrt{T} \left(  \frac{\bW' \bF^0}{T}  \right)\bN^{\frac{1}{2}} \frac{\bN^{-\frac{1}{2}} \bB^{0\prime} \bE' \bE \bB^0 \bN^{-\frac{1}{2}}}{T}\left(\bN^{\frac{1}{2}} \frac{\bF^{0\prime} \hat{\bF}}{T} \bN^{-\frac{1}{2}}\right) \bN^{2} \hat{\bLambda}^{-2}\bN^{-\frac{3}{2}}\\
& \CP  c_1  \bSigma_{\bW \bF^0}\bG,
\end{align*}
where $\bG$ is defined by $\bG = (\be_{\alpha_1} \be_{\alpha_r}')
\odot\bGamma  \bD^{-2}$ in Corollary \ref{cor1}.
\end{proof}

\begin{proof}[Proof of Theorem \ref{thm:bias_Hhat}] Replacing $\hat{\bF}$ with ${\bF}^0$, we have
\begin{align*}
\frac{1}{T}  \hat\bZ'\hat\bZ   & =
\left( 
\begin{array}{cc}
\frac{1}{T} \hat\bF'\hat\bF  &\frac{1}{T} \hat\bF'\bW \\
\frac{1}{T} \bW'\hat\bF & \frac{1}{T} \bW'\bW
\end{array}
\right) \\
& =
\left( 
\begin{array}{cc}
\frac{1}{T}   \bF^{0\prime}\bF^0  &\frac{1}{T}   \bF^{0\prime}\bW \\
\frac{1}{T} \bW'\bF^0 & \frac{1}{T} \bW'\bW
\end{array}
\right)
+
\left( 
\begin{array}{cc}
\frac{1}{T} \hat{\bF}'(\hat\bF-\bF^0 )+\frac{1}{T} (\hat\bF-\bF^0 )'\bF^0  &\frac{1}{T} (\hat\bF-\bF^0 )'\bW \\
\frac{1}{T} \bW'(\hat\bF-\bF^0 ) &  \mathbf{0}
\end{array}
\right)\\
&=
\frac{1}{T}\bZ^{0\prime}\bZ^0 
+
O_p\left(\Delta_{NT}\right)+O_p\left(\frac{1}{\sqrt{T}}\right)
\end{align*}
in probability because Lemmas \ref{lem:Rotation}(v)(vi) and \ref{lem:bias_Hhat}(ii) imply
\begin{align*}
& \left\| \frac{1}{T} \hat{\bF}'\bF^0-\bI_r\right\|_{\F}=O_p\left( \Delta_{NT} \right), \quad 
\left\|\frac{1}{T} \bW'(\hat\bF-\bF^0)\right\|_{\F}=O_p\left( \Delta_{NT} \right)+O_p\left(\frac{1}{\sqrt{T}}\right).
\end{align*}
Then, we have
$ \frac{1}{T}  \hat\bZ'\hat\bZ 
\CP 
\bSigma_{\bZ^0 \bZ^0} $ by Assumption \ref{ass:Aug_errors} .

Next, we address that $\hat{\bH}$ has the same probability limit as $\bH$ because of the approximation
$  \hat{\bH}=\bH\tilde{\bH}=\bH +\bH (\tilde{\bH}-\bI_r)=\bH +O_p(\Delta_{NT})$ given by Lemma \ref{lem:Rotation}(v)(vi) and $  \Delta_{NT} $ is negligible under the conditions in Theorem \ref{thm:bias_Hhat}.  Combining with Lemmas \ref{lem:F*epsilon} and \ref{lem:bias_Hhat},
\begin{align*}
& \sqrt{T}(\hat\bdelta - \bdelta_{\hat{\bH}})  \\
&= (T^{-1}\hat\bZ'\hat\bZ)^{-1} T^{-\frac{1}{2}}\hat\bZ'\bepsilon
- (T^{-1}\hat\bZ'\hat\bZ)^{-1} T^{-\frac{1}{2}}\hat\bZ'(\hat\bF-{\bF}^0\tilde{\bH})\tilde{\bH}^{-1}\bgamma^0  \\
&  = (T^{-1}\hat\bZ'\hat\bZ)^{-1} T^{-\frac{1}{2}} \binom{ \tilde{\bH}'\bF^{0\prime }}{\bW} \bepsilon
+(T^{-1}\hat\bZ'\hat\bZ)^{-1} T^{-\frac{1}{2}} \binom{\hat{\bF}'-\tilde{\bH}'\bF^{0\prime }}{\mathbf{0}} \bepsilon \\
& 
\quad \quad - (T^{-1}\hat\bZ'\hat\bZ)^{-1} T^{-\frac{1}{2}}\hat\bZ'(\hat\bF-{\bF}^0\tilde{\bH})\tilde{\bH}^{-1}\bgamma^0 \\
&  = (T^{-1} \bZ^{0\prime} \bZ^0)^{-1} T^{-\frac{1}{2}} \bZ^{0\prime}\bepsilon
-(T^{-1} \bZ^{0\prime} \bZ^0)^{-1} T^{-\frac{1}{2}}\hat\bZ'(\hat\bF-{\bF}^0\tilde{\bH})\tilde{\bH}^{-1}\bgamma^0  \\
& 
\quad \quad +
O_p \left(\sqrt{T} N^{-\frac{3}{2}\alpha_r}\right) +O_p\left( \frac{N^{1-\alpha_r}}{\sqrt{T}} \right)+O_p\left( N^{\frac{1}{2}- \alpha_r} \right)\\
& \CD
N\left(-c_1 \bkappa_{\bdelta^*}, \bSigma_{\bdelta}\right),
\end{align*}
where $\bSigma_{\bdelta}= \bSigma_{\bZ^0 \bZ^0}^{-1} \bSigma_{\bZ^0 \bepsilon} \bSigma_{\bZ^0 \bZ^0}^{-1}$ and
\begin{align*}
\bkappa_{\bdelta^*} = \bSigma_{\bZ^0 \bZ^0}  ^{-1}\binom{\bG+\nu \bD^{-1}\bGamma\bD^{-1}}{\bSigma_{\bW \bF^0}  {\bG}}\bH_0^{-1} \bgamma^*. 
\end{align*} 
\end{proof}

\begin{proof}[Proof of Corollary \ref{cor1}] The proof follows directly from the arguments in Theorem \ref{thm:bias_Hhat} and Section \ref{subsec:Hhat}, and is therefore omitted here.
\end{proof}

\begin{lem} 
\label{lem:bias_Hhat3}
Suppose Assumptions \ref{ass:eigen}--\ref{ass:Aug_errors} hold. If $\alpha_r>\frac{1}{2}$, $\frac{N^{1-\alpha_r}}{\sqrt{T}} \to  0$, and $\sqrt{T}N^{-\alpha_r} \to c_2 \in [0,\infty)$, as $N, T \to \infty$, we have
\begin{align*}
&   \frac{1}{\sqrt{T}}  \bW'(\hat{\bF}-\bF^0\tilde{\bH}_q)   = c_2  \bSigma_{\bW \bF^0} \bar{\bG}+o_p\left( 1 \right),
\end{align*}
where $c_2 \bar{\bG} = \lim_{N, T \rightarrow \infty}\sqrt{T}\bN^{-\frac{1}{2}} \bD^{-1}\bGamma  \bD^{-1}\bN^{-\frac{1}{2}} $. 
If $\alpha_1=\alpha_r$, then $c_1=c_2$ and $c_2\bar{\bG}=c_2\bD^{-1}\bGamma  \bD^{-1}$. 
If $\alpha_1 > \alpha_r$, then   
$c_2\bar{\bG} = c_2(\be_{\alpha_r} \be_{\alpha_r}')
\odot\bD^{-1}\bGamma  \bD^{-1}$ .

\end{lem}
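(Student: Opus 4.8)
The plan is to run the argument in parallel with the proof of Lemma \ref{lem:bias_Hhat}(ii), but starting from the expansion \eqref{fhat-fh_q} of $\hat{\bF}-\bF^0\tilde{\bH}_q$ instead of \eqref{fhat-fh_hat}. Left-multiplying \eqref{fhat-fh_q} by $T^{-1/2}\bW'$ produces three terms,
\[
\frac{1}{\sqrt{T}}\bW'(\hat{\bF}-\bF^0\tilde{\bH}_q)
=\frac{1}{T^{3/2}}\bW'\bE\bE'\hat{\bF}\hat{\bLambda}^{-1}
+\frac{1}{T^{3/2}}\bW'\bE\bB^0\bF^{0\prime}\hat{\bF}\hat{\bLambda}^{-1}
-\frac{1}{\sqrt{T}}\bW'\bF^0\Big(\frac{1}{T}\hat{\bF}'\bF^0\Big)^{-1}\frac{1}{T}\hat{\bF}'\bE\hat{\bB}(\hat{\bB}'\hat{\bB})^{-1},
\]
and the first two of these are literally the terms $\bD_1$ and $\bD_2$ already bounded in the proof of Lemma \ref{lem:bias_Hhat}(ii); under $\alpha_r>1/2$ and $N^{1-\alpha_r}/\sqrt{T}\to0$ (which, together with $\sqrt{T}N^{-\alpha_r}\to c_2$, forces $T\asymp N^{2\alpha_r}$ when $c_2>0$) they are $o_p(1)$. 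So the entire content sits in the third term, call it $\bD_3$.

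For $\bD_3$, since $T^{-1}\bW'\bF^0\CP\bSigma_{\bW\bF^0}$ by Assumption \ref{ass:Aug_errors} and $(T^{-1}\hat{\bF}'\bF^0)^{-1}=\tilde{\bH}_q\CP\bI_r$, we have $\bD_3=\bSigma_{\bW\bF^0}\big(\sqrt{T}\,T^{-1}\hat{\bF}'\bE\hat{\bB}(\hat{\bB}'\hat{\bB})^{-1}\big)+o_p(1)$, and the task reduces to upgrading the bound \eqref{FEB} for $T^{-1}\hat{\bF}'\bE\hat{\bB}(\hat{\bB}'\hat{\bB})^{-1}$, whose slowest-vanishing piece is $O_p(N^{-\alpha_r})$, to an exact probability limit. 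The key algebraic input is the identity, obtained from $\tilde{\bH}_b=\tilde{\bH}+T^{-1}\bB^{0\prime}\bE'\hat{\bF}\hat{\bLambda}^{-1}$ combined with \eqref{fhat-fh_hat} (the two $\bF^0\bB^{0\prime}\bE'$ pieces cancel), that
\[
\hat{\bF}-\bF^0\tilde{\bH}_b=\frac{1}{T}\bE\bE'\hat{\bF}\hat{\bLambda}^{-1}+\frac{1}{T}\bE\bB^0\bF^{0\prime}\hat{\bF}\hat{\bLambda}^{-1}.
\]
Feeding this, together with $\hat{\bB}=T^{-1}\bB^0\bF^{0\prime}\hat{\bF}+T^{-1}\bE'\hat{\bF}$, into the decomposition $T^{-1}\hat{\bF}'\bE\hat{\bB}(\hat{\bB}'\hat{\bB})^{-1}=\bB_1+\bB_2+\bB_3$ from the proof of Lemma \ref{lem:F*epsilon}(ii), I would check via Lemma \ref{lem:Rotation}(ii),(iv), Assumption \ref{ass:factor and loadings}(iv) and the rate conditions that $\bB_1$, $\bB_3$ and the subdominant part of $\bB_2$ are all $o_p(N^{-\alpha_r})$, so that the single surviving contribution is $\hat{\bLambda}^{-1}\big(T^{-1}\hat{\bF}'\bF^0\big)\big(T^{-1}\bB^{0\prime}\bE'\bE\bB^0\big)\tilde{\bQ}'\hat{\bLambda}^{-1}$.

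It then remains to take the limit of this surviving term: inserting $\bN^{\pm 1/2}$ factors and using $\bN\hat{\bLambda}^{-1}\CP\bD^{-1}$ (\citep[Lemma B.3]{jiang2023revisiting}), $\bN^{-1/2}(T^{-1}\hat{\bF}'\bF^0)\bN^{1/2}=O_p(1)$ with $T^{-1}\hat{\bF}'\bF^0\CP\bI_r$ (Lemma \ref{lem:Rotation}(vi)), and $\bN^{-1/2}(T^{-1}\bB^{0\prime}\bE'\bE\bB^0)\bN^{-1/2}\CP\bGamma$ (Assumption \ref{ass:factor and loadings}(v)), the term collapses to $\bN^{-1/2}\bD^{-1}\bGamma\bD^{-1}\bN^{-1/2}+o_p(N^{-\alpha_r})$; hence $\bD_3\CP\bSigma_{\bW\bF^0}\big(\sqrt{T}\bN^{-1/2}\bD^{-1}\bGamma\bD^{-1}\bN^{-1/2}\big)\to c_2\bSigma_{\bW\bF^0}\bar{\bG}$, which yields the asserted limit. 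For the two displayed forms of $c_2\bar{\bG}$ I would simply inspect the $(j,k)$ entry of $\sqrt{T}\bN^{-1/2}\bD^{-1}\bGamma\bD^{-1}\bN^{-1/2}$, which under $\sqrt{T}N^{-\alpha_r}\to c_2$ is of exact order $\sqrt{T}N^{-(\alpha_j+\alpha_k)/2}\asymp N^{\alpha_r-(\alpha_j+\alpha_k)/2}$; since $\alpha_j,\alpha_k\ge\alpha_r$, this has a finite nonzero limit iff $\alpha_j=\alpha_k=\alpha_r$, which gives the full matrix $c_2\bD^{-1}\bGamma\bD^{-1}$ when $\alpha_1=\alpha_r$ and the Hadamard product $c_2(\be_{\alpha_r}\be_{\alpha_r}')\odot\bD^{-1}\bGamma\bD^{-1}$ when $\alpha_1>\alpha_r$.

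The hard part is precisely this upgrade of \eqref{FEB} from an order bound to a limit: establishing the cancellation identity for $\hat{\bF}-\bF^0\tilde{\bH}_b$; verifying that every sub-term other than $\hat{\bLambda}^{-1}(T^{-1}\hat{\bF}'\bF^0)(T^{-1}\bB^{0\prime}\bE'\bE\bB^0)\tilde{\bQ}'\hat{\bLambda}^{-1}$ is strictly of smaller order than $N^{-\alpha_r}$ under $\alpha_r>1/2$ and $N^{1-\alpha_r}/\sqrt{T}\to0$; and tracking the non-commuting diagonal $\bN$-scalings carefully so that the surviving term limits to exactly $\bN^{-1/2}\bD^{-1}\bGamma\bD^{-1}\bN^{-1/2}$ rather than to a differently ordered product. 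Everything else (the negligibility of $\bD_1,\bD_2$ and the entrywise order comparisons behind the $\be_{\alpha_r}\be_{\alpha_r}'$ structure) is routine bookkeeping already present in Lemmas \ref{lem:Rotation}--\ref{lem:bias_Hhat}.
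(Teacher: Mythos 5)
Your proposal is correct and follows essentially the same route as the paper's proof: the same expansion \eqref{fhat-fh_q} with the first two terms handled by the $\bD_1,\bD_2$ bounds from Lemma \ref{lem:bias_Hhat}(ii), the same decomposition of the third term via the cancellation identity $\hat{\bF}-\bF^0\tilde{\bH}_b=T^{-1}\bE\bE'\hat{\bF}\hat{\bLambda}^{-1}+T^{-1}\bE\bB^0\bF^{0\prime}\hat{\bF}\hat{\bLambda}^{-1}$, and the same identification of $\hat{\bLambda}^{-1}\tilde{\bQ}\,(T^{-1}\bB^{0\prime}\bE'\bE\bB^0)\,\tilde{\bQ}'\hat{\bLambda}^{-1}$ (the paper's $\bar{\bD}_{32}$) as the unique surviving term whose $\bN$-rescaled limit gives $c_2\bSigma_{\bW\bF^0}\bar{\bG}$.
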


\begin{proof}[Proof of Lemma \ref{lem:bias_Hhat3}] Using the expansion \eqref{fhat-fh_q}, we can write
\begin{align}
&   \frac{1}{\sqrt{T}} \bW '(\hat{\bF}-\bF^0\tilde{\bH}_q) \\
& =\frac{1}{\sqrt{T}}  \bW '\left[  \frac{1}{T} \bE\bE'\hat{\bF}\hat{\bLambda}^{-1}+\frac{1}{T} \bE\bB^0
{\bF^0 }'\hat{\bF}   \hat{\bLambda}^{-1} - \bF^0  \left( \frac{1}{T}\hat{\bF}' {\bF^0}\right)^{-1}\frac{1}{T} \hat{\bF}' \bE{\hat{\bB}}\left({\hat{\bB}}' \hat{\bB}\right)^{-1} \right].  \label{eq:WF_Hhat3}
\end{align}
The upper bounds of the first and the second terms on the right-hand side of \eqref{eq:WF_Hhat3} are given by $\bD_1$ and $\bD_2$ in Lemma \ref{lem:bias_Hhat}. The third term is rewritten as
\begin{align}
&  {\sqrt{T}} \frac{\bW'\bF^0 }{T}
\left( \frac{1}{T}\hat{\bF}' {\bF^0}\right)^{-1}\frac{1}{T} \hat{\bF}' \bE{\hat{\bB}}\left({\hat{\bB}}' \hat{\bB}\right)^{-1}
\\
&= 
{\sqrt{T}} \frac{\bW'\bF^0 }{T}
\left( \frac{1}{T}\hat{\bF}' {\bF^0}\right)^{-1}\frac{1}{T} (\hat{\bF}-\bF^0\tilde{\bH}_b)' \bE(\hat{\bB}-\bB^0\tilde{\bQ}')\left({\hat{\bB}}' \hat{\bB}\right)^{-1}\\
& \quad\quad +
{\sqrt{T}} \frac{\bW'\bF^0 }{T}
\left( \frac{1}{T}\hat{\bF}' {\bF^0}\right)^{-1}\frac{1}{T} (\hat{\bF}-\bF^0\tilde{\bH}_b)' \bE \bB^0\tilde{\bQ}' \left({\hat{\bB}}' \hat{\bB}\right)^{-1}\\
& \quad\quad +
{\sqrt{T}} \frac{\bW'\bF^0 }{T}
\left( \frac{1}{T}\hat{\bF}' {\bF^0}\right)^{-1}\frac{1}{T} \tilde{\bH}_b'\bF^{0\prime} \bE \hat{\bB}\left({\hat{\bB}}' \hat{\bB}\right)^{-1}
\\
&= 
{\sqrt{T}} \frac{\bW'\bF^0 }{T}
\left( \frac{1}{T}\hat{\bF}' {\bF^0}\right)^{-1}\frac{1}{T} (\hat{\bF}-\bF^0\tilde{\bH}_b)' \bE(\hat{\bB}-\bB^0\tilde{\bQ}')\left({\hat{\bB}}' \hat{\bB}\right)^{-1}\\
& \quad \quad + 
{\sqrt{T}} \frac{\bW'\bF^0 }{T}
\left( \frac{1}{T}\hat{\bF}' {\bF^0}\right)^{-1} \hat{\bLambda}^{-1} \left(  \frac{1}{T} \hat{\bF}'{\bF^0 } \right) \frac{\bB^{0\prime}\bE' \bE \bB^0}{T} \tilde{\bQ}' \left({\hat{\bB}}' \hat{\bB}\right)^{-1 }\\
&\quad \quad
+
{\sqrt{T}} \frac{\bW'\bF^0 }{T}
\left( \frac{1}{T}\hat{\bF}' {\bF^0}\right)^{-1} \hat{\bLambda}^{-1}  \frac{1}{T} \left( \frac{1}{T} \hat{\bF}'\bE'\bE \right) \bE \bB^0\tilde{\bQ}' \left({\hat{\bB}}' \hat{\bB}\right)^{-1}\\
& \quad\quad +
{\sqrt{T}} \frac{\bW'\bF^0 }{T}
\left( \frac{1}{T}\hat{\bF}' {\bF^0}\right)^{-1}\frac{1}{T} \tilde{\bH}_b'\bF^{0\prime} \bE \hat{\bB}\left({\hat{\bB}}' \hat{\bB}\right)^{-1}
\\
&  = \bar{\bD}_{31}+ \bar{\bD}_{32}+ \bar{\bD}_{33}+ \bar{\bD}_{34}.\label{def of D}
\end{align}
Taking norms, 
\begin{align*}
\left\|  \bar{\bD}_{31} \right\|_{\F} 
& =   {\sqrt{T}}O_p \left( \frac{N^{1-\frac{1}{2}\alpha_r}}{T} +N^{- \frac{1}{2}\alpha_r}\right) \sqrt{N} O_p \left( \frac{1}{\sqrt{T}}+ N^{-\frac{1}{2}-\frac{1}{2}\alpha_r} \right) O_p \left(N^{-\alpha_r}\right)\\ 
&=
O_p \left(  \frac{N^{\frac{3}{2}-\frac{3}{2}\alpha_r}}{T^{\frac{3}{4}}}T^{-\frac{1}{4} } \right) 
+
O_p \left(   N^{ \frac{1}{2} -\frac{3}{2}\alpha_r}  \right)
+
O_p \left(  \sqrt{T} N^{  -2\alpha_r}  \right),
\end{align*}
\[
\left\|  \bar{\bD}_{33} \right\|_{\F}=
{\sqrt{T}} O_p\left( N^{-\alpha_r}\frac{1}{\sqrt{T}} \left(\frac{N}{T}+1\right)\sqrt{T}N^{-\frac{1}{2}\alpha_r} \right)
=
O_p\left( \frac{N^{1-\frac{3}{2}\alpha_r}}{\sqrt{T}} \right)
+
O_p\left( \sqrt{T} N^{-\frac{3}{2}\alpha_r}\right),
\]
\begin{align*}
\left\|  \bar{\bD}_{34} \right\|_{\F} 
& \le \left\| {\sqrt{T}} \frac{\bW'\bF^0 }{T}
\left( \frac{1}{T}\hat{\bF}' {\bF^0}\right)^{-1}\frac{1}{T} \tilde{\bH}_b'\bF^{0\prime} \bE (\hat{\bB}-\bB^0\tilde{\bQ}')\left(\hat{\bB}' \hat{\bB}\right)^{-1}\right\|_{\F}\\
&\quad \quad  + \left\| {\sqrt{T}} \frac{\bW'\bF^0 }{T}
\left( \frac{1}{T}\hat{\bF}' {\bF^0}\right)^{-1}\frac{1}{T} \tilde{\bH}_b'\bF^{0\prime} \bE \bB^0\tilde{\bQ}'\left(\hat{\bB}' \hat{\bB}\right)^{-1}\right\|_{\F}\\
& = 
{\sqrt{T}}O_p \left(  \frac{\sqrt{NT}}{T} \left( \sqrt{\frac{N}{T}}+ N^{- \frac{1}{2}\alpha_r}\right)N^{-\alpha_r} \right) 
+
{\sqrt{T}}O_p \left(  \frac{1}{\sqrt{T} }  N^{- \frac{1}{2}\alpha_r} \right) 
\\
&= O_p \left(   \frac{N^{1-\alpha_r}}{\sqrt{T}}  \right) 
+O_p \left(  N^{\frac{1}{2}-\frac{3}{2}\alpha_r}  \right) 
+O_p \left(    N^{ -\frac{1}{2}\alpha_r} \right) ,
\end{align*}
which are asymptotically negligible. Assuming that $\sqrt{T}N^{-\alpha_r} \to c_2$, the dominant term $\bar{\bD}_{32}$ has the probability limit:
\begin{align*}
& {\sqrt{T}} \frac{\bW'\bF^0 }{T}
\left( \frac{1}{T}\hat{\bF}' {\bF^0}\right)^{-1}  \hat{\bLambda}^{-1} \left( \frac{1}{T}  \hat{\bF}'{\bF^0 } \right) \bN^{\frac{1}{2}} \frac{\bN^{-\frac{1}{2}} \bB^{0\prime} \bE' \bE \bB^0 \bN^{\frac{1}{2}}}{T} \bN^{\frac{1}{2}} \tilde{\bQ}' \left({\hat{\bB}}' \hat{\bB}\right)^{-1}\\
&=\sqrt{T} \left(  \frac{\bW' \bF^0}{T}  \right)     \tilde{\bQ}^{-1}  \hat{\bLambda}^{-1} \bN^{\frac{1}{2}}  \left(\bN^{-\frac{1}{2}} \tilde{\bQ}\bN^{ \frac{1}{2}}\right)    \frac{\bN^{-\frac{1}{2}} \bB^{0\prime} \bE' \bE \bB^0\bN^{-\frac{1}{2}}}{T} \bN^{\frac{1}{2}} \tilde{\bQ}'\bN^{-\frac{1}{2}}  \bN  \hat{\bLambda}^{-1} \bN^{-\frac{1}{2}}\\
&  \CP  c_2  \bSigma_{\bW \bF^0}   \bar{\bG}.
\end{align*}
\end{proof}

\begin{proof}[Proof of Theorem \ref{thm:bias_H3}]
The definition of $\tilde{\bH}_{q}$ implies
\begin{align*}
&\hat\bF'(\hat\bF-{\bF}^0\tilde{\bH}_{q})\tilde{\bH}_{q}^{-1}\bgamma^0= \mathbf{0}.
\end{align*}
Combining with Lemmas \ref{lem:F*epsilon} and \ref{lem:bias_Hhat3},
\begin{align}
& \sqrt{T}(\hat\bdelta - \bdelta_{\hat{\bH}_q})  \\
&= (T^{-1}\hat\bZ'\hat\bZ)^{-1} T^{-\frac{1}{2}}\hat\bZ'\bepsilon
- (T^{-1}\hat\bZ'\hat\bZ)^{-1} T^{-\frac{1}{2}}\hat\bZ'(\hat\bF-{\bF}^0\tilde{\bH}_q)\tilde{\bH}_q^{-1}\bgamma^0  \\
&  = (T^{-1}\hat\bZ'\hat\bZ)^{-1} T^{-\frac{1}{2}} \binom{ \tilde{\bH}_q'\bF^{0\prime }}{\bW} \bepsilon
+(T^{-1}\hat\bZ'\hat\bZ)^{-1} T^{-\frac{1}{2}} \binom{\hat{\bF}'-\tilde{\bH}_q'\bF^{0\prime }}{\mathbf{0}} \bepsilon \\
& 
\quad \quad - (T^{-1}\hat\bZ'\hat\bZ)^{-1} T^{-\frac{1}{2}}\hat\bZ'(\hat\bF-{\bF}^0\tilde{\bH}_q)\tilde{\bH}_q^{-1}\bgamma^0 \\
&  = (T^{-1} \bZ^{0\prime} \bZ^0)^{-1} T^{-\frac{1}{2}} \bZ^{0\prime}\bepsilon
- (T^{-1} \bZ^{0\prime} \bZ^0)^{-1} T^{-\frac{1}{2}}\hat\bZ'(\hat\bF-{\bF}^0\tilde{\bH}_q)\tilde{\bH}_q^{-1}\bgamma^0  \\
& 
\quad \quad +
O_p \left(\sqrt{T} N^{-\frac{3}{2}\alpha_r}\right) +O_p\left( \frac{N^{1-\alpha_r}}{\sqrt{T}} \right)+O_p\left( N^{\frac{1}{2}- \alpha_r} \right)\\
& \CD
N\left(c_2 \bar{\bkappa}_{\bdelta^*}, \bSigma_{\bdelta}\right),\label{delta-de_q}
\end{align}
where 
\begin{align*}
\bar{\bkappa}_{\bdelta^*} = \bSigma_{\bZ^0 \bZ^0}^{-1}\binom{\mathbf{0}}{\bSigma_{\bW \bF^0}  \bar{\bG}}\bH_0^{-1} \bgamma^*. 
\end{align*}
\end{proof}

\begin{proof}[Proof of Theorem \ref{thm:bias_H}]
Theorem \ref{thm:bias_H3} implies that 
\begin{align}
\nonumber
\sqrt{T}( \hat{\bdelta}- \bdelta_{\hat{\bH}_q})
&= \left(\frac{1}{T}{\bZ}^{0'}{\bZ}^0\right)^{-1} \frac{1}{\sqrt{T}}{\bZ}^{0'}\bepsilon
+  \left(\frac{1}{T}{\bZ}^{0'}{\bZ}^0\right)^{-1}\left( 
\begin{array}{c}
\mathbf{0} \\
\bar{\bD}_{32}
\end{array}
\right) \tilde{\bH}_q^{-1}\bgamma^0  +o_p\left(\frac{\sqrt{T}}{N^{\alpha_r}}\right) \\
\nonumber
& \quad 
+O_p\left( \frac{N^{1-\alpha_r}}{\sqrt{T}} \right)+O_p\left( N^{\frac{1}{2}- \alpha_r} \right) , 
\end{align}
where the bias term $\bar{\bD}_{32}$ is defined in \eqref{def of D} as
\[
\bar{\bD}_{32} = \frac{1}{\sqrt{T}}\bW'\bF^0 \tilde{\bQ}^{-1}\tilde{\bLambda}^{-1}\tilde{\bQ}\frac{\bB^{0\prime}\bE'\bE\bB^0}{T}\tilde{\bQ}\tilde{\bLambda}^{-1} 
\CP 
c_2 \bSigma_{\bW\bF^0} \bar{\bG},
\quad \text{if} \quad \frac{\sqrt{T}}{N^{\alpha_r}} \to c_2.
\]
Next, we consider
\begin{align*}
& \sqrt{T}( \hat{\bdelta}- \bdelta_{\bH})\\
&= \left( 
\begin{array}{c}
\sqrt{T}(  \tilde{\bH}_q^{-1}- \bI_r)\bgamma^0 \\
\nonumber
\mathbf{0}
\end{array}
\right)
+\sqrt{T}( \hat{\bdelta}- \bdelta_{\hat{\bH}_q})\\
& =
\left( 
\begin{array}{c}
\sqrt{T}(  \tilde{\bH}_q^{-1}- \bI_r)\bgamma^0 \\
\nonumber
\mathbf{0}
\end{array}
\right)
+
\left(\frac{1}{T}{\bZ}^{0'}{\bZ}^0 \right)^{-1} \frac{1}{\sqrt{T}}{\bZ}^{0'}\bepsilon
+  \left(\frac{1}{T}{\bZ}^{0'}{\bZ}^0\right)^{-1}\left( 
\begin{array}{c}
\mathbf{0} \\
\bar{\bD}_{32}
\end{array}
\right) \tilde{\bH}_q^{-1}\bgamma^0   \\
\nonumber
& \quad 
+o_p\left(\frac{\sqrt{T}}{N^{\alpha_r}}\right) +O_p\left( \frac{N^{1-\alpha_r}}{\sqrt{T}} \right)+O_p\left( N^{\frac{1}{2}- \alpha_r} \right)  .
\end{align*} 
Although the explicit expansion of $ \sqrt{T}(  \tilde{\bH}_q^{-1}- \bI_r)\bgamma^0 $ is unknown, we know that $\|\sqrt{T}(  \tilde{\bH}_q^{-1}- \bI_r)\bgamma^0\|_{\F}=O_p(\sqrt{T}\Delta_{NT})$. Furthermore,  under the conditions in Theorem \ref{thm:bias_H}, this term is not larger than $O_p(\sqrt{T}N^{\frac{1}{2}\alpha_1-\frac{3}{2}\alpha_r})$. We assume that the first bias term $(\sqrt{T}\bgamma^{0\prime}(  \tilde{\bH}_q^{\prime-1}- \bI_r),  \mathbf{0}^{\prime} )' \CP c_1\bh_{\bgamma^*}$, where $ \bh_{\bgamma^*}$ is a constant vector, when $\sqrt{T}N^{\frac{1}{2}\alpha_1-\frac{3}{2}\alpha_r} \to c_1 \in [0,\infty)$ as $N, T \to \infty$. The second bias is smaller and stems from $ \sqrt{T}( \hat{\bdelta}- \bdelta_{\hat{\bH}_q})$, given by
\begin{align*}
\left(\frac{1}{T}{\bZ}^{0'}{\bZ}^0\right)^{-1}\left( 
\begin{array}{c}
\mathbf{0} \\
\bar{\bD}_{32}
\end{array}
\right) \tilde{\bH}_q^{-1}\bgamma^0 
\CP
c_2 \bSigma_{\bZ^0\bZ^0}^{-1}
\left( 
\begin{array}{c}
\mathbf{0} \\
\bSigma_{\bW\bF^0} \bar{\bG} \bH_0^{-1} \bgamma^*
\end{array}
\right) 
= c_2\bar{\bkappa}_{\bdelta^*},
\quad \text{if} \quad \frac{\sqrt{T}}{N^{\alpha_r}} \to c_2.
\end{align*}
If $\alpha_r < \alpha_1$ and $c_1 \in (0,\infty)$, then $c_2=0$, hence the asymptotic bias is $c_1\bh_{\bgamma^*}$. When $\alpha_1=\alpha_r$, we have $c_2=c_1$ and the asymptotic bias is $c_2( \bh_{\bgamma^*} + \bar{\bkappa}_{\bdelta^*})$. Thus, we complete the proof as
\begin{align*}
& \sqrt{T}(\hat\bdelta  - \bdelta^0)  \CD
N\left(c_1 \bh_{\bgamma^*}+ c_2 \bar{\bkappa}_{\bdelta^*} ,\bSigma_{\bdelta}\right).
\end{align*}
\end{proof}

\begin{proof}[Proof of Theorem \ref{thm:bias_JC}]
Define $
{\bPhi}_{{\bH}_j}=
\left( 
\begin{array}{cc}
{\bH}_j  &\mathbf{0} \\
\mathbf{0} &{\bI}_p
\end{array}
\right)$, where $\tilde{\bH}_{qj}$ and ${\bH_j} (j=1,2)$ are rotation matrices calculated from the half-panel $\mathcal{N}_j$, corresponding to $\tilde{\bH}_{q}$ and ${\bH}$ calculated from the whole-panel.
We assume that $\plim{\bH_j}=\bH_0$ $(j=1,2)$. Furthermore, if $(\sqrt{T}\bgamma^{0\prime}(  \tilde{\bH}_q^{\prime-1}- \bI_r),  \mathbf{0}^{\prime} )' \CP c_1\bh_{\bgamma^*}$ when $\sqrt{T}N^{\frac{1}{2}\alpha_1-\frac{3}{2}\alpha_r} \to c_1 \in [0,\infty)$, then, $(\sqrt{T}\bgamma^{*\prime}\bH_j^{\prime-1}(  \tilde{\bH}_{qj}^{\prime-1}- \bI_r),\mathbf{0}')' $ from the half-panel will converge to $2^{\frac{1}{2}(3\alpha_r-\alpha_1)}c_1\bh_{\bgamma^*}$ with $N/2$ replacing $N$.
Clearly,
\begin{align*}
\sqrt{T}( \hat{\bdelta}_{\mathcal{N}_j}-{\bPhi}_{\bH_j}^{-1}\bdelta^*)
&=
\left( 
\begin{array}{c}
\sqrt{T}(  \tilde{\bH}_{qj}^{-1}- \bI_r) \bH_j^{-1}\bgamma^* \\
\nonumber
\mathbf{0}
\end{array}
\right) 
+
\bPhi_{\bH_j}^{-1}\left(\frac{1}{T}{\bZ}^{*'}{\bZ}^*\right)^{-1} \frac{1}{\sqrt{T}}{\bZ}^{*'}\bepsilon\\
& \quad
+ \bPhi_{\bH_j}^{-1}\left(\frac{1}{T}{\bZ}^{*'}{\bZ}^*\right)^{-1}\left( 
\begin{array}{c}
\mathbf{0} \\
\bar{\bD}_{\mathcal{N}_j}
\end{array}
\right)\tilde{\bH}_j^{-1} \bH_j^{-1}\bgamma^*\\
& \quad 
+O_p\left( \frac{N^{1-\alpha_r}}{\sqrt{T}}\right) +o_p\left(N^{\frac{1}{2}-\alpha_r}\right) +o_p\left(\frac{\sqrt{T}}{N^{\alpha_r}}\right),
\end{align*}
where $\bar{\bD}_{\mathcal{N}_j}$ from  half-panel $\mathcal{N}_j$ preserves the same structure as $\bar{\bD}_{32}$ calculated form the whole panel. 

The first bias in $\sqrt{T}( \hat{\bdelta}_{bcjk}- \bdelta^0) $ is given by
\begin{align*}
& \left(2\sqrt{T}\bgamma^{0\prime}(  \tilde{\bH}_{q}^{\prime-1}- \bI_r)-\frac{1}{2}\left[\sqrt{T}\bgamma^{*\prime}\bH_1^{\prime-1}(  \tilde{\bH}_{q1}^{\prime-1}- \bI_r)+\sqrt{T}\bgamma^{*\prime}\bH_2^{\prime-1}(  \tilde{\bH}_{q2}^{\prime-1}- \bI_r)\right] ,\mathbf{0}'\right)' \\
&  \CP 
(2-2^{\frac{1}{2}(3\alpha_r-\alpha_1)})c_1 \bh_{\bgamma^*},
\end{align*}
which is exactly zero if $\alpha_r=1$ and smaller than $c_1 \bh_{\bgamma^*}$ if $\alpha_1 <3\alpha_r$.

Then, the second bias $2\bar{\bD}_{32}-\frac{1}{2}(\bar{\bD}_
{\mathcal{N}_1}+\bar{\bD}_
{\mathcal{N}_2})$ converges to $(2-2^{\alpha_r})c_2  \bSigma_{\bW\bF^0} \bar{\bG} $ in probability. The bias is removed if $\alpha_r=1$ and smaller than $ c_2 \bSigma_{\bW\bF^0} \bar{\bG}   $ for any positive $\alpha_r$.
Finally, we complete the proof as
\begin{align*}
\sqrt{T}(  \hat{\bdelta}_{bcjk}- \bdelta^0 )  \CD
N\left((2-2^{\frac{1}{2}(3\alpha_r-\alpha_1)})c_1 \bh_{\bgamma^*} + (2-2^{\alpha_r})c_2 \bar{\bkappa}_{\bdelta^*} ,\bSigma_{\bdelta}\right).
\end{align*}

\end{proof}

\section{Proofs of results in Section \ref{sec:Mw}}
\setcounter{lem}{0}
\renewcommand{\thelem}{B.\arabic{lem}}
\renewcommand{\theequation}{B.\arabic{equation}}
\setcounter{equation}{0}

\begin{lem} Suppose that Assumptions \ref{ass:eigen}--\ref{ass:Aug_errors} hold. If $\frac{N^{1-\alpha_r}}{T}\to 0$, then, we have, as $N, T \to \infty$,
\label{lem:F*epsilon_Mw}
\begin{flalign}
& ~ \left\| \frac{1}{\sqrt{T}} (\hat{\bF}_w-\bF^0_w\tilde{\bH}_{b,w})'\bepsilon  \right\|_{\F}=O_p \left(\sqrt{T} N^{-\frac{3}{2}\alpha_r}\right) +O_p\left( \frac{N^{1-\alpha_r}}{\sqrt{T}} \right)+O_p\left(N^{\frac{1}{2}-\alpha_r}\right) ,\label{eq:F*ep_H3w}&
\end{flalign}
where $\tilde{\bH}_{b,w}$ is the data dependent rotation matrix corresponding to $\tilde{\bH}_{b}$ but for the model of $\bX_w$.
\end{lem}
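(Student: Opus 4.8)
The plan is to prove Lemma~\ref{lem:F*epsilon_Mw} as the $\bX_w$-analogue of Lemma~\ref{lem:F*epsilon}(i): under the stated hypotheses the $\bX_w$-versions of Lemma~\ref{lem:Rotation} and of Assumptions~\ref{ass:signal}--\ref{ass:Aug_errors} are all in force, so the argument is essentially a transcription of that proof, the only structural difference being that the regression error $\bepsilon$ is \emph{not} projected by $\bM_w$. First I would record the eigen-expansion. Since $\hat{\bF}_w$ (scaled by $\sqrt{T}$) and $\hat{\bLambda}_w$ are the leading eigenvectors and eigenvalues of $T^{-1}\bX_w\bX_w'$ and $\bX_w=\bF_w^0\bB_w^{0\prime}+\bE_w$ with $\bE_w=\bM_w\bE$, the manipulation behind \eqref{fhat-fh_hat} gives
\[
\hat{\bF}_w-\bF_w^0\tilde{\bH}_w=\frac{1}{T}\bE_w\bE_w'\hat{\bF}_w\hat{\bLambda}_w^{-1}+\frac{1}{T}\bE_w\bB_w^0{\bF_w^0}'\hat{\bF}_w\hat{\bLambda}_w^{-1}+\frac{1}{T}\bF_w^0{\bB_w^0}'\bE_w'\hat{\bF}_w\hat{\bLambda}_w^{-1},
\]
where $\tilde{\bH}_w$ denotes the $\bX_w$-counterpart of $\tilde{\bH}$. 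Since every ``tilde'' rotation equals $\bI_r+O_p(\Delta_{NT})$ by the $\bX_w$-version of Lemma~\ref{lem:Rotation}(v)(vi), $T^{-1/2}(\hat{\bF}_w-\bF_w^0\tilde{\bH}_{b,w})'\bepsilon$ differs from $T^{-1/2}(\hat{\bF}_w-\bF_w^0\tilde{\bH}_w)'\bepsilon$ only by $(\tilde{\bH}_w-\tilde{\bH}_{b,w})'(T^{-1/2}{\bF_w^0}'\bepsilon)=O_p(\Delta_{NT})$, which --- since $N^{1-\alpha_r}/T\to0$ and $\alpha_1\le1$ --- is dominated by the three rates in the statement.

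Next I would pre-multiply the expansion by $T^{-1/2}\bepsilon'$ (after transposing) and write, exactly as in the proof of Lemma~\ref{lem:F*epsilon}(i),
\[
\frac{1}{\sqrt{T}}(\hat{\bF}_w-\bF_w^0\tilde{\bH}_w)'\bepsilon=\bA_1+\bA_2+\bA_3,
\]
where $\bA_1=T^{-3/2}\hat{\bLambda}_w^{-1}\hat{\bF}_w'\bE_w\bE_w'\bepsilon$, $\bA_2=T^{-3/2}\hat{\bLambda}_w^{-1}\hat{\bF}_w'\bF_w^0{\bB_w^0}'\bE_w'\bepsilon$ and $\bA_3=T^{-3/2}\hat{\bLambda}_w^{-1}\hat{\bF}_w'\bE_w\bB_w^0{\bF_w^0}'\bepsilon$. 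For $\bA_1$ I would split $\hat{\bF}_w=(\hat{\bF}_w-\bF_w^0\tilde{\bH}_{b,w})+\bF_w^0\tilde{\bH}_{b,w}$ and use the $\bX_w$-version of Lemma~\ref{lem:Rotation}(ii) for $\|\bE_w'(\hat{\bF}_w-\bF_w^0\tilde{\bH}_{b,w})\|_{\F}$, the bound $\|{\bF_w^0}'\bE_w\|_{\F}=O_p(\sqrt{NT})$, the elementary estimate $\|\bE_w'\bepsilon\|_{\F}\le\|\bE\|_2\|\bepsilon\|_{\F}=O_p((\sqrt{N}+\sqrt{T})\sqrt{T})$ (from $\|\bM_w\|_2\le1$, Assumption~\ref{ass:errors}(iv), and Assumption~\ref{ass:Aug_errors}(i)), and $\bN\hat{\bLambda}_w^{-1}=O_p(1)$; this reproduces $\|\bA_1\|_{\F}=O_p(\sqrt{T}N^{-\frac{3}{2}\alpha_r})+O_p(N^{1-\alpha_r}/\sqrt{T})+O_p(N^{\frac{1}{2}-\alpha_r})$. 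For $\bA_2$ I would use the $\bX_w$-version of Assumption~\ref{ass:2errors}, $\|T^{-1/2}\bN^{-1/2}{\bB_w^0}'\bE_w'\bepsilon\|_{\F}=O_p(1)$, together with $\bN^{-1/2}(T^{-1}\hat{\bF}_w'\bF_w^0)\bN^{1/2}=O_p(1)$ and $\bN\hat{\bLambda}_w^{-1}=O_p(1)$, yielding $\|\bA_2\|_{\F}=O_p(N^{-\frac{1}{2}\alpha_r})$. For $\bA_3$ I would split $\hat{\bF}_w$ once more and use the $\bX_w$-version of Lemma~\ref{lem:Rotation}(iii) on the $(\hat{\bF}_w-\bF_w^0\tilde{\bH}_{b,w})$ piece and the $\bX_w$-version of Assumption~\ref{ass:factor and loadings}(iv), $\|T^{-1/2}\bN^{-1/2}{\bB_w^0}'\bE_w'\bF_w^0\|_{\F}=O_p(1)$, on the remaining piece, together with $\|T^{-1/2}{\bF_w^0}'\bepsilon\|_{\F}=O_p(1)$ from the $\bX_w$-version of Assumption~\ref{ass:Aug_errors}(iii), to obtain $\|\bA_3\|_{\F}=O_p(N^{\frac{1}{2}\alpha_1-\frac{3}{2}\alpha_r})+O_p(N^{\frac{1}{2}\alpha_1-\alpha_r}N^{1-\alpha_r}/T)+O_p(N^{\frac{1}{2}\alpha_1-\alpha_r}/\sqrt{T})$.

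Finally I would check, using $N^{1-\alpha_r}/T\to0$ and $\alpha_1\le1$, that every term in $\|\bA_2\|_{\F}$, $\|\bA_3\|_{\F}$, and the $\tilde{\bH}_w-\tilde{\bH}_{b,w}$ correction is dominated by the three terms of $\|\bA_1\|_{\F}$, so that $\bA_1$ is the leading component and the claimed rate follows. I do not expect any deep obstacle; the one point that requires genuine (if light) care is the bookkeeping around the untransformed $\bepsilon$ --- one must confirm that the three moment quantities $\|\bE_w'\bepsilon\|_{\F}$, $\|{\bB_w^0}'\bE_w'\bepsilon\|_{\F}$, and $\|{\bF_w^0}'\bepsilon\|_{\F}$ obey exactly the same $O_p$ rates as the un-projected objects used in the proof of Lemma~\ref{lem:F*epsilon}, the first because $\|\bM_w\|_2\le1$ and the other two because they are precisely the $\bX_w$-versions of Assumptions~\ref{ass:2errors} and~\ref{ass:Aug_errors}(iii); everything else carries over verbatim.
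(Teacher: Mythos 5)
Your proposal is correct and follows essentially the same route as the paper: both transcribe the proof of Lemma~\ref{lem:F*epsilon}(i) to the $\bX_w$ model, using the eigen-expansion of $\hat{\bF}_w$, the $\bX_w$-versions of Lemma~\ref{lem:Rotation} and Assumptions~\ref{ass:signal}--\ref{ass:Aug_errors}, and $\|\bM_w\|_2\le 1$ to carry the moment bounds over to the projected quantities. The one piece of redundant work in your version is that your rotation-swap correction $(\tilde{\bH}_w-\tilde{\bH}_{b,w})'T^{-1/2}\bF_w^{0\prime}\bepsilon$ equals exactly $-\bA_3$ (since $\tilde{\bH}_{b,w}-\tilde{\bH}_w=T^{-1}\bB_w^{0\prime}\bE_w'\hat{\bF}_w\hat{\bLambda}_w^{-1}$), so the paper expands $\hat{\bF}_w-\bF_w^0\tilde{\bH}_{b,w}$ directly and obtains a two-term decomposition with no $\bA_3$ at all; bounding $\bA_3$ and the correction separately, as you do, is harmless because each is individually dominated by $N^{\frac{1}{2}-\alpha_r}$.
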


\begin{proof}[Proof of Lemma \ref{lem:F*epsilon_Mw}]
Following the same argument as for \eqref{eq:H4-H}, we have
\begin{align}
\label{eq:H4w-Hw}
\tilde{\bH}_{b,w}&=\tilde{\bH}_w+\frac{1}{T} \bB^{0\prime}_w \bE^{\prime}_w \hat{\bF}_w \hat{\bLambda}_w^{-1},
\end{align}
where $\tilde{\bH}_{w}$ is the data dependent rotation matrix corresponding to $\tilde{\bH}$ but for the model of $\bX_w$.
Since $\hat{\bLambda}_w$ and $\hat{\bF}_w$ are the eigenvalues and eigenvectors of $\bX_w\bX_w'/T$, respectively, we have
\begin{align*} 
T^{-1}\bX_w\bX_w'\hat{\bF}_w
= \hat{\bF}_w\hat{\bLambda}_w.
\end{align*}
The model is rewritten as
\begin{align*}
\mathbf{X}_w&=\bM_w\bF^*  \bB^{*\prime} +\mathbf{E}_w \\
&=\bM_w\bF^* \bH_w\bH_w^{-1} \bB^{*\prime} +\mathbf{E}_w \\
&= \bF^0_w  \bB^{0\prime}_w +\mathbf{E}_w ,
\end{align*}
where $\mathbf{X}_w=\bM_w\bX$, $\mathbf{F}_w^0=\bM_w \bF^* \bH_w$, $\mathbf{B}_w^0= \bB^* \bH_w^{'-1}$,  $\mathbf{E}_w=\bM_w \bE$, then, $\bX_w\bX_w'$ can be expanded as
\begin{align*} 
\bX_w\bX_w'
=\bF^0_w  \bB^{0\prime}_w  \bB^0_w  \bF^{0\prime}_w 
+  \bF^0_w  \bB^{0\prime}_w  \bE'_w  
+  \bE_w \bB^0_w  \bF^{0\prime}_w  +  \bE_w \bE'_w.
\end{align*}
We obtain
\begin{align}
\label{f-fH_4w}
& \hat{\bF}_w - \bF^0_w \tilde{\bH}_{b,w} =  \frac{1}{T} \bE_w\bE'_w\hat{\bF}_w\hat{\bLambda}_w^{-1}+\frac{1}{T} \bE_w\bB^0_w
{\bF^0 }_w'\hat{\bF}_w   \hat{\bLambda}_w^{-1} .
\end{align}
(i) By expanding \eqref{f-fH_4w}, 
\begin{align*}
&    \frac{1}{\sqrt{T}} (\hat{\bF}_w-\bF_w^0\tilde{\bH}_{b,w})'\bepsilon   \\
&= \frac{1}{T^{3/2}} \hat{\bLambda}_w^{-1} {\hat{\bF}_w}' \bE_w\bE_w' \bepsilon  + \frac{1}{T^{3/2}}  \hat{\bLambda}_w^{-1}{\hat{\bF}_w}'\bF_w^0  \bB_w^{0\prime}{\bE_w}'\bepsilon   \\
&= \tilde{\bA}_1+\tilde{\bA}_2  .
\end{align*}

For the first term,
\begin{align*}
&   \left\|  \tilde{\bA}_1  \right\|_{\F}  
\le \left\|    \frac{1}{T^{3/2}} \hat{\bLambda}_w^{-1} {\hat{\bF}_w}' \bE\bE' \bepsilon \right\|_{\F} 
+
\left\|    \frac{1}{T^{3/2}} \hat{\bLambda}_w^{-1} {\hat{\bF}_w}' \bP_w\bE\bE' \bepsilon \right\|_{\F} 
+
\left\|    \frac{1}{T^{3/2}} \hat{\bLambda}_w^{-1} {\hat{\bF}_w}' \bE\bE' \bP_w\bepsilon \right\|_{\F} \\
& \quad \quad\quad \quad 
+
\left\|    \frac{1}{T^{3/2}} \hat{\bLambda}_w^{-1} {\hat{\bF}_w}' \bP_w\bE\bE'\bP_w \bepsilon \right\|_{\F} \\
&\quad \quad \quad=O_p \left(\sqrt{T} N^{-\frac{3}{2}\alpha_r}\right) +O_p \left(\frac{N^{1-  \alpha_r}}{\sqrt{T}}\right)
+O_p\left(N^{\frac{1}{2}-\alpha_r}\right),
\end{align*}
since $ \hat{\bF}_w^{\prime} \bE\bE' \bepsilon =O_p \left(T^2 N^{-\frac{1}{2}\alpha_r}\right) + O_p(\sqrt{N}T^{3/2}) + O_p(NT) $
has the same order of magnitude of the term \eqref{Feee} and $ \left\|\bP_w \right\|_{\F} =O_p(1)$. For the second term,

\begin{align*}
\left\|  \tilde{\bA}_2  \right\|_{\F}  & =\left\|\frac{1}{\sqrt{T}} \hat{\bLambda}_w^{-1}  \left( \frac{1}{T} \hat{\bF}_w^{\prime} \bF_w^0  \right) \bH_w^{-1} \left( \bB^{0\prime} \bE^{\prime} \bepsilon-\bB^{0\prime} \bE^{\prime} \bP_w \bepsilon\right)\right\|_{\F} \\
& =O_p\left( \frac{1}{\sqrt{T}} N^{- \alpha_r} \sqrt{T N^{\alpha_1}}\right)=O_p\left( N^{\frac{1}{2}\alpha_1-\alpha_r} \right).
\end{align*}
Thus, we complete the proof as
\[
\left\|
\frac{1}{\sqrt{T}} (\hat{\bF}_w-\bF^0_w\tilde{\bH}_{b,w})'\bepsilon  \right\|_{\F}=
O_p \left(\sqrt{T} N^{-\frac{3}{2}\alpha_r}\right) +O_p\left( \frac{N^{1-\alpha_r}}{\sqrt{T}} \right)+O_p\left(N^{\frac{1}{2}-\alpha_r}\right).
\]
\end{proof}

\begin{proof}[Proof of Theorem \ref{thm:bias_Hw3}]
Because  
\begin{align*}
&  \bW'\bF_w^0=\bW'\bM_w\bF^0\bH_w=\mathbf{0},  \quad \bW'\hat{\bF}_w=T^{-1} \bW'\bX_w\bX_w'\hat{\bF}_w\hat{\bLambda}_w^{-1} =\mathbf{0},
\end{align*}
we have
\begin{align*}
&\hat\bF_w'(\hat\bF_w-{\bF}_w^0\tilde{\bH}_{q,w})\tilde{\bH}_{q,w}^{-1}\bgamma^0_w= \mathbf{0},\\
&  \bW'(\hat\bF_w-{\bF}_w^0\tilde{\bH}_{q,w})\tilde{\bH}_{q,w}^{-1}\bgamma^0_w= \mathbf{0},
\end{align*}
where $\tilde{\bH}_{q,w}$ is the data dependent rotation matrix corresponding to $\tilde{\bH}_q$ but for the model of $\bX_w$.
By the same arguments as given in the proof of Theorem \ref{thm:bias_H3}, we have
\begin{align*}
& \sqrt{T}(\hat\bdelta_w - \bdelta_{q,w})  \\
&= (T^{-1}\hat\bZ_w'\hat\bZ_w)^{-1} T^{-\frac{1}{2}}\hat\bZ_w'\bepsilon
- (T^{-1}\hat\bZ_w'\hat\bZ_w)^{-1} T^{-\frac{1}{2}}\hat\bZ_w'(\hat\bF-{\bF}_w^0\tilde{\bH}_{q,w})\tilde{\bH}_{q,w}^{-1}\bgamma^0_w  \\
&  = (T^{-1}\hat\bZ_w'\hat\bZ_w)^{-1} T^{-\frac{1}{2}} \binom{ \tilde{\bH}_{b,w}'\bF_w^{0\prime }}{\bW'} \bepsilon
+(T^{-1}\hat\bZ_w'\hat\bZ_w)^{-1} T^{-\frac{1}{2}} \binom{\hat{\bF}_w'-\tilde{\bH}_{b,w}'\bF_w^{0\prime }}{\mathbf{0}} \bepsilon \\
&  = (T^{-1}\bZ_w^{0\prime} \bZ_w^0)^{-1} T^{-\frac{1}{2}} \binom{ \bF_w^{0\prime }}{\bW'} \bepsilon+O_p \left(\sqrt{T} N^{-\frac{3}{2}\alpha_r}\right) +
+O_p\left( \frac{N^{1-\alpha_r}}{\sqrt{T}} \right)+O_p\left( N^{\frac{1}{2}-\alpha_r} \right).
\end{align*}
Thus,
$ \frac{1}{T}  \hat\bZ_w'\hat\bZ_w 
\CP 
\bSigma_{\bZ^0 \bZ^0}$.
By Assumption $T^{-1 / 2} \mathbf{Z}_w^{0 \prime} \boldsymbol{\epsilon} \xrightarrow{d} N\left(\mathbf{0}, \boldsymbol{\Sigma}_{\mathbf{Z}
_w^0 \boldsymbol{\epsilon}}\right)$, we prove that
\begin{align*}
& \sqrt{T}(\hat\bdelta_w - \bdelta_{q,w})  \CD
N\left(\mathbf{0},\bSigma_{\bdelta_w}\right),
\end{align*} 
where $\bSigma_{\bdelta_w}={\bSigma}_{\bZ^0 \bZ^0}^{-1} (  {\bSigma}_{\bZ_w^0 \bepsilon} ) {\bSigma}_{\bZ^0 \bZ^0}^{-1}$.
\end{proof}

\section{Additional Experimental Results}
\setcounter{lem}{0}
\renewcommand{\thelem}{C.\arabic{lem}}
\setcounter{equation}{0}
\renewcommand{\theequation}{C.\arabic{equation}}

\setcounter{table}{0}
\renewcommand{\thetable}{C.\arabic{table}}
\setcounter{figure}{0}
\renewcommand{\thefigure}{C.\arabic{figure}}

\subsection{Coefficient on the second factor, using orthoganised $x_{t,i}$ to $w_t$}


\begin{figure}[h!]	
\centering
\begin{subfigure}[b]{0.32\textwidth}
\centering
\includegraphics[width=\textwidth]{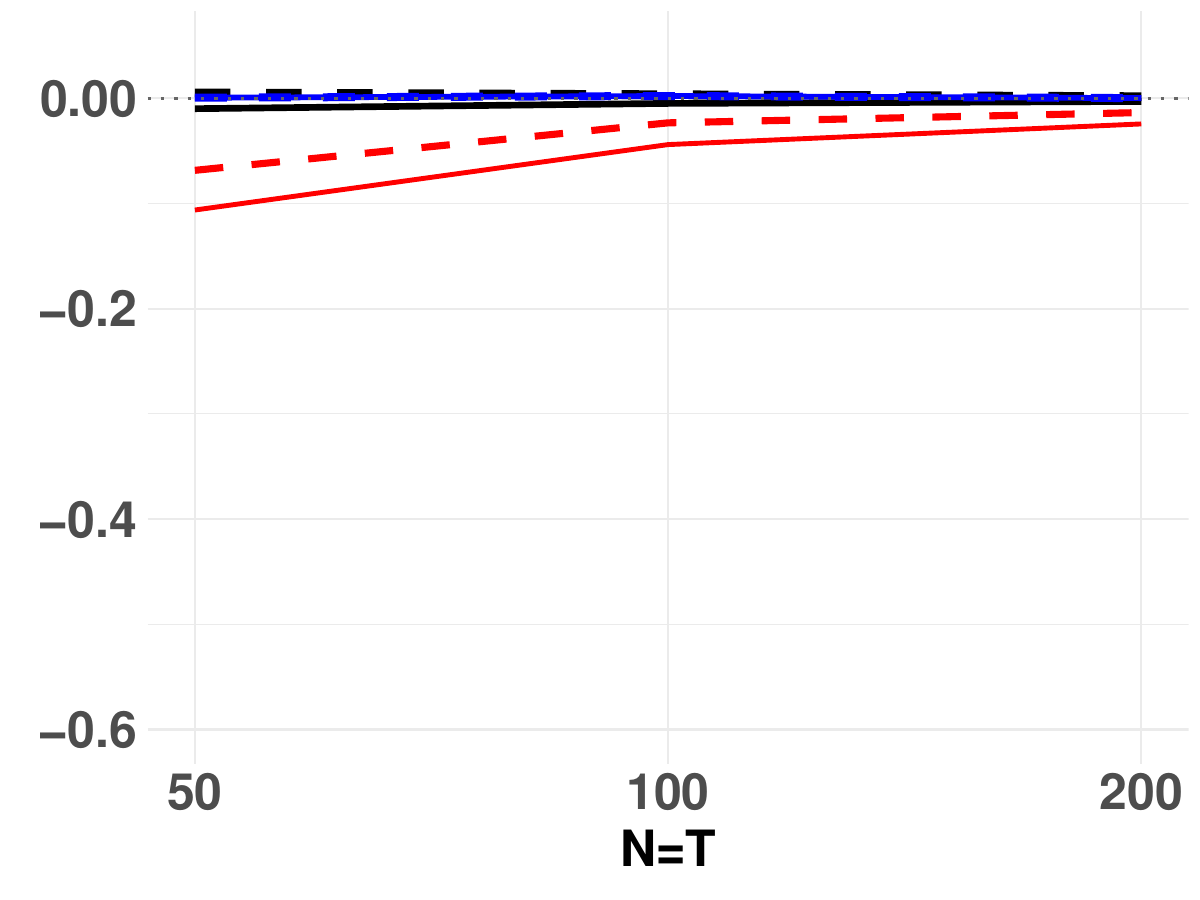}
\caption{$\rho_{wf}=0.0,\alpha_2 = 1.0$}
\label{fig:bias_f2w_00_10}
\end{subfigure}
\hfill
\begin{subfigure}[b]{0.32\textwidth}
\centering
\includegraphics[width=\textwidth]{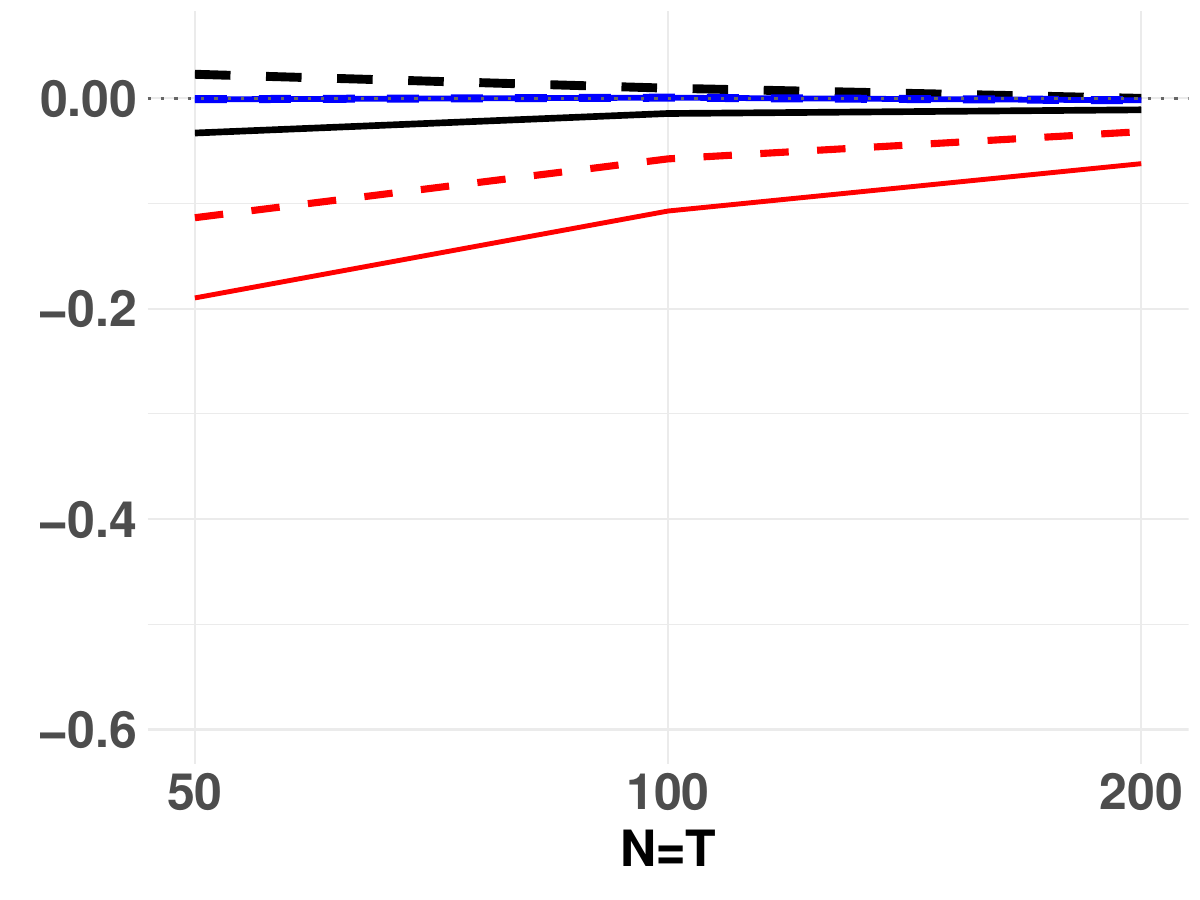}
\caption{$\rho_{wf}=0.0,\alpha_2 = 0.8$}
\label{fig:bias_f2w_00_08}
\end{subfigure}
\hfill
\begin{subfigure}[b]{0.32\textwidth}
\centering
\includegraphics[width=\textwidth]{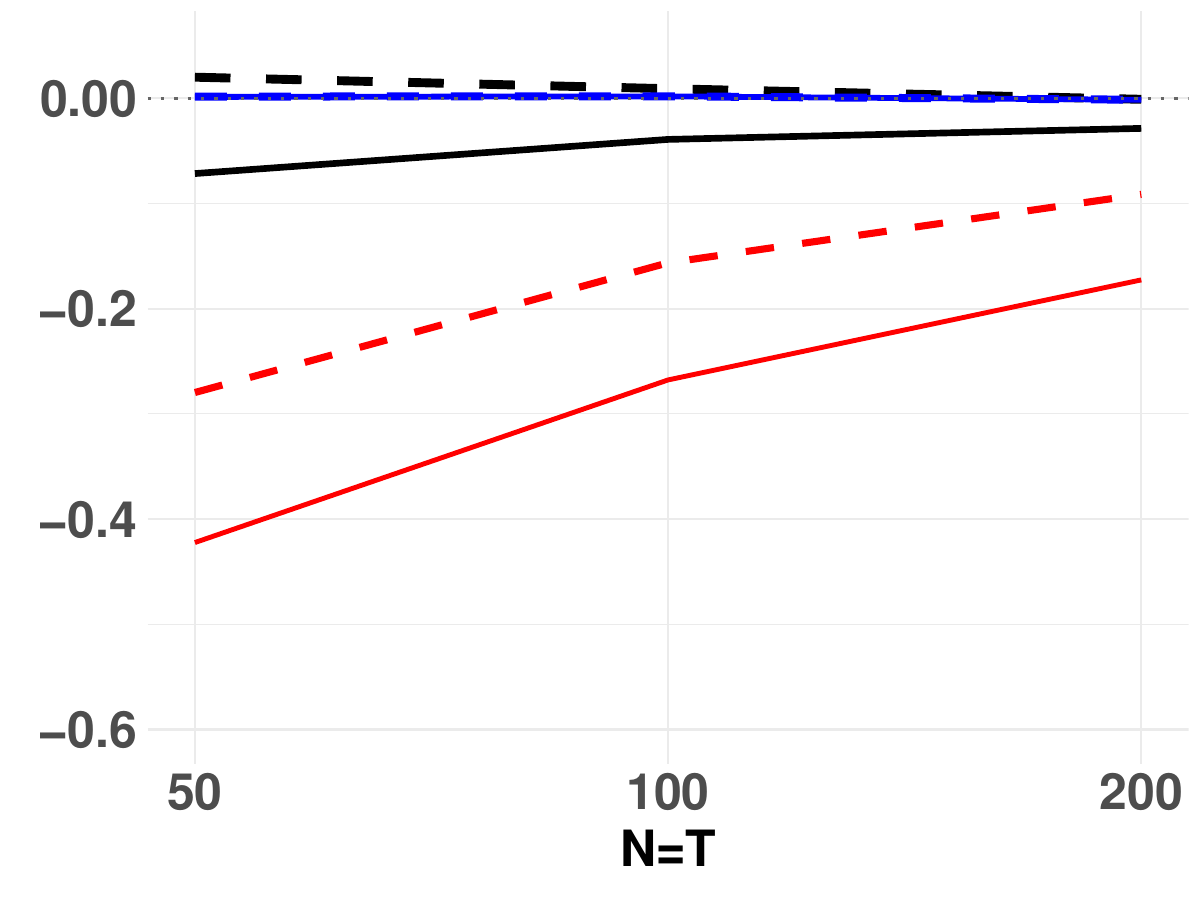}
\caption{$\rho_{wf}=0.0,\alpha_2 = 0.6$}
\label{fig:bias_f2w_00_06}
\end{subfigure}

\centering
\begin{subfigure}[b]{0.32\textwidth}
\centering
\includegraphics[width=\textwidth]{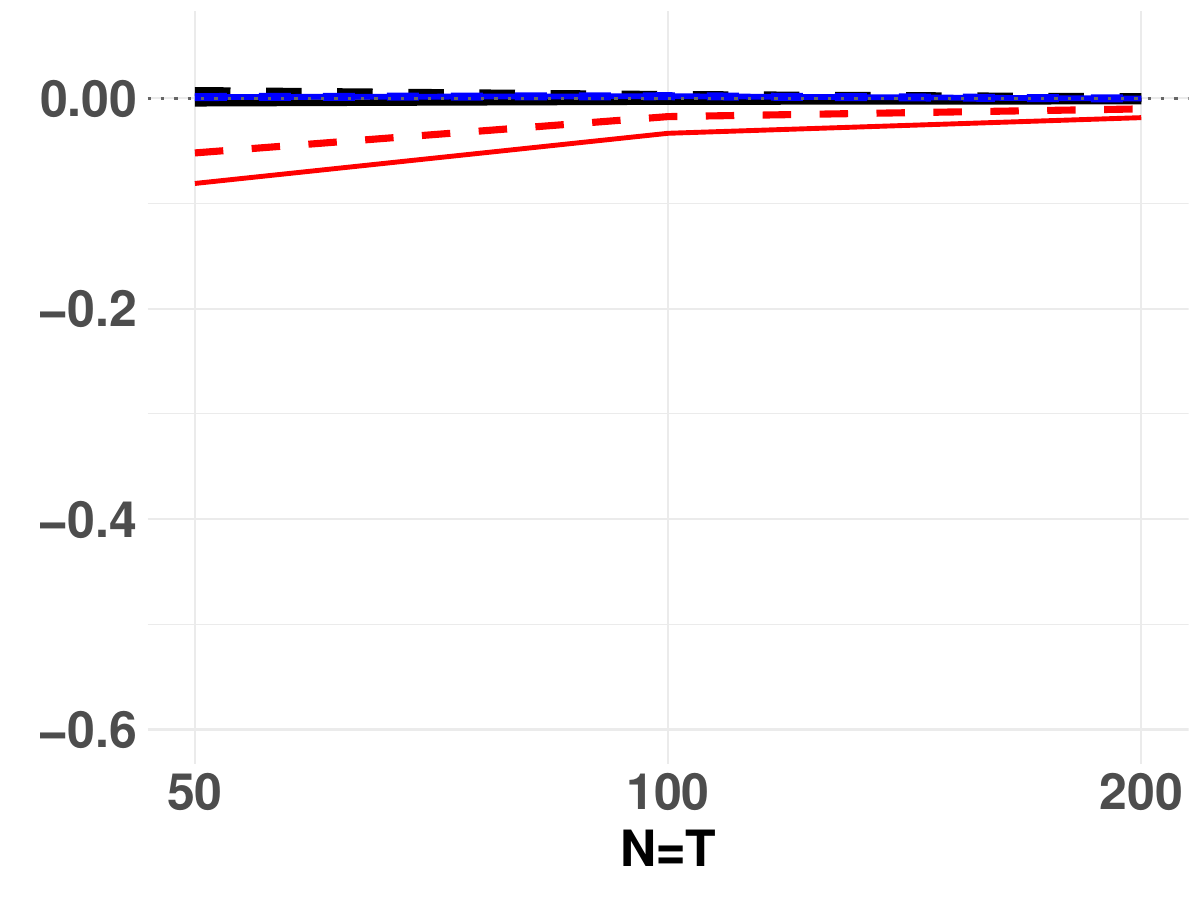}
\caption{$\rho_{wf}=0.6,\alpha_2 = 1.0$}
\label{fig:bias_f2w_06_10}
\end{subfigure}
\hfill
\begin{subfigure}[b]{0.32\textwidth}
\centering
\includegraphics[width=\textwidth]{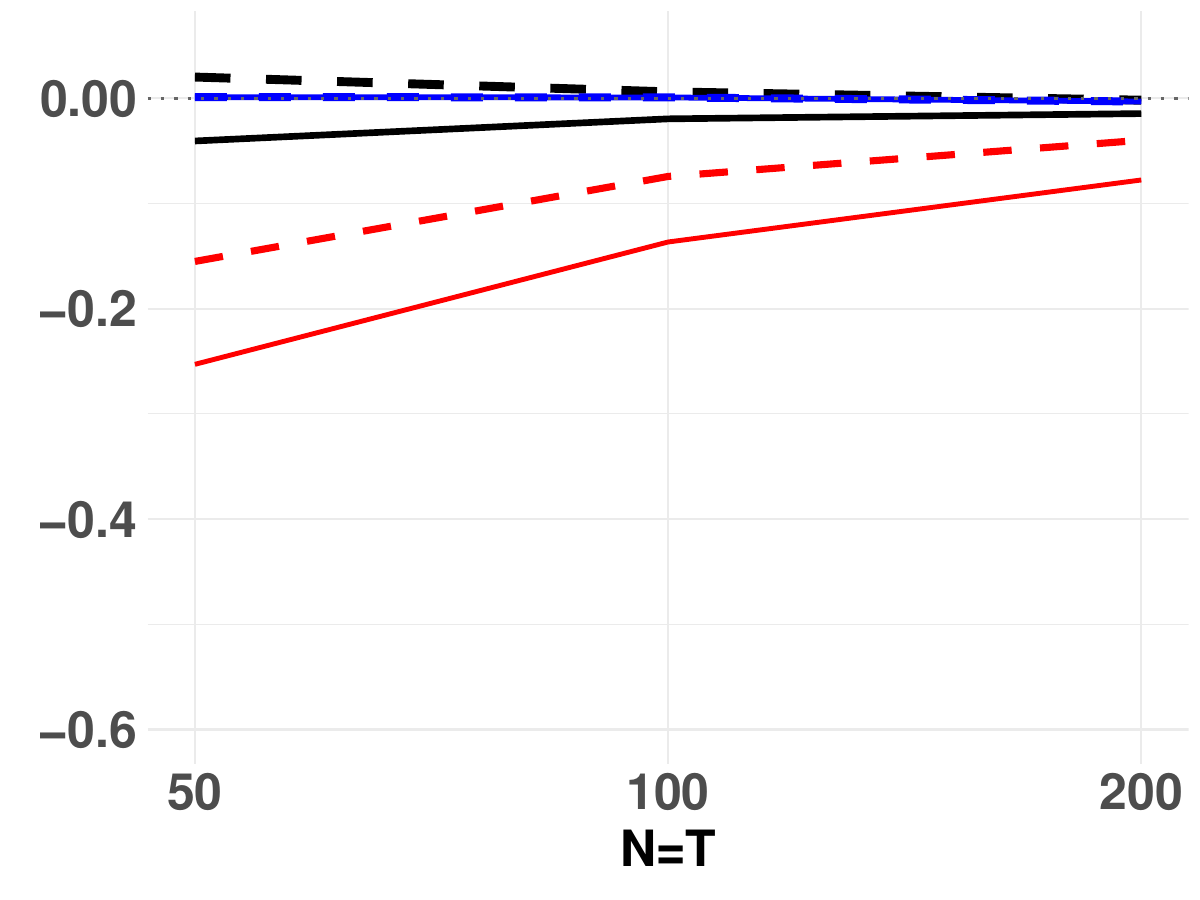}
\caption{$\rho_{wf}=0.6,\alpha_2 = 0.8$}
\label{fig:bias_f2w_06_08}
\end{subfigure}
\hfill
\begin{subfigure}[b]{0.32\textwidth}
\centering
\includegraphics[width=\textwidth]{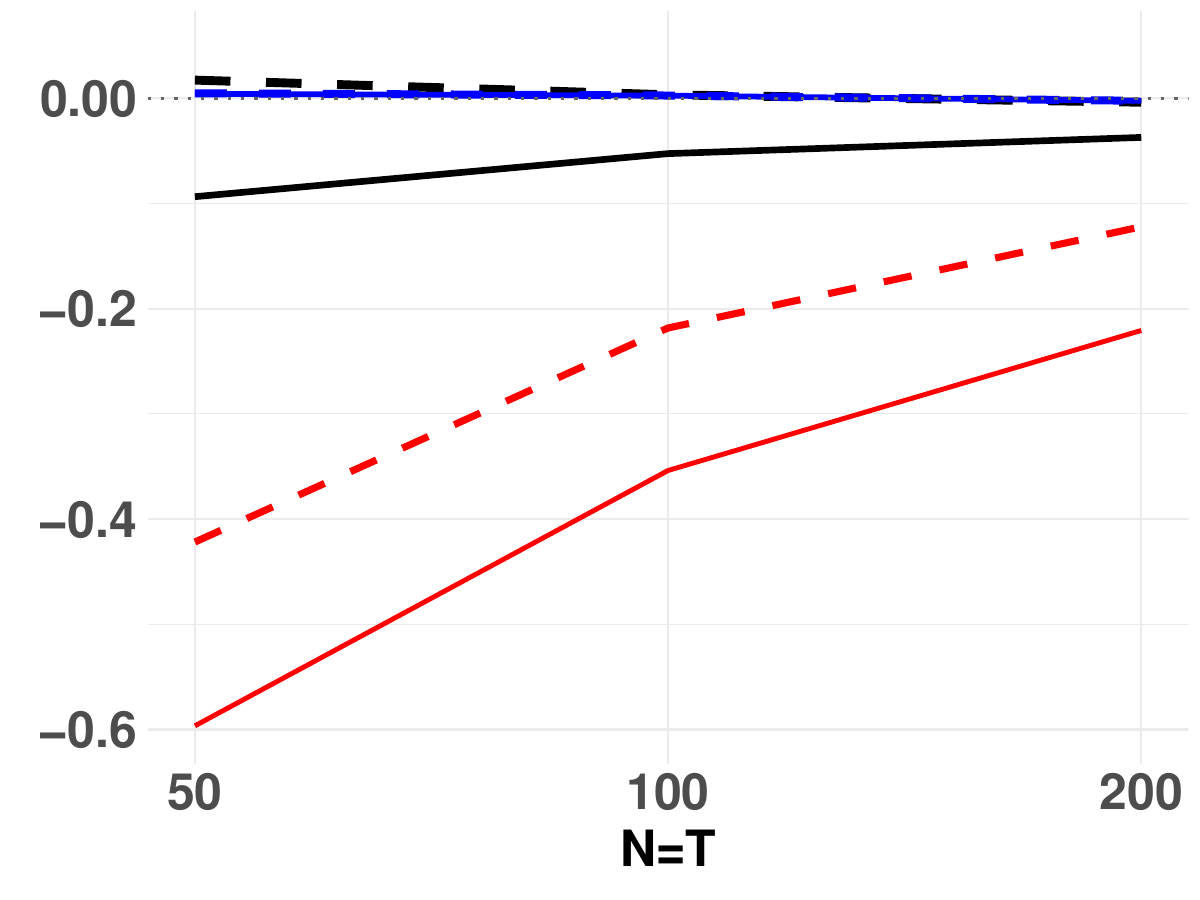}
\caption{$\rho_{wf}=0.6,\alpha_2 = 0.6$}
\label{fig:bias_f2w_06_06}
\end{subfigure}

\centering
\includegraphics[width=0.66\textwidth]{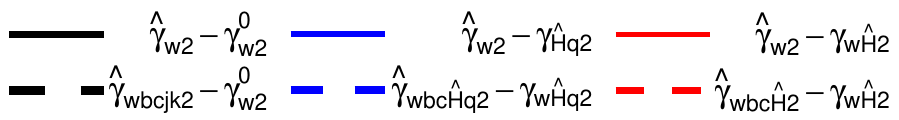}
\caption{Bias of $\hat{\gamma}_{w,2}$ and its bias corrected versions for cross and serially correlated $e_{t,i}$}
\label{fig:bias.f2w}

\end{figure}

\begin{figure}[h!]	
\centering
\begin{subfigure}[b]{0.32\textwidth}
\centering
\includegraphics[width=\textwidth]{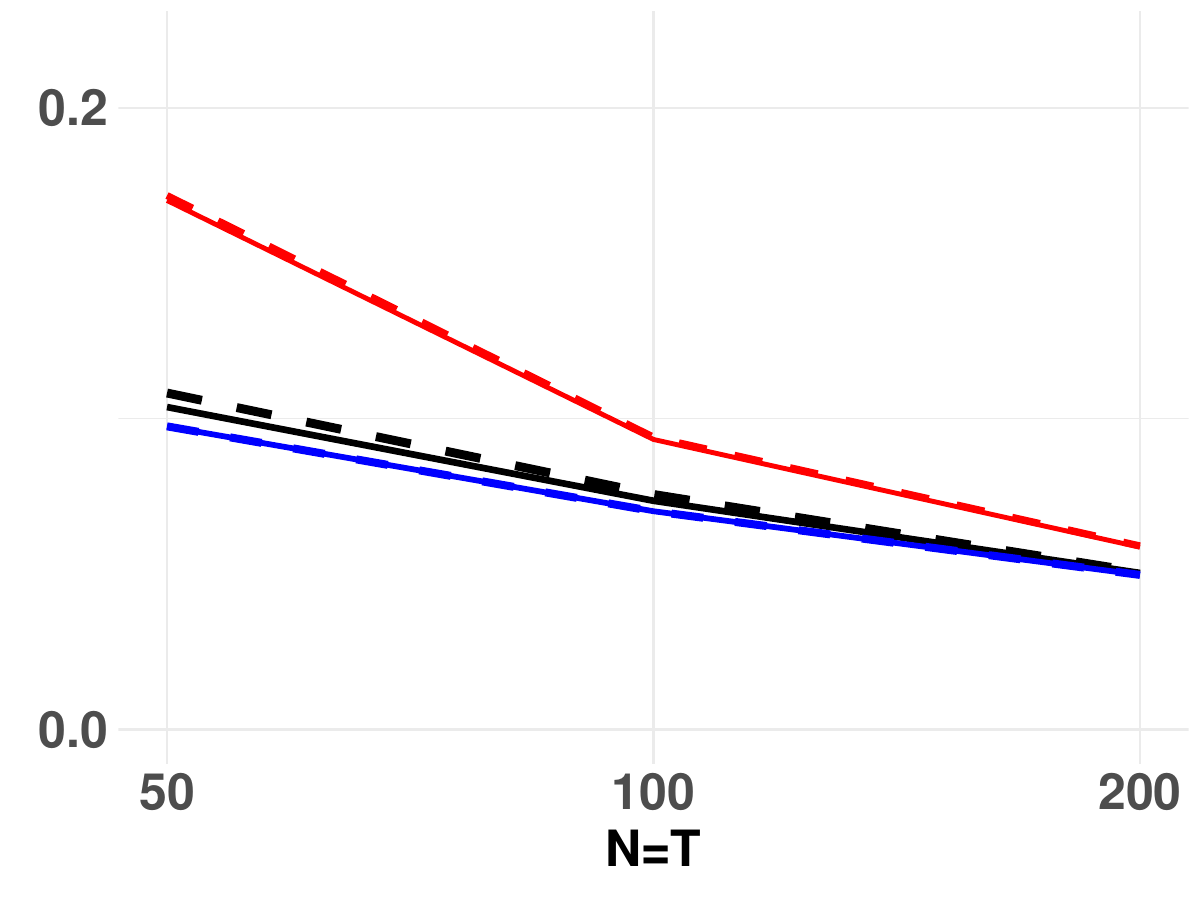}
\caption{$\rho_{wf}=0.0,\alpha_2 = 1.0$}
\label{fig:sd_f2w_00_10}
\end{subfigure}
\hfill
\begin{subfigure}[b]{0.32\textwidth}
\centering
\includegraphics[width=\textwidth]{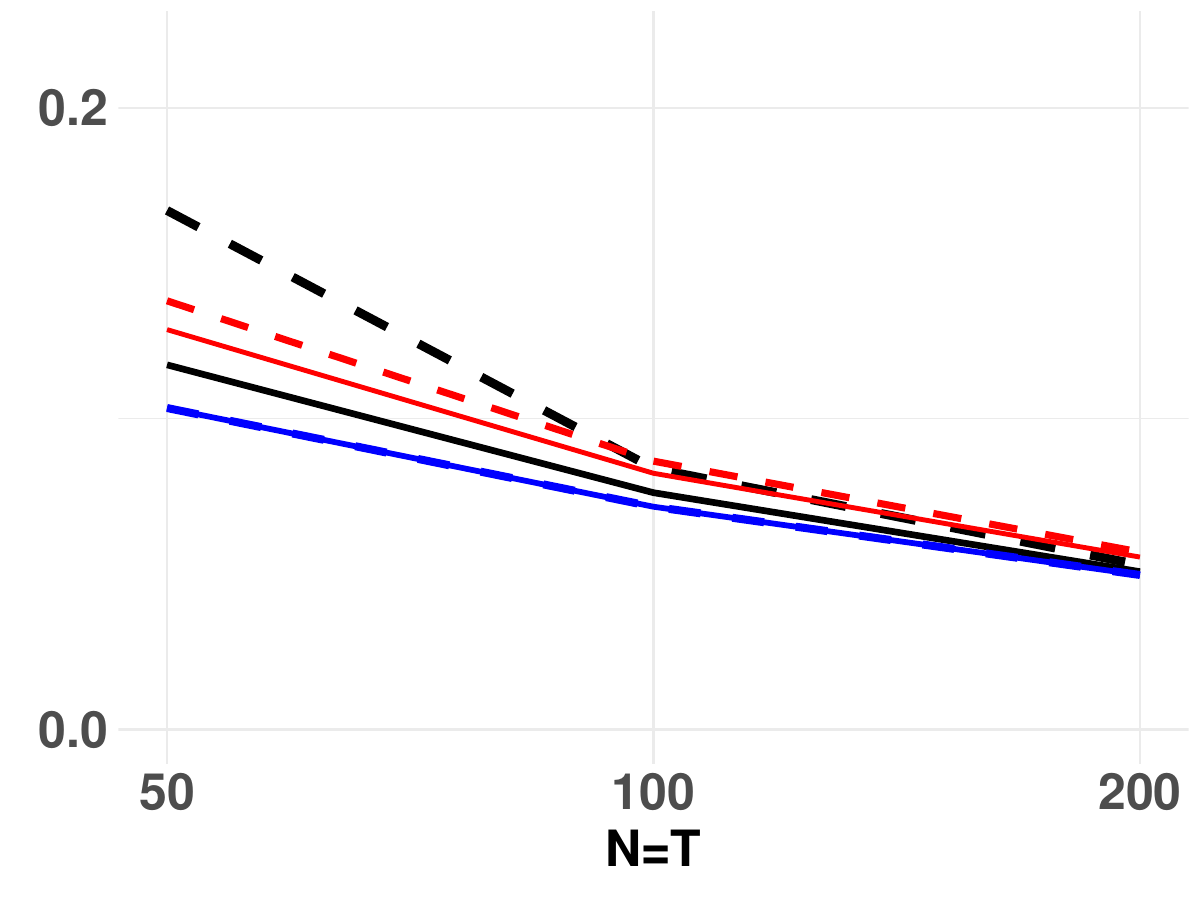}
\caption{$\rho_{wf}=0.0,\alpha_2 = 0.8$}
\label{fig:sd_f2w_00_08}
\end{subfigure}
\hfill
\begin{subfigure}[b]{0.32\textwidth}
\centering
\includegraphics[width=\textwidth]{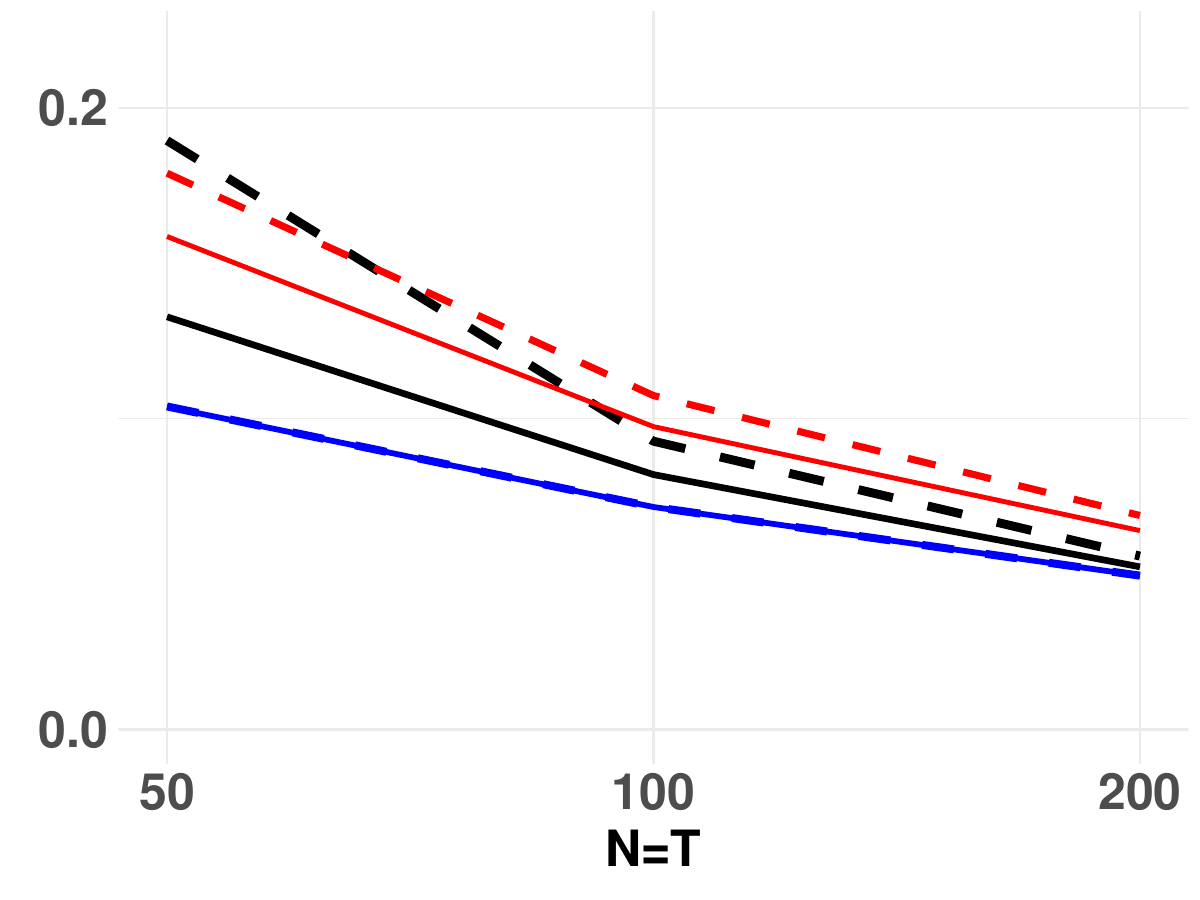}
\caption{$\rho_{wf}=0.0,\alpha_2 = 0.6$}
\label{fig:sd_f2w_00_06}
\end{subfigure}

\centering
\begin{subfigure}[b]{0.32\textwidth}
\centering
\includegraphics[width=\textwidth]{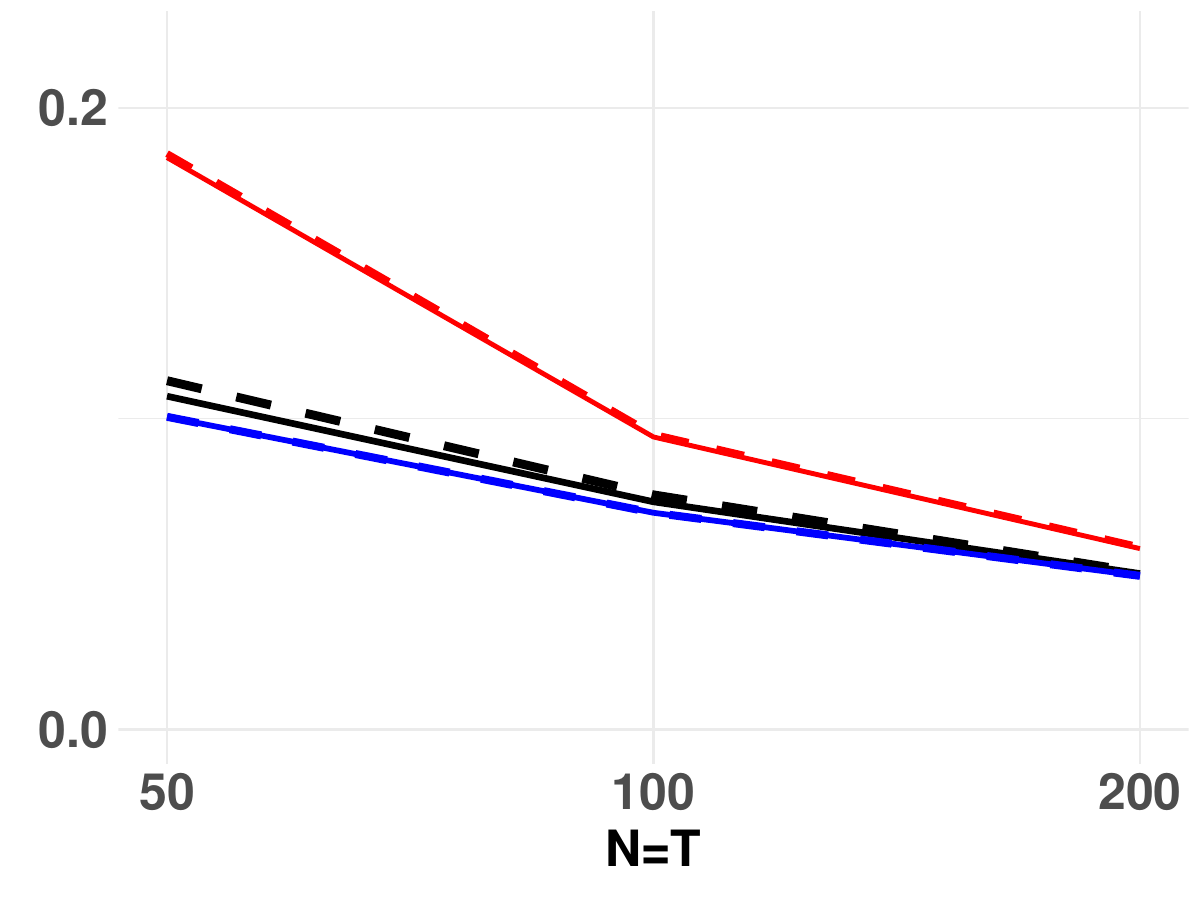}
\caption{$\rho_{wf}=0.6,\alpha_2 = 1.0$}
\label{fig:sd_f2w_06_10}
\end{subfigure}
\hfill
\begin{subfigure}[b]{0.32\textwidth}
\centering
\includegraphics[width=\textwidth]{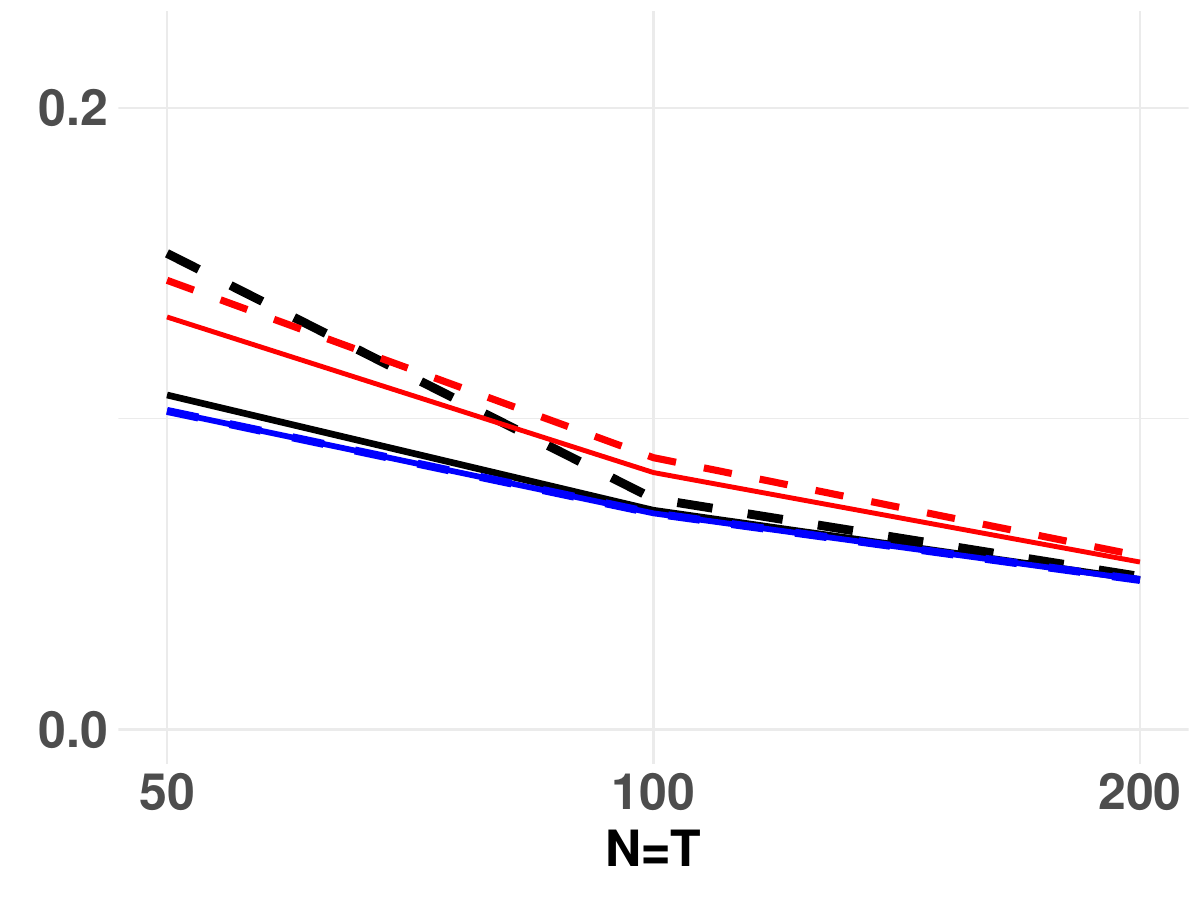}
\caption{$\rho_{wf}=0.6,\alpha_2 = 0.8$}
\label{fig:sd_f2w_06_08}
\end{subfigure}
\hfill
\begin{subfigure}[b]{0.32\textwidth}
\centering
\includegraphics[width=\textwidth]{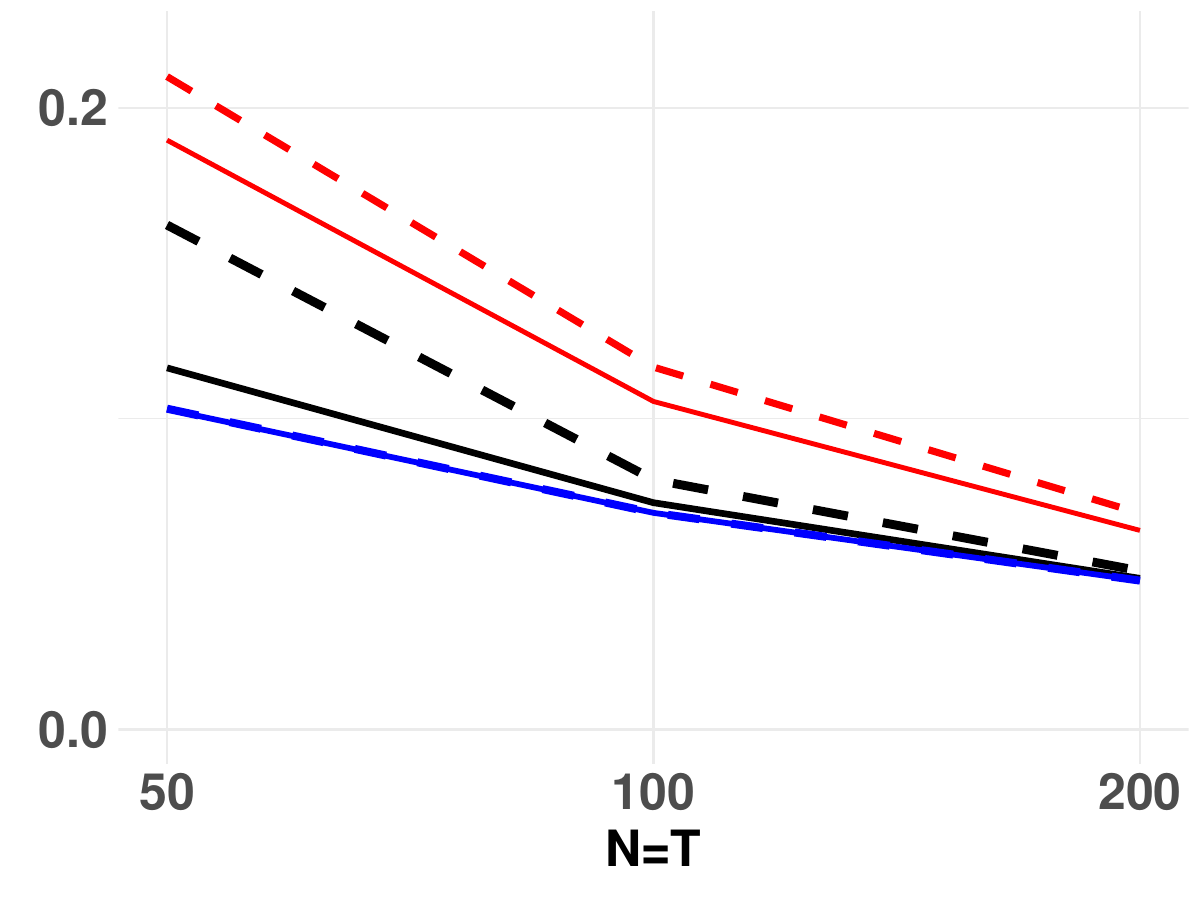}
\caption{$\rho_{wf}=0.6,\alpha_2 = 0.6$}
\label{fig:sd_f2w_06_06}
\end{subfigure}

\centering
\includegraphics[width=0.66\textwidth]{newimages/legend_SigE3_rhoe2_bias_rhowf0_a1_f2w.pdf}
\caption{Standard deviation of $\hat{\gamma}_{w,2}$ and its bias corrected versions for cross and serially correlated $e_{t,i}$}
\label{fig:sd.f2w}

\end{figure}

\begin{figure}[h!]	
\centering
\begin{subfigure}[b]{0.32\textwidth}
\centering
\includegraphics[width=\textwidth]{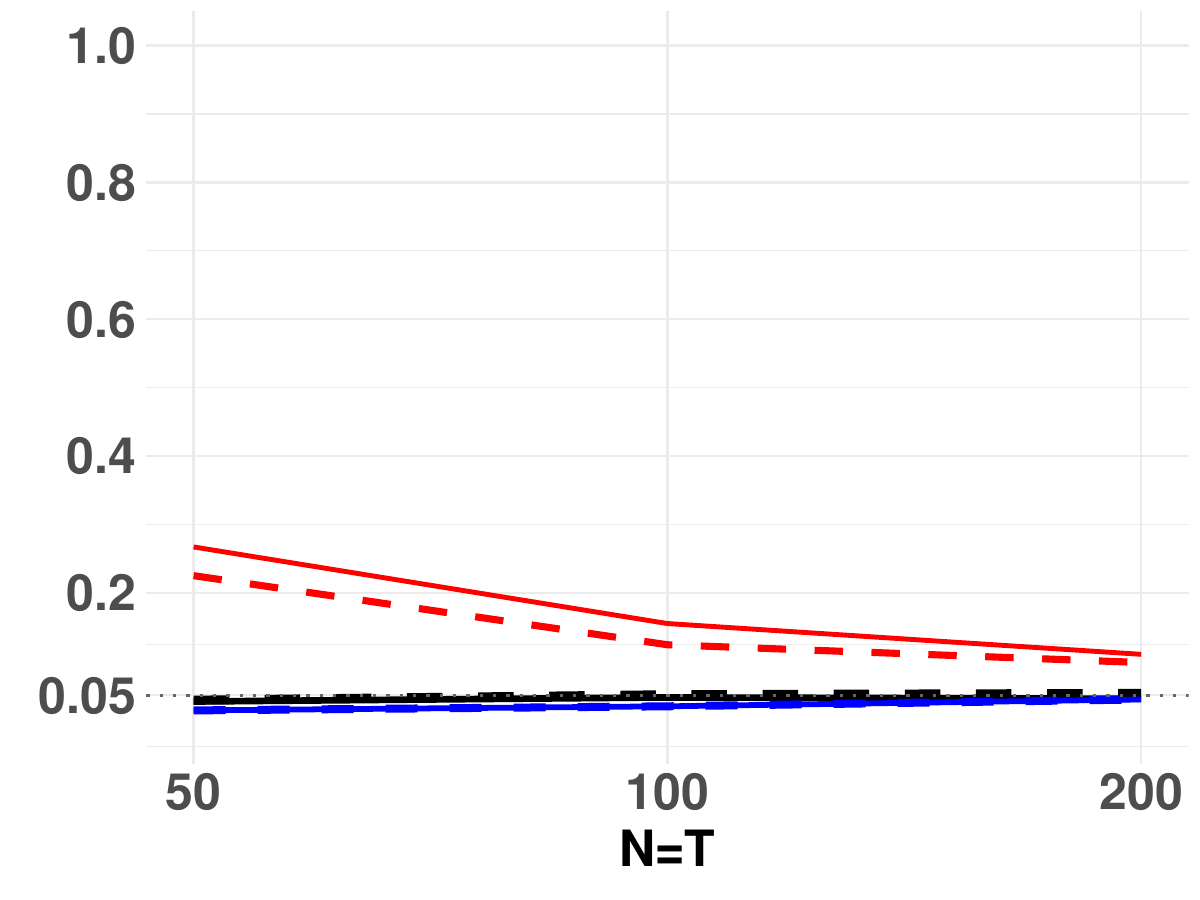}
\caption{$\rho_{wf}=0.0,\alpha_2 = 1.0$}
\label{fig:test_f2w_00_10}
\end{subfigure}
\hfill
\begin{subfigure}[b]{0.32\textwidth}
\centering
\includegraphics[width=\textwidth]{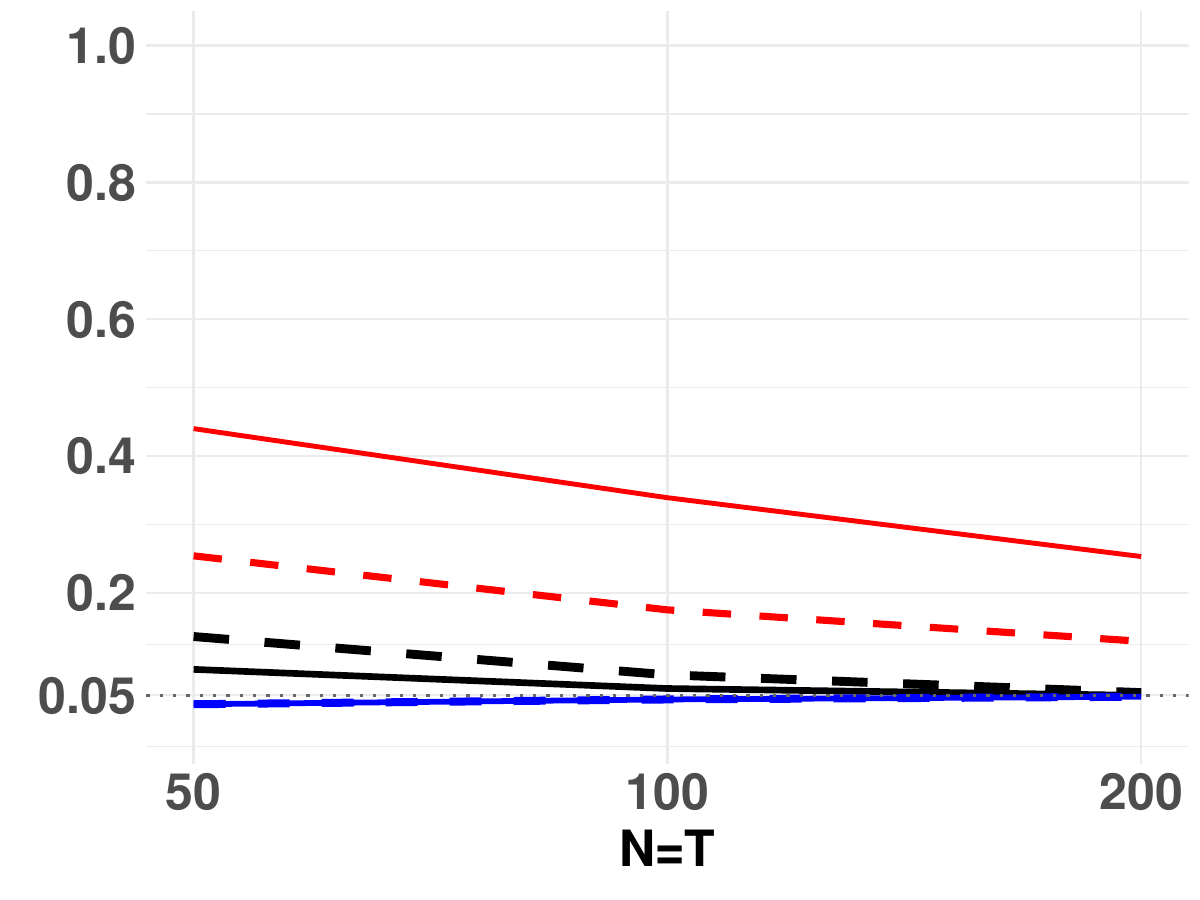}
\caption{$\rho_{wf}=0.0,\alpha_2 = 0.8$}
\label{fig:test_f2w_00_08}
\end{subfigure}
\hfill
\begin{subfigure}[b]{0.32\textwidth}
\centering
\includegraphics[width=\textwidth]{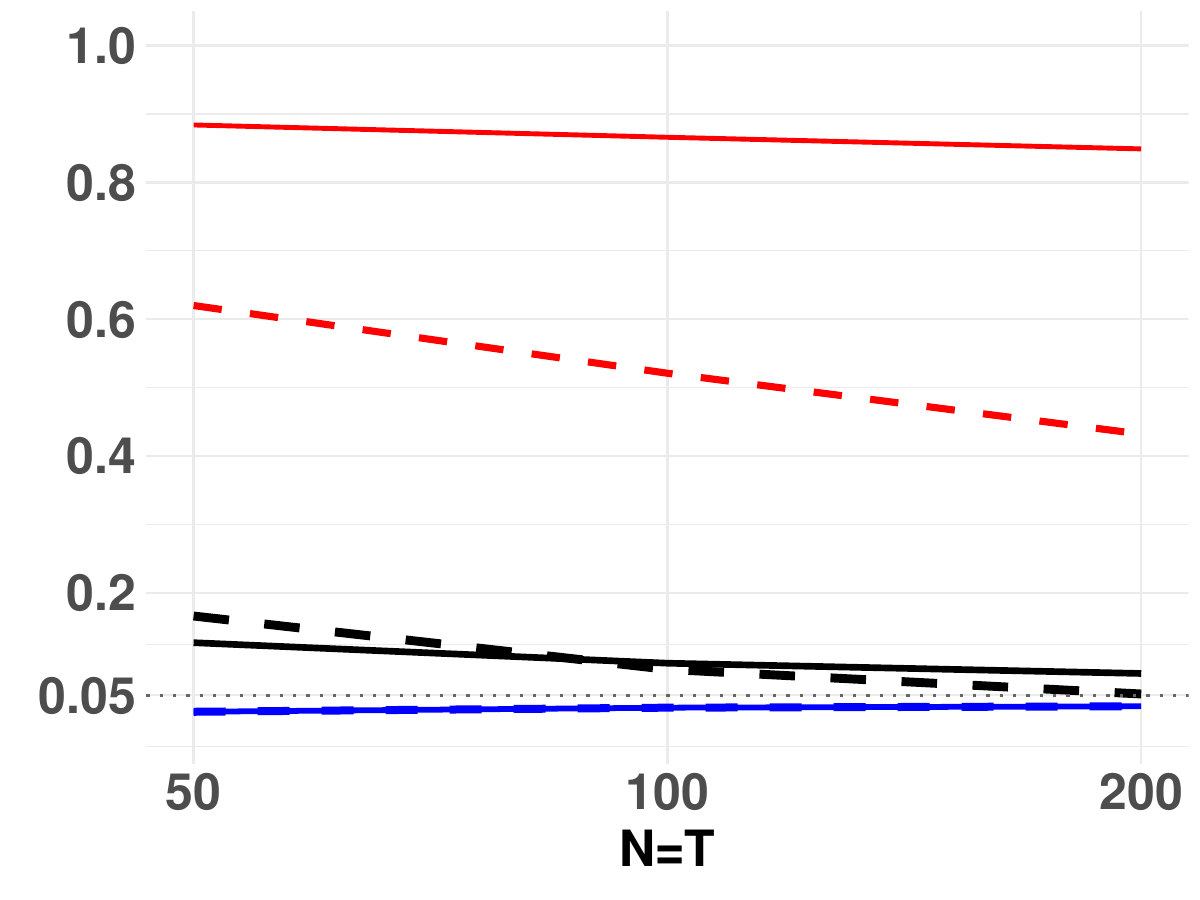}
\caption{$\rho_{wf}=0.0,\alpha_2 = 0.6$}
\label{fig:test_f2w_00_06}
\end{subfigure}

\centering
\begin{subfigure}[b]{0.32\textwidth}
\centering
\includegraphics[width=\textwidth]{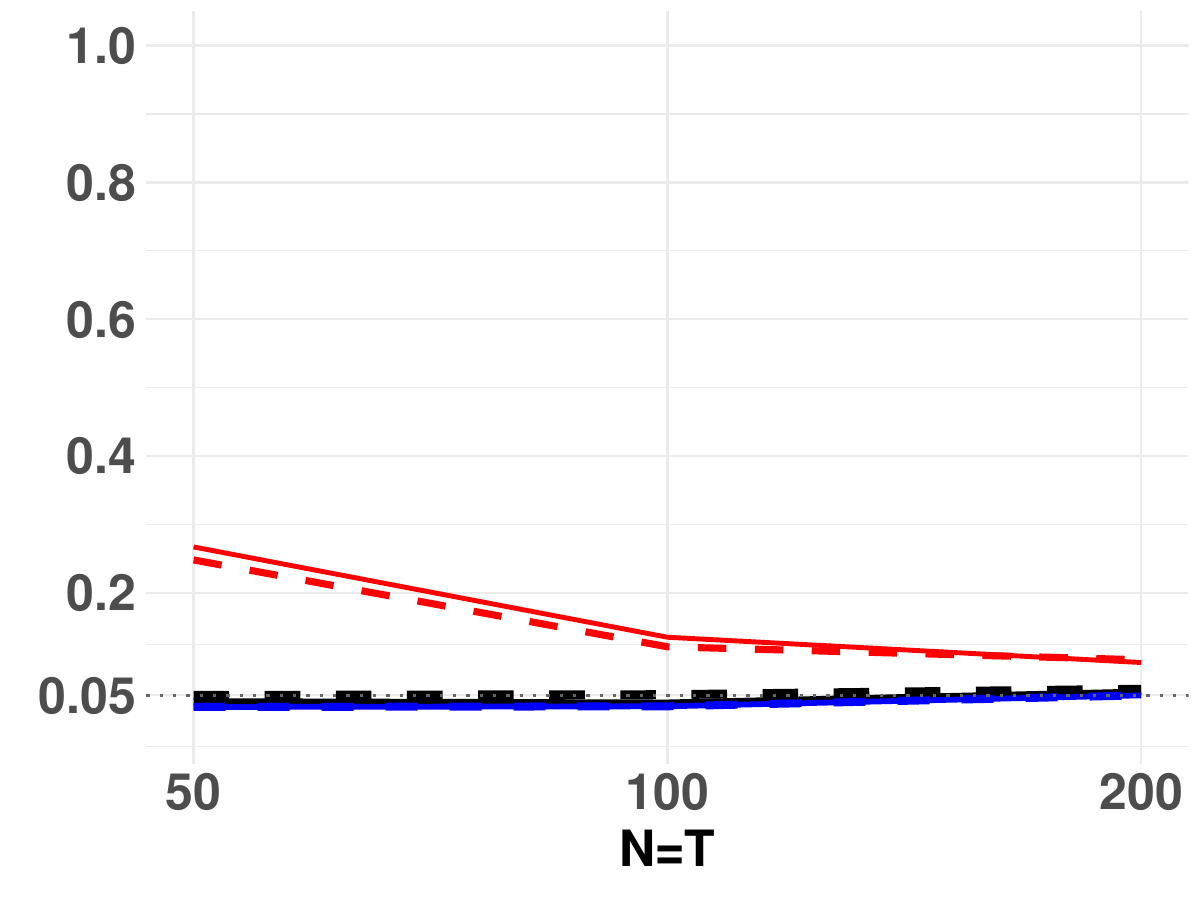}
\caption{$\rho_{wf}=0.6,\alpha_2 = 1.0$}
\label{fig:test_f2w_06_10}
\end{subfigure}
\hfill
\begin{subfigure}[b]{0.32\textwidth}
\centering
\includegraphics[width=\textwidth]{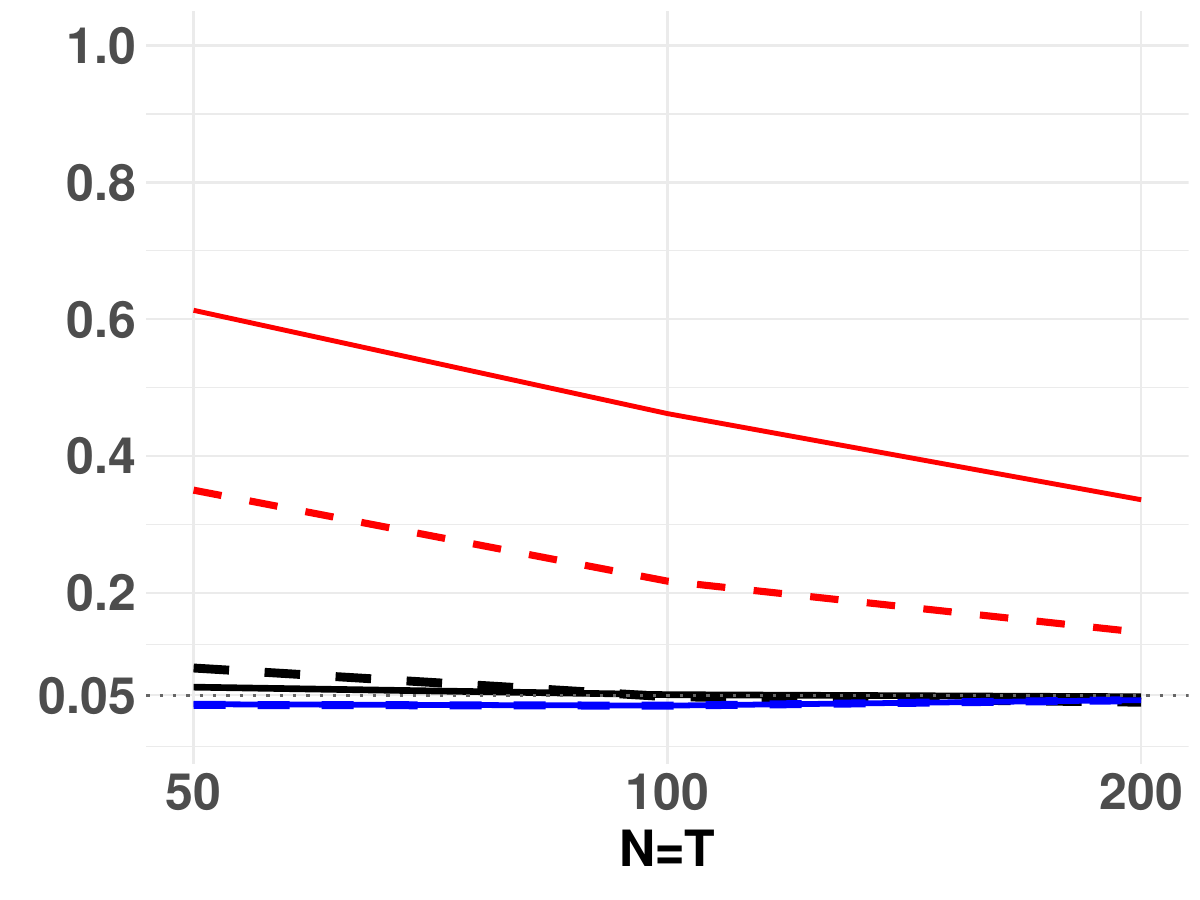}
\caption{$\rho_{wf}=0.6,\alpha_2 = 0.8$}
\label{fig:test_f2w_06_08}
\end{subfigure}
\hfill
\begin{subfigure}[b]{0.32\textwidth}
\centering
\includegraphics[width=\textwidth]{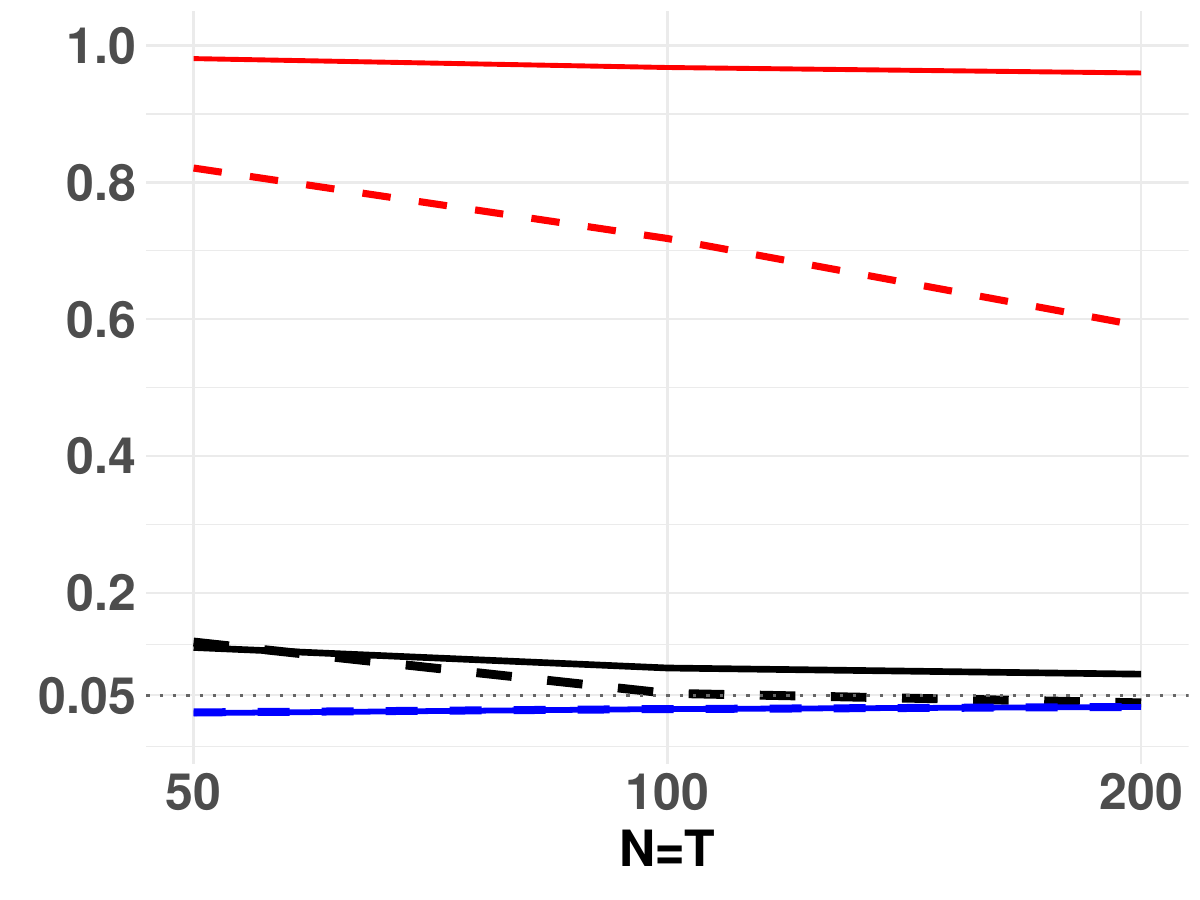}
\caption{$\rho_{wf}=0.6,\alpha_2 = 0.6$}
\label{fig:test_f2w_06_06}
\end{subfigure}

\centering
\includegraphics[width=0.66\textwidth]{newimages/legend_SigE3_rhoe2_bias_rhowf0_a1_f2w.pdf}
\caption{Size of the t-tests using $\hat{\gamma}_{w,2}$ and its bias corrected versions for cross and serially correlated $e_{t,i}$}
\label{fig:test.f2w}

\end{figure}

\subsection{Coefficient on the observed factor, using orthoganised $x_{t,i}$ to $w_t$}

\begin{figure}[h!]	
\centering
\begin{subfigure}[b]{0.32\textwidth}
\centering
\includegraphics[width=\textwidth]{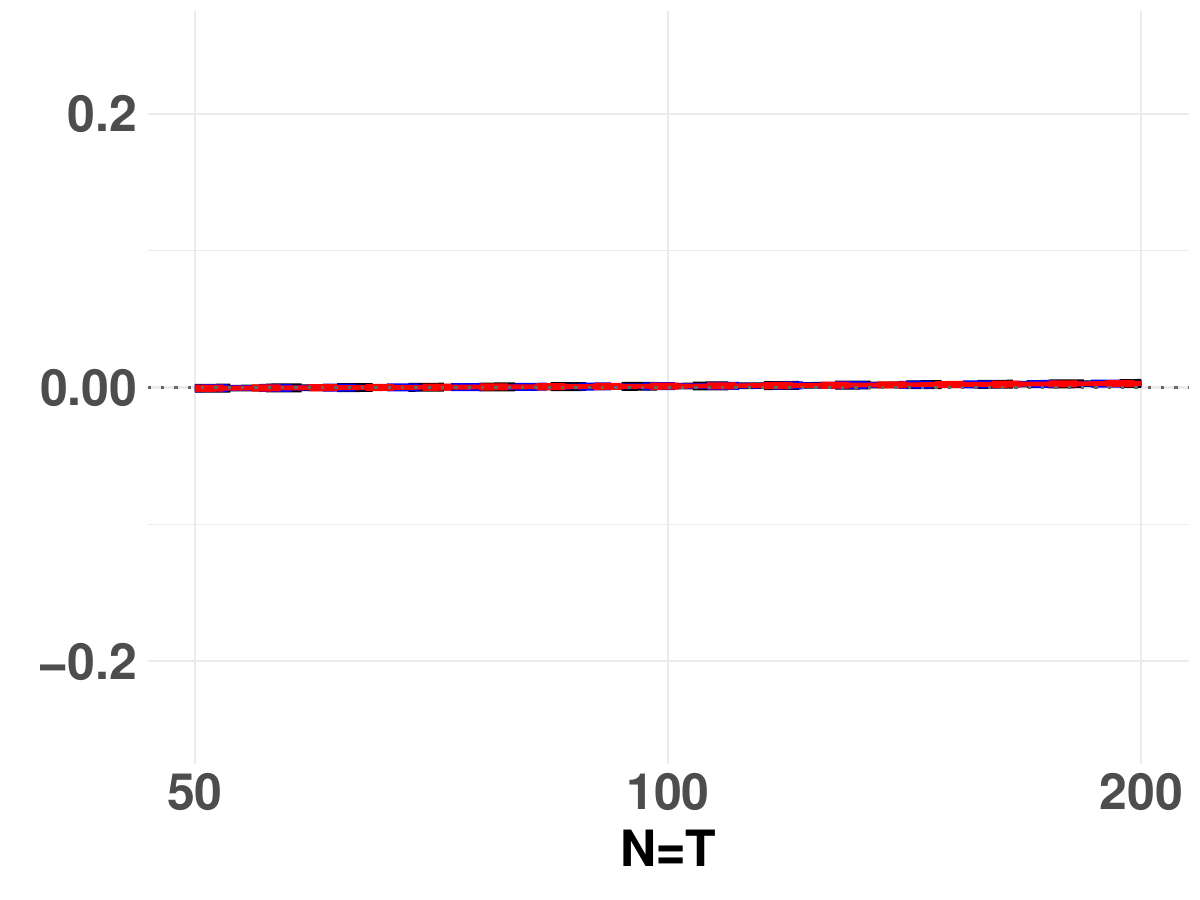}
\caption{$\rho_{wf}=0.0,\alpha_2 = 1.0$}
\label{fig:bias_ww_00_10}
\end{subfigure}
\hfill
\begin{subfigure}[b]{0.32\textwidth}
\centering
\includegraphics[width=\textwidth]{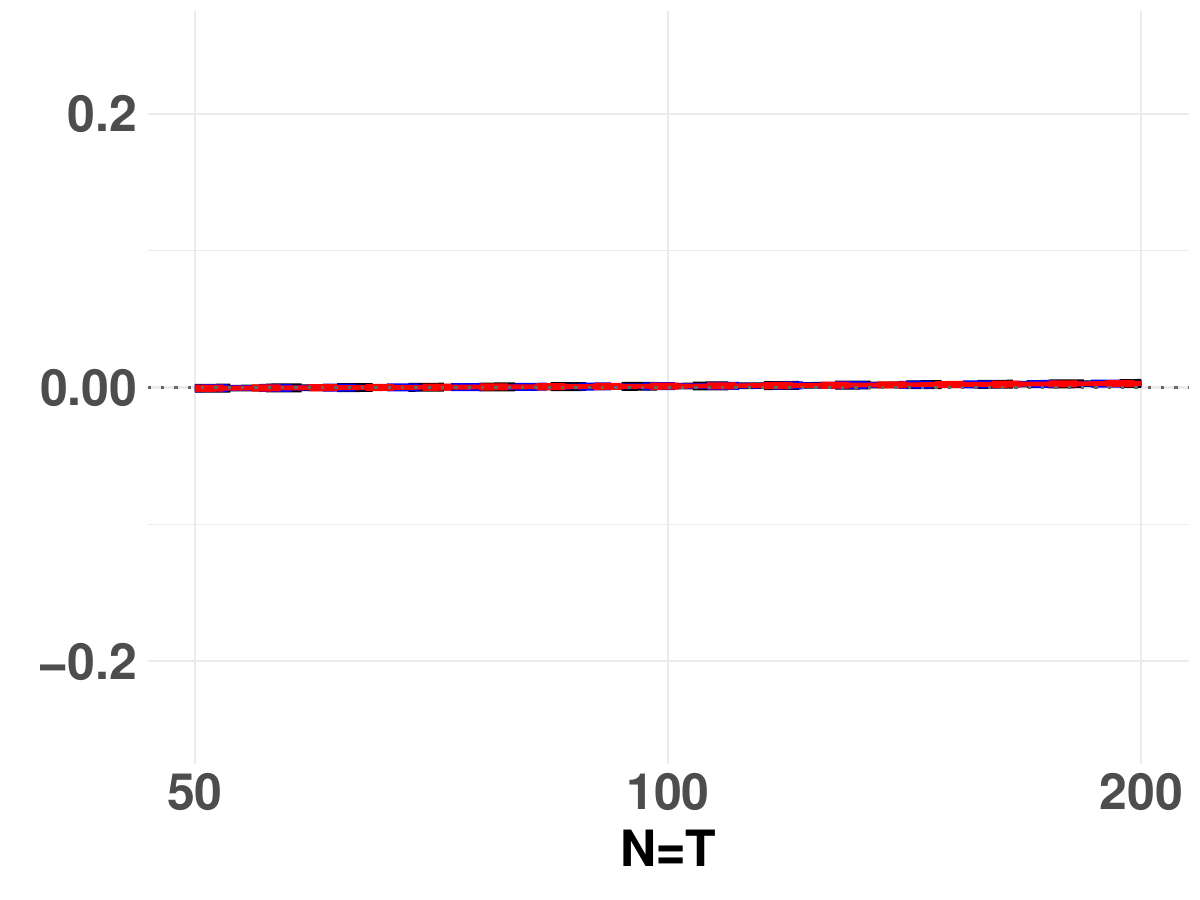}
\caption{$\rho_{wf}=0.0,\alpha_2 = 0.8$}
\label{fig:bias_ww_00_08}
\end{subfigure}
\hfill
\begin{subfigure}[b]{0.32\textwidth}
\centering
\includegraphics[width=\textwidth]{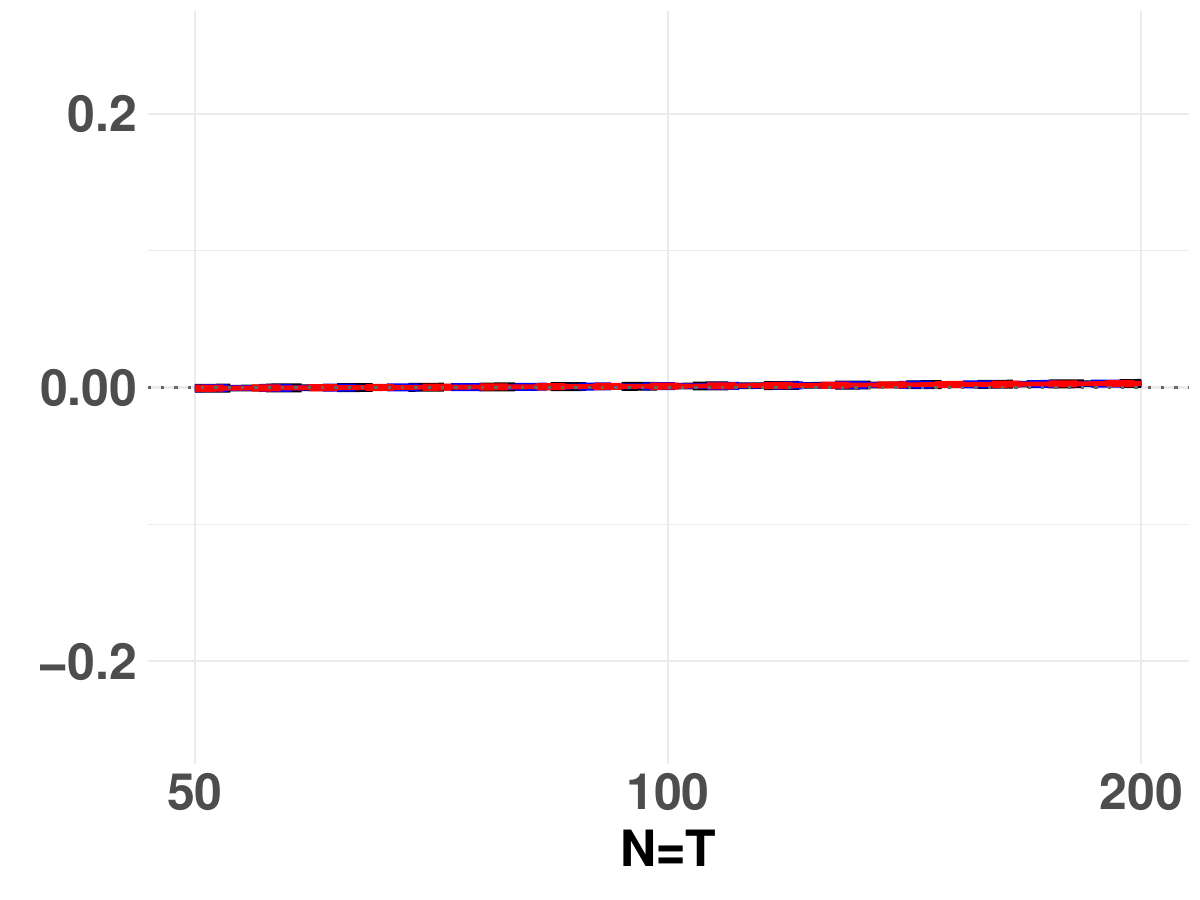}
\caption{$\rho_{wf}=0.0,\alpha_2 = 0.6$}
\label{fig:bias_ww_00_06}
\end{subfigure}

\centering
\begin{subfigure}[b]{0.32\textwidth}
\centering
\includegraphics[width=\textwidth]{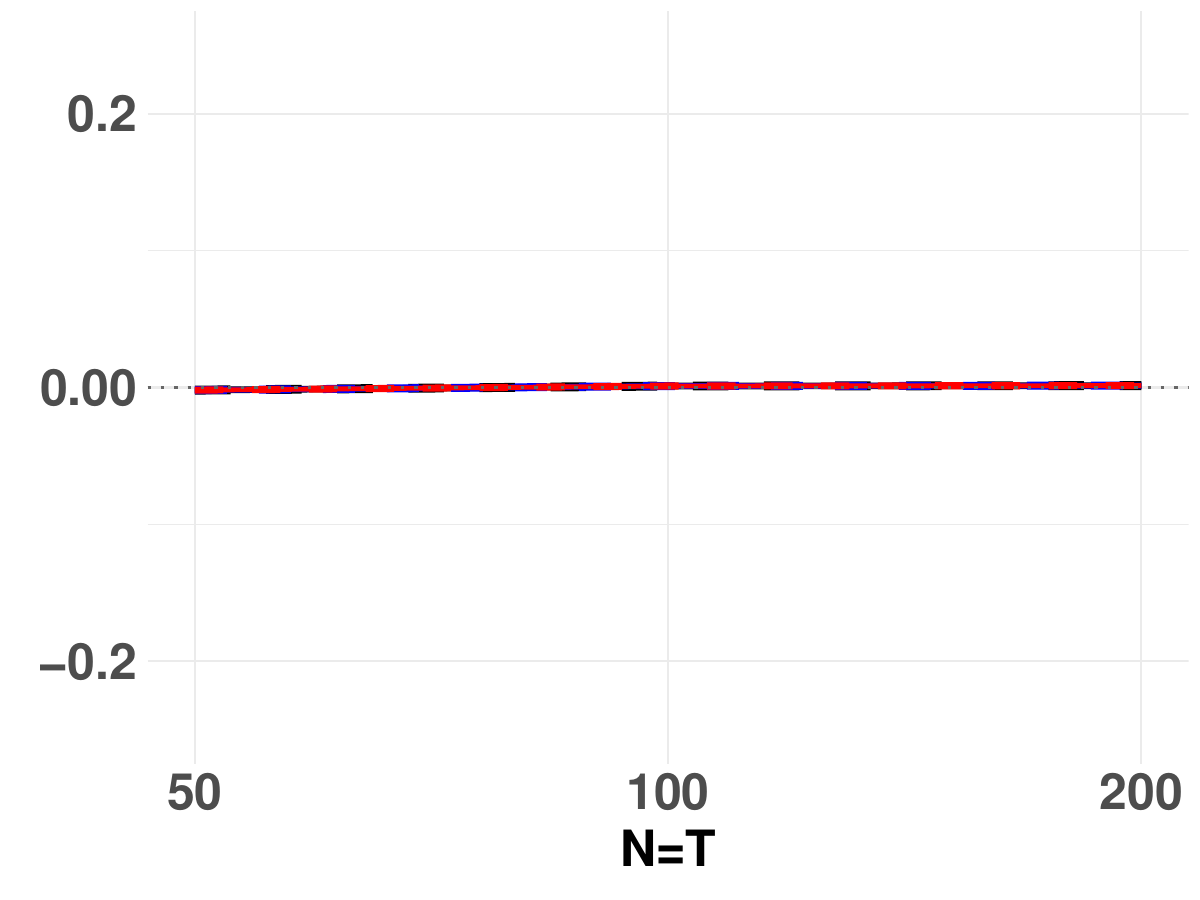}
\caption{$\rho_{wf}=0.6,\alpha_2 = 1.0$}
\label{fig:bias_ww_06_10}
\end{subfigure}
\hfill
\begin{subfigure}[b]{0.32\textwidth}
\centering
\includegraphics[width=\textwidth]{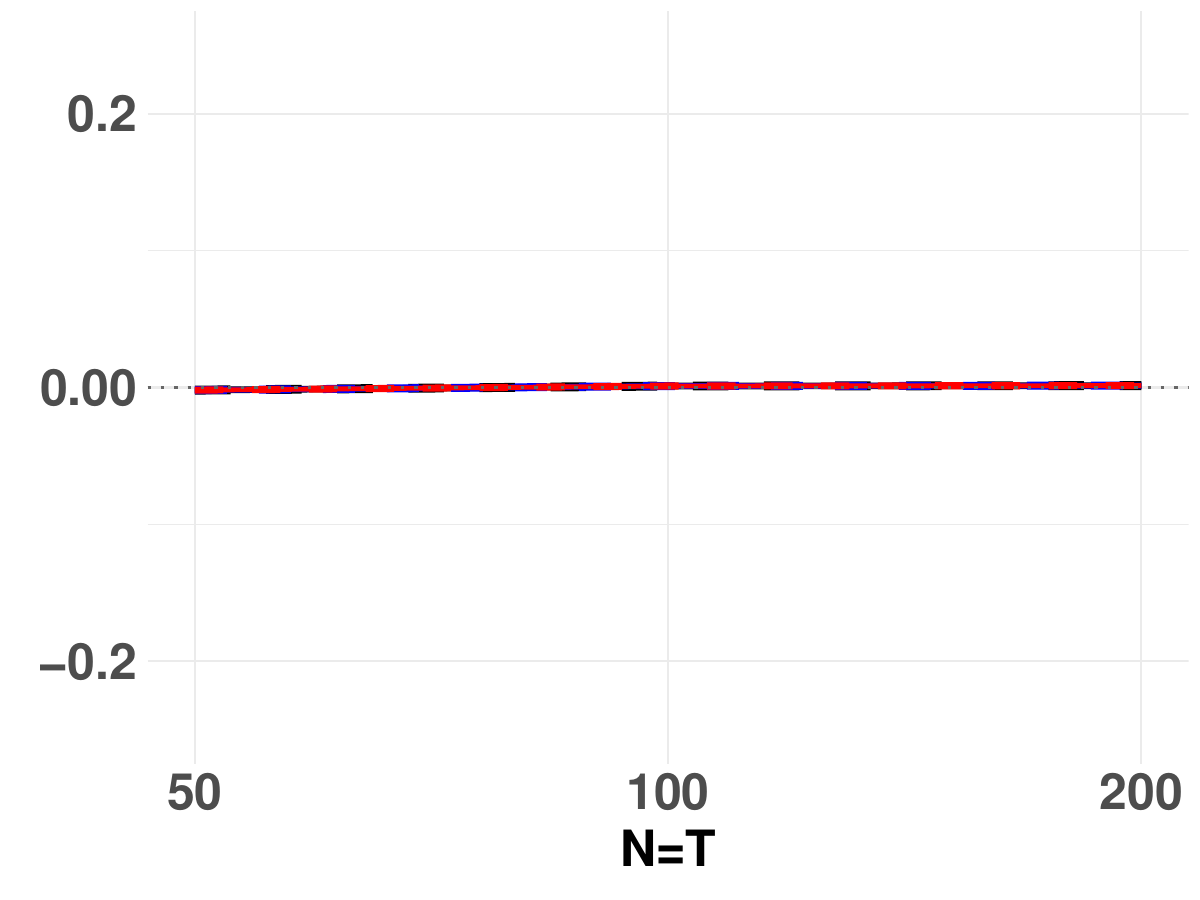}
\caption{$\rho_{wf}=0.6,\alpha_2 = 0.8$}
\label{fig:bias_ww_06_08}
\end{subfigure}
\hfill
\begin{subfigure}[b]{0.32\textwidth}
\centering
\includegraphics[width=\textwidth]{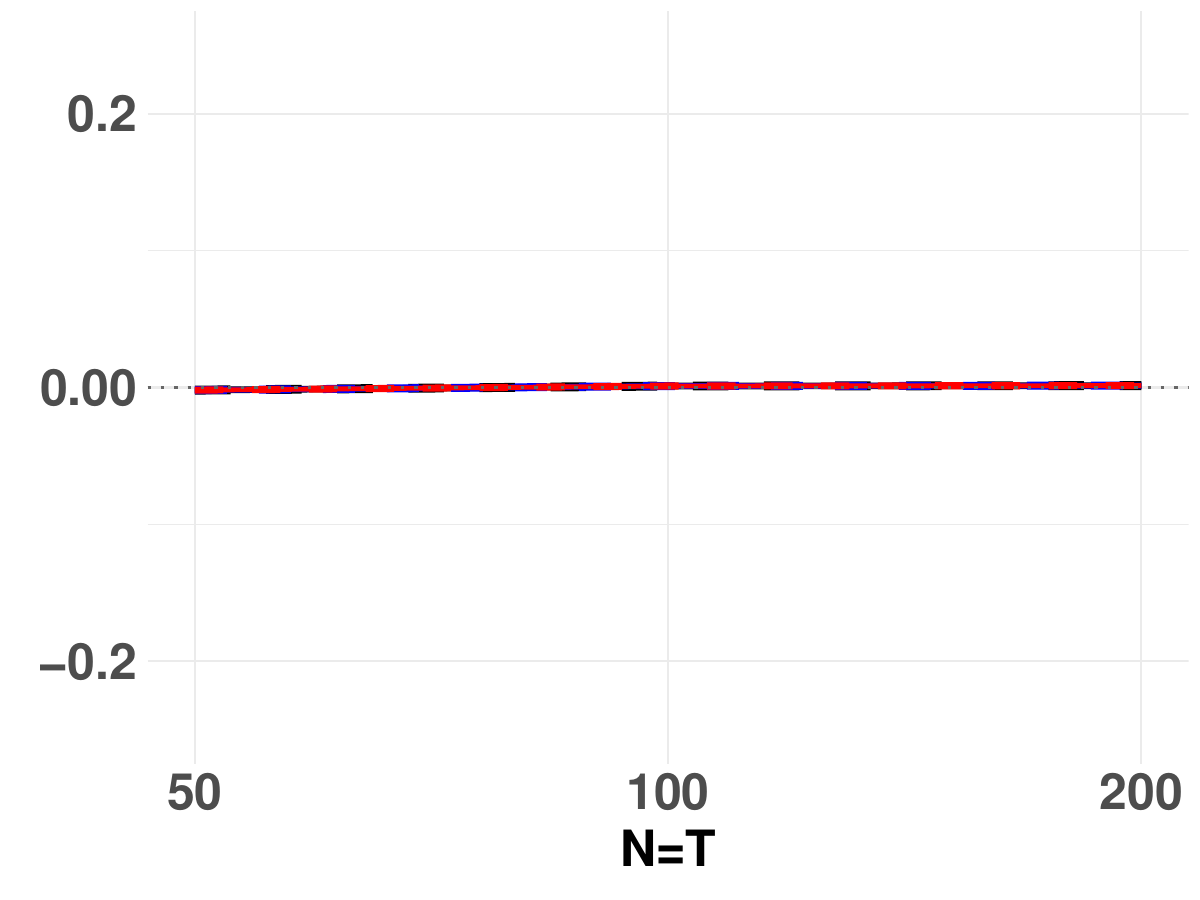}
\caption{$\rho_{wf}=0.6,\alpha_2 = 0.6$}
\label{fig:bias_ww_06_06}
\end{subfigure}

\centering
\includegraphics[width=0.66\textwidth]{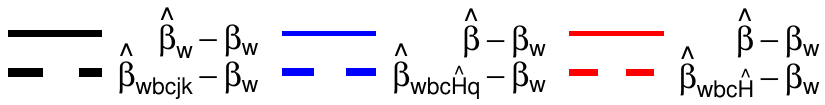}
\caption{Bias of $\hat{\beta}_w$ and its bias corrected versions for cross and serially correlated $e_{t,i}$}
\label{fig:bias.ww}

\end{figure}

\begin{figure}[h!]	
\centering
\begin{subfigure}[b]{0.32\textwidth}
\centering
\includegraphics[width=\textwidth]{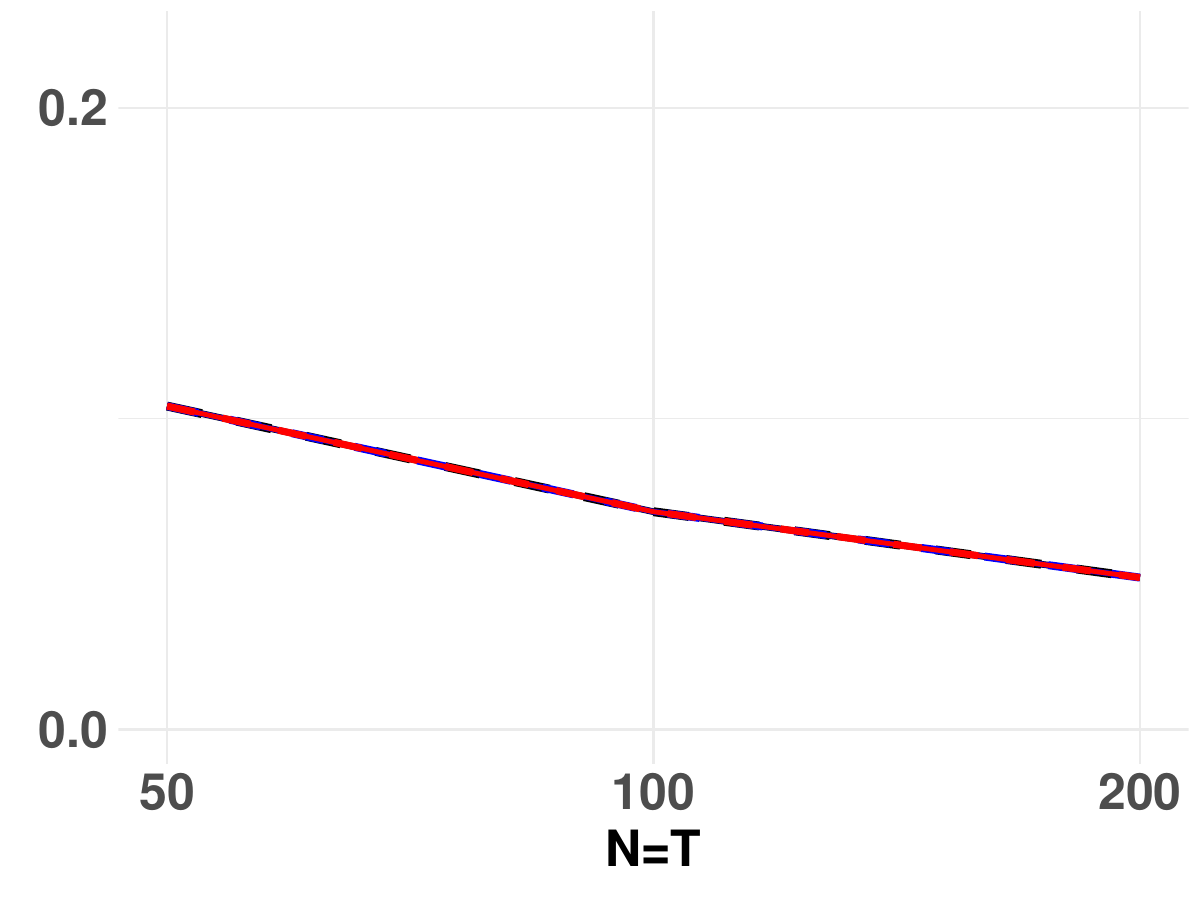}
\caption{$\rho_{wf}=0.0,\alpha_2 = 1.0$}
\label{fig:sd_ww_00_10}
\end{subfigure}
\hfill
\begin{subfigure}[b]{0.32\textwidth}
\centering
\includegraphics[width=\textwidth]{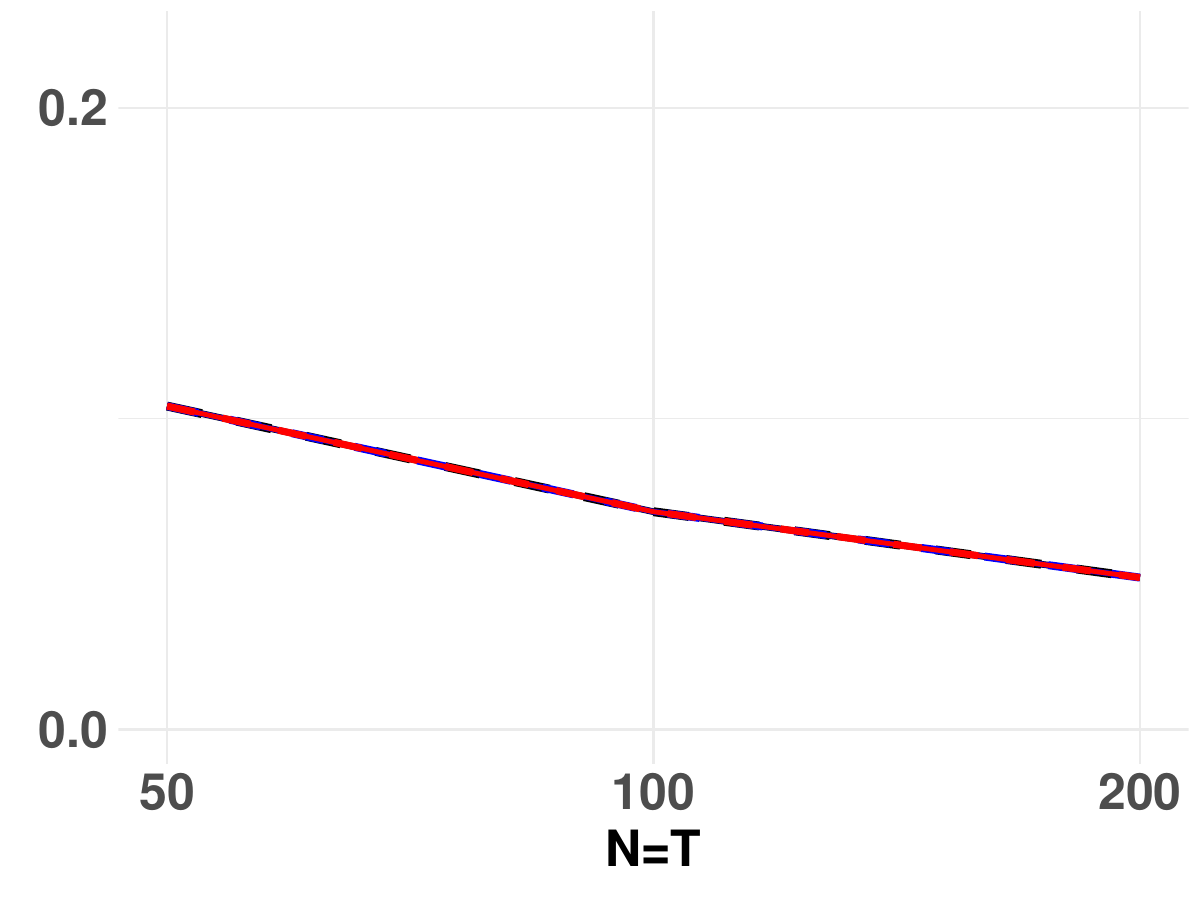}
\caption{$\rho_{wf}=0.0,\alpha_2 = 0.8$}
\label{fig:sd_ww_00_08}
\end{subfigure}
\hfill
\begin{subfigure}[b]{0.32\textwidth}
\centering
\includegraphics[width=\textwidth]{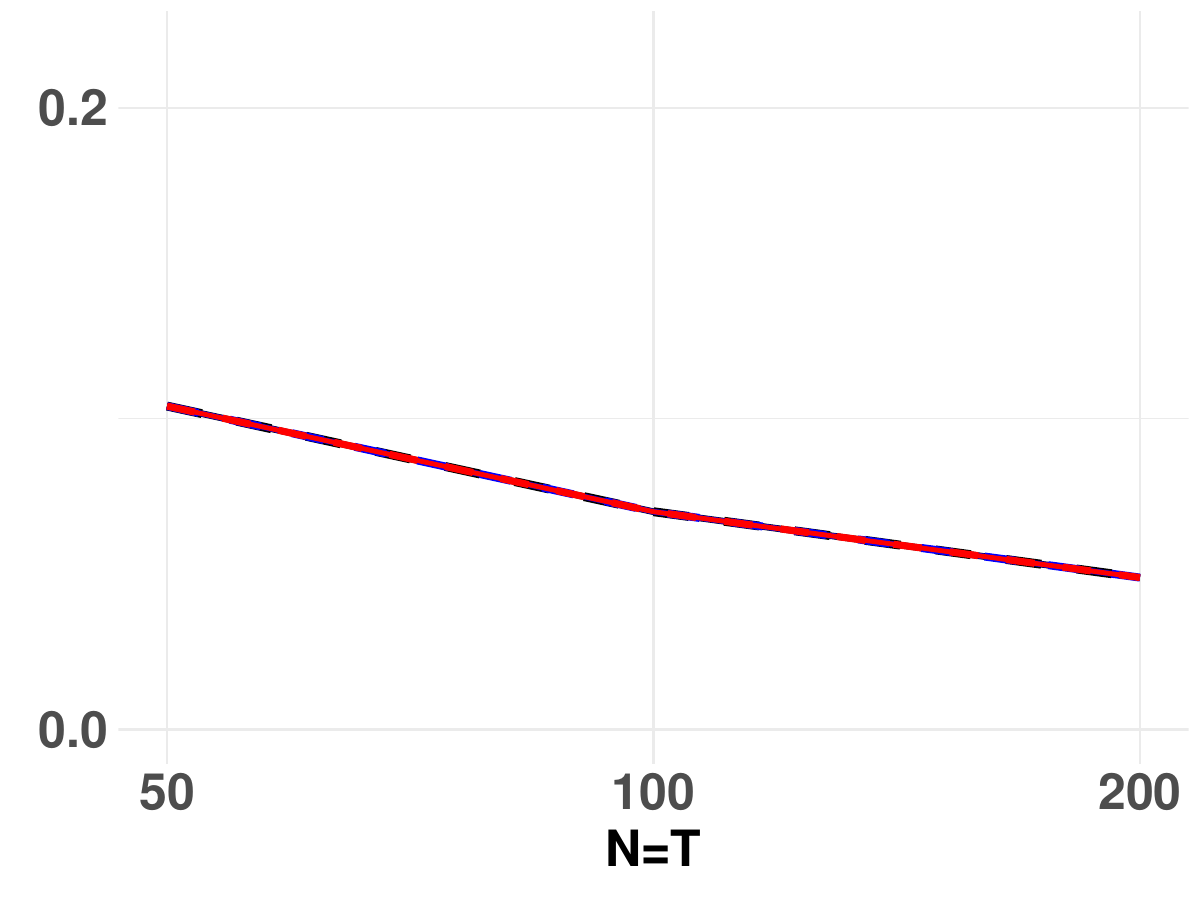}
\caption{$\rho_{wf}=0.0,\alpha_2 = 0.6$}
\label{fig:sd_ww_00_06}
\end{subfigure}

\centering
\begin{subfigure}[b]{0.32\textwidth}
\centering
\includegraphics[width=\textwidth]{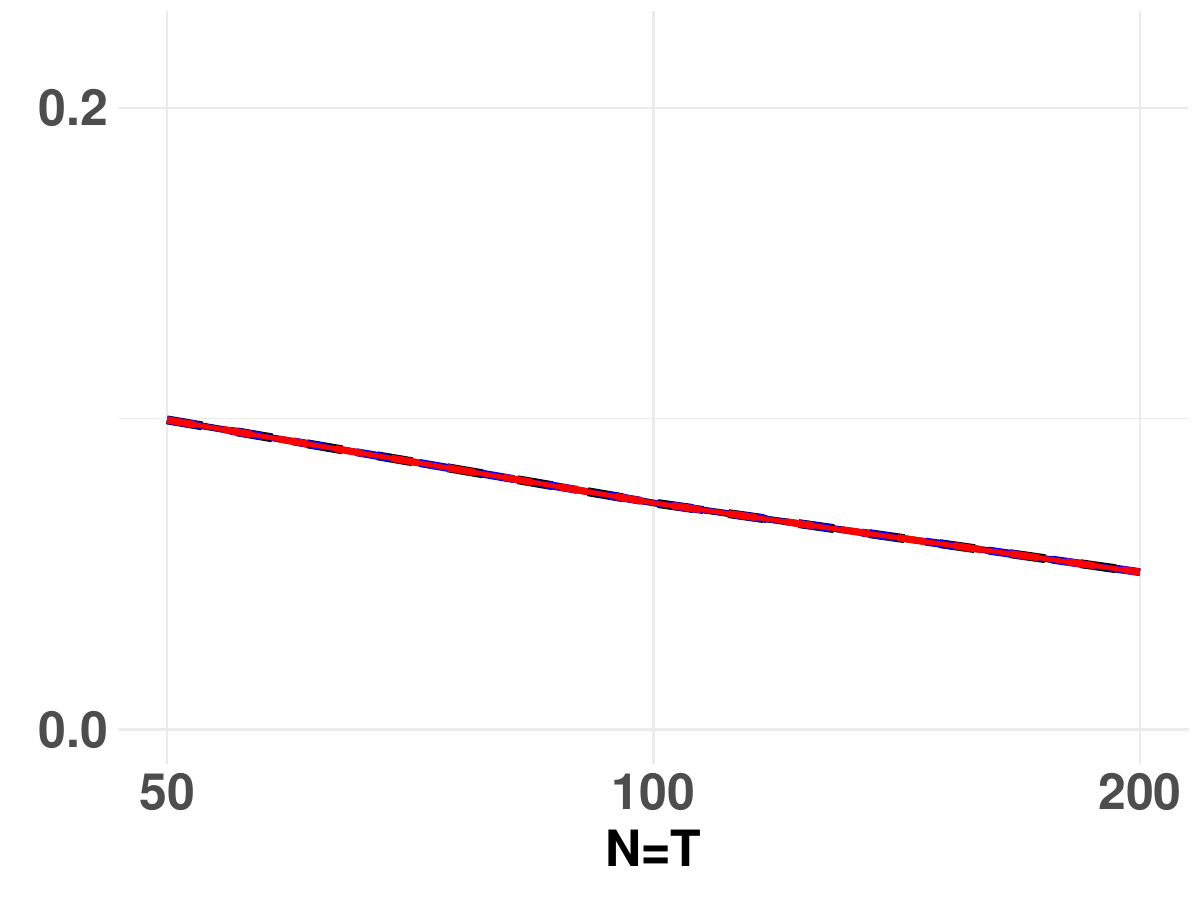}
\caption{$\rho_{wf}=0.6,\alpha_2 = 1.0$}
\label{fig:sd_ww_06_10}
\end{subfigure}
\hfill
\begin{subfigure}[b]{0.32\textwidth}
\centering
\includegraphics[width=\textwidth]{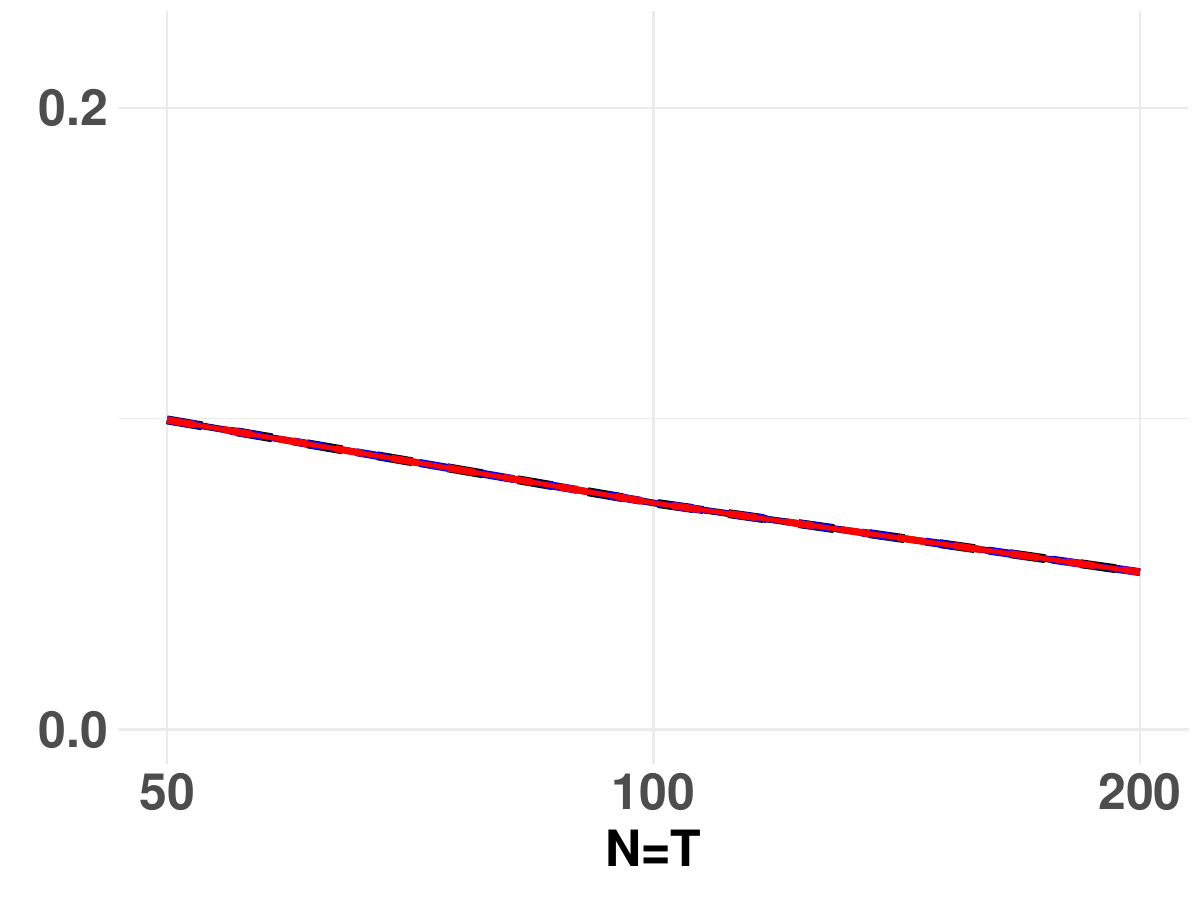}
\caption{$\rho_{wf}=0.6,\alpha_2 = 0.8$}
\label{fig:sd_ww_06_08}
\end{subfigure}
\hfill
\begin{subfigure}[b]{0.32\textwidth}
\centering
\includegraphics[width=\textwidth]{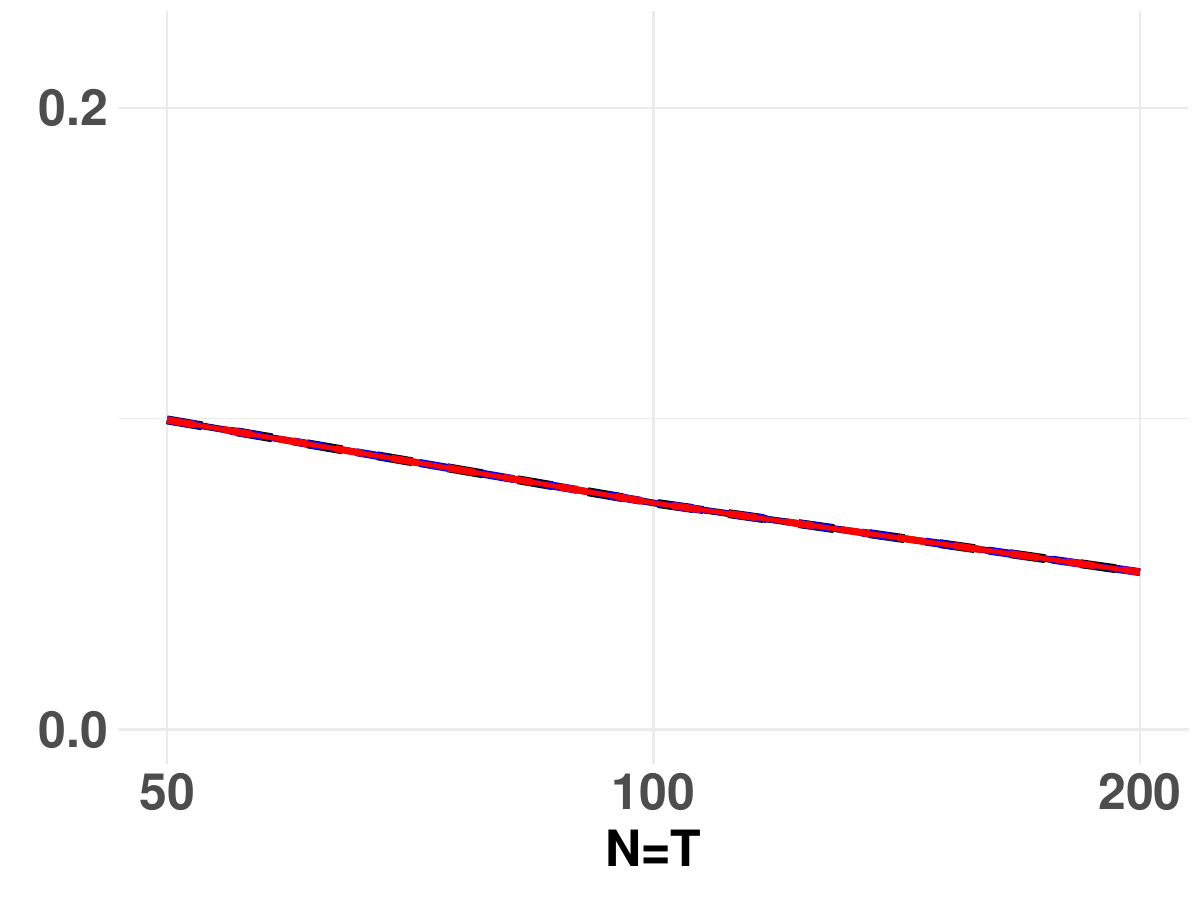}
\caption{$\rho_{wf}=0.6,\alpha_2 = 0.6$}
\label{fig:sd_ww_06_06}
\end{subfigure}

\centering
\includegraphics[width=0.66\textwidth]{newimages/legend_SigE3_rhoe2_bias_rhowf0_a1_ww.pdf}
\caption{Standard deviation of $\hat{\beta}_w$ and its bias corrected versions for cross and serially correlated $e_{t,i}$}
\label{fig:sd.ww}

\end{figure}

\begin{figure}[h!]	
\centering
\begin{subfigure}[b]{0.32\textwidth}
\centering
\includegraphics[width=\textwidth]{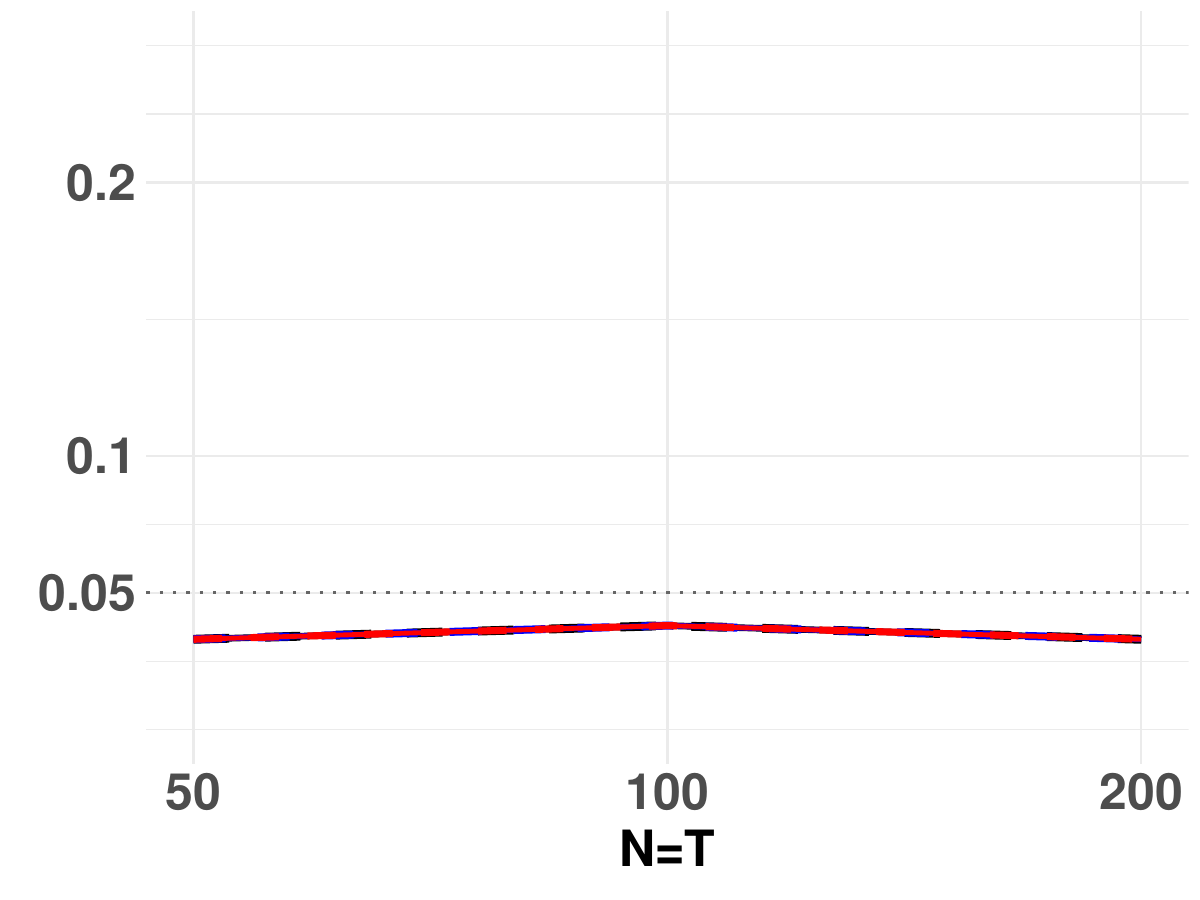}
\caption{$\rho_{wf}=0.0,\alpha_2 = 1.0$}
\label{fig:test_ww_00_10}
\end{subfigure}
\hfill
\begin{subfigure}[b]{0.32\textwidth}
\centering
\includegraphics[width=\textwidth]{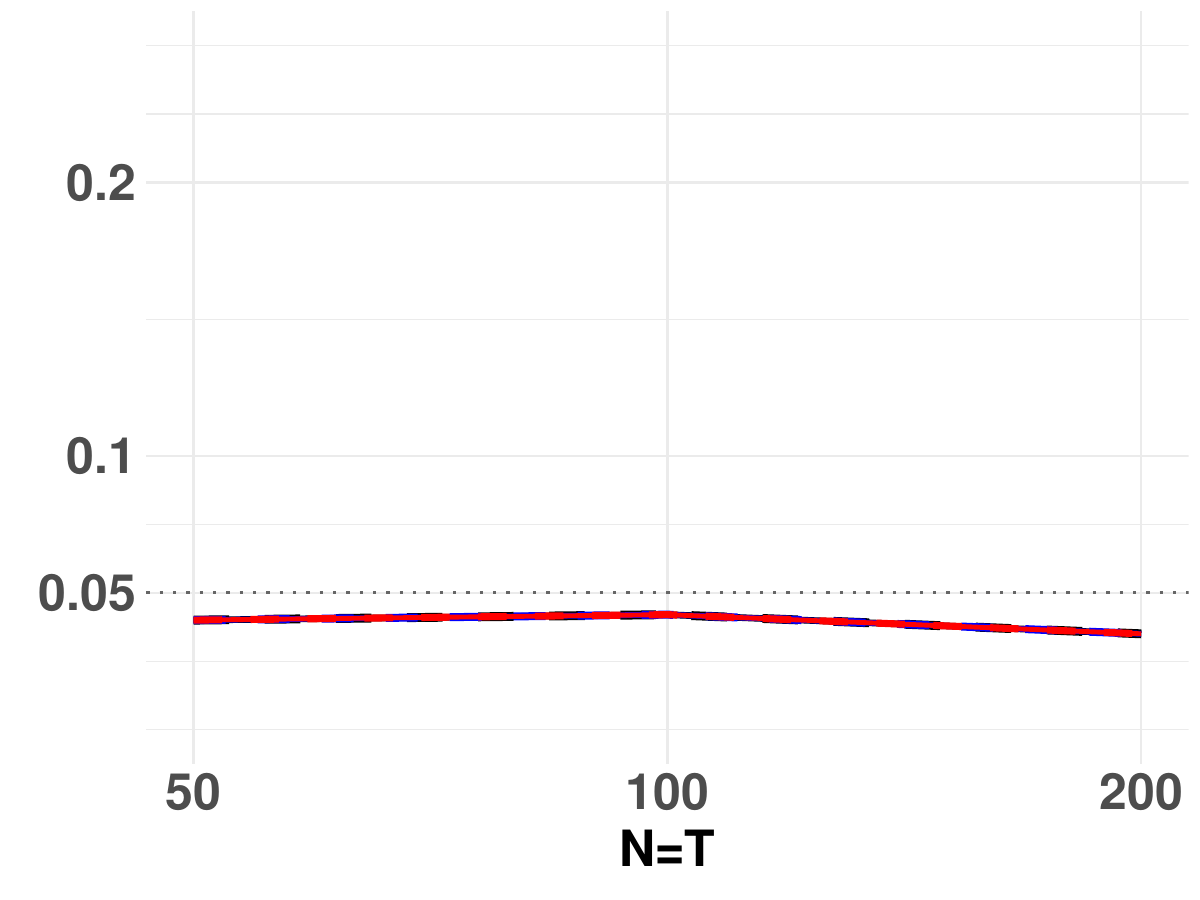}
\caption{$\rho_{wf}=0.0,\alpha_2 = 0.8$}
\label{fig:test_ww_00_08}
\end{subfigure}
\hfill
\begin{subfigure}[b]{0.32\textwidth}
\centering
\includegraphics[width=\textwidth]{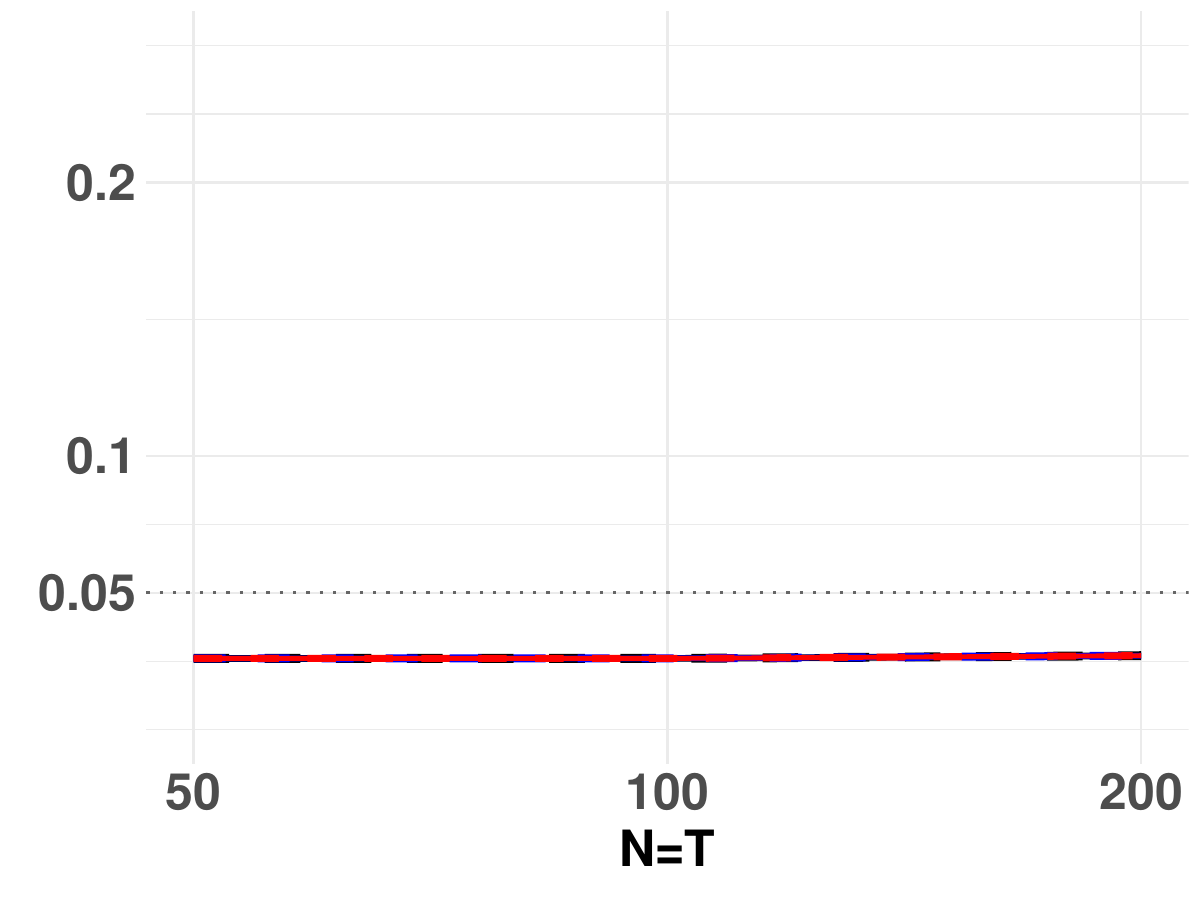}
\caption{$\rho_{wf}=0.0,\alpha_2 = 0.6$}
\label{fig:test_ww_00_06}
\end{subfigure}

\centering
\begin{subfigure}[b]{0.32\textwidth}
\centering
\includegraphics[width=\textwidth]{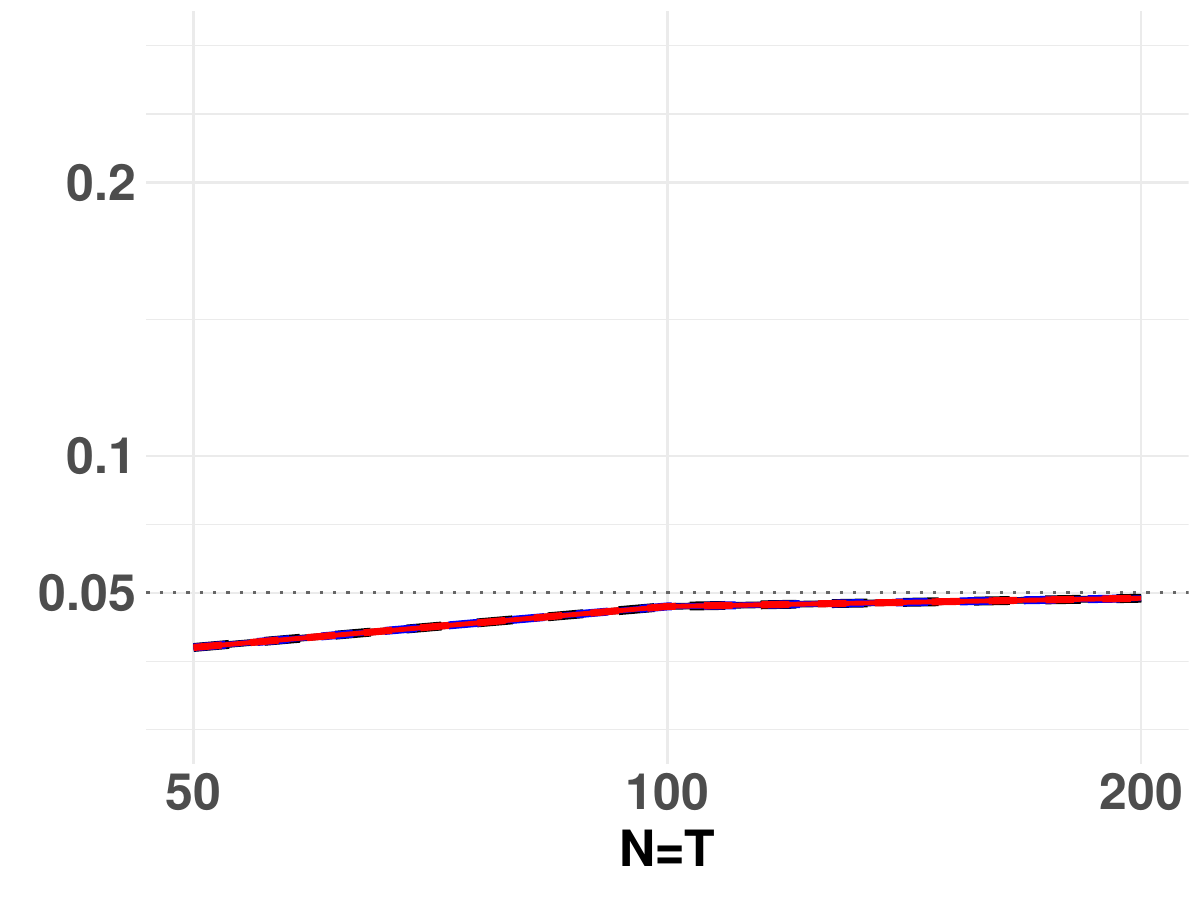}
\caption{$\rho_{wf}=0.6,\alpha_2 = 1.0$}
\label{fig:test_ww_06_10}
\end{subfigure}
\hfill
\begin{subfigure}[b]{0.32\textwidth}
\centering
\includegraphics[width=\textwidth]{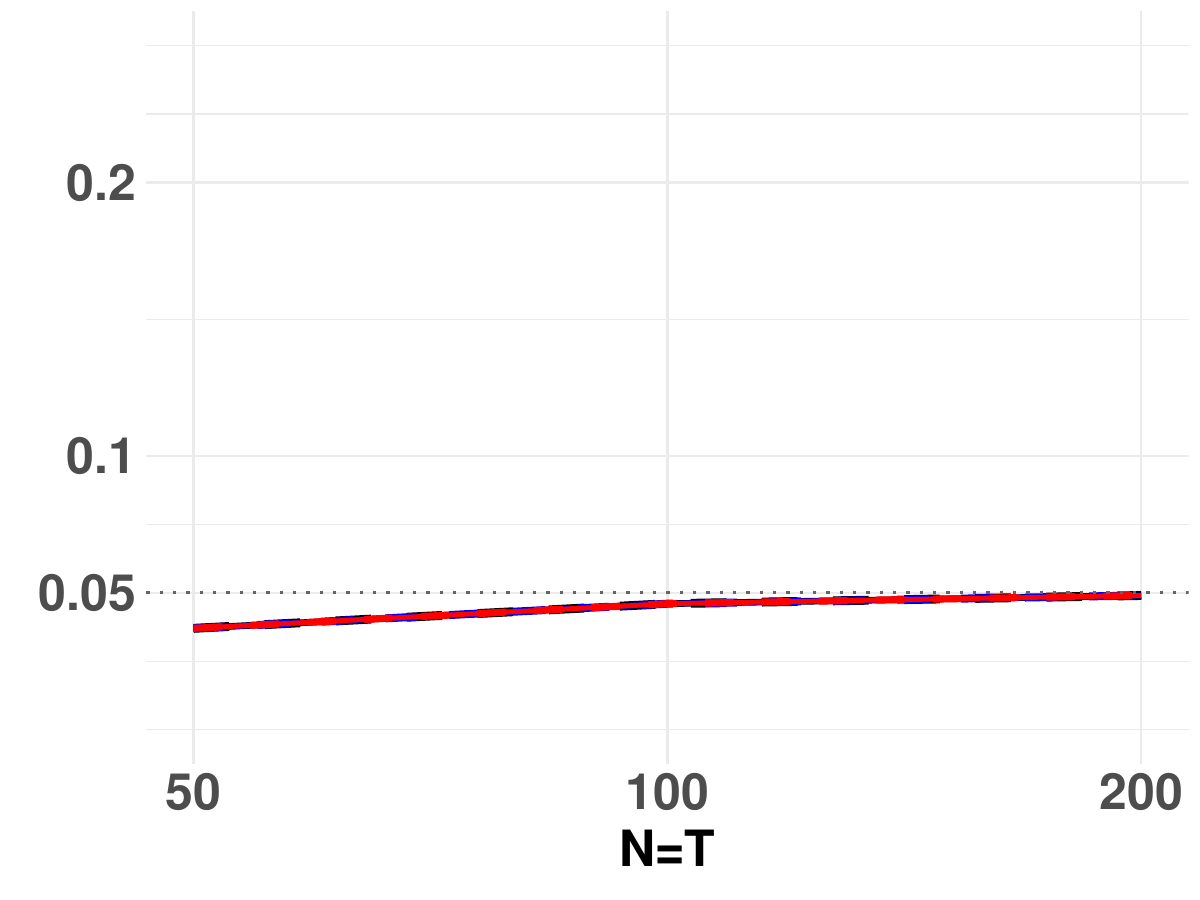}
\caption{$\rho_{wf}=0.6,\alpha_2 = 0.8$}
\label{fig:test_ww_06_08}
\end{subfigure}
\hfill
\begin{subfigure}[b]{0.32\textwidth}
\centering
\includegraphics[width=\textwidth]{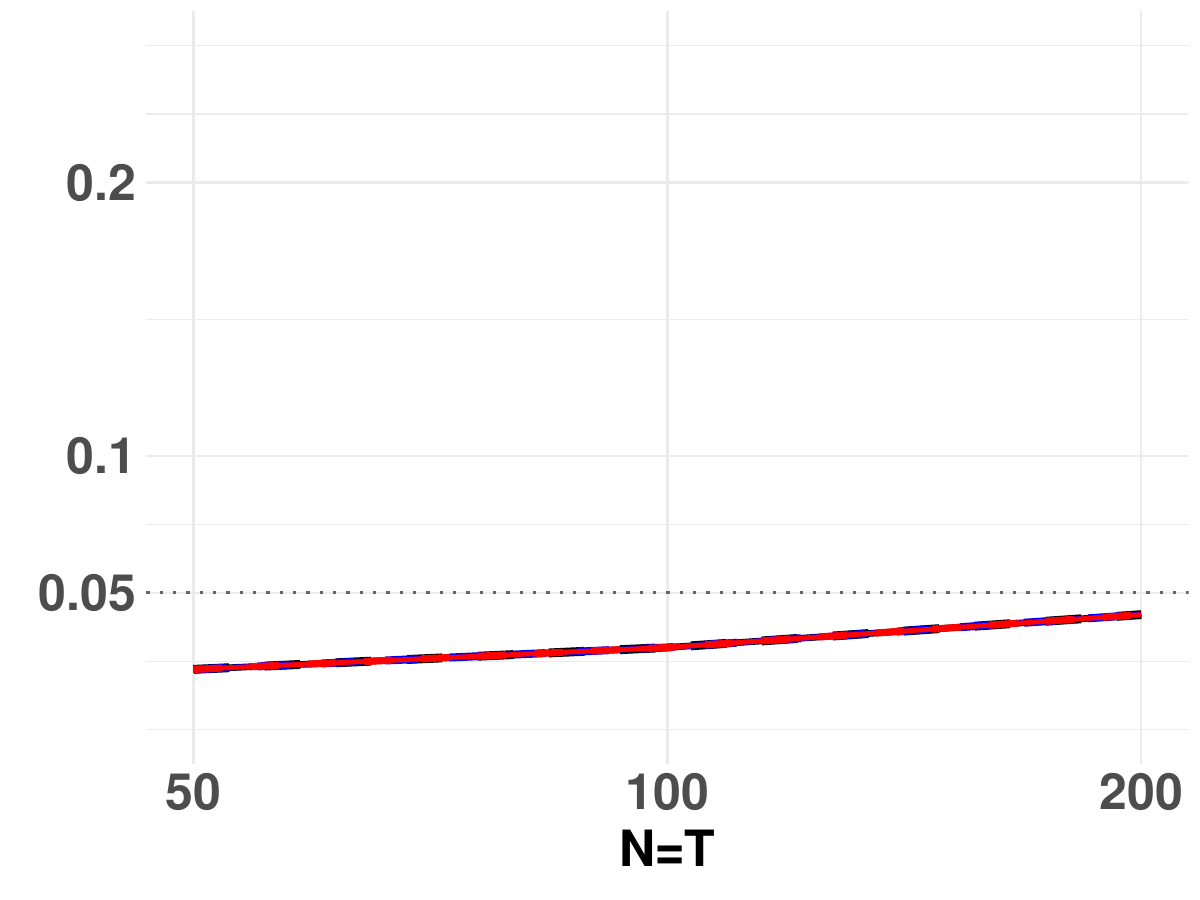}
\caption{$\rho_{wf}=0.6,\alpha_2 = 0.6$}
\label{fig:test_ww_06_06}
\end{subfigure}

\centering
\includegraphics[width=0.66\textwidth]{newimages/legend_SigE3_rhoe2_bias_rhowf0_a1_ww.pdf}
\caption{Size of the t-tests using $\hat{\beta}_w$ and its bias corrected versions for cross and serially correlated $e_{t,i}$}
\label{fig:test.ww}

\end{figure}

\end{document}